\documentclass[11pt]{article}

\usepackage[english]{babel}
\usepackage[utf8x]{inputenc}
\usepackage{comment}
\usepackage{booktabs}
\usepackage{tabu}
\usepackage[T1]{fontenc}

\usepackage[letterpaper,top=1in,bottom=1in,left=1in,right=1in]{geometry}

\usepackage{sansmathfonts}
\usepackage[sfdefault]{biolinum}
\usepackage[T1]{fontenc}
\usepackage[dvipsnames]{xcolor}

\usepackage{amsmath}
\usepackage{amsthm}
\usepackage{amssymb}
\usepackage{graphicx}
\usepackage{amsthm}
\usepackage[noadjust]{cite}
\usepackage{mathtools}
\usepackage{graphicx}
\usepackage{array}
\usepackage{xspace}
\usepackage{float}
\usepackage{wrapfig}
\usepackage{thmtools, thm-restate}
\usepackage{enumitem}
\usepackage{subcaption}
\usepackage{afterpage}
\usepackage[boxruled, nofillcomment, linesnumbered, lined]{algorithm2e}

\theoremstyle{plain}
\newtheorem{theorem}{Theorem}[section]
\newtheorem{lemma}[theorem]{Lemma}
\newtheorem{corollary}[theorem]{Corollary}
\newtheorem{claim}[theorem]{Claim}

\theoremstyle{definition}
\newtheorem{definition}[theorem]{Definition}

\newtheorem{fact}[theorem]{Fact}

\theoremstyle{remark}
\newtheorem{remark}[theorem]{Remark}

\newcommand{\hl}[1]{{\em \color{NavyBlue} #1}}

\usepackage{hyperref}
\hypersetup{colorlinks=true, urlcolor=Blue, citecolor=Green, linkcolor=BrickRed, breaklinks, unicode}

\usepackage[capitalise, noabbrev]{cleveref}



\crefname{claim}{claim}{claims}

\newcommand{\CONGEST}{\mathsf{CONGEST}}

\newcommand{\Congest}{\mathsf{CONGEST}}

\newcommand{\EC}{{\color{red}E_C}}
\newcommand{\ER}{{\color{black}\boldsymbol{E_R}}}
\newcommand{\ES}{{\color{cyan}E_S}}

\title{$\tilde{\text{O}}$ptimal Distributed Maximum Flow Approximation\\ in Undirected Planar Graphs}

\author{%
 Yaseen Abd-Elhaleem\thanks{{\em Department of Computer Science, University of Haifa. Supported in part by the Israel Science Foundation grants No. 810/21 and 2829/25.}}\\
 \texttt{\href{mailto:yaseenuniacc@gmail.com}{yaseenuniacc@gmail.com}}\\
 \and Michal Dory\thanks{{\em Department of Computer Science, University of Haifa. Supported in part by the Israel Science Foundation grant No. 2829/25.}}\\
 \texttt{\href{mailto:mdory@ds.haifa.ac.il}{mdory@ds.haifa.ac.il}}\\
 \and Oren Weimann\thanks{{\em Department of Computer Science, University of Haifa. Supported in part by the Israel Science Foundation grant No. 365/26.}}\\
 \texttt{\href{mailto:oren@cs.haifa.ac.il}{oren@cs.haifa.ac.il}}
}

\date{}

\begin{document}
\maketitle

\thispagestyle{empty}

\begin{abstract}
Persistent efforts in recent years have been devoted to devising distributed algorithms for fundamental optimization problems in planar graphs. In particular,  
for Single-Source Shortest-Paths, there is an $\tilde O(D^2)$-rounds\footnote{The $\tilde{O}(\cdot)$ notation hides poly-logarithmic factors in $n$.} exact algorithm [Li, Parter STOC'19] for directed planar graphs, and an $\tilde {O}(D)$-rounds $(1+o(1))$-approximation algorithm [Rozhon, Grunau, Haeupler, Zuzic, Li STOC'22] for undirected planar graphs (where $D$ is the graph's hop-diameter). Recently [Abd-Elhaleem, Dory, Parter, Weimann PODC'25], a matching bound for the exact case was obtained for the Maximum $st$-Flow problem. Namely, an $\tilde O(D^2)$-rounds exact algorithm for directed planar graphs. However, for the approximate case, they give a $D\cdot n^{o(1)}$-rounds $(1-o(1))$-approximation algorithm for undirected planar graphs that works only for the special case where both $s$ and $t$ lie on the same face. 

In this paper, we remove the restriction that both $s$ and $t$ must lie on the same face (we also eliminate the $n^{o(1)}$ factor). Namely, we present the first distributed near-optimal $\tilde{O}(D)$-rounds $(1-o(1))$-approximation algorithm for Maximum $st$-Flow in general undirected planar graphs.
Our main technical contribution is a distributed implementation of the classical Reif's [SICOMP'83] centralized algorithm. This is achieved by a careful recursive incision procedure on the planar dual $G^*$ of the graph $G$. 
It is challenging, because we need to simulate dynamic changes (incisions) over the dual graph $G^*$, while we can only communicate over the input graph $G$.
\end{abstract}

\newpage
{\tableofcontents}
\thispagestyle{empty}

\newpage \setcounter{page}{1}

\section{Introduction}
The maximum $st$-flow problem asks for the maximum amount of flow that can be sent from a source vertex $s$ to a target vertex $t$ in a graph with edge-capacities.   
The problem is one of the most fundamental and well-studied optimization problems in theoretical computer science. 
It was first presented in 1954 by Harris and Ross~\cite{HarrisRoss} who used the graph to model the Soviet railway network (which happens to be a planar graph).  
The first algorithm for the problem was given in 1956 by Ford and Fulkerson~\cite{ff56} in their seminal paper (still taught in every elementary algorithms class). 
This paper is known for introducing the celebrated min-max theorem, stating that the value of the maximum $st$-flow is equivalent to that of the minimum $st$-cut. It is less known, that in the same paper, Ford and Fulkerson also gave the first algorithm for maximum $st$-flow in planar graphs where both $s$ and $t$ lie on the same face (called {\em $st$-planar} graphs). 

In the centralized setting, after decades of research, the maximum $st$-flow problem is by now almost optimally resolved. Namely, in general graphs there are by now {\em almost}-linear time (i.e., optimal up to a subpolynomial factor) algorithms, with the currently fastest being~\cite{ChenKLPGS22}. In planar graphs, there are by now {\em near}-linear time (i.e., optimal up to a logarithmic factor) algorithms, with the currently fastest being $O(n\log n)$ in directed planar graphs~\cite{BorradaileKlein} and $O(n\log \log n)$ in undirected planar graphs~\cite{Italiano}.

\medskip
\noindent
{\bf Distributed planar graph algorithms.}
In the distributed $\Congest$ setting \cite{peleg-book}, we are given an $n$-vertex graph $G$ with hop-diameter $D$. Communication on $G$ occurs in synchronous rounds, where in each round $O(\log n)$ bits can be sent along every edge.
The goal is to minimize the number of rounds required to solve various optimization problems on $G$. 
In this setting, the {\em exact} maximum $st$-flow problem for general graphs is still open with the state-of-the-art being $n^{1/2+o(1)}\cdot (\sqrt{n}+D)$ rounds \cite{mincostflow_Vos23}.\footnote{In fact,~\cite{mincostflow_Vos23} solve the more general problem of {\em minimum-cost $st$-flow}.}
The approximate version, however, is resolved (up to subpolynomial factors)~\cite{ghaffari2015near}. 
Namely, up to an $n^{o(1)}$ factor, there is a tight lower and upper bound of $(D+\sqrt{n})$-rounds for approximate maximum $st$-flow~\cite{SarmaHKKNPPW11,Elkin04_MSTlower_bound,PelegR99_MSTlower_bound}, as well as for many other fundamental problems such as approximate single-source shortest-paths (SSSP)~\cite{RozhonGHZL22_shortestpaths,GHSYZ22} and exact minimum spanning tree (MST)~\cite{GarayKP98, KuttenP98}. 
The graph used for these lower bounds \cite{Elkin04_MSTlower_bound,SarmaHKKNPPW11, PelegR99_MSTlower_bound} is sparse, of arboricity two, but not planar.
This means that planar graphs are arguably one of the most interesting graph families to which the lower bound does not apply and the only lower bound is the trivial $\Omega(D)$. 
Nevertheless, the maximum $st$-flow problem is far from being resolved in planar graphs as we now elaborate. 

Near-optimal $\tilde{O}(D)$-round distributed planar graph algorithms
were given in 2016 by Ghaffari and Haeupler for computing a planar embedding~\cite{ghaffariH16_embedding}, for MST~\cite{GhaffariH16_shortcuts}, and for approximate undirected global min-cut~\cite{GhaffariH16_shortcuts}.  
Other $\tilde{O}(D)$-round algorithms soon followed for approximate SSSP~\cite{GHSYZ22,RozhonGHZL22_shortestpaths}, DFS~\cite{GP17,deterministic_sep}, undirected weighted girth~\cite{planardistributedmaxflow24}, undirected global min-cut~\cite{GZ22}, reachability~\cite{ParterReachability20a}, and planar separators~\cite{GP17,LP19,deterministic_sep,OurSIROCCO}.  
For other problems, $\tilde{O}(D^2)$-round algorithms were developed. These include directed maximum $st$-flow~\cite{planardistributedmaxflow24}, weighted directed SSSP~\cite{LP19}, directed weighted girth~\cite{ParterReachability20a}, and directed global min-cut~\cite{planardistributedmaxflow24}.
Notice that both maximum $st$-flow and SSSP have exact $\tilde{O}(D^2)$-round algorithms. Indeed, as we next explain, these two problems are highly related in planar graphs. However, while SSSP can  be {\em approximately} solved in near optimal $\tilde{O}(D)$-rounds in undirected graphs, no such bound was known for approximating maximum $st$-flow. 
Recently,~\cite{planardistributedmaxflow24} gave an almost optimal $D\cdot n^{o(1)}$-rounds $(1-o(1))$-approximation for maximum $st$-flow, but only in undirected $st$-planar graphs (where $s$ and $t$ must lie on the same face). 

\medskip
\noindent
{\bf Our result.}
In this paper, we present the first near-optimal $\tilde{O}(D)$-rounds $(1-o(1))$-approximation algorithm for maximum $st$-flow in undirected planar graphs.
This improves the algorithm of~\cite{planardistributedmaxflow24} that is restricted to $st$-planar graphs (and also has an additional $n^{o(1)}$ factor). 
Formally, we prove:

\begin{restatable}{theorem}{TheoremDistributedReif}
\label{thm: max_flow}
Let $G$ be an undirected weighted planar network of hop-diameter $D$, and let $s,t$ be two vertices whose ID is known to all vertices of $G$.
Then, in $\tilde{O}(D)$ deterministic rounds 
we can $(1-o(1))$-approximate the value 
of the maximum $st$-flow (equivalently, $(1+o(1))$-approximate the value  of the minimum $st$-cut) s.t. every vertex $v\in G$ knows this value. 
In addition, every $v \in G$ learns on which side of the minimum $st$-cut $v$ is, and $v$'s incident edges that belong to the cut. 
\end{restatable}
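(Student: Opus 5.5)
We plan to implement Reif's centralized algorithm. In an undirected planar graph, a minimum $st$-cut corresponds to a shortest cycle in the dual $G^*$ that separates the faces incident to $s$ from those incident to $t$. Reif's approach works as follows. Fix a shortest path $P$ in $G^*$ from a face $f_s$ incident to $s$ to a face $f_t$ incident to $t$. Some minimum separating cycle crosses $P$ exactly once. Cut $G^*$ open along $P$, duplicating the dual vertices of $P$ into $p_i'$ and $p_i''$. Then the minimum separating cycle through the copy of $p_i$ equals the shortest $p_i'$-to-$p_i''$ path in the incised dual.

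The centralized algorithm uses divide and conquer over $P$. Compute the shortest path between the middle pair $p_m',p_m''$. By non-crossing, this path splits the incised region into two subregions. Recurse on the upper half of $P$ inside one subregion and on the lower half inside the other. There are $O(\log n)$ levels, and at each level the subregions are disjoint except along shared boundary paths. We will use the $(1+\epsilon)$-approximate SSSP of Rozhon et al. with $\epsilon=1/\log^{2} n$, or any $o(1)$ that survives the accumulation. Each level then costs $\tilde O(D)$ rounds, giving $\tilde O(D)$ rounds overall and a $(1+o(1))$-approximate cut.

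The steps, in order, are as follows.

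\begin{enumerate}
\item Compute a planar embedding via Ghaffari--Haeupler. Each dual vertex (face) is simulated by a leader on its boundary, using part-wise aggregation, i.e., low-congestion shortcuts over face parts.
\item Compute an approximate shortest path $P$ in $G^*$ between $f_s$ and $f_t$ by running approximate SSSP in the dual, which is simulated on $G$ in $\tilde O(D)$ rounds as in prior dual-SSSP works. Since $P$ is only approximate, we rely on the fact that Reif's argument needs only that $P$ is \emph{some} simple path whose incision makes every separating cycle cross it. It must also be shortest in a weak sense to guarantee a single crossing. We therefore replace this requirement with the standard trick of choosing a shortest-path tree $T$ and using its tree path, so that the non-crossing arguments hold with respect to $T$.
\item Perform the incision. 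Each primal edge dual to an edge of $P$ is marked. Each face on $P$ is split into two copies according to the side of $P$ on which its incident dual edges lie. This side is determined locally from the rotation system at the vertices of the face.
\item Run the recursion. In level $i$ we have $2^i$ disjoint dual subregions. We run approximate SSSP in all of them simultaneously, each from its middle copy $p'_{m}$, treating each subregion as a part in a shortcut framework so that the total congestion stays polylogarithmic.
\item Output the minimum value found, broadcast it over a BFS tree, and let vertices learn their side by labeling the cycle's inside/outside. The labeling is a face-side computation, done by aggregating parity along the dual cycle or via a primal BFS avoiding cut edges.
\end{enumerate}

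The main obstacle is step 4, simulating the incised and recursively subdivided dual on the communication graph $G$. The subregions are bounded by approximate shortest paths, which may share vertices and edges. Approximate paths may also cross one another, which breaks the clean nesting that Reif's exact argument relies on. To handle this, we will first make each computed path non-crossing with previous boundaries: we restrict SSSP to the current subregion, so new paths cannot cross the boundaries by construction. We will then argue that the approximation errors add along the recursion only additively per level, using a relative $\epsilon$ per level, giving $(1+\epsilon)^{O(\log n)}=1+o(1)$. For congestion, the remaining task is to show that the subregions at a single level, realized as connected subgraphs of $G$ whose faces are their dual vertices, admit shortcuts of quality $\tilde O(D)$. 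Planarity of each piece plus the bound that every primal edge lies in $O(1)$ pieces per level should give this, by the planar shortcut results. This is where I expect the bulk of the technical work.
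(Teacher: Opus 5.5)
Your overall plan matches the paper's (approximate Reif, $O(\log n)$ levels, one $(1+\epsilon)$ loss per level). However, two load-bearing steps are wrong or missing. \textbf{First, the approximation argument.} Reif's single-crossing claim for $P$, and the claim that recursing inside a subregion loses nothing, both rest on uncrossing. A cycle that leaves a boundary path $Q$ at $u$ and returns at $v$ is repaired by substituting $Q[u,v]$. This costs only a $(1+\epsilon)$ factor if \emph{every} subpath of $Q$ is $(1+\epsilon)$-approximately shortest, and paths in an approximate SSSP tree lack this property. The tree may reach $u$ exactly and then reach a nearby $v$ by a detour of additive length about $\epsilon\, d(r,v)$, which can be far larger than $d(u,v)$. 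So ``use the tree path of $T$'' buys nothing. Likewise, ``restrict SSSP to the subregion'' only enforces non-crossing by fiat; it does not show that the subregion still contains a near-optimal $p_t'$-to-$p_t''$ path for each $t$ assigned to it. The paper gets this by making the SSSP obey the approximate triangle inequality $|d(v)-d(u)|\le(1+\epsilon)d(u,v)$, via the reduction of Rozhon et al.\ at the price $\epsilon\mapsto\epsilon/\log n$. Your recursion also breaks when boundary paths touch. If $P_i$ shares even one node $x$ with $P_k$, incising $P_i$ pinches the $P_k$--$P_i$ region at $x$ and separates $p_t'$ from $p_t''$, so there is nothing to find inside a piece. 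The paper solves such an annulus directly with one SSSP from $x$, since every $P_t$ there passes through $x$. It also never incises subpaths shared with $P$ or with the bounding paths. That is what guarantees each dual edge is incised at most once, hence each primal edge lies in at most four face-parts per level.

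\textbf{Second, the simulation in steps 2 and 4, which is the real content of the theorem, is not supplied.} A node of a level-$\ell$ subregion is a face-part, a possibly long subgraph of $G$, and face-parts overlap: an incised primal edge lies in up to four of them, and a vertex of degree $d$ in up to $\Theta(d)$. So the nodes and subregions of a level are not vertex-disjoint connected subgraphs of $G$, and planar shortcuts on $G$ cannot be invoked directly; bounded edge multiplicity alone does not give part-wise aggregation. The paper builds a new planar graph $\hat G_\ell$ from local copies $v_i,v_i^L,v_i^R$ per consecutive edge pair, constructed in $O(1)$ rounds from the rotation system and incision marks. It has diameter at most $4D$, is simulable on $G$ with $O(1)$ overhead, and in it face-parts become vertex-disjoint paths and annuli become connected components. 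Shortcuts on $\hat G_\ell$ then yield minor-aggregation on all annuli of a level simultaneously. Moreover, the SSSP of Rozhon et al.\ is not a pure minor-aggregation algorithm. It calls an Eulerian-orientation oracle implemented with CONGEST rounds on the input graph, which cannot be run on a dual. This already undermines your step 2: prior dual-SSSP simulations give only $D\cdot n^{o(1)}$ rounds. The paper implements the oracle on annuli in two moves. It pairs each node-part's edges via prefix sums along $\hat p(f)$, then splits vertices of $\hat G_\ell$ so the pairing-induced cycles become vertex-disjoint and can be oriented in parallel. Two smaller points. A BFS in $G$ minus the cut edges need not finish in $\tilde O(D)$ rounds, so the side labels need shortcut-based connectivity. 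And the best approximate $P_m$ may use both copies of a $P$-edge, so it must be repaired into a simple dual cycle.
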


Notice that we find the approximate value of the maximum $st$-flow by finding an approximate minimum $st$-cut. We also find the cut itself and not just its value. Finding the corresponding flow  assignment distributively in the same $\tilde{O}(D)$ round complexity remains an interesting open question.

\subsection{Technical Overview} \label{sec:technicaloverview}
Given a planar graph $G$, its {\em dual} graph $G^*$  is a planar graph whose nodes correspond to the faces of $G$. Two nodes are connected in $G^*$ if their corresponding faces in $G$ share an edge (see \cref{fig: dual_graph} and a formal definition in \cref{sec: preliminaries}). 
In the centralized setting, the driving force behind the algorithmic improvements to maximum $st$-flow was the gradual understanding of the connection between maximum $st$-flow in $G$ and SSSP in $G^*$. This connection was already implicitly used by Ford and Fulkerson~\cite{ff56}.  
In 1981, Hassin~\cite{Hassin} made the connection explicit by showing that in undirected $st$-planar graphs $G$, maximum $st$-flow reduces to a single SSSP computation in the dual $G^*$ with positive edge-lengths. The $D\cdot n^{o(1)}$-rounds maximum $st$-flow approximation of~\cite{planardistributedmaxflow24} is a distributed (and approximate) implementation of Hassin's algorithm. 
In 1983, Venkatesan \cite{Venkatesan} showed that in general directed planar graphs (i.e., not just $st$-planar graphs) maximum $st$-flow reduces to $\log \lambda$ SSSP computations in the dual $G^*$ with positive {\em and negative} edge-lengths, where $\lambda$ is the maximum $st$-flow value.
The $\tilde{O}(D^2)$-rounds exact maximum $st$-flow algorithm  of~\cite{planardistributedmaxflow24} is a distributed implementation of Venkatesan's algorithm.
Note that the fastest known distributed algorithm for directed SSSP that allows negative edge-lengths takes $\tilde{O}(D^2)$ rounds~\cite{LP19}. In order to obtain a faster $\tilde{O}(D)$-round algorithm we use a different approach by Reif~\cite{Reif83}, which enables us to use a faster approximate SSSP algorithm that works for undirected graphs with non-negative edge-lengths. 
 Namely, in 1983, Reif~\cite{Reif83} showed that in general undirected planar graphs, maximum $st$-flow reduces to multiple (as many as $\Omega(n)$) SSSP computations on $G^*$ (with positive edge-lengths) that can be executed efficiently in a divide and conquer manner as we next explain. 
 
    We mention that the approach of Hassin, and therefore, the distributed approximation algorithm of~\cite{planardistributedmaxflow24}, is inherently limited to $st$-planar graphs. I.e., there is no clear way of extending it to work for general $s$ and $t$. That is because, Hassin's algorithm starts by augmenting $G$ with an edge $e=(t,s)$, then performs a single SSSP computation on the dual of the augmented graph.
    This is fine when $s,t$ are on the same face, for $e$ can be embedded inside the common face, preserving planarity (i.e., the planar dual can still be defined).
    However, for general $s$ and $t$, adding this edge violates planarity and the planar dual cannot be defined.
    Therefore, another approach is required.
    Here comes our contribution, which is a distributed implementation of Reif's algorithm.
    The technical challenge compared to the approximation algorithm of~\cite{planardistributedmaxflow24} is that, we need to (distributively) support multiple simultaneous SSSP computations on $G^*$ while $G^*$ changes dynamically during the algorithm (undergoes incisions) rather than just one SSSP computation.

\begin{figure}[htb]
\centering
\begin{subfigure}[t]{0.31\textwidth}
\centering
\includegraphics[width=1\linewidth]{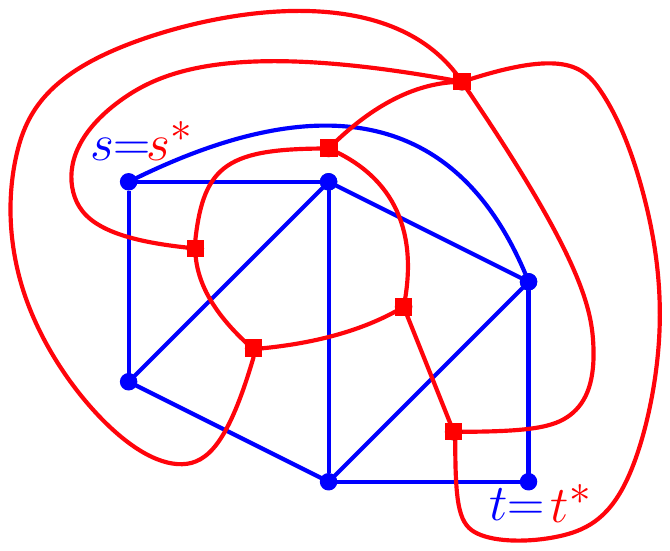} 
\caption{
\label{fig: dual_graph}
}
\end{subfigure}
\hspace{0.4 cm}
\begin{subfigure}[t]{0.3\textwidth}
\centering
\includegraphics[width=.75\linewidth]{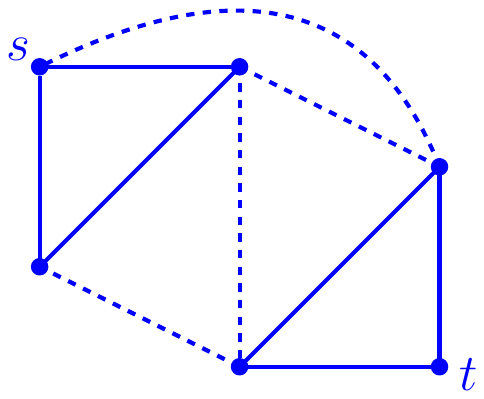}
\caption{ 
}
\end{subfigure}
\hspace{0.4 cm} 
\begin{subfigure}[t]{0.3\textwidth}
\centering
\includegraphics[width=1\linewidth]{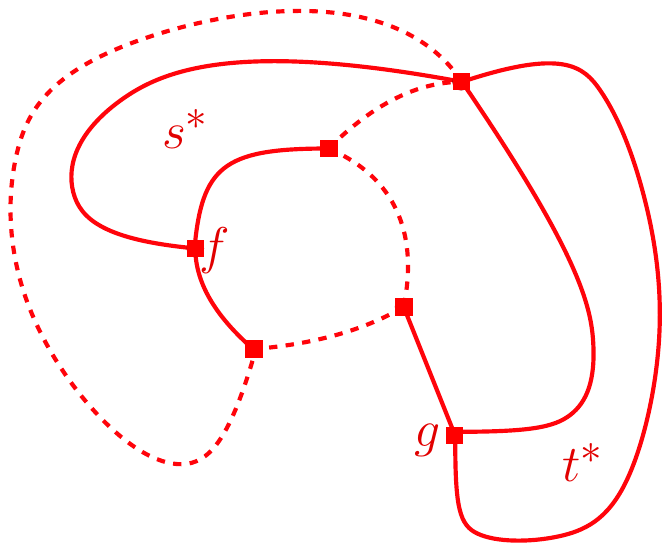}
\caption{
\label{fig: st_cut_dual}
}
\end{subfigure}
\caption{(a) The primal graph {\color{blue} $G$} in {\color{blue} blue} and its dual {\color{red} $G^*$} in {\color{red} red}. (b) Two distinguished vertices $s$ and $t$ 
in $G$, and an $st$-cut (dashed). (c) The faces $s^*$ and $t^*$ of $G^*$, and the dual cycle to the $st$-cut $C$ (dashed). $C$ encloses $s^*$ but not $t^*$. The node $f$ (resp. $g$) of $G^*$ belongs to the face $s^*$ (resp. $t^*$). 
\label{fig: st_cut_overview}
}
\end{figure}

\medskip
\noindent
{\bf Centralized Reif.}
In \cref{sec:Centralized Reif} we describe Reif's algorithm in detail. Reif finds the value of the maximum $st$-flow  by finding the equivalent value of a minimum $st$-cut. In planar graphs, the minimum $st$-cut in $G$ is also equivalent to the shortest cycle $\mathcal C$ in the dual graph $G^*$ that encloses exactly one of the two faces $s^*,t^*$ of $G^*$ corresponding to $s,t$. See \cref{fig: st_cut_overview}.

In order to find the cycle $\mathcal C$, Reif uses the fact that if $P$ is a shortest path in $G^*$ connecting the two faces $s^*$ and $t^*$, then $\mathcal C$ must cross $P$ exactly once. Then, Reif makes an {\em incision} along the path $P$, duplicating $P$'s edges and vertices. This way, the shortest {\em cycle} $\mathcal C$ becomes the shortest {\em path} $P_i$ from a vertex $p_i$ of $P$ to its copy. See \cref{fig: incision_P_overview}.

\begin{figure}[htb]
    \centering
    \begin{minipage}[t]{0.31\textwidth}
        \centering
    \includegraphics[width=0.95\linewidth, height=3cm , keepaspectratio]{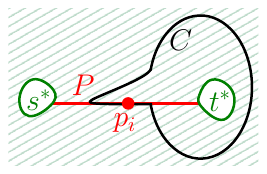}
\end{minipage}
    \begin{minipage}[t]{0.31\textwidth}
        \centering
    \includegraphics[width=0.95\linewidth,keepaspectratio, height=3cm]{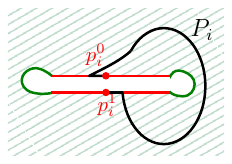}
    \end{minipage}
        \begin{minipage}[t]{0.31\textwidth}
        \centering
    \includegraphics[width=0.95\linewidth,keepaspectratio, height=3cm]{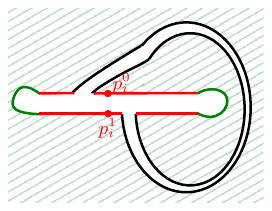}
    \end{minipage}
    \caption{
    Left: The shortest cycle $C$ in $G^*$ that crosses $P$ once and encloses  $t^*$ (not $s^*$).
    Middle: The graph $G^*$ after incising $P$, the $p_i^0$-to-$p_i^1$ path $P_i$ corresponds to the cycle $C$ from the left image.
    Right: The two annuli (recursive subgraphs) of the first recursion level after incising $P_i$. 
   \label{fig: incision_P_overview} 
    }
\end{figure}

Finding all such paths $P_i$ can be done with multiple executions of an SSSP algorithm on $G^*$. Moreover, since the $P_i$ paths do not cross, they can all be found very efficiently (total $O(n\log n)$ centralized time) in a (recursive) divide-and-conquer scheme. Namely, once we find some $P_i$ we make an incision along it, thus, splitting the graph into two subgraphs: One containing every $P_j$ for $j<i$ and the other containing every $P_j$ for $j>i$, so we can recurse on each subgraph separately. I.e., $\mathcal{C}$ is either $P_i$ or some $P_j$ in one of the two recursive subgraphs.\footnote{If $\mathcal{C}$ does not correspond to $P_i$, then we can always assume it is included in one of the recursive subgraphs. Otherwise, it must cross $P_i$, in which case we can replace  segments of $\mathcal{C}$ with segments of $P_i$ in order to ensure that $\mathcal{C}$ is entirely contained in a single recursive subgraph. For more details see \cref{sec:Centralized Reif}.}. See \cref{fig: incision_P_overview}.
We call these recursive subgraphs {\em annuli} since each of them is a subgraph sandwiched between two paths $P_i,P_j$.
See \cref{fig: general_annulus_overview}.

 \begin{figure}[htb]
 \centering
   \includegraphics[width=0.15\linewidth]{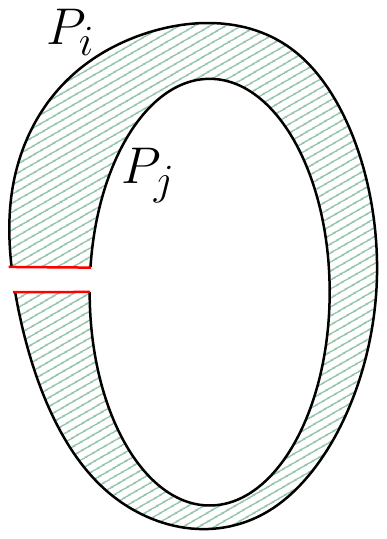}
   \caption{An annulus bounded by paths $P_i,P_j$. \label{fig: general_annulus_overview}}
\end{figure}

\medskip
\noindent
{\bf Distributed Reif.}
In this paper, we show how to implement (an approximate version of) Reif's algorithm distributively.\footnote{The reason our algorithm is approximate is solely because we use an approximate SSSP algorithm as a black-box.}
Our main contribution (\cref{sec: distributed_reif}) is in showing that distributed (approximate) SSSP computations can be executed efficiently on $G^*$ while $G^*$ undergoes incisions. More precisely, we show that the SSSP computations over all annuli in the same recursive level can be performed in $\tilde O(D)$ rounds. 
This is quite challenging since $G^*$ is not the communication network and since the diameter of $G^*$ can be much larger than that of $G$. 
That is, we want to simulate the algorithm using our original communication network $G$ and benefit from the fact that $G$ is a planar graph with diameter $D$.

\begin{figure}[htb]
\centering
\begin{subfigure}[t]{0.46\textwidth}
  \centering
   \includegraphics[width=0.7\linewidth, keepaspectratio, height=4.5cm]{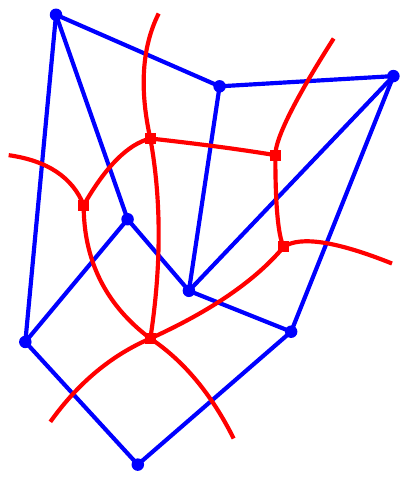}
\end{subfigure}
\hspace{0.3 cm}
\begin{subfigure}[t]{0.46\textwidth}
  \centering
   \includegraphics[width=0.9\linewidth, keepaspectratio, height=4.5cm]{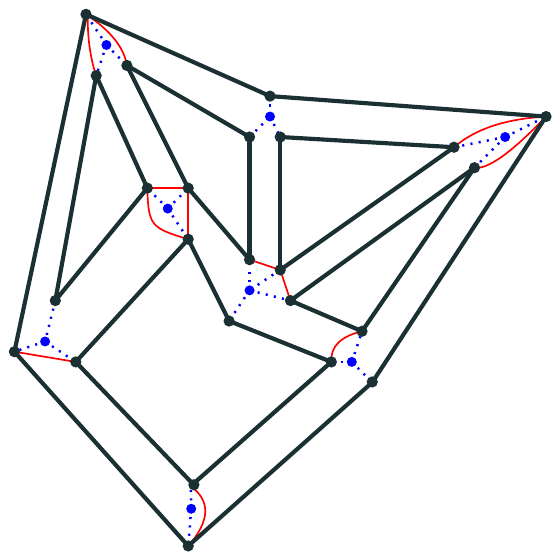}
  \end{subfigure}
    \caption{ 
    Left: The primal graph $G$ ({\color{blue} blue}) and its dual $G^*$ ({\color{red} red}). For clarity, the ({\color{red} red}) node corresponding to the external face is omitted.
   Right: The graph $\hat{G}_0$ of~\cite{GP17}. 
   Note, each face of $G$ corresponds to a (disjoint) {\bf black} cycle of $\hat{G}_0$, and each vertex of $G$ has its   ({\color{blue} blue}) star.
   The {\color{red} red} edges of $\hat{G}_0$ have a 1-1 mapping to edges of $G^*$.
   \label{fig: hat_g_base_case_overview}}
\end{figure}

To do so, we create a new graph $\hat{G}_{\ell}$ that can be simulated on $G$ and captures all recursive (dual) annuli in level $\ell$ of Reif. For the base case $\hat{G}_{0}$, we just want to simulate an algorithm on the dual graph $G^*$. This was already done by~\cite{GP17, planardistributedmaxflow24}, who introduced a graph $\hat{G}$ that allows us to simulate basic algorithms based on aggregate operations (usually simple operations as broadcast, SUM, OR, etc.) on the graph $G^*$ using the communication network $G$. Intuitively, in this graph, for each dual node $f \in G^*$ there is a cycle representing the corresponding face in $G$. Two cycles that correspond to different faces are disjoint, and two cycles $f,g$ are connected by an edge if and only if there is a dual edge $(f,g)$ in $G^*$. See \cref{fig: hat_g_base_case_overview}.
There are additional vertices and edges that make the graph $\hat{G}$ similar to $G$. In particular, each primal vertex $v$ is replaced by a star connecting it to all the cycles that correspond to faces $f$ that include $v$. These extra vertices and edges make sure that the graph has $O(D)$ diameter, and has similar structure to $G$, which allows us to efficiently simulate algorithms on $\hat{G}$ using the communication graph $G$.

Our goal is to extend the graph $\hat{G}_{0}=\hat{G}$ to the graph $\hat{G}_{\ell}$ that allows simulating algorithms simultaneously on all annuli of a recursive level $\ell$ of Reif. Note that a dual node $f$ in level $\ell$ can be either a dual node of $G^*$, or some duplicate of a node in $G^*$ that we got after making incisions. We thus refer to it as a {\em node-part}. Note that when we incise along a path that contains the node $f$, the edges adjacent to $f$ are split into two parts $f_1,f_2$ 
(where the edges of the path go to both sides), so the corresponding node-parts we get after one such incision are defined by a consecutive subset of the edges adjacent to $f$. If we look at these edges in the primal graph $G$, they correspond to a connected subgraph of the corresponding face $f$ in $G$, we call these connected subgraphs {\em face-parts} $p(f_1)$, $p(f_2)$.
Note that face-parts are not disjoint. First, each edge in the primal graph is part of two faces (i.e., incident to two different dual nodes). In addition, the two edges of the path we incised that are incident to $f$ go to both $p(f_1)$ and $p(f_2)$. See \cref{fig: nodepart_pf_overview}.

\begin{figure}[htb]
\centering
\begin{subfigure}[c]{0.44\textwidth}
  \centering
   \includegraphics[width=0.7\linewidth, keepaspectratio, height=3.6cm]{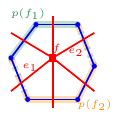}
\end{subfigure}
\hspace{0.3 cm}
\begin{subfigure}[c]{0.44\textwidth}
  \centering
   \includegraphics[width=0.9\linewidth, keepaspectratio, height=4cm]{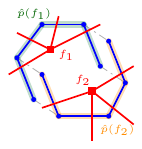}
  \end{subfigure}
    \caption{
    Left: A node $f\in G^*$ (in {\color{red} red}) and its incident edges $e_1,e_2$ on the incision path. After the incision, all edges between $e_1,e_2$ (resp. $e_2,e_1$)  get incident to the first (resp. second) node-part $f_1$ (resp. $f_2$). The {\color{blue}blue} cycle is the face (corresponding to) $f$ in $G$. The face-parts $p(f_1),p(f_2)$ in $G$ are highlighted ({\color{PineGreen} green}, and {\color{YellowOrange} orange}). Note that  $p(f_1),p(f_2)$ overlap in the edges (whose duals are) $e_1,e_2$.
    Right: The node-parts $f_1,f_2$ resulting from $f$ after the incision (each has its own copies of $e_1,e_2$). The transformation that $\hat{G}_\ell$ applies on the face-parts $p(f_1),p(f_2)$ of $G$ results in (highlighted) vertex-disjoint components $\hat{p}(f_1),\hat{p}(f_2)$ in $\hat{G}_\ell$. These components correspond to the node-parts, and each of them has its own copy of the edges $e_1,e_2$.
    \label{fig: nodepart_pf_overview}}
\end{figure}

In order to simulate algorithms efficiently, we want to have a disjoint connected subgraph for each such face-part $p(f_i)$. 
To achieve this, in $\hat{G}_{\ell}$ we duplicate each edge at most four times (once for each time it appears in a face-part $p(f)$ where $f$ is a level-$\ell$ node-part). 
We connect these edges in a way that for each level-$\ell$ node-part $f$: (1)  
there is a connected subgraph $\hat{p}(f)$ in $\hat{G}_{\ell}$, (2) subgraphs for distinct node-parts are disjoint, and (3) two subgraphs $\hat{p}(f), \hat{p}(g)$ are connected by an edge iff there is a dual edge in level $\ell$ between the corresponding node-parts $f,g$. See \cref{fig: nodepart_pf_overview} and \cref{fig: hatGell_local_overview}.
We show that in the graph $\hat{G}_\ell$, there is a connected component for each level-$\ell$ annuli. 
Finally, similarly to the original $\hat{G}$, we add another set of edges and vertices that make the graph similar to the original graph $G$. This is again done by replacing each vertex $v$ with some copies connected by a 
depth-two tree instead of a star in $\hat{G}_0$. Each copy connects $v$ to a subgraph $\hat{p}(f)$ for each node-part $f$ where $v \in p(f)$. 
See \cref{fig: hatGell_local_overview}.

\begin{figure}[htb]
\centering
\begin{subfigure}[c]{0.46\textwidth}
  \centering
     \includegraphics[width=0.8\linewidth, keepaspectratio, height=3cm]{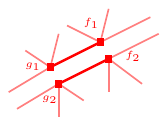}
\end{subfigure}
\hspace{0.3 cm}
\begin{subfigure}[c]{0.46\textwidth}
  \centering
   \includegraphics[width=0.8\linewidth, keepaspectratio, height=4cm]{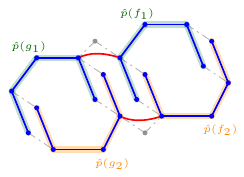}
  \end{subfigure}
    \caption{Left: four node-parts $g_1,g_2$ of $g\in G^*$, and $f_1,f_2$ of $f\in G^*$, resulting from an incision over a path of $G^*$ in which $f,g$ participate. In particular, $f,g$ are neighbors and the edge $(f,g)$ is duplicated as a result of the incision (its duplicates are solid {\bf \color{red} red}).
    Right: the corresponding components in $\hat{G}_\ell$ of the node-parts. Note, there is an edge ({\color{red}red}) connecting $\hat{p}(g_i)$ and $\hat{p}(f_i)$ in $\hat{G}_\ell$ for the edge $(g_i,f_i)$ in the dual after the incision ($i\in \{1,2\}$).
    \label{fig: hatGell_local_overview}}
\end{figure}

We show that the resulting graph $\hat{G}_{\ell}$ is planar, has diameter $O(D)$, and has a structure similar to $G$ which allows to simulate on it algorithms efficiently using the communication network $G$.
Moreover, $\hat{G}_{\ell}$ has a similar structure to the union of level-$\ell$ annuli, which allows us to simulate Reif's algorithm efficiently on them. 
Defining the graph $\hat{G}_\ell$ is challenging. In particular, it aims to capture a structure of $G$, that in a sense is global
(not only faces of $G$, but also  face-parts of $G$ and the way they interact after incisions).
Since this information is not known a priori (our aim is to learn it via $\hat{G}_\ell$), $\hat{G}_\ell$'s definition should rely merely on a more restricted (possibly local) information. E.g. the neighborhood of vertices and whether incident edges are incised in the dual.
We succeed to define the graph $\hat{G}_\ell$ that achieves both. A global structure with a local construction.
In fact, we define one graph $\hat{G}_\ell$ that captures the topology of {\em all} annuli of the same recursion level, simultaneously.

\medskip
\noindent
{\bf Minor-aggregation on annuli.}
More concretely, we can simulate minor-aggregation algorithms (a certain type of algorithms that are based on contractions and aggregate operations, see \cref{sec:Minor-aggregation}), which is enough in order to simulate SSSP as needed for Reif. In particular, to simulate such algorithm we need to solve the \emph{part-wise aggregation} problem, where the goal is to compute an aggregate function over different sets of nodes $S_i$ (where the sets $S_i$ are connected and disjoint from each other). We show that these sets $S_i$ on the dual graph correspond to connected disjoint subgraphs on the graph $\hat{G}_{\ell}$. Since $\hat{G}_{\ell}$ is a planar graph of diameter $O(D)$,  we can simulate such computations on $\hat{G}_{\ell}$ in $\tilde{O}(D)$ rounds using \emph{low-congestion shortcuts}~\cite{GhaffariH16_shortcuts, GhaffariH21_shortcuts, HaeuplerIZ21_shortcuts} for planar graphs (see \cref{sec: IDs_to_face_parts} and \cref{sec:Minor-aggregation} for details). 
Using this approach, we can already obtain a $D\cdot n^{o(1)}$ round complexity for the (approximate) SSSP problems we need to solve. To do so, we can use as a black-box a minor-aggregation distributed SSSP algorithm~\cite{GHSYZ22}. Note that we use approximate SSSP for the lack of almost-$D$-rounds exact distributed SSSP. Nonetheless, we show that this does not impair the correctness of Reif's algorithm, and that we get an approximation to the minimum $st$-cut (or maximum $st$-flow). The only issue with using the SSSP of~\cite{GHSYZ22} is its $D\cdot n^{o(1)}$ round complexity. To obtain the promised $\tilde{O}(D)$-round complexity we need an additional ingredient described next. 

\medskip
\noindent
{\bf Improved complexity via Eulerian orientation.} 
In order to remove the $n^{o(1)}$ factor and achieve our promised $\tilde{O}(D)$ bound, we wish to use another SSSP algorithm~\cite{RozhonGHZL22_shortestpaths}. This algorithm indeed requires  $\tilde{O}(D)$ rounds, but it is not a minor-aggregation algorithm. Specifically, exactly one procedure that it uses is not in the minor-aggregation model: a procedure that  finds an {\em Eulerian Orientation} to a certain (Eulerian) graph $H$ that is (roughly) a subgraph of the input graph to the SSSP algorithm. 
This problem in~\cite{RozhonGHZL22_shortestpaths} asks to find an orientation to the edges of $H$, so that the in-degree equals the out-degree for all nodes. 
In our case, $H$ is a (Eulerian) subgraph of an annulus. In \cref{sec: SSSP} we show how to solve this problem distributively in $\tilde{O}(D)$ rounds for all annuli of a given recursive level.

A classical (centralized) solution~\cite{Euler_par_AtallahV84} is: (1) nodes of $H$ pair their incident edges, locally decomposing the graph into edge-disjoint cycles. Then, (2) orient each cycle consistently (thus, each two paired edges are oriented in opposite directions as needed).
Trying to simulate this on annuli, we face two challenges: 
(1) A node $f$ in an annulus is not a real node (it is a face-part, simulated by many nodes of $G$). I.e., it does not know its neighbors and cannot pair its edges. 
(2) In order to orient the resulting cycles, we need to run a procedure on all of those cycles simultaneously. However, we do not know how to simulate algorithms efficiently on multiple subgraphs of annuli that are not disjoint (note that the cycles share nodes).  

The first challenge can be solved on the face-part $p(f)$ of $f$. I.e., pairing consecutive pairs of edges on $p(f)$ directly implies a pairing on $f$'s edges (here we use planarity). This can be computed by simulating aggregate computations on the (connected and disjoint) subgraphs $\hat{p}(f)$ in $\hat{G}_\ell$. 
To deal with the second challenge, we augment the graph $\hat{G}_\ell$ such that the cycles we need to orient correspond to \emph{disjoint} subgraphs of $\hat{G}_\ell$. This is done by splitting some of the vertices of $\hat{G}_\ell$. Once the (subgraphs that map to) cycles are disjoint, we can solve the orientation problem in $\tilde{O}(D)$ rounds using a minor-aggregation algorithm over the graph $\hat{G}_\ell$.

\medskip
\noindent
{\bf Organization.}
In the remainder of this section, we review some preliminaries. In \cref{sec:Centralized Reif} we discuss Reif's centralized algorithm in detail and introduce some useful notions and extensions for its distributed implementation. In \cref{sec: distributed_reif}, we provide our main contribution: A distributed simulation of incisions on the dual graph $G^*$ using $\hat{G}_{\ell}$. In \cref{sec: SSSP} we solve the aforementioned Eulerian orientation problem. 
Finally, \cref{sec: implemetnation_details} puts everything together and provides a concrete distributed implementation of Reif's algorithm.

\subsection{Preliminaries}
\label{sec: preliminaries}
We denote by $G=(V,E)$ the undirected edge-weighted planar network of communication, and by $D$ the network's (unweighted) hop-diameter. 
We denote by $G[S]$ the subgraph of $G$ induced by a set of vertices or edges $S$.

\medskip
\noindent
{\bf The \boldmath$\CONGEST$ model.}
We work in the standard distributed $\CONGEST$ model~\cite{peleg-book}. Initially, each vertex knows only its unique $O(\log n)$-bit identifier and the identifiers of its neighbors. Communication occurs in synchronous rounds.
In each round, each vertex can send each neighbor a distinct $O(\log n)$-bit message on their common edge.
We assume that the edges of $G$ are weighted and that the weights are polynomially bounded integers.
Thus, the weight of an edge can be transmitted in $O(1)$ rounds. This is a standard assumption in the $\CONGEST$ model.

\medskip
\noindent
{\bf Planar embedding.}
Let $G = (V, E)$ be an undirected planar graph.
The {\em geometric} planar embedding of $G$ is a drawing of $G$ on a plane so that edges intersect only in vertices.
A {\em combinatorial} planar embedding of $G$ provides for each $v\in G$, the local clockwise order of its incident edges, such that, the ordering of edges is consistent with some geometric planar embedding of $G$. 
Throughout, we assume that a combinatorial embedding of $G$ is known locally for each vertex. This is done in $\tilde{O}(D)$ rounds using the deterministic planar embedding algorithm of Ghaffari and Haeupler~\cite{ghaffariH16_embedding}.

 \begin{figure}[htb]
  \centering
   \includegraphics[width=0.25\linewidth]{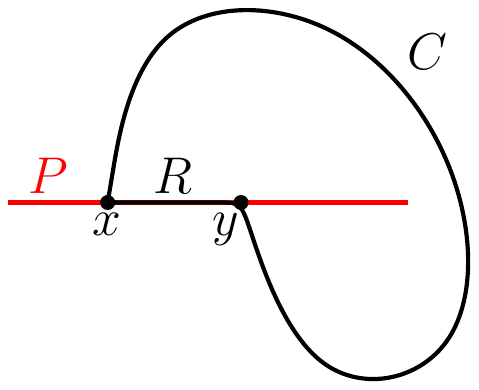}
   \caption{A (black) cycle $C$ crossing the ({\color{red}red}) path $P$ at an $x$-to-$y$ subpath $R$.  \label{fig:crossing}}
\end{figure}

\label{def: crossings}
\medskip 
\noindent
{\bf Crossing Paths.}
Consider a path $P = p_0,p_1,\ldots, p_t$. For an internal vertex $p_i$, all the edges incident to $p_i$ that appear between $(p_{i-1},p_i)$ and $(p_i,p_{i+1})$ in the clockwise order of $p_i$ are said to be on the {\em left side} of $P$. Symmetrically, the edges that appear between $(p_i,p_{i+1})$ and $(p_{i-1},p_i)$ in the clockwise order of $p_i$ are said to be on the {\em right side} of $P$.
Let $P$ and $C$ be undirected paths or cycles. We say that $P$ and $C$ {\em cross at subpath $R$} if (1) $P$ and $C$ share a proper (not a prefix or suffix) subpath $R$, and (2) The
edges of $C$ that follow and precede $R$ are in different sides of $P$. See \cref{fig:crossing}. Notice that it is possible that $R$ contains a single vertex.

\label{def: dual_graph}
\medskip 
\noindent
{\bf The dual graph $G^*$.}
The {\em dual} of the embedded {\em primal} planar graph $G$ is an embedded planar graph, denoted $G^*$ (see \cref{fig: st_cut_overview}).
The nodes of $G^*$ correspond to the faces of $G$.
For every edge $e$ in $G$ there is an edge $e^*$ in $G^*$ that connects the nodes corresponding to the two faces of $G$ that contain $e$ (and a self-loop when the same face appears on both sides of an edge). 
Observe that if two faces of $G$ share several edges then $G^*$ has several parallel edges between these faces. I.e., $G^*$ might be a multi-graph even when $G$ is a simple graph.
We refer to the vertices of $G$ as {\em vertices} and to the vertices of $G^*$ as {\em nodes}. 
We use the well-known duality (see e.g.,~\cite{KleinM_book,Reif83}) between primal cuts and dual cycles (see \cref{fig: st_cut_dual}):

\begin{fact}[Undirected Cycle-Cut Duality]
    \label{fac: cycle_cut_duality}
   A set of edges $C$ is a simple $st$-cut\footnote{A simple $st$-cut is a cut whose removal disconnects the graph into exactly two connected components, one containing $s$ and the other $t$.} in an embedded connected undirected planar graph $G$ if and only if $C$ is a simple cycle in $G^*$ that encloses exactly one of $s^*,t^*$ (the faces of $G^*$ corresponding to the vertices $s,t$ of $G$).
\end{fact}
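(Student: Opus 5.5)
The plan is to prove the two directions separately, using the standard correspondence between $E$ and $E^*$: a set of primal edges $C$ corresponds to a set of dual edges $C^*$. A preliminary step makes this exact. Removing $C$ from $G$ disconnects $u$ from $v$ if and only if every $u$-to-$v$ path in $G$ uses an edge of $C$. Dually, two faces $f,g$ are separated by a closed curve formed by $C^*$ if and only if every curve from $f$ to $g$ that avoids vertices of $G^*$ except along edges meets $C^*$. I will use the Jordan curve theorem for simple closed curves in the plane, applied to the drawing of $G^*$. I treat $s^*,t^*$ as the faces of $G^*$ containing the points $s,t$.

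($\Leftarrow$) Let $C^*$ be a simple cycle in $G^*$ enclosing exactly one of $s^*,t^*$, say $s^*$. By Jordan, $C^*$ is a simple closed curve splitting the plane into an interior and an exterior region. Each primal vertex lies in exactly one region, since it sits inside a dual face and not on $C^*$. Let $S$ be the set of vertices in the interior, so $s\in S$ and $t\notin S$.
\begin{itemize}
\item A primal edge $e$ crosses the drawing of $G^*$ only at $e^*$. Hence $e$ joins $S$ to $V\setminus S$ exactly when $e^*\in C^*$, so $C=\delta(S)$.
\item For simplicity, I must show $G[S]$ and $G[V\setminus S]$ are each connected. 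Suppose the interior contains two $G-C$ components. Then some face region inside $C^*$ separates them. Walking along that separation gives a closed dual curve inside $C^*$ consisting of edges not in $C^*$ whose duals are all cut edges. This contradicts the fact that $C=\delta(S)$ and all edges of $\delta(S)$ are dual to $C^*$.
\item A cleaner route for this connectivity step: the region strictly inside $C^*$ is a disk. The primal subgraph drawn in it, together with the primal edges crossing $C^*$, is connected because $G$ is connected and the dual edges inside the disk form a tree-like structure. I would argue it via Euler's formula, $|V|-|E|+|F|=2$, applied to the subgraph inside.
\end{itemize}

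($\Rightarrow$) Let $C=\delta(S)$ be a simple $st$-cut, with $G[S]$ and $G[V\setminus S]$ connected.
\begin{itemize}
\item In $G^*$, each dual node has even degree in $C^*$: walking around a face, the boundary alternates between $S$ and $V\setminus S$ an even number of times. So $C^*$ is an Eulerian subgraph, hence a union of edge-disjoint cycles.
\item Any nonempty proper sub-union of these cycles that itself forms a cycle $Z$ is dual to a cut $\delta(S')$ with $\delta(S')\subsetneq\delta(S)$. Here $S'$ is the set of vertices inside $Z$.
\item Now use connectivity. Since $G[S]$ is connected, no edge of $\delta(S)$ can be omitted. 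If it could, $S$ would split along $Z$ into two parts with no edge between them inside $G[S]$, contradicting connectivity. The same holds on the $V\setminus S$ side.
\item Therefore $C^*$ is a single simple cycle. Since $s$ and $t$ lie on different sides of $\delta(S)$, the cycle separates $s^*$ from $t^*$, i.e., it encloses exactly one of them.
\end{itemize}

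The main obstacle I expect is the minimality/connectivity argument: showing that "both sides connected" is equivalent to "$C^*$ is one simple cycle and not a union of several cycles or a cycle with repeated nodes". Repeated nodes are the subtle case, since an Eulerian subgraph may pass through a dual node twice. I would handle this by noting the following. If $C^*$ visits a face $f$ twice, the boundary walk of $f$ alternates sides at least four times. That yields two distinct arcs of $f$ in $S$, or two in $V\setminus S$. These arcs lie in different regions cut out by the two loops of $C^*$ through $f$, which contradicts connectivity of that side. Since this is a well-known fact, citing \cite{KleinM_book,Reif83} for the topological details is acceptable.
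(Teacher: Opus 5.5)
The genuine gap is in the $(\Leftarrow)$ direction, at exactly the step that makes the cut \emph{simple}. (The paper gives no proof of this fact, only citing \cite{KleinM_book,Reif83}, so your argument has to stand on its own.) You correctly obtain $C=\delta(S)$, where $S$ is the set of primal vertices inside $C^*$. However, neither of your two arguments shows that $G[S]$ and $G[V\setminus S]$ are connected.

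\begin{enumerate}
\item In the first argument, the edges separating two components of $G[S]$ are edges of $C$ itself, since every edge leaving a component of $G[S]$ lies in $\delta(S)$. So the separating dual curve lies in $C^*$, not outside it, and no contradiction with $C=\delta(S)$ arises this way.
\item The second argument rests on a claim that is false in general: the dual edges inside the disk need not be tree-like. For example, if no edge at some $v\in S$ lies in $C$, the duals of the edges at $v$ bound the dual face containing $v$ and contain a cycle inside the disk. You also give no concrete Euler-formula count.
\end{enumerate}

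The missing step is short and uses the parity fact you already prove in $(\Rightarrow)$. Suppose $K\neq K'$ are components of $G[S]$.
\begin{itemize}
\item Then $\delta(K),\delta(K')\subseteq\delta(S)=C$.
\item Both are nonempty, because $G$ is connected and $t\notin S$.
\item They are disjoint, because a common edge would join $K$ to $K'$ inside $G[S]$.
\end{itemize}
Hence $\delta(K)^*$ is a nonempty proper subset of the edges of the simple cycle $C^*$ in which every dual node has even degree. This is impossible, because any nonempty proper edge subset of a simple cycle has a node of degree one. The same argument applies to $V\setminus S$, using $s\notin V\setminus S$.

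Your $(\Rightarrow)$ direction is the standard bond/circuit argument and is sound, but make the case analysis explicit. Take a proper simple subcycle $Z\subsetneq C^*$ with $Z^*=\delta(S')\subsetneq\delta(S)$. Do not assume $S$ is split by $S'$. Instead, connectivity of $G[S]$ and of $G[V\setminus S]$ forces each of $S$ and $V\setminus S$ to lie entirely on one side of $S'$. Hence $S'\in\{\emptyset,S,V\setminus S,V\}$, and each case contradicts $\emptyset\neq\delta(S')\subsetneq\delta(S)$. This also covers your worry about repeated dual nodes: an even-degree $C^*$ that is not itself a simple cycle (for instance, one revisiting a node, which then has degree at least $4$) always contains a proper simple subcycle, so no separate argument is needed.
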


\section{Centralized Reif}
\label{sec:Centralized Reif}
 In undirected planar graphs (Fact \ref{fac: cycle_cut_duality}), the weight of a minimum $st$-cut (or maximum $st$-flow) is  equivalent to the length of the shortest cycle $C$ in $G^*$ that encloses exactly one of the faces $s^*,t^*$.
Reif's centralized algorithm (\cref{alg: Reif}) from the 80's~\cite{Reif83} finds this cycle $C$ in $O(n\log n)$ time, using the following.

\begin{claim}[Reif~\cite{Reif83}]
\label{claim:Reif}
Let $f$ (respectively $g$) be any node of $G^*$ that belongs to $s^*$ (respectively $t^*$).  
Let $P$ be a shortest $f$-to-$g$ path in $G^*$.
The shortest cycle $C$ in $G^*$ that encloses exactly one of the faces $s^*,t^*$ is  the shortest cycle in $G^*$ that crosses $P$ exactly once. 
\end{claim}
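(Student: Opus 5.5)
We prove the claim by two inequalities. Let $C^\star$ denote a shortest cycle in $G^*$ enclosing exactly one of $s^*,t^*$, and let $C'$ denote a shortest cycle in $G^*$ crossing $P$ exactly once. Here the faces $s^*,t^*$ of $G^*$ are viewed as regions of the plane: $f$ lies on the boundary of $s^*$ and $g$ on the boundary of $t^*$. We show that $C'$ encloses exactly one of $s^*,t^*$, so $|C^\star|\le|C'|$. We then show that $C^\star$ can be taken to cross $P$ exactly once, so $|C'|\le|C^\star|$. Together these give equality, and every shortest cycle of one type is a shortest cycle of the other type.

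For the first inequality we use a parity argument. Extend $P$ to a curve $\gamma$ running from a point strictly inside $s^*$, through $f$, along $P$, through $g$, to a point strictly inside $t^*$. The two extra pieces lie inside the open faces, so no cycle of $G^*$ meets them. Any closed curve in general position meets $\gamma$ an even number of times if and only if both endpoints of $\gamma$ are on the same side of it (Jordan curve theorem). If $C'$ crosses $P$ exactly once, in the sense of the definition of crossing, then its intersection with $\gamma$ perturbs to exactly one transversal crossing. Shared subpaths $R$ where $C'$ stays on one side of $P$ (touchings) can be perturbed away with zero crossings. Hence the crossing number is odd, and $C'$ separates the interior point of $s^*$ from that of $t^*$. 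In other words, $C'$ encloses exactly one of the two faces. We may also assume $C'$ is simple: a non-simple closed walk with odd crossing parity contains a simple subcycle with odd parity that is no longer. Fact~\ref{fac: cycle_cut_duality} is stated for simple cycles, which is where this matters.

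For the second inequality we use an exchange argument on $C^\star$; this is the main step. By the same parity argument, $C^\star$ crosses $P$ an odd number of times. Suppose it crosses $P$ at $k\ge 3$ subpaths $R_1,\dots,R_k$. Walk along $P$ from $f$ to $g$ and let $x$ be the first vertex of $C^\star$ on $P$ and $y$ the last. First replace $C^\star$ by a cycle built from a subpath of $C^\star$ together with the segment $P[x,y]$. Since $P$ is a shortest path, $|P[x,y]|\le|C^\star[x,y]|$ for either arc of $C^\star$ between $x$ and $y$. So this shortcutting never increases the length. The difficulty is to choose the arc so that the enclosure of exactly one of $s^*,t^*$ is preserved, equivalently that the odd crossing parity with $\gamma$ is preserved.

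I would handle this with an uncrossing argument. Split $C^\star$ at two consecutive crossings $R_i,R_{i+1}$ (consecutive along $C^\star$) into two arcs $A$ and $B$. Replace the arc whose endpoints lie on $P$ with the corresponding segment of $P$, pushed slightly to one side. One of the two resulting closed curves has crossing number $k-2$ and is no longer than $C^\star$, because a subpath of a shortest path is shortest. Repeating this reduces the number of crossings to one. I would first try to make this rigorous by choosing crossings $R_i,R_j$ that are consecutive along $P$ rather than along $C^\star$. Then the $P$-segment between them meets $C^\star$ only at its ends, so the new closed walk is simple after removing touchings. Finally, I would check that the resulting cycle still encloses exactly one face, using the parity invariant. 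Making this local surgery precise for degenerate crossings, where $R$ is a single vertex or there are parallel dual edges, is the obstacle; the rest is bookkeeping.
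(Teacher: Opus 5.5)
Your plan follows the paper's own two-sentence sketch: a separating cycle must cross $P$, and you replace part of $C$ by a segment of $P$. But the exchange step, which you rightly call the main one, is not proved, and the surgery you describe does not work as stated. When $R_i,R_{i+1}$ are consecutive along $C^\star$, the segment of $P$ between them can contain other crossings $R_j$ and touchings of the complementary arc. The replacement then gives a closed walk that revisits nodes of $P$. For such a walk ``crossing number $k-2$'' is not defined, and splitting it into simple cycles gives no control over how often each piece crosses $P$. Choosing crossings consecutive along $P$ still leaves touchings on the segment, so ``simple after removing touchings'' is also unproved.

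The missing idea is to cut $C^\star$ at \emph{every} node it shares with $P$, not only at crossings, and to use additivity of your parity invariant instead of induction on $k$. Let $z_0,\dots,z_{m-1}$ be the nodes of $C^\star$ on $P$ in cyclic order along $C^\star$. Let $\alpha_a$ be the arc of $C^\star$ from $z_a$ to $z_{a+1}$, which has no interior node on $P$, and set $D_a=\alpha_a\cup P[z_a,z_{a+1}]$. Then:
\begin{enumerate}
\item Each $D_a$ is a simple cycle meeting $P$ in a single subpath, or is degenerate when $\alpha_a$ is itself an edge of $P$.
\item $|D_a|\le|\alpha_a|+|C^\star\setminus\alpha_a|=|C^\star|$, because $P[z_a,z_{a+1}]$ is a shortest path and $C^\star\setminus\alpha_a$ is another $z_a$-to-$z_{a+1}$ path.
\item Over $\mathbb{F}_2$ we have $\sum_a D_a=C^\star$, since $\sum_a P[z_a,z_{a+1}]$ is a $1$-cycle supported on the path $P$ and hence zero.
\end{enumerate}
Your parity is linear over $\mathbb{F}_2$; for example, count mod $2$ the edges of the cycle lying on a fixed primal $s$-$t$ path. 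So some $D_a$ has odd parity. That $D_a$ separates $s^*$ from $t^*$, is no longer than $C^\star$, and, since it meets $P$ in one subpath, crosses it exactly once. No analysis of degenerate crossings is needed.

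Your first inequality also has an error at the endpoints of $P$. The paper's definition counts only shared subpaths that are neither a prefix nor a suffix of $P$. So passing through $f$ or $g$ is never a crossing in that sense, even when it is a transversal crossing of your $\gamma$. Hence ``crosses $P$ exactly once in the sense of the definition'' does not give a single crossing of $\gamma$.

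For instance, suppose the first edge of $P$ is $(v,w)^*$, with $s,v,w,y$ consecutive on the face $f$. The dual cycle of $\delta(\{v,w\})$ uses $(sv)^*$ and $(wy)^*$ at $f$, and these lie on opposite sides of $\gamma$. It then crosses $P$ at $p_1$, provided $P$ leaves $p_1$ through an edge not incident to $v$ or $w$ and never returns. By the paper's definition this cycle crosses $P$ exactly once, yet it encloses neither $s^*$ nor $t^*$. With suitable weights it is far lighter than a minimum $st$-cut; a small grid already shows this.

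So the claim is true only when crossings at $f$ and $g$ are measured against the extension of $P$ into $s^*$ and $t^*$. This is exactly what the endpoint-case edge $e_0\in s^*$ in the paper's incision encodes. You should state explicitly that you prove the claim for crossings of $\gamma$, rather than derive them from the paper's definition.
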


Intuitively, $C$ must cross $P$ at least once, because it encloses exactly one of $s^*,t^*$. Moreover, we can assume that not more than once, because if it crosses $P$ at subpaths $R_1$ and $R_2$ then the subpath of $C$ between $R_1$ and $R_2$ can be replaced with the subpath of $P$ between $R_1$ and $R_2$, which is a shortest path (note this argument only holds in {\em undirected} planar graphs).

\begin{algorithm}[htb]
\caption{Reif's algorithm}
\KwIn{A $D$-diameter positively weighted undirected planar network $G$, and two vertices $s,t\in G$}
\KwOut{A minimum $st$-cut}
\label{alg: Reif}
\SetKwFunction{FindSPAnnulus}{ShortestCycle} 
\SetKwProg{Proc}{Procedure}{}{}

Compute an $f$-to-$g$ shortest path $P$ \tcc*[r]{$f,g\in G^*$ and $f\in s^*,g\in t^*$}

Incise along $P$ (results in a graph $A$) 

\Return \FindSPAnnulus{$A$, $0$, $|P|-1$}

\Proc(\tcc*[f]{$P_j,P_k$ define the annulus $A$}){\FindSPAnnulus{$A$, $j$, $k$}:}{

{\bf if} $k\leq j+1$  {\bf then} {\Return $\text{argmin}\{|P_k|,|P_j|\}$} \tcc*[r]{leaf annulus}

Compute $P_i$ for $i:=\lfloor (j+k)/2\rfloor$

Incise $P_i$ and contract common paths (results in annuli $A_l, A_r$)

\Return $\text{argmin}\{|P_i|$, |\FindSPAnnulus{$A_l$, $j$, $i$}|, |\FindSPAnnulus{$A_r$, $i$, $k$}|$\}$}

\end{algorithm}

To find $C$, Reif uses divide-and-conquer, based on the following. Let $C_i$ be the shortest cycle that crosses $P$ exactly once at a subpath that contains node $p_i$ of $P$. We can assume that $C_i$ does not cross any other $C_j$ because both are shortest cycles (the proof is similar to that of Claim~\ref{claim:Reif}). 
Reif's algorithm computes all cycles $C_i$ and declares the shortest one to be $C$.  First, the cycle $C_i$ is computed for $i =  |P|/2$. The cycle $C_i$ divides the graph $G^*$ into two subgraphs: the one enclosed by $C_i$, and the one not enclosed by $C_i$. 
The algorithm recurses on both subgraphs.

 \begin{figure}[H]
  \centering
   \includegraphics[width=0.3\linewidth]{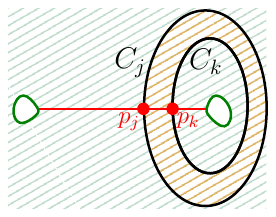}
   \caption{An annulus ({\color{orange}orange}) sandwiched between two cycles $C_j,C_k$.\label{fig: general_annulus}}
\end{figure}

Note that at most half of $P$'s vertices are strictly enclosed by $C_i$ and at most half are not strictly enclosed by $C_i$. Thus, the recursion depth is logarithmic. Observe that if $C_i$ and $P$ share a common $p_{i_1}$-to-$p_{i_2}$ subpath containing $p_i$, then we do not need to recursively compute $C_t$ for any $i_1<t<i_2$ since one of $C_{i_1}$ or $C_{i_2}$ must be as short as $C_t$ or shorter.  A formal proof is provided later in Claim~\ref{claim: Pi_P_intresection}.
Moreover, since $C_i$s do not cross, all subgraphs in the same recursion level are internally disjoint. 
 That is, every recursive subgraph (called {\em annulus}) is the subgraph sandwiched between two cycles $C_j,C_k$. See \cref{fig: general_annulus}.
 The goal of the recursive call in the annulus bounded by $C_j$ and $C_k$ is to compute (inside the annulus) the cycle $C_i$ for $i=\lfloor (j+k)/2\rfloor$.
 Note that two annuli in the same recursion level can intersect only in their bounding cycles. 
 
 Before computing all cycles $C_i$, Reif initially makes an \emph{incision} along $P$ (duplicating $P$'s edges and vertices, see \cref{fig: incision_P_overview}):

\begin{definition}[Incision along $P$]
\label{def: incision_P}
Splits every node $p_i$ of $P$ into two duplicates $p_i^0$ and $p_i^1$ called node-parts. Every edge $(p_i,p_{i+1})$ of $P$ is replaced by two edges $(p_i^0,p_{i+1}^0)$ and $(p_i^1,p_{i+1}^1)$, and each edge $(p_i,f)$ is replaced by an edge $(p_i^0,f)$ (resp. $(p_i^1,f)$) if it emanates left (resp. right) from $P$.  
\end{definition}
There is a subtle point regarding the endpoints of $P$: they are incident to only one edge of $P$ and two are required in order to define left and right. We therefore choose a non-$P$ edge of $s^*$ (resp. $t^*$) that is incident to $p_0$ (resp. $p_{|P|-1}$) to define left and right (and thus the duplicates) of $p_0$ (resp. $p_{|P|-1}$). 
A broader discussion on endpoints of $P$ appears in \cref{remark: incision_P}, \cref{sec: hat_G}.

\begin{figure}[htb]
    \centering

    \begin{subfigure}{0.35\textwidth}
        \centering
        \includegraphics[width=\linewidth]{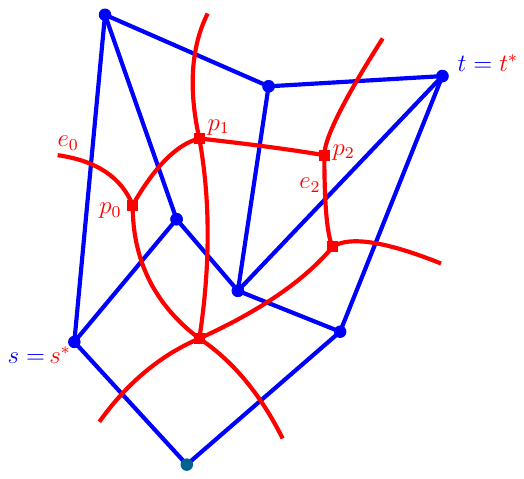}
    \end{subfigure}
    \hspace{0.5cm}
    \begin{subfigure}{0.27\textwidth}
        \centering
        \includegraphics[width=\linewidth]{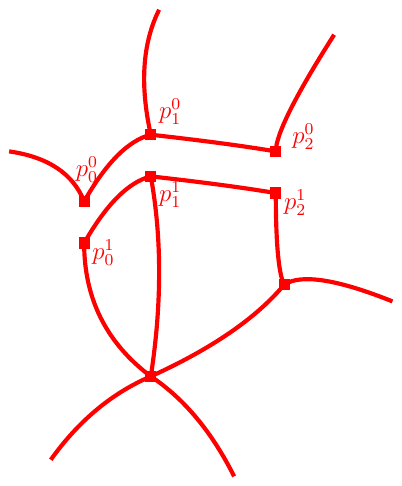}
    \end{subfigure}
           \caption{Left: The primal graph {\color{blue} $G$} in {\color{blue} blue} and its dual {\color{red} $G^*$} in {\color{red} red}. The path $P$ is $p_0,p_1,p_2$.
           The edges $e_0,e_2$ are the two non-$P$ edges that are incident to $P$'s endpoints in order to define the incision for them.
           Right: the dual graph after making an incision along $P$. 
           \label{fig: incision}}   
\end{figure}

The goal of this incision is that $C_i$ before the incision is equal to the shortest $p_i^0$-to-$p_i^1$ path $P_i$   after the incision, and $P_i$ (being a shortest path and not a cycle) can be found by an SSSP computation. Moreover, $P_i$ does not cross $P_j$ for any $j \neq i$ for the same reason that $C_i$ does not cross $C_j$.   
Therefore, after finding $P_i$ we can make an incision along $P_i$.  
Incising $P_i$ is similar to incising $P$ (\cref{def: incision_P}) in the sense that we duplicate nodes and edges of $P_i$. However, it is more delicate as we need to make sure not to duplicate edges of $P_i$ that overlap with a previously incised path such as $P$ or some $P_j$ (which would lead to high congestion). We discuss this and formally define incisions on $P_i$s shortly in \cref{sec: Refis_extensions}.
The incision along $P_i$ then guarantees that all the annuli in a given recursive level $\ell$ are connected and disjoint.
We prove this formally in \cref{sec: incision_decomposition}.  
In the centralized setting, finding $P_i$ in an annulus can be done in time linear in the size of the annulus by using the linear time (exact) SSSP algorithm of \cite{Henzinger}.  
Since the annuli in every recursion level are  disjoint, this sums up to $O(n)$ time per level and therefore to $O(n\log n)$ time overall.

\begin{algorithm}[htb]
\caption{Extended Reif}
\KwIn{A $D$-diameter positively weighted undirected planar network $G$, two vertices $s,t\in G$, and a precision parameter $(\log n)^{-O(1)}\leq \epsilon$.}
\KwOut{A  {\color{red} $(1+\epsilon)$ approximate} minimum $st$-cut} 
\label{alg: extended_Reif}
\SetKwFunction{FindSPAnnulus}{ShortestCycle} 
\SetKwProg{Proc}{Procedure}{}{}

Compute an $f$-to-$g$ shortest path $P$

Incise along $P$ (results in a graph $A$) 

\Return \FindSPAnnulus{$A$, $0$, $|P|-1$}

\Proc
{\FindSPAnnulus{$A$, $j$, $k$}:}{

{\bf if} $k\leq j+1$  {\bf then} {\Return $\text{argmin}\{|P_k|,|P_j|\}$}

{\color{red}\If(\tcc*[f]{do not recurse on overlapping annuli}){$P_j\cap P_k\neq \emptyset$}{
Compute a $(1+\epsilon)$ approximate SSSP tree $T$ from $x\in P_j\cap P_k$\\
\Return the minimal $p_t^0$-to-$p_t^1$ path in $T$ for ${j\leq t\leq k}$ 
}}

Compute a {\color{red} $(1+\epsilon)$ approximate} path $P_i$, for $i:=\lfloor (j+k)/2\rfloor$ 

Let $P_i'$ be the maximal subpath of $P_i$ that is internally-disjoint from (copies of) $P$, and let $p^0_{i_1},p^1_{i_2}$ be the endpoints of the common subpath of $P_i$ and (copies of) $P$.
 \tcc*[f]{$P_i$ intersects $P$ with a subpath}
 
Incise $P_i'$ (results in annuli $A_l, A_r$)

\Return $\text{argmin}\{|P_i|$, |\FindSPAnnulus{$A_l$, $j$, $i_1$}|, |\FindSPAnnulus{$A_r$, $i_2$, $k$}|$\}$}

\end{algorithm}

\subsection{Extensions of Reif}
\label{sec: Refis_extensions}
We introduce here two small changes to Reif's algorithm  (marked in red  in \cref{alg: extended_Reif}) that do not affect his original centralized version but are needed for our  distributed version.

\medskip
\noindent
{\bf Overlapping annuli.} 
Since Reif assumes that the total number of edges over all annuli of the same level is  $O(n)$, whenever we incise  a path $P_i$, we must make sure that we do not duplicate edges too many times. Reif takes care of this as follows.
Consider the second time we want to incise an edge of $G^*$. 
This can only happen when there is an annulus sandwiched between two paths $P_k, P_j$, and the path $P_i$ (for $i=(k+j)/2$) shares a common subpath (between two nodes $x$ and $y$) with $P_k$ or $P_j$ (or both). 
See \cref{fig:contraction}. To cope with this, Reif used the following simple solution: Assume $P_i$ shares a common $x$-to-$y$ subpath with $P_k$ (the case of $P_j$ is similar), then all paths $P_t$ for $k\le t \le i$ must also share this $x$-to-$y$ subpath. Therefore, before making an incision along $P_i$ and recursing into the annulus sandwiched between $P_k$ and $P_i$, Reif contracts the $x$-to-$y$ subpath into a single edge. Similarly, if $P_i$ and $P$ share a common subpath then Reif contracts this subpath before recursing.

\begin{figure}[t]
    \centering
    \begin{minipage}[t]{0.3\textwidth}
\centering   
\includegraphics[width=1\linewidth,height=4.5cm, keepaspectratio]{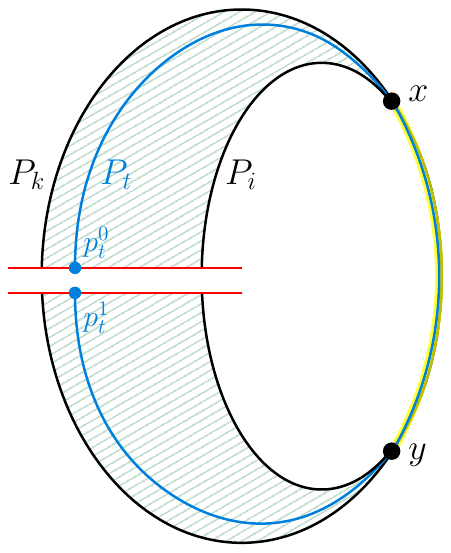}
   \caption{$P_i$ shares the (yellow) $x$-to-$y$ subpath with $P_k$. All paths $P_t$ for $k\le t \le i$ also share this subpath. Red paths are incision duplicates of $P$.\label{fig:contraction}}
    \end{minipage}
    \hfill
    \begin{minipage}[t]{0.67\textwidth}
        \centering
    \includegraphics[
        width=0.5\linewidth,
        height=4.5cm,
        keepaspectratio]{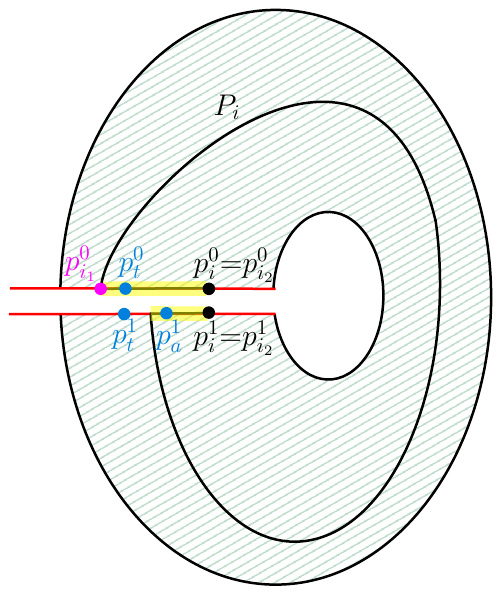}
\hspace{0.7 cm}
    \includegraphics[
        width=0.5\linewidth,
        height=4.5cm,
        keepaspectratio]{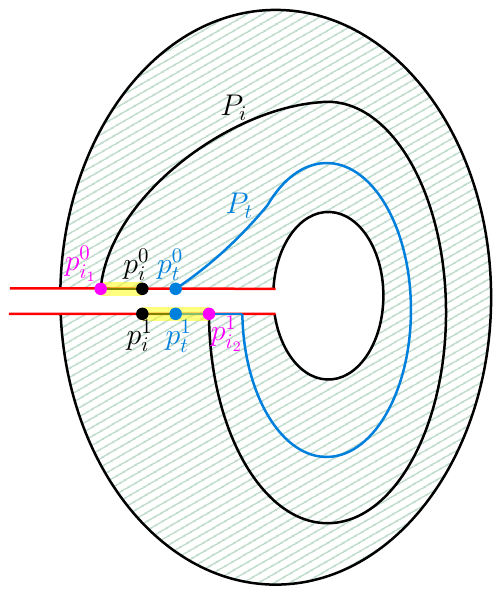}
\caption{The path $P_i$ shares a common (yellow) subpath with $P$. For every node $p_t$ along this subpath we have that $|P_{i_1}|\leq |P_t|$ or $|P_{i_2}|\leq |P_t|$. In the right (left) image, the common subpath contains (does not contain) an edge of $P$ and its copy. \label{fig: Pi_intersects_P}}
\end{minipage}
\end{figure}

Unfortunately, simulating such subpath contractions is infeasible in the distributed setting. This is because the contracted edges of $G$ would have too many corresponding dual edges in a given level of recursion, leading to high congestion. 
Luckily, as we show next, this can be solved easily.

First consider the case that $P_i$ shares a common $x$-to-$y$ subpath with $P_k$.    
Instead of contracting the $x$-to-$y$ subpath,  we apply the following simple solution: since all paths $P_t$ for $k\le t \le i$ must share the $x$-to-$y$ subpath, we compute an SSSP tree from $x$, which provides the distances from $x$ to every $p_t^0$ and $p_t^1$. Then, for $k\le t \le i$, we can declare $P_t$ to be the concatenation of the shortest $p_t^0$-to-$x$ and $x$-to-$p_t^1$ paths. This takes care of the entire  $P_k, P_i$ annulus and (in contrast to Reif's contraction) we are then done with the annulus (i.e., we do not recurse into it).
 \begin{claim}
 \label{claim: Pi_Pk_Pj_intresection}
Let $A$ be an annulus bounded by paths $P_k,P_j$ where $k<j$. Assume that $P_i$ for  $i = \lfloor(j+k)/2 \rfloor$  shares a node $x$ with $P_k$ (resp. $P_j$). Then, every path $P_{t}$ for $k\leq t\leq i$ (resp. $i\leq t\leq j$) is the  concatenation of the $x$-to-$p_{t}^0$ and the $x$-to-$p_{t}^1$ paths in $A$.
 \end{claim}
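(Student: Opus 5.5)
The argument is a planarity/non-crossing exchange argument inside the annulus $A$. I treat the case where $P_i$ shares a node $x$ with $P_k$; the $P_j$ case is symmetric. The claim really asserts that some shortest $p_t^0$-to-$p_t^1$ path in $A$ passes through $x$. Once that holds, the path splits at $x$ into a shortest $p_t^0$-to-$x$ path and a shortest $x$-to-$p_t^1$ path. Conversely, the concatenation of those two shortest paths has length $d_A(p_t^0,x)+d_A(x,p_t^1)$, which equals $|P_t|$. So the whole proof reduces to showing that $P_t$ can be chosen to contain $x$.

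\textbf{Step 1 (separation).} Fix $k\le t\le i$. In the annulus $A$, the endpoints $p_t^0,p_t^1$ lie on the two incised copies of $P$ bounding the region between $P_k$ and $P_i$. I also use the fact stated earlier that the paths $P_t$ may be assumed pairwise non-crossing (by the same exchange argument as in Claim~\ref{claim:Reif}). Then $P_t$ lies in the sub-region of $A$ sandwiched between $P_k$ and $P_i$. Both $P_k$ and $P_i$ pass through $x$, so they pinch this region at $x$. Using planarity (a Jordan-curve type argument), every $p_t^0$-to-$p_t^1$ path inside the region bounded by $P_k\cup P_i$ must meet $P_k\cup P_i$. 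More precisely, $P_k$ and $P_i$ together with $x$ separate $p_t^0$ from $p_t^1$ within that region, so $P_t$ must touch $P_k$ or $P_i$ at some node $y$.

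\textbf{Step 2 (exchange).} Say $P_t$ meets $P_i$ at a node $y$. Write $P_t = P_t[p_t^0,y]\circ P_t[y,p_t^1]$, and split $P_i$ at $y$ and at $x$. Every subpath of a shortest path is a shortest path, so $P_i[y,x]$ is a shortest $y$-to-$x$ path in $A$. I reroute: the path $P_t[p_t^0,y]\circ P_i[y,x]\circ P_i[x,y]\circ P_t[y,p_t^1]$ is a walk through $x$, but it has extra length $2\,d(x,y)$, so this naive reroute does not work. The correct exchange is different. Because $P_t$ is sandwiched between $P_k$ and $P_i$, it must touch $P_k$ at a node $y_1$ and $P_i$ at a node $y_2$ on the side of $x$ facing $p_t^0$, and symmetrically on the other side. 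I then uncross using the standard four-path uncrossing inequality. For two shortest paths $Q$ (from $a$ to $b$) and $R$ (from $c$ to $d$) that share nodes, combining prefixes and suffixes yields an $a$-to-$d$ path and a $c$-to-$b$ path of total length at most $|Q|+|R|$, so each combined path is no longer than the corresponding shortest path. Applying this to $P_t$ and $P_i$, and to $P_t$ and $P_k$, I replace the portion of $P_t$ between its meeting points by the corresponding portions of $P_k$ or $P_i$. Those portions pass through $x$, since $x$ lies between the meeting points on both bounding paths by the ordering of endpoints along the incised $P$. The length of $P_t$ does not increase.

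\textbf{Main obstacle.} The hard part is the topological bookkeeping in Steps 1–2. I must argue that the meeting points on $P_k$ and $P_i$ lie on opposite sides of $x$, so that the rerouted path actually passes through $x$, and not merely that $P_t$ touches the boundary somewhere. I would handle this using the cyclic order of $p_k^0,p_t^0,p_i^0$ and $p_i^1,p_t^1,p_k^1$ along the boundary of $A$, combined with the non-crossing property. Once $x\in P_t$ is established, the statement follows from the subpath-optimality observation in the first paragraph.
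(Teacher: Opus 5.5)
There is a genuine gap: the proposal never proves that $P_t$ passes through $x$, which is the whole content of the claim. Step~1 concludes only that $P_t$ meets $P_k\cup P_i$ at \emph{some} node $y$, which is too weak, and Step~2 does not repair this. Three things go wrong there:
\begin{itemize}
\item The assertion that $P_t$ must meet $P_k$ and $P_i$ on both sides of $x$ is unjustified. In general $P_t$ may meet $P_k\cup P_i$ only at $x$ itself, and then there is nothing to exchange.
\item The ``four-path uncrossing inequality'' as you state it only controls the \emph{sum} of the two recombined lengths, not each one separately. The valid exchange is narrower: it swaps the segment between two common nodes of two shortest paths.
\item Your final sentence, that the replaced portions pass through $x$ because $x$ lies between the meeting points, is exactly the topological fact that needed proof.
\end{itemize}
Your ``main obstacle'' paragraph in effect concedes that this part is not done.

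The missing idea is that no exchange is needed; the separation argument alone forces $P_t$ through $x$. This is the paper's proof. Consider the cycle $C$ formed by three pieces: the segment $p_k^0,p_{k+1}^0,\ldots,p_i^0$ of the incised copy of $P$, the prefix of $P_k$ ending at $x$, and the prefix of $P_i$ ending at $x$. The start $p_t^0$ lies on $C$, while $p_t^1$ lies outside it, so $P_t$ must leave the region enclosed by $C$. It cannot leave through the $P$-segment or the $P_k$-prefix, because both are already incised and lie on the boundary of $A$, so $A$ has no edges on their far side. It cannot leave through a node of the $P_i$-prefix other than $x$ without crossing $P_i$, which non-crossing forbids. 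Hence $x\in P_t$. The concatenation form then follows from subpath optimality, exactly as in your first paragraph.
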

 \begin{proof}
 	Every path $P_{t}$ for $k\leq t\leq i$ (the  $i\leq t\leq j$ case is similar) starts from the vertex $p_{t}^0$ 
     that lies on the cycle formed by $p_{k}^0, p_{k+1}^0 \ldots p_{i}^0$ and the prefixes of $P_k$ and $P_i$ ending in $x$. The path $P_{t}$ must exit this cycle since $p_{t}^1$ is external to the cycle. 
     The only way to exit the cycle is through $x$ 
     and so $P_{t}$ must touch $x$ (and is therefore the concatenation of the $x$-to-$p_{t}^0$ and the $x$-to-$p_{t}^1$ paths). See \cref{fig:contraction}.
 \end{proof}

Second, consider the case that $P_i$ shares a common subpath with $P$.  
In this case, instead of contracting this common subpath (as Reif does), when incising along $P_i$ we do not incise along the common subpath, and only incise on the rest of $P_i$. 
All nodes $p_t$ that lie (strictly) on the common subpath simply do not compute their $P_t$ recursively. This is possible by the following claim (see also  \cref{fig: Pi_intersects_P}).

\begin{claim}
\label{claim: Pi_P_intresection}
    Assume that $P_i$ shares a common $p_{i_1}$-to-$p_{i_2}$ subpath with $P$. Then, for every $p_t$ where $i_1<t<i_2$ we have that either $P_{i_1}$ or $P_{i_2}$ is not longer than $P_t$ (i.e., we do not need to compute $P_t$).
  \end{claim}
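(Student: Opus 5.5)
We prove the claim with an exchange argument: use the shared segment of $P$ to turn $P_t$ into two shorter-or-equal candidates $P_{i_1}$ and $P_{i_2}$. Throughout we work in the graph $A$ obtained after incising along $P$. There $P_t$ is a shortest $p_t^0$-to-$p_t^1$ path, and $P_i$ contains a subpath $Q$ running along (copies of) $P$ from $p_{i_1}$ to $p_{i_2}$. In $G^*$, $Q$ corresponds to the subpath $P[p_{i_1},p_{i_2}]$, which is a shortest path since $P$ is.

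\textbf{Step 1 (topology).} Fix $i_1<t<i_2$. The closed curve formed by $P_i$, together with the incision duplicates of $P$ between $p_i^0$ and $p_i^1$, separates $p_t^0$ from $p_t^1$. This holds because $p_t$ lies strictly inside the shared portion, so on one side its copy is "glued" to $P_i$. Consequently $P_t$ must touch $P_i$ or $P$ on both sides of $p_t$.

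Concretely, as in the proof of Claim~\ref{claim: Pi_Pk_Pj_intresection}, we identify a cycle bounded by the segment $p_{i_1},\dots,p_{i_2}$ of one copy of $P$ together with $Q$. Depending on whether $Q$ uses an edge of $P$ and its copy (the two pictures in \cref{fig: Pi_intersects_P}), this cycle is either degenerate, meaning $Q$ coincides with that copy, or $Q$ walks along both copies. In either case $p_t^0$ or $p_t^1$ lies on $Q$. So $P_t$ starts, say, at $p_t^0\in Q$ and must leave the region between $Q$ and the other copy; it can do so only via a node of $P_i$ outside the open segment, or via the other copy of $P$. Since $P_t$ and $P_i$ do not cross (both are shortest), $P_t$ in fact leaves through the part of $P_i$ beyond $p_{i_1}$ or beyond $p_{i_2}$.

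\textbf{Step 2 (exchange).} Suppose $P_t$ leaves through a node $y$ on the $p_{i_2}$ side. Then $P_t$, returned to $G^*$, is a cycle $C_t$ through $p_t$ that crosses $P$ once. We replace the portion of $C_t$ near $p_t$ by the subpath of $P$ from $p_t$ to $p_{i_2}$. More precisely, consider the two arcs of $C_t$ from $p_t$ to the points where it meets $P_i$ beyond $p_{i_2}$ on each side. Rerouting both arcs to start at $p_{i_2}$ through $P[p_t,p_{i_2}]$ does not increase length, because $P$ is a shortest path in $G^*$ and hence in $A$ between copies. The result is a closed walk through $p_{i_2}$ that crosses $P$ exactly once, at a subpath containing $p_{i_2}$. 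After shortcutting it contains a $p_{i_2}^0$-to-$p_{i_2}^1$ path, so $|P_{i_2}|\le|P_t|$. The symmetric case gives $|P_{i_1}|\le|P_t|$.

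\textbf{Main obstacle.} Step 1 is the hard part: making the separation rigorous and accounting for the endpoint and left/right conventions of the incision (\cref{def: incision_P}). In particular one must show that the rerouted walk still crosses $P$ exactly once, rather than touching $P$ without crossing. I would handle this with a Jordan-curve argument on the incised graph $A$, viewed as a disc whose boundary is the two copies of $P$. There $P_i$ is a chord touching the boundary along $Q$, so $p_t^0$ and $p_t^1$ are separated relative to the chord's endpoints, and the rerouting becomes simply "replace the boundary portion", which is a standard uncrossing step.
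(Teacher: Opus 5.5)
Your strategy is the paper's: non-crossing forces $P_t$ through an end of the shared segment, and then an exchange finishes. But the step that actually yields $|P_{i_2}|\le|P_t|$ is not established, and the reason you give for it is the wrong one. Rerouting ``both arcs to start at $p_{i_2}$ through $P[p_t,p_{i_2}]$'' either charges that segment to both arcs, or tacitly assumes that one arc already passes through $p_{i_2}$. In neither reading does optimality of $P$ give the inequality: it only lower-bounds other $p_t$-to-$p_{i_2}$ routes, whereas you need to \emph{save} $|P[p_t,p_{i_2}]|$ on one arc.

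Here is the mechanism, working in the incised graph with $P^0,P^1$ the two incision copies of $P$. Suppose $p_t^1$ lies on the part $P^1[p_{i_2}^1,p_i^1]$ of $P_i$, while $p_t^0$ lies strictly on the other side of $P_i$. Non-crossing then forces $P_t$ to end with the suffix $P^1[p_{i_2}^1,p_t^1]$. Alternatively, if $P_t$ meets $P_i$ at a node $y$ preceding $p_{i_2}^1$, subpath-optimality of $P_i$ gives $|P_i[y,p_t^1]|\le|P_t[y,p_t^1]|$, and $P_i[y,p_t^1]$ passes through $p_{i_2}^1$. Either way, $P^0[p_{i_2}^0,p_t^0]\circ P_t[p_t^0,p_{i_2}^1]$ is a $p_{i_2}^0$-to-$p_{i_2}^1$ walk of length at most $|P_t|$, simply because the two incision copies of $P[p_t,p_{i_2}]$ have equal weight. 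This transfer of the shared segment to the other copy is what the paper means by ``$P_t$ can also be seen as a $p_{i_2}^0$-to-$p_{i_2}^1$ path''. It is missing from your Step 2, and optimality of $P$ plays no role in it.

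Step 1, which you yourself flag as the main obstacle, is where the case distinction lives, and it is not carried out. In the incised graph, $Q$ need not be one contiguous subpath of $P_i$. When the common subpath does not contain an edge of $P$ and its copy (with $i_1<i<i_2$), the shared part is a prefix of $P_i$ on one copy and a suffix on the other, joined only through the split node $p_i$. When it does contain such an edge (the paper's second case), both $p_t^0$ and $p_t^1$ may lie on $P_i$. The bound is then on $P_{i_1}$, obtained because $P_i[p_t^0,p_t^1]$ is itself a shortest $p_t^0$-to-$p_t^1$ path passing through a copy of $p_{i_1}$. Your ``symmetric case'' does not produce this.

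Finally, returning to $G^*$ and requiring the rerouted closed walk to ``cross $P$ exactly once'' before ``shortcutting'' creates an obligation you leave open; shortcutting can even delete $p_{i_2}$ from the walk. It is also unnecessary: any $p_{i_2}^0$-to-$p_{i_2}^1$ walk of length at most $|P_t|$ in the incised graph already gives $|P_{i_2}|\le|P_t|$.
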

\begin{proof}
Consider first the case where the common subpath does not contain an edge of $P$ and its copy (see \cref{fig: Pi_intersects_P} right). 
If $P_t$ is enclosed by $P_i$ (the other case is symmetric) then $P_t$ must include a proper suffix of the $p_{i_1}$-to-$p_{i_2}$ subpath of $P$. Otherwise, $P_t$ has to cross $P_i$ and we know this is forbidden. 
Having a shared suffix means that, while $P_t$ is a $p_t^0$-to-$p_t^1$ path it can also be seen as a $p_{i_2}^0$-to-$p_{i_2}^1$ path. I.e., $P_{i_2}$ cannot be longer than $P_t$.  

Now consider the case where the common subpath does contain some edges of $P$ and their copies (see \cref{fig: Pi_intersects_P} left), and again assume w.l.o.g. that $P_t$ is enclosed by $P_i$. In this case, $P_t$ must begin with a $p_t^0$-to-$p_a^1$ prefix where $p_a^1$ belongs to $P_i$. This prefix cannot be shorter than the $p_t^0$-to-$p_a^1$ subpath of $P_i$ (for otherwise $P_i$ would use it) that passes through $p_{i_1}^0$. 
This implies that $P_{i_1}$ cannot be longer than $P_t$.
\end{proof}

\begin{figure}[htb]
\centering
\begin{subfigure}[t]{0.45\textwidth}
  \centering
   \includegraphics[width=.55\linewidth]{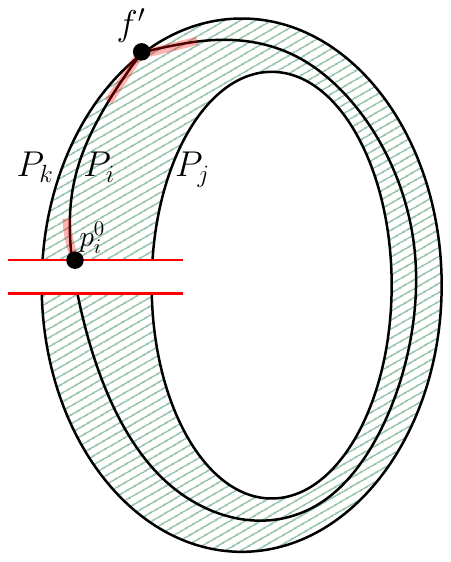}
\end{subfigure}
\hspace{0.7 cm}
\begin{subfigure}[t]{0.45\textwidth}
\centering
\includegraphics[width=.55\linewidth]{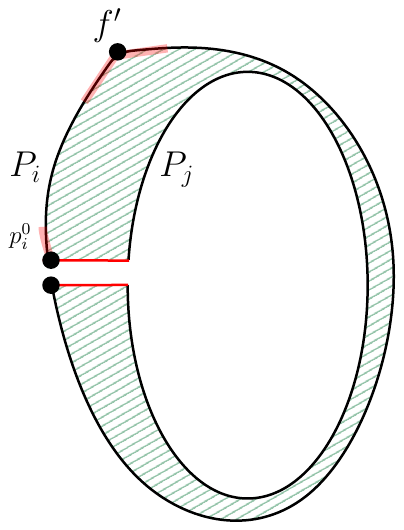}
\end{subfigure}
\caption{Left: An annulus of level $\ell-1$ bounded by paths $P_k,P_j$. $p_i^0\in P_i$ is a node-part of $p_i\in P$. The node-part $f'\in P_i$ is not a node of $P$.
{\color{red}Red} highlighted edges are edges of $P_i$ incident to $f'$ and $p_i^0$.
Right: the annulus defined by $P_i,P_j$ in level $\ell$. 
\label{fig: reursive_nodepart}}
\end{figure}

We conclude with a characterization of incisions on $P_i$s. See \cref{fig: incision_P_overview} and \cref{fig: reursive_nodepart}.
\begin{remark}[Incision along $P_i$]
\label{def: incision_Pi}
Let $A$ be an annulus defined by paths $P_j,P_k$.
Claim~\ref{claim: Pi_P_intresection} and Claim~\ref{claim: Pi_Pk_Pj_intresection} allow us to define incisions on the path $P_i$ ($i=\lfloor (j+k)/2\rfloor$) as follows. We incise (duplicate) only edges of the maximal subpath $P_i'$ of $P_i$ that is internally-disjoint from $P$. If $P_i$ shares a subpath with  $P_j$ or $P_k$, we exclude those edges too from the incision.
\end{remark}

The edges we incise as in the above remark, together with the already incised edges of $P_i$, imply that all edges of $P_i$ are now incised once but not more. Moreover, this incision on $P_i$ defines an internal (resp. external) annulus of $A$ that is sandwiched by $P_i,P_k$ (resp. $P_j,P_i$). 
Such annuli might be discarded in case $P_i$ intersects $P_k$ (or $P_j$) because in this case (overlapping annuli) we terminate the recursion. Henceforth, we obtain the following lemma (a full proof is provided in \cref{appendix: centralized_missing_proofs}).

\begin{restatable}{lemma}{InciseEdgeOnce}
\label{lem: incise_edge_once}
    Every edge of $G^*$ is incised at most once in our implementation of Reif.
\end{restatable}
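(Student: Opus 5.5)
We argue by induction on the recursion level, maintaining the invariant that after level $\ell$ every edge of $G^*$ is incised at most once. In addition, the incised edges lying in a level-$\ell$ annulus are exactly the copies of edges on its bounding paths (copies of $P$, $P_j$, $P_k$). For the base case, the first incision along $P$ (\cref{def: incision_P}) duplicates every edge of $P$ exactly once. Here $P$ is a simple shortest path, so no edge appears twice on it, and no other edge is touched. After this incision there is a single annulus $A$, bounded by the two copies of $P$.

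For the inductive step, fix a level-$(\ell-1)$ annulus $A$ bounded by $P_j,P_k$. Let $P_i$ be the path computed in $A$ with $i=\lfloor (j+k)/2\rfloor$. There are two cases. If $P_j\cap P_k\neq\emptyset$, the overlapping-annuli branch of \cref{alg: extended_Reif} computes an SSSP tree and returns without incising anything, so this case is trivial. Otherwise we incise according to \cref{def: incision_Pi}: only edges of the maximal subpath $P_i'$ that is internally disjoint from (copies of) $P$, with any edges shared with $P_j$ or $P_k$ also excluded. By the invariant, every edge of $G^*$ that was previously incised and lies in $A$ lies on a copy of $P$, on $P_j$, or on $P_k$.

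The key step is to show that no edge of $P_i'$ after these exclusions was previously incised. Such an edge is an edge of $A$ that does not lie on any bounding path of $A$, and we must argue it could not have been incised earlier. Since $P_i$ is a simple path inside $A$, it uses each edge of $A$ at most once, so within this annulus each remaining edge is duplicated exactly once. The harder part is the interaction between different annuli at the same level. Annuli of the same level are internally disjoint, meeting only along their bounding paths, because the $P_t$ do not cross one another. Hence an edge interior to $A$ belongs to no other annulus of that level, and cannot be incised twice by two sibling computations.

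This disjointness is where I expect the main obstacle. It rests on the non-crossing property of the $P_t$, which is stated for exact shortest paths, while we compute $(1+\epsilon)$-approximate ones. I would resolve this by noting that disjointness is purely structural. The paths are incised sequentially inside nested annuli, and each $P_i$ is computed within the already-cut annulus $A$, so $P_i$ cannot leave $A$ and is automatically confined to it, whatever its length. To finish, we check that the invariant carries over. The two children $A_l,A_r$ are bounded by $P_i$ together with $P_k$ or $P_j$, and possibly copies of $P$. After the incision, all edges of $P_i$ are incised copies, namely the newly duplicated edges of $P_i'$ plus the excluded edges that already lay on $P$, $P_j$, or $P_k$. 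This is exactly the invariant for level $\ell$, which completes the induction.
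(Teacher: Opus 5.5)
Your proof is correct and follows essentially the same induction as the paper. Both arguments rest on two facts. First, any previously incised copy lying in the current annulus $A$ sits on a copy of $P$ or on $P_j,P_k$, and these are exactly the edges excluded when incising $P_i$ (\cref{def: incision_Pi}). Second, a never-incised edge has a single copy, so the simple path $P_i$ cuts it at most once.

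The difference is one of presentation. You make the first fact an explicit part of the inductive invariant and argue its maintenance, while the paper only asserts it inline. You also spell out why sibling annuli cannot both incise the same edge, correctly noting that this is structural and does not rely on non-crossing of approximate paths; the paper leaves this implicit.
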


\medskip
\noindent
{\bf Approximate Reif.}
The second change we do to Reif's algorithm is to turn it from an exact to an approximate algorithm. That is, we wish to compute $P$ as well as every $P_i$ with a $(1+\epsilon)$-approximate (rather than an exact) SSSP algorithm, simply because there is no distributed exact algorithm with an $\tilde O(D)$ round complexity. 
It is not immediately clear that the correctness of Reif's algorithm holds in this case (i.e., finds a $(1+\epsilon)$-approximate minimum $st$-cut). In particular, the crucial  property that $P_i$s do not cross (which relies on the triangle inequality) does not immediately hold.
However, we show that it does hold if the approximate SSSP algorithm obeys the {\em approximate triangle inequality} property (see formal definition in \cref{appendix: centralized_missing_proofs}). 
This property implies that any $u$-to$v$ subpath of a $(1+\epsilon)$-approximate shortest path $P_i$ returned by the SSSP algorithm is also a $(1+\epsilon)$ approximation to the $u$-to-$v$ shortest path.
The approximate shortest paths algorithm of~\cite{RozhonGHZL22_shortestpaths} that we use does not admit this property.  Using a reduction of~\cite{RozhonHMGZ23_triangle_inequality}, we augment~\cite{RozhonGHZL22_shortestpaths}'s SSSP algorithm to have this property (see \cref{section: SSSP}).

\begin{restatable}[Approximate Reif]{lemma}{lemApproximateReif}
\label{lem: approx_reif}
    Using a $(1+\epsilon)$-approximate SSSP algorithm that obeys the approximate triangle inequality, setting $\epsilon=O(\epsilon'/\log n)$   
    ensures that all $P_i$s do not cross and are $(1+\epsilon')$-approximate shortest $p_i^0$-to-$p_i^1$ paths.
\end{restatable}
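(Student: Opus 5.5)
We prove the two assertions together, by induction on the recursion level $\ell$ of \cref{alg: extended_Reif}. The inductive invariant is: every path computed at level $\ell$ (including $P$ at level $0$) is a $(1+\epsilon)^{\ell+1}$-approximate shortest path between its endpoints, and no two of these paths cross. Since the recursion depth is $O(\log n)$, choosing $\epsilon=c\,\epsilon'/\log n$ for a suitable constant $c$ gives $(1+\epsilon)^{O(\log n)}\le 1+\epsilon'$. This yields the approximation guarantee, provided each level loses at most one multiplicative factor $(1+\epsilon)$.

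\textbf{Approximate triangle inequality.} I would use this property in the form stated after \cref{lem: approx_reif}: every $u$-to-$v$ subpath of a returned path is a $(1+\epsilon)$-approximate $u$-to-$v$ shortest path. The same should hold in every annulus graph, since the SSSP black box is run there.

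\textbf{Non-crossing.} Suppose the computed $P_i$ crosses some previously computed $P_j$ (or $P$), or a sibling path, at two subpaths $R_1,R_2$. In the undirected planar annulus, crossing once would force $P_i$ to leave the annulus. Hence any crossing inside the annulus comes in pairs of intersection points $u,v$.

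Here the exact-case swap argument breaks down. Replacing the $u$-to-$v$ segment of one path by that of the other need not shorten anything, because both segments are only approximately shortest. My plan is to avoid exchanging segments at all and argue at the level of the algorithm, which is the standard way the approximate triangle inequality is used for tree-based SSSP. The SSSP output is a tree $T$ rooted at $p_i^0$, and $P_i$ is the tree path. Within one SSSP call, tree paths never cross, since they are paths in a tree. Across levels, $P_i$ is computed in the annulus $A$ whose boundary consists of the already incised $P_j,P_k$. So $P_i$ is confined to $A$, and it can only touch the boundary, never cross it: after incision the boundary paths are faces, and a path inside $A$ cannot pass to the other side.

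It remains to handle the non-crossing of $P_i$ with the paths $P_t$ that the exact algorithm would have produced in other branches. I do not think this is needed. The algorithm's output is defined by the recursion itself, and correctness only requires that the optimal cycle $\mathcal C$ be approximated somewhere.

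\textbf{Approximation.} I would therefore reformulate the statement to prove as: for every $t$, the recursion covering $p_t$ produces a path of length at most $(1+\epsilon')\,|C_t|$. Let $C_t$ be the exact shortest $p_t^0$-to-$p_t^1$ path in the original incised graph. At level $\ell$, $C_t$ may cross the boundary paths $P_j,P_k$ of its annulus. Where it does, reroute it along the boundary between the two crossing points. By the approximate triangle inequality, the boundary segment has length at most $(1+\epsilon)$ times the distance between those points, which is at most the length of the replaced segment of $C_t$. The new path lies inside the annulus and is longer by a factor of at most $(1+\epsilon)$. Iterating over the $O(\log n)$ levels gives the bound, and the overlapping-annuli case (Claim~\ref{claim: Pi_Pk_Pj_intresection}) is handled the same way using the SSSP tree from $x$.

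\textbf{Main obstacle.} The step I expect to be hardest is the rerouting. The two boundary paths were computed at different levels, so each rerouting must charge only a single $(1+\epsilon)$ factor per level. To control this, I would keep the boundary paths nested by level and reroute at most against the two boundary paths introduced at the current level.
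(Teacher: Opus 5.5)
Your proposal is correct and follows essentially the same route as the paper: the paper's appendix Claim is exactly your one-sided rerouting. It replaces the segments of a near-shortest path that lie on the wrong side of the computed path $P_i$ by the corresponding subpaths of $P_i$, bounds the cost by $(1+\epsilon)$ times the replaced segments via the approximate triangle inequality, and applies this inductively once per recursion level, so the total loss is $(1+\epsilon)^{O(\log n)}\le 1+\epsilon'$. The computed paths are non-crossing for the reason you give (each is computed inside its annulus). The ordering issue you flag as the main obstacle is resolved by that same induction: the rerouted path already lies in the parent annulus, so at each level only the single newly introduced boundary $P_i$ (one per level, not two as you wrote) needs to be rerouted against, with distances measured in the annulus where $P_i$ was computed.
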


The correctness of approximate Reif follows from the above lemma. 
The proof is simple and is given in \cref{appendix: centralized_missing_proofs}. 
Intuitively, starting from a $(1+\delta)$ approximate shortest $p_j^0$-to-$p_j^1$ path $P'_j$, we can replace each segment of $P'_j$ that crosses $P_i$ with a subpath of $P_i$. We do the same if $P_j'$ intersects $P$ until we have a path $P_j$ that does not cross any $P_i$ and intersects (each of the two copies of) $P$ with at most one subpath. Doing this, each time we lose at most an additional $(1+\epsilon)$ factor in the approximation.

\subsection{Towards a Distributed Implementation}
\label{sec: incision_decomposition}
The challenge in applying (an approximate version of) Reif's algorithm distributively is performing the incisions on the dual graph $G^*$. 
In particular, a face $f$ of $G$ does not exist as an atomic computational unit, meaning, there is no entity that knows (even basic) information about $f$. E.g. the number of edges in $f$, or the identities of $f$'s neighbors in $G^*$ and their ordering (embedding) around $f$.
In \cref{sec: distributed_reif} (our main technical contribution) we discuss how to overcome this and perform incisions on $G^*$ distributively. Namely, we show that a decomposition of $G^*$ by incisions can be learned and simulated without any vertex of $G$ knowing the entire information related to a face $f$. Moreover, we show how to simulate {\em minor-aggregation} algorithms on $G^*$ while doing incisions (i.e., on the annuli obtained from the incisions). 
The minor-aggregation framework~\cite{GHSYZ22,GZ22} is an interface for recursive edge-contraction based algorithms, that can be implemented in $\CONGEST$ using the well-known tool of {\em low-congestion shortcuts}~\cite{GhaffariH16_shortcuts,GhaffariH21_shortcuts,HaeuplerIZ21_shortcuts}. The (approximate) SSSP algorithm of~\cite{RozhonGHZL22_shortestpaths} that we use, is (mostly) a minor-aggregation algorithm. In particular, it has only one component that is not a minor-aggregation. In \cref{sec: SSSP} we show how to overcome this.

We next define some notions and prove several properties on the structure of annuli (and related graphs in $G$). These will be useful for the distributed implementation. 
We view Reif's recursive incisions as a tree $\mathcal{T}$.

\begin{definition}[Incision decomposition $\mathcal{T}$]
\label{def: incision_decomposition}
The root of $\mathcal{T}$ corresponds to the entire graph $G^*$. The root has a single child corresponding to the graph  obtained from $G^*$ by performing an incision along the path $P$.
Every other node of $\mathcal{T}$ corresponds to an annulus  sandwiched between two paths $P_k,P_j$  as in Reif's algorithm (it is possible that one of $P_k,P_j$ is empty). Such a node has at most two children defined by the annulus sandwiched between $P_k,P_i$ and the annulus sandwiched between $P_i,P_j$ for $i=\lfloor (j+k)/2 \rfloor$ (i.e., the two annuli resulting from making an incision along $P_i$). Recall that if $P_i$ intersects $P_k$, then we do not recurse into the $P_k,P_i$ annulus and so there is only one child (corresponding to the annulus $P_i,P_j$). Similarly for the case where $P_i$ intersects $P_j$. If $P_i$ intersects both $P_j,P_k$ then there are no children (this is a leaf node). Another halting condition (leaf node) is when $j = k+1$, i.e., the annulus contains only two (consecutive) nodes of $P$.
\end{definition}

\noindent
{\bf Nodes and edges of annuli in $\mathcal{T}$.}
Recall that an incision duplicates nodes $f$ in the current level to new duplicates in the next level. 
We can view this process as splitting nodes $f\in G^*$ into two \emph{node-parts},  and splitting $f$'s edges between them. 
Note that each such node-part $f'$ is incident to a consecutive subset of edges of the node $f$; and recursively (in later incisions) $f'$ will be split into node-parts each incident to a smaller consecutive subset of edges of $f'$.
Since we duplicate incised edges, each duplicate 
of an edge incident to a node $f\in G^*$  belongs to only one node-part $f'$ of $f$.  Namely, the two edges in the incision path incident to $f$ have two copies, each in a separate node-part $f'$. We now formalize this. 

\begin{definition}[Node-parts]
\label{def: nodepart}
A node-part $f'$ of a node $f\in G^*$, defined by a pair of edges $e_1,e_2$ incident to $f$ in $G^*$, is incident to all edges between $e_1$ and $e_2$ in $f$'s clockwise order. \footnote{Including $e_1,e_2$. We follow this convention throughout the paper, unless when stated otherwise.}
\end{definition}

The next claim follows from the above.
See \cref{fig: reursive_nodepart}.

\begin{restatable}{claim}{NodesAreNodeparts}
\label{claim: nodes_are_nopdeparts}
In any level $\ell$ of $\mathcal{T}$, all nodes in annuli of level $\ell$ follow the definition of node-parts (\cref{def: nodepart}). Moreover, the consecutive subsets of edges of $f\in G^*$ that define node-parts in level $\ell$, are disjoint, except in their first or last edge. These (first and last) edges are (possibly) duplicated as a result of an incision, such that, each duplicate is incident to a distinct node-part.
\end{restatable}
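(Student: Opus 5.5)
The plan is induction on the level $\ell$ of $\mathcal{T}$. We maintain a stronger invariant: for every $f\in G^*$, the level-$\ell$ node-parts of $f$ correspond to a family of arcs of the cyclic clockwise order of edges around $f$. Each arc $[e_1,e_2]$ contains its endpoints, and distinct arcs share at most an endpoint edge. Moreover, whenever two arcs share an endpoint edge $e$, the edge $e$ has been duplicated by an incision, and each arc holds its own copy of $e$.

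\emph{Base cases.} At level $0$ (the root, $G^*$ itself) every node $f$ is a node-part given by the full cyclic order, with $e_1=e_2$ or the whole cycle. The invariant holds trivially.

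At level $1$ the graph is obtained by incising $P$ as in \cref{def: incision_P}. An internal node $p_i$ of $P$ is split into $p_i^0$ and $p_i^1$.
\begin{itemize}
\item $p_i^0$ receives the edges strictly between $(p_{i-1},p_i)$ and $(p_i,p_{i+1})$ on the left, together with copies of these two $P$-edges.
\item $p_i^1$ receives the right side together with the other copies.
\end{itemize}
Each of these is a consecutive arc bounded by the two $P$-edges. The two arcs share exactly their endpoint edges, and those edges are duplicated with one copy on each side. For the endpoints $p_0$ and $p_{|P|-1}$, the chosen non-$P$ edge of $s^*$ (resp.\ $t^*$) plays the role of the second boundary edge, so the same conclusion holds. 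Nodes not on $P$ are unchanged.

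\emph{Inductive step.} Going from level $\ell-1$ to level $\ell$, we incise $P_i'$ inside an annulus $A$, following \cref{def: incision_Pi}. A node $g$ of $A$ is by induction a node-part of some $f\in G^*$, with arc $[e_1,e_2]$. If $g$ is internal to the incised subpath, its two incident incised edges $a,b$ lie in $[e_1,e_2]$. Splitting $g$ into left and right parts then gives the sub-arcs $[e_1\ldots a]\cup[b\ldots e_2]$ and $[a,b]$, each taking a copy of $a$ and of $b$.

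The key point is that the left/right sides of $P_i$ at $g$ must be taken with respect to the clockwise order restricted to $g$'s arc. Because $A$ is bounded by $P_j$ and $P_k$ (which are already incised), the arc of $g$ is "open" at $e_1,e_2$. So one side of $P_i$ at $g$ is the inner interval $[a,b]$, and the other side is the interval from $a$ back to $e_1$ and from $e_2$ to $b$. These are contiguous when read in $g$'s arc order; in the cyclic order of $f$ this second side is the pair of arcs $[e_1,a]$ and $[b,e_2]$.

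This is where the argument needs care, and I expect it to be the main obstacle. I would first try to show that $g$, when it lies on a boundary path, has $e_1$ and $e_2$ as its boundary-path edges. It would then follow that $P_i$ cannot enter and leave $g$ on the "outer" side except at a single node, where $P_i$ touches $P_j$ or $P_k$. Those overlapping cases are removed by \cref{claim: Pi_Pk_Pj_intresection} and \cref{def: incision_Pi}: they are either excluded from the incision or the annulus is discarded. If this holds, the right part is just $[b,e_2]$ or $[e_1,a]$ together with the part adjacent to the boundary, and so it remains a single arc of $f$.

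\emph{Endpoints and disjointness.} Endpoints of $P_i'$ are handled analogously to the endpoints of $P$. An endpoint is either a node $p_i^0$ or $p_i^1$, or a node where $P_i$ meets $P$, whose adjacent edges on $P$ serve as the second boundary.

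Disjointness of the new arcs, apart from the duplicated endpoint edges, follows from three facts. The new arcs partition $[e_1,e_2]$ except at $a$ and $b$. The arcs of other node-parts of $f$ were already disjoint by the inductive hypothesis. Finally, by \cref{lem: incise_edge_once} no edge is duplicated twice, so each shared endpoint edge has exactly two copies, one per node-part.
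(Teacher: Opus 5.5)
Your overall route matches the paper's: induction on the level, the level-$0$ and level-$1$ base cases via \cref{def: incision_P}, splitting a node-part's arc at the incised edges of $P_i'$, handling the endpoints of $P_i'$ through the adjacent edge on the copy of $P$, and using \cref{lem: incise_edge_once} for the duplication and disjointness part. The gap is in the case you flag as the main obstacle, and the closure you propose cannot work. Suppose an already-incised node-part $g$ with arc $[e_1,e_2]$ lies on $P_k$ (or $P_j$) and is an \emph{internal} node of $P_i$. Then the side of $P_i$ at $g$ that contains $e_1,e_2$ is $[e_1,a]\cup[b,e_2]$. This set wraps around through the part of $f$ lying across $P_k$, so it is in general not an arc of $f$. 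You therefore cannot finish by showing it ``remains a single arc of $f$''. For the same reason, your strengthened invariant is false as stated, because it is asserted for \emph{all} level-$\ell$ node-parts of $f$.

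The fix is to restrict the invariant, as the claim itself does, to node-parts lying in level-$\ell$ annuli. In the configuration above, $e_1,e_2$ are edges of $P_k$, so the wrap-around side faces $P_k$ and lies in the region between $P_i$ and $P_k$. By \cref{claim: Pi_Pk_Pj_intresection} that region is solved directly and is not a level-$\ell$ annulus. So this side is a residual node-part in the sense of \cref{cor: residual_nodeparts}. The only level-$\ell$ node-part coming from $g$ is the side internal to $P_i$, namely $[a,b]$, which is an arc bounded by the two freshly duplicated $P_i$-edges. This is exactly the paper's ``(at most) one node-part in level $\ell$ containing all its edges internal to $P_i$''. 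Residual parts are never incised again, so the restricted hypothesis is enough to carry the induction.

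The remaining case is a previously-incised node on $P_i$ that is not on $P_j$ or $P_k$; such a node lies on a common subpath of $P_i$ with a copy of $P$. Interior nodes of that subpath are not incised at all. An endpoint of $P_i'$ there splits into $[e_1,a]$ and $[a,e_2]$, which share only $a$. One minor point: at the endpoints of $P$, the endpoint-case edge $e_0$ is not duplicated, since by \cref{remark: incision_P} it belongs only to $p_0^0$. So the two arcs there share only $(p_0,p_1)$, not $e_0$.
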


\begin{proof}
We prove formally by induction on levels of $\mathcal{T}$. 
Obviously, the claim holds for level zero (the graph $G^*$). In addition, from the definition of an incision on $P$ (\cref{def: incision_P}), the claim is immediate for level one ($G^*$ after incising $P$).
Assume the claim holds for a general level $\ell-1\geq 1$, we prove for level $\ell$. 
Recall, given a non-leaf annulus $A$ defined by paths $P_k,P_j$ ($k<j$), we incise on $P_i$ ($i=\lfloor (j+k)/2\rfloor$) to obtain (node-parts of) $A$'s child annuli in level $\ell$.

For non $P_i$ nodes, no change (incision) happens, hence, the claim  holds for such nodes that participate in a level $\ell$ annulus.
Thus, it is enough to prove for nodes of $P_i$. 
A node $f$ of $G^*$ that is on $P_i$ and was not incised earlier, is incised for the first time now, producing two node-parts in different annuli of level $\ell$. The claim follows from the definition of incisions on $P_i$ \cref{def: incision_Pi}. 

We now consider a node-part $f'\in P_i$ of a node $f\in G^*$ that resulted from an incision in an earlier level. See \cref{fig: reursive_nodepart}.
If $f'$ is an internal node of $P_i$ and also on one of $P_k$ or $P_j$ (w.l.o.g. $f'\in P_k$), then it defines (at most) one node-part in level $\ell$ containing all its edges internal to $P_i$ (see \cref{def: incision_Pi}, and \cref{fig: reursive_nodepart}).
Thus, the claim holds.
Otherwise, if $f'\notin P_k,P_j$ and was incised before, then, it lies on a common subpath of $P_i$ and (copies of) $P$. By \cref{def: incision_Pi} of incisions on $P_i$, we do not incise $f'$ (the claim holds), except when $f'$ is an endpoint of the internally-disjoint subpath of $P_i$ from $P$. 
In that case, $f'$ is split to (at most) two node-parts, defined by two consecutive subsets of edges of $f'$ that overlap with exactly one edge
(see \cref{def: incision_Pi}, and the node $f'=p_i^0$ in \cref{fig: reursive_nodepart}). I.e., the claim still holds.
\end{proof}

The next corollary directly follows from the above proof, and is important to the distributed implementation.

\begin{corollary}
    \label{cor: residual_nodeparts}
   Each node $f\in G^*$ has several node-parts\footnote{In fact up to four, this follows from the definitions of incisions on $P$ and $P_i$. However, we do not prove neither use this fact.}  in any given level of recursions. Some of which do not proceed to the next level's annuli. E.g. if their annuli were discarded.
    I.e., in each level of recursion, a node $f\in G^*$ may have some (residual) node-parts that do not participate in any annulus. 
    \end{corollary}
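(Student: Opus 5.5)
The corollary is read off from the inductive proof of \cref{claim: nodes_are_nopdeparts}, so I would trace that induction and record two things at each level: how many node-parts of a fixed $f\in G^*$ exist, and which of them belong to some annulus. First I would fix the bookkeeping. At level $\ell$, the node-parts of $f$ are the consecutive edge-intervals of $f$ (in the sense of \cref{def: nodepart}) produced by all incisions made so far, as described in \cref{claim: nodes_are_nopdeparts}. This holds whether or not the annulus containing an interval is still recursed on. Under this convention $f$ always has at least one node-part, and in general several, namely one per interval. That gives the first sentence of the corollary.

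Second, I would show that some of these node-parts can leave the recursion. The mechanisms are the ones listed in \cref{def: incision_decomposition}. When $P_i$ meets $P_k$ (resp.\ $P_j$), the annulus between $P_k,P_i$ (resp.\ $P_i,P_j$) is not recursed on; \cref{claim: Pi_Pk_Pj_intresection} handles it with a single SSSP computation. Leaf annuli with $j=k+1$ also stop. In every such case the incision along $P_i$ has already split each node $f'$ of $P_i$ into node-parts, and the node-part on the discarded side is a valid node-part by \cref{claim: nodes_are_nopdeparts}, yet it lies in no level-$(\ell+1)$ annulus. The same holds for node-parts strictly inside a common subpath of $P_i$ and $P$ whose $P_t$ is skipped by \cref{claim: Pi_P_intresection}. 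These are the residual node-parts. Once a node-part is residual it stays outside all annuli in later levels, because the annuli of a given level are created only by incising annuli of the previous level.

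The only delicate point is consistency. The residual node-parts must still satisfy the disjointness-except-endpoints property of \cref{claim: nodes_are_nopdeparts}, so that the level-$\ell$ intervals of $f$, residual or not, still cover $f$'s edges. I would take this directly from the inductive step of that claim, since the incision is defined before the decision to discard an annulus. No quantitative bound is needed, because the corollary deliberately does not claim the bound of four.
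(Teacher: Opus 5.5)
Your route is the paper's. The corollary is read off from the induction in \cref{claim: nodes_are_nopdeparts}: you count every edge-interval produced by the incisions so far as a node-part, and residual node-parts come from subgraphs that are discarded when $P_i$ meets $P_j$ or $P_k$. That argument is correct, and since the statement is purely existential, it already suffices.

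However, one of your listed sources of residual node-parts is wrong. Node-parts lying strictly inside a common subpath of $P_i$ and (a copy of) $P$ are not removed from the recursion by that overlap.

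\begin{itemize}
\item Only $P_i'$, which is internally disjoint from the copies of $P$, is incised (\cref{def: incision_Pi}). The other nodes of $P_i$ are not on the separating cycle used in the proof of \cref{claim: disjoint_annuli}, so they lie strictly on one side of $P_i'$. Hence they belong to $A_l$ or $A_r$.
\item By Claim~\ref{claim:bad edge}, a subgraph produced by the incision is discarded only if it contains a bad edge: a non-$P_i$ edge of $P_j$ or $P_k$ incident to a node of $P_i\cap P_j$ or $P_i\cap P_k$. An overlap with $P$ creates no such edge.
\item What \cref{claim: Pi_P_intresection} removes is only the indices $t\in(i_1,i_2)$ from the list of cycles still to be computed; the recursive calls use $[j,i_1]$ and $[i_2,k]$. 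The node-parts $p_t^0,p_t^1$ themselves remain in $A_l$ or $A_r$ and take part in the SSSP computations there.
\end{itemize}

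A smaller point: your sentence that in every case the incision along $P_i$ has already split the node-parts does not fit leaf annuli, where nothing is incised. Their node-parts simply carry over unchanged into later levels without belonging to any annulus.

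With the common-subpath example removed and the leaf case phrased this way, your argument coincides with the paper's.
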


Finally, Since the nodes (node-parts) of an annulus are incident to their own copies of edges that define them, we obtain the following claim.

\begin{restatable}{claim}{DisjointAnnuli}
\label{claim: disjoint_annuli}
    In any given level of $\mathcal{T}$, all annuli are connected and disjoint. 
   \end{restatable}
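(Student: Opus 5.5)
Induction on the level $\ell$ of $\mathcal{T}$, proving connectivity and disjointness together. I will use \cref{claim: nodes_are_nopdeparts}: every node of a level-$\ell$ annulus is a node-part, and every edge copy is incident to exactly one node-part of each of its endpoints.

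\textbf{Base cases.} Level $0$ is $G^*$, which is connected because $G$ is connected. Level $1$ is the graph $A$ obtained by incising $P$. It is connected because incising along a simple path (as opposed to a cycle) does not separate a connected planar graph. Geometrically, cutting along $P$ opens a slit without cutting the sphere into two regions. Combinatorially, every edge of $G^*$ survives in some copy, and the two copies $p_i^0,p_i^1$ remain joined through the endpoints' sides. These are the edges $e_0,e_2$ chosen in \cref{def: incision_P}: going around $p_0$ or $p_{|P|-1}$ connects the left and right sides. Disjointness at level $1$ is trivial, since there is one annulus.

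\textbf{Inductive step.} Let $A$ be a level-$(\ell-1)$ annulus bounded by $P_k,P_j$, and let $P_i$ be incised as in \cref{def: incision_Pi}.

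For \emph{disjointness} between children of the same $A$, each node $f'$ of $P_i$ that is incised splits into node-parts defined by consecutive edge sets (\cref{claim: nodes_are_nopdeparts}). The edges on the left of $P_i$ go to one part and those on the right to the other, and the shared first and last edges are duplicated. So no node-part and no edge copy lies in both the $(P_k,P_i)$ annulus and the $(P_i,P_j)$ annulus. Nodes not on $P_i$ lie strictly on one side: by \cref{lem: approx_reif}, $P_i$ does not cross $P_k,P_j$ and is a simple path from $p_i^0$ to $p_i^1$. Together with the boundary of $A$ it therefore separates $A$ into an inside and an outside, as in the Jordan curve argument. Children of distinct level-$(\ell-1)$ annuli are disjoint by the induction hypothesis, since children only use node-parts and edge copies of their parent. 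Discarded annuli and residual node-parts (\cref{cor: residual_nodeparts}) are simply not counted.

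For \emph{connectivity}, each child is the closed region between $P_i$ and $P_k$ (resp.\ $P_j$). Any node in that region reaches $P_i$'s copy on its side by following a path in $A$. If that path crosses $P_i$, we truncate it at its first meeting with $P_i$. The copy of $P_i$ on that side is itself a path, since its consecutive edges are duplicated into the same side. Hence every node is connected to this path.

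\textbf{Main obstacle.} The delicate point is the non-incised portions of $P_i$: subpaths shared with $P$ (by \cref{claim: Pi_P_intresection}) or with $P_k,P_j$. There the "copy of $P_i$ on a side" consists of edges that were already duplicated earlier, and I must check that the correct copy lies in each child. I would handle this by case analysis mirroring the proof of \cref{claim: nodes_are_nopdeparts}. In each case, the endpoints of the incised subpath $P_i'$ split into exactly two node-parts whose edge sets overlap in one duplicated edge. This gives a continuous boundary walk on each side.
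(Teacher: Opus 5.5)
Your skeleton matches the paper's proof. Both argue by induction on the level, separate the two children by a Jordan-curve argument on the incision path (disjointness), and route each node of a child to the copy of the incision path on its own side (connectivity). Your slit argument at level $1$ is also the paper's.

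\textbf{The gap.} The proposal is incomplete exactly at the step you flag, and your sketched resolution does not close it. Knowing that each endpoint of $P_i'$ splits into two node-parts whose edge sets overlap in one duplicated edge does not tell you where the \emph{non}-duplicated edge goes. Take the endpoint $p_x^0$, and let $e_a$ be the edge of $P_i$ there that runs along the copy of $P$. Since $e_a$ lies on $P_i$, it is neither ``left'' nor ``right'' of $P_i$, so your disjointness argument does not place it. It must go to the node-part on the side where $e_a$ borders the interior of $A$. If it went to the other part, the two children would be glued together through the non-incised segment of $P_i$ along $P$. Your ``continuous boundary walk'' needs exactly this fact, and it is not established.

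\textbf{How the paper closes it.} The paper never does this case analysis. It works only with $P_i'$, whose endpoints $p_x^0,p_y^1$ lie on the two copies of $P$. It adds a temporary edge $(p_x^0,p_y^1)$ embedded in the face containing the copies of $P$, closing $P_i'$ into a cycle $C$. Incising $P_i'$ is then the same as incising $C$ and deleting the two copies of the artificial edge. Since a cycle in a planar graph separates:
\begin{itemize}
\item the two copies of $C$ are node-disjoint, and each is attached only to strictly interior or only to strictly exterior edges, which gives disjointness;
\item interior nodes reach $C$ without passing through the exterior, so each side stays connected through its copy of $C$;
\item deleting the artificial copies disconnects nothing, because $p_x^0$ and $p_y^1$ stay joined along each copy of $P_i'$.
\end{itemize}
In this formulation the segments of $P_i$ along $P$ are not part of the separating curve at all. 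They are ordinary nodes on one side of $C$. The side that receives $e_a$ is forced by where the artificial edge sits in the rotation at $p_x^0$: it lies in the angle belonging to the face of the copies of $P$. Splitting $p_x^0$ by the first edge of $P_i'$ and the artificial edge therefore puts $e_a$ with the interior wedge between $e_a$ and $P_i'$.

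Your ``$P_i$ together with the boundary of $A$'' is the same Jordan-curve idea, with the curve closed along the boundary of $A$ instead. Closing it with a single edge through that face, and incising only $P_i'$, would replace your final paragraph with one line.
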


\begin{proof}
    We prove by induction on levels of $\mathcal{T}$. 
    Obviously, the claim holds for level zero ($G^*$). In addition, from the definition of an incision on $P$ (\cref{def: incision_P}), the claim is immediate for level one ($G^*$ after incising $P$). I.e., there is only one connected annulus in the level. The annulus is connected because the incision over $P$ duplicates $P$, but both its duplicates are connected to the other (via the faces $s^*,t^*$).  Thus, all nodes of the annulus (that were reachable to each other via $P$) are reachable to each other via the cycle composed of the two copies of $P$ and the edges of the faces $s^*$, $t^*$ of $G^*$.

    We assume the claim holds for level $\ell-1\geq 1$ and prove for level $\ell$. Let $A$ be an annulus of level $\ell-1$, then, it is enough to show that the claim holds for the child annuli  of $A$ (if any).
    Assume that $A$ is sandwiched between paths $P_k,P_j$ ($k\leq j$), and assume $k+1<j$ (otherwise $A$ is a leaf of $\mathcal{T}$). Thus, $A$  might have child annuli after incising $P_i$ for $i=\lfloor(k+j)/2\rfloor$.
    Let $P_i'$ be the subpath of $P_i$ that gets incised. I.e., the maximal subpath of $P_i$ which is internally disjoint from copies of $P$ (see \cref{def: incision_Pi}).
    Then, assume $P_i'$ does not intersect $P_k$, thus, one child annulus of $A$ is sandwiched between $P_k,P_i'$ (otherwise, $P_i',P_k$ do not define an annulus, and the claim holds trivially). The case of $P_j,P_i'$ is symmetric. We prove that this child  annulus is connected and is disjoint from the other child of $A$ (if any).
    
    The connectivity property is simple. Denote the endpoints of $P_i'$ by $p_x^0,p_y^1$. Note that $P_i'$ can be completed to a cycle, denoted $C$, if one adds the edge $(p_x^0,p_y^1)$. Adding this edge does not violate planarity, because it can be embedded inside the face that contains the copies of $P$. We stress that we add this edge for the sake of the proof only (i.e., this is not simulated in the algorithm). As known, any cycle in a planar graph is a {\em separating cycle}~\cite{KleinM_book}. I.e., the  removal of the cycle disconnects the graph into two, its interior and its exterior. In particular, nodes of $C$ are exactly the nodes of $P_i'$. Hence, incising $P_i'$ is the same operation as incising $C$ and then removing the duplicates of the artificial edge $(p_x^0,p_y^1)$. This obviously proves the disjointness argument because duplicates of $C$ are node-disjoint, and each duplicate is connected either to edges that are only strictly internal or only strictly external to $C$. Obviously, removing the duplicate of the edge $(p_x^0,p_y^1)$ to get back the child annulus of $A$ does not weaken the disjointness argument.
    
    The connectivity property is proved via a similar argument.
    Note that, the annulus $A$ after adding the artificial edge is connected ($A$ is connected). In particular,nodes of $C$ are connected to all nodes internal to $C$ via the interior of $C$. Otherwise, there are two internal nodes $f,g$ that are connected via an external node $h$ and without any node of $C$. This contradicts $C$ being separating. An analogous argument holds for the exterior of $C$.
    Thus, incising $C$ keeps its interior (exterior) connected via $C$ itself, and therefore, the child annuli of $C$:
    Removing the duplicate of the edge $(p_x^0,p_y^1)$ to get back the child annuli of $A$ does not weaken the argument, because, the endpoints of the edge $(p_x^0,p_y^1)$ remain connected via the copy of the path $P_i'$ that results from the incision. \qedhere
    \end{proof}

\medskip
\noindent
{\bf Correspondence to $G$ (face-parts).}
In terms of the primal graph, a face $f$ of $G$ (corresponding to a node) $f\in G^*$ gets split into smaller and smaller parts whenever $f$ is split into node-parts. Each associated with a connected subgraph of the face $f$ in $G$, called, {\em face-part} (See \cref{fig: pf_connected_in_G}):

\begin{figure}[htb]
    \centering
    \includegraphics[width=0.3\linewidth]{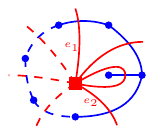}
    \caption{A (dual) node-part $f'$ of $f\in G^*$. Its incident edges (between $e_1$ and $e_2$, clockwise) in solid {\color{red}red}. Dashed {\color{red}red} edges are in $f$ and not in $f'$.  The corresponding (primal) face-part $p(f')$ is solid {\color{blue}blue}. Dashed blue are edges of the face $f$ that are not in $p(f')$.
\label{fig: pf_connected_in_G}} 
\end{figure}

\begin{claim}[Face-parts]
\label{claim:pf}
A node-part $f'$ of a node $f \in G^*$ corresponds to a connected subgraph of the face $f$ in $G$, that we denote as the {\em face-part} $p(f')$. I.e., $p(f')$ contains all edges of $G$ whose duals are incident to $f'$. 
\end{claim}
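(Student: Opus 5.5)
By \cref{claim: nodes_are_nopdeparts}, every node-part $f'$ of $f\in G^*$ is given by a pair of edges $e_1,e_2$ incident to $f$, together with all edges between them in $f$'s clockwise order (\cref{def: nodepart}). So it suffices to show that, for any such consecutive set, the primal edges $e^*$ dual to these edges span a connected subgraph of the face $f$ in $G$. We then define $p(f')$ to be exactly this set of primal edges with their endpoints. The claim that ``$p(f')$ contains all edges of $G$ whose duals are incident to $f'$'' then holds by definition. The only real content is connectivity.

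The key step is to relate the clockwise order of dual edges around $f$ to the boundary walk of the face $f$ in $G$. In a combinatorial embedding, the dual edges incident to $f$, read in clockwise order, correspond one-to-one to the primal edges encountered along the boundary walk of $f$, read in the same cyclic order. This is the standard rotation system / face-tracing correspondence: consecutive dual edges $e^*_a,e^*_{a+1}$ around $f$ correspond to consecutive boundary edges $e_a,e_{a+1}$ of $f$ that share a vertex, namely the corner of $f$ between them. I would state this as a fact from the embedding, in the style of \cite{KleinM_book}, rather than reprove it.

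Given this correspondence, a consecutive block $e_1,\dots,e_2$ of dual edges maps to a contiguous segment $e_1,e_2',\dots,e_2$ of the closed boundary walk of $f$. Each consecutive pair in the segment shares an endpoint, so the union of these edges is the image of a walk and is therefore connected.

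The main obstacle is handling degenerate faces. The face $f$ may not be a simple cycle: its boundary walk can repeat vertices, and even traverse an edge twice (a bridge, whose dual is a self-loop at $f$). In these cases a dual self-loop contributes two incidences in the rotation around $f$, and the block may contain only one of them. I would deal with this by working with edge-sides, i.e.\ darts, rather than edges. The boundary walk is a cyclic sequence of darts, and the clockwise dual rotation at $f$ is a cyclic sequence of the same darts. Contiguity, and hence connectivity via shared corner vertices, then holds verbatim.

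Finally, the case where the node-part contains all edges of $f$ (the pair $e_1=e_2$ wraps around) gives the whole face boundary, which is connected. Duplicated copies of edges created by incisions map to the same primal edge, so they do not affect the argument.
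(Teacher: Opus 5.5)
Your proof is correct and follows essentially the same route as the paper. The paper only appeals to the standard face--duality correspondence (Chapter~4 of Klein--Mozes): consecutive dual edges around $f$ are consecutive boundary edges of the face that share a corner vertex (cf.\ Claim~\ref{claim: shared_endpoint_dual_edges}), so a contiguous block of $f$'s rotation is a contiguous segment of the boundary walk and is therefore connected. Your dart-based treatment of non-simple faces (bridges, i.e.\ dual self-loops) spells out what the paper leaves to the cited reference and to its informal remark about trees attached to the face.
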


The proof follows from the formal definitions of faces and duality (see e.g., Chapter 4 of~\cite{KleinM_book}). Intuitively, the face-part $p(f')$ is a simple path (or a simple cycle when $e_1=e_2$) if the face $f$ is a simple cycle in $G$. Generally, $p(f')$ can be a tree since $f$ can be a simple cycle with attached trees in its interior.

Our goal is to perform {\em aggregations} on face-parts, 
which will enable us to simulate minor-aggregation algorithms simultaneously on all annuli in a given level of Reif's recursion. This extends an existing tool for aggregations on faces (as opposed to face-parts) by~\cite{GP17}, which enables minor-aggregation algorithms on the graph $G^*$ (without incisions)~\cite{planardistributedmaxflow24}.\footnote{We remark that~\cite{planardistributedmaxflow24} defined a different notion of face-parts in a different section of their paper and for a different purpose. However, their minor-aggregation simulation only works for the graph $G^*$ with nodes that correspond to faces (and not to face-parts).}
This will be possible because face-parts  overlap in a restricted manner:
\begin{claim}
\label{claim: edge_in_four_p(f)}
An edge $e\in G^*$ has at most four copies in any given level of $\mathcal{T}$. I.e., $e\in G$ participates in at most four face-parts (two for each face of $G$ that contains $e$).
\end{claim}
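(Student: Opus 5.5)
The plan is to derive the bound from \cref{lem: incise_edge_once} together with the structure of node-parts in \cref{claim: nodes_are_nopdeparts}. An edge $e\in G^*$ begins as a single edge in level zero. The only operation in $\mathcal{T}$ that creates new copies of an edge is an incision along a path that contains it: either the incision along $P$ (\cref{def: incision_P}) or an incision along some $P_i'$ (\cref{def: incision_Pi}). Each such incision replaces one copy of $e$ by two copies.

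For the incision along $P$ and the incisions along the $P_i'$, \cref{lem: incise_edge_once} says $e$ (as an edge of $G^*$, meaning any of its copies) lies on an incised path at most once in the whole recursion. So in any level, $e$ has at most two copies coming from path-incisions. The copies counted by the claim, however, are per face-part: the four copies are two per endpoint face. So I would phrase the count in terms of the endpoints $f,g$ of $e^*$.

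Fix an endpoint $f$. By \cref{claim: nodes_are_nopdeparts}, in level $\ell$ the edge sets of the node-parts of $f$ are consecutive intervals in $f$'s clockwise order. These intervals are disjoint except at their first and last edges, and each shared boundary edge is duplicated so that each duplicate belongs to a distinct node-part. I need to show that $e$ belongs to at most two node-parts of $f$. If $e$ is interior to an interval, it belongs to that node-part alone. If $e$ is a boundary edge, it was on an incised path, and the interval split happened at $e$. Since $e$ is incised at most once (\cref{lem: incise_edge_once}), $f$'s edge list is cut at $e$ at most once, so $e$ is shared by at most two intervals of $f$. The same holds at $g$. Translating through \cref{claim:pf}, the primal edge $e$ lies in at most two face-parts contained in face $f$ and at most two contained in face $g$, for a total of at most four. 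The count of copies of $e^*$ is at most two by the incision argument, which is consistent with this bound.

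The main obstacle is that a node-part's interval can also be bounded by an edge that was not itself incised. This happens at endpoints of incised subpaths (the endpoints of $P$, where an auxiliary non-$P$ edge of $s^*$ or $t^*$ defines left and right), and at endpoints of $P_i'$ where $P_i$ meets copies of $P$ or $P_j,P_k$. There a node $f'$ is split into two node-parts that overlap in exactly one edge, which becomes shared between two face-parts without being duplicated as a dual edge. To handle this, I would go through the cases in the proof of \cref{claim: nodes_are_nopdeparts} and check that each such splitting edge is an edge of the path being incised at that moment. That puts it under \cref{lem: incise_edge_once}, so the cut at any given edge occurs at most once per endpoint and no edge is ever shared by three intervals of the same face.
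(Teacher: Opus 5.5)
Your main line is essentially the paper's argument: \cref{lem: incise_edge_once} limits how often $e$ is split, so $e$ lies in at most two node-parts of each endpoint, and \cref{claim:pf} turns this into at most four face-parts. The problem is your last paragraph, which misreads the boundary cases and proposes a check that would fail.

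The endpoint-case edge $e_0$ is not shared at all. By Remark~\ref{remark: incision_P} it belongs only to $p_0^0$, while $p_0^1$ consists of the edges from $(p_0,p_1)$ to $e_0$ \emph{excluding} $e_0$. Symmetrically, $e_{|P|-1}$ belongs only to $p_{|P|-1}^1$. Your proposed verification that ``each such splitting edge is an edge of the path being incised'' is false here, since $e_0$ is by definition a non-$P$ edge. Moreover, if $e_0$ really were shared by $p_0^0$ and $p_0^1$ without being duplicated, \cref{lem: incise_edge_once} would not stop a later incision along $e_0$ from putting it in three node-parts of $p_0$.

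The other places you list are an endpoint of $P_i'$ on a copy of $P$, and a node where $P_i$ leaves a common subpath with $P_j$ or $P_k$. There, the single edge in which the two new node-parts overlap is the incident edge of $P_i$ that is actually incised at that node. So it \emph{is} duplicated, and each part gets its own copy; there is no edge ``shared between two face-parts without being duplicated''. Note also that \cref{lem: incise_edge_once} bounds the number of incisions of $e$, not the number of incision paths containing a copy of $e$: copies of $P$-, $P_j$- and $P_k$-edges may lie on many later $P_i$'s without being incised again.

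The case analysis disappears if you count copies instead of intervals, which is what the paper does. A dual edge acquires new copies only when it is incised, so at any level $e$ has at most two copies. Each copy is an edge of the level-$\ell$ graph, so it is incident to exactly one node-part at each of its endpoints. Hence an edge that was never incised lies in exactly one node-part of each endpoint, and an incised one in at most two, giving at most four face-parts.
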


\begin{proof}
Every (dual) edge $e$ is incident to two dual nodes in $G^*$. Since $e$ is incised at most once (\cref{lem: incise_edge_once}), then $e$ can participate only in two node-parts (of each endpoint). I.e., for each endpoint $f$ of $e$ after the incision, a copy of $e$ is created and gets incident to one node-part of $f$; one whose incident edges start with $e$ and the other whose incident edges end with $e$ (see \cref{def: incision_P}, and \cref{def: incision_Pi} where incisions are discussed). The correspondence to face-parts follows from Claim~\ref{claim:pf}.
\end{proof}

\begin{restatable}{claim}{PartsOfNodeparts}
\label{fact:nodeparts of nodeparts}
In an annulus $A$, for each node-part $g$ of a node $f\in G^*$ there exists exactly one node-part $h$ of $f$ in the parent of $A$ in $\mathcal T$ s.t. $g$ is defined by a consecutive subset of edges of $h$.
I.e.,  $p(g)$ is a subgraph of $p(h)$.
\end{restatable}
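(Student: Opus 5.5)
The plan is to argue directly from how an annulus is produced from its parent in $\mathcal T$, reusing the case analysis in the proof of Claim~\ref{claim: nodes_are_nopdeparts}. Let $A$ be an annulus of level $\ell$ and $A'$ its parent in $\mathcal T$, of level $\ell-1$. If $A'$ is the root ($G^*$ itself) or the child created by incising $P$, the statement is immediate. In the first case every node of $G^*$ is its own node-part, defined by all of its edges. In the second case, \cref{def: incision_P} splits each $p_i$ into $p_i^0,p_i^1$, and each of these is defined by a consecutive subset of the clockwise edges of $p_i$.

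In general, $A$ is obtained from $A'$ by incising the subpath $P_i'$ of $P_i$ as in \cref{def: incision_Pi}, and then keeping one side. I would construct a map $\pi$ from the node-parts of $A$ to those of $A'$ and show it is well defined and consistent. There are two cases for a node-part $g\in A$ of $f$.

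\begin{itemize}
\item If $g$ does not lie on $P_i'$, then it is left untouched by the incision. It is a node-part of $A'$ itself, and I set $h=g$.
\item If $g$ is a duplicate created by incising a node $h$ of $P_i'$, then $g$ receives exactly the edges of $h$ on one side of $P_i'$, together with its own copies of the two incised edges of $P_i'$ at $h$. By \cref{def: incision_P} and \cref{def: incision_Pi}, this is a consecutive run of edges in $h$'s clockwise order, delimited by the two $P_i'$-edges. At an endpoint of $P_i'$, the run is delimited by one $P_i'$-edge and the boundary edge of $h$. I set $\pi(g)=h$.
\end{itemize}

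In both cases, the edges of $g$ (up to copies of incised edges) form a consecutive subset of the edges of $h$.

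Existence and consecutiveness are therefore handled. For uniqueness, I would use Claim~\ref{claim: nodes_are_nopdeparts}: in level $\ell-1$, the edge subsets defining the node-parts of $f$ are disjoint, except possibly at their first and last edges, and those shared edges appear as distinct copies attached to distinct node-parts. Since $g$'s edges are specific copies inherited from $h$, any other candidate $h'$ would have to contain those same copies. This forces $h'=h$. The one delicate point is a node-part $g$ consisting of a single duplicated edge copy. Even then, the copy identifies a unique node-part of the parent, so uniqueness still holds.

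Finally, the primal statement $p(g)\subseteq p(h)$ follows from Claim~\ref{claim:pf}. The face-part $p(g)$ consists of the primal edges whose duals are incident to $g$. These form a subset of the edges whose duals are incident to $h$, and that set is exactly $p(h)$.

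The main obstacle I expect is bookkeeping at the endpoints of $P_i'$ and in the overlap cases with $P$, $P_j$, $P_k$. There a node-part of $A'$ may be split into only one part, or left unsplit, and the argument has to confirm that the resulting subset is still consecutive within $h$. I would handle this by referring back to the corresponding cases already analysed in the proof of Claim~\ref{claim: nodes_are_nopdeparts}, in particular the case $f'=p_i^0$ in \cref{fig: reursive_nodepart}.
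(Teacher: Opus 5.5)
Your proposal is correct and follows the paper's route. Existence comes from the incision definitions: each new node-part is either an unchanged node-part of the parent, or a consecutive run of edges of the node-part it was split from. Uniqueness comes from Claim~\ref{claim: nodes_are_nopdeparts}, since node-parts of $f$ in the parent level share at most their first and last edges, and those only as distinct copies. The containment $p(g)\subseteq p(h)$ then follows from Claim~\ref{claim:pf}. Your copy-based uniqueness argument is slightly more careful than the paper's ``no shared ordered sequence of edges'' wording, because it also covers a degenerate $g$ whose defining run is a single edge.
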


\begin{proof}    
Let $h$ be a node-part of $f\in G^*$ in an annulus $A'$. We assume $h$ gets incised (otherwise the claim is trivial). In all cases of an incision on $h$ (see \cref{def: incision_P} and \cref{def: incision_Pi}) the resulting node-parts $g$ in child annuli $A$ of $A'$, are defined by subsets of consecutive edges of $h$ (thus of $f$). 
Note, $g$ cannot be defined by a subset of consecutive edges of any other node-part $h'\neq h$ of $f$ in $A'$, because, the sequences of edges that defines $h$ and $h'$ do not share a sequence of (ordered) edges (Claim~\ref{claim: nodes_are_nopdeparts}). 
The claim regarding face-parts follows from the correspondence to face-parts (Claim~\ref{claim:pf}). 
\end{proof}

\section{Distributed Reif}
\label{sec: distributed_reif}
The main building block we need in order to simulate Reif's algorithm distributively is a method for running an approximate SSSP algorithm  (such as~\cite{RozhonGHZL22_shortestpaths, GHSYZ22}) simultaneously on all annuli of a given level $\ell$ of $\mathcal{T}$.
In this section we show how to achieve this for any {\em Minor-Aggregation} algorithm~\cite{GHSYZ22,GZ22} (i.e., not only approximate SSSP, but any recursive algorithm that is based on edge-contractions).

Notice that even the most basic unit such as a single node-part $f$ in an annulus, is not an actual entity in the communication network $G$. We therefore simulate $f$ by its corresponding face-part  $p(f)$ in $G$. 
This is challenging, because the face-parts $p(f)$: (1) overlap in vertices and edges by Claim \ref{claim: edge_in_four_p(f)},  (2) may have a large diameter, and (3) are connected to other incident face-parts.
To overcome this, we design a certain virtual graph $\hat{G}_\ell$ that is closely related to $G$, and that at the same time captures 
level-$\ell$ annuli in $\mathcal{T}$.
In particular, running a minor-aggregation algorithm simultaneously on annuli translates to running an algorithm on $\hat{G}_\ell$.
In addition, since $\hat{G}_\ell$ is closely related to $G$, we will be able to simulate $\CONGEST$ efficiently on it. This allows to  run minor-aggregation algorithms on annuli (dual subgraphs) via communication on $G$ efficiently.
We note that the SSSP algorithm of~\cite{RozhonGHZL22_shortestpaths} that we use is in fact not a minor-aggregation algorithm.  In \cref{sec: SSSP} we will show how to overcome this with a small change to $\hat{G}_\ell$.

\medskip
\noindent
{\bf Section organization.}
In \cref{sec: hat_G} we define the graph $\hat{G}_\ell$, and we prove its properties in \cref{sec:Ghatproperties}.
Then, in \cref{sec: IDs_to_face_parts} we show how to use $\hat{G}_\ell$ to learn the needed distributed knowledge of a level $\ell$ in $\mathcal{T}$. E.g. IDs of annuli and IDs of node-parts.
Finally, in \cref{sec:Minor-aggregation}, we show that we can simulate any minor-aggregation algorithm simultaneously on all annuli of the same level $\ell$ of $\mathcal{T}$, in a near-optimal round complexity.

\subsection{Definition of the Graph $\hat{G}_\ell$}
\label{sec: hat_G}
For a level $\ell$ of $\mathcal{T}$ we define a graph $\hat{G}_\ell$ that: (1) maps every face-part $p(f)$ to a unique vertex disjoint component so that we can run algorithms on all of them without worrying about them overlapping, (2) captures the topology of level-$\ell$ annuli (represents node-parts and the edges incident to them), and (3) can be simulated efficiently on $G$ itself.
Concretely, in this graph $\hat{G}_\ell$, each $p(f)$ is mapped to a unique (vertex- and edge-disjoint) {\em path} $\hat{p}(f)$ (or to a cycle when $p(f)$ is an entire face of $G$). 

\begin{figure}[htb]
\centering
\begin{subfigure}[t]{0.46\textwidth}
  \centering
   \includegraphics[width=0.7\linewidth]{Figures/primal_dual_hatG.pdf}
\end{subfigure}
\hspace{0.3 cm}
\begin{subfigure}[t]{0.46\textwidth}
  \centering
   \includegraphics[width=0.9\linewidth]{Figures/original_hat_G.pdf}
  \end{subfigure}
    \caption{ 
    Left: The primal graph $G$ ({\color{blue} blue}) and its dual $G^*$ ({\color{red} red}). For clarity, the ({\color{red} red}) node corresponding to the external face is omitted.
   Right: The graph $\hat{G}_0$. {\color{blue} Blue} vertices are star-center copies of $v$. {\bf Black} vertices are copies $v_f$ of $v$. $\ER$, $\ES$, and $\EC$ edges are {\bf black}, dashed {\color{blue} blue}, and {\color{red} red} (respectively).     \label{fig: hat_g_base_case}}
\end{figure}

We begin with the special case of $\ell=0$. In this case, there is only one annulus (the graph $G^*$ itself) and each $p(f)$ is simply an entire face of $G$.
This case follows from~\cite{GP17}, where they define the graph $\hat{G}_0$ (they called it $\hat{G}$) that is defined as follows (see \cref{fig: hat_g_base_case}):  
Every vertex $v\in G$ has a copy vertex $v_f$ in $\hat{G}_0$ for each face $f$ that $v$ participates in. 
The edges of $\hat{G}_0$ are of three types which we call 
$\ER,\ES,\EC$.

\medskip \noindent 
$\ER$: 
For every face $f$ of $G$, for every two vertices $u,v$ that lie consecutively on $f$ there is an edge $(u_f,v_f)$ in $\ER$. 
Notice that every such edge $(u,v)$ in $G$ belongs to two faces, $f$ and $g$. Hence, we also have the edge $(v_g,u_g)$  in $\ER$. 
This way, the edges in $\ER$ corresponding to every face $f$ constitute a (disjoint) cycle $\hat{p}(f)$ that simulates $f$ in $\hat{G}_0$. 

\medskip \noindent
$\ES$: 
With the edges $\ER$, the graph $\hat{G}_0$ is a disconnected graph that is merely a union of disjoint cycles. To make $\hat{G}_0$ connected and of diameter $O(D)$, the edges $\ES$ are added to $\hat{G}_0$. Namely, an additional copy of $v$ (simply denoted by $v$) is added to $\hat{G}_0$. Each copy $v_f$ of $v$ is connected to $v$ by an edge $(v,v_f)$ in $\ES$, creating a star-shaped graph in $\hat{G}_0$ for each $v\in G$. The center of the star is $v$, and the {\em star edges} are $(v,v_f)$.
 The choice of a star (and not a clique for example) keeps $\hat{G}_0$ planar. 

\medskip \noindent
$\EC$: 
The last set of edges $\EC$ (which was implicit in~\cite{GP17} and discussed more explicitly by~\cite{planardistributedmaxflow24}) is  
intended to mimic the topology of $G^*$. Namely, it
connects the cycles $\hat{p}(f),\hat{p}(g)$ with an edge $\hat{e}=(v_g,v_f)$ iff there is an edge $e^*=(f,g)$ in $G^*$.  

In \cite{GP17, planardistributedmaxflow24}, it was shown that we can simulate minor-aggregation algorithms on $G^*$ by running them on $\hat{G}_0$. Specifically, because (1) the topology of $G^*$ (captured by $\EC$ edges) allows translating problems on $G^*$ into problems on $\hat{G}_0$, which can then be solved straightforwardly on $\hat{G}_0$ because (2) $\ER$ edges turn the faces of $G$ into disjoint cycles in $\hat{G}_0$, and efficiently because (3) $\ES$ edges keep $\hat{G}_0$ planar and of small diameter.

We now state the properties of $\hat{G}_\ell$ formally. After which, we will extend the above intuition on $\hat{G}_0$ to $\hat{G}_\ell$,  follow it by the formal definition of $\hat{G}_\ell$, and proofs for its properties.

\begin{theorem}
\label{thm: hat_G}
There exists a connected planar graph $\hat{G}_\ell$ of diameter $O(D)$, satisfying:
    \begin{enumerate}
        \item Each face-part $p(f)$ of a node-part $f$ that appears in an annulus of level $\ell$ maps to a unique path\footnote{To be concise, we refer to $\hat{p}(f)$ as a path even when $p(f)$ is an entire face of $G$ (and then $\hat{p}(f)$ is in fact a cycle).}  $\hat{p}(f)$ in $\hat{G}_\ell$ 
        such that the different $\hat{p}(f)$'s in level $\ell$ are vertex disjoint.
         In particular, every vertex $v\in p(f)$ has a copy on $\hat{p}(f)$.
        
        \item Two paths $\hat{p}(f), \hat{p}(g)$ are connected with a unique edge $\hat{e}$ in $\hat{G}_\ell$ for each edge 
        in an annulus of level $\ell$ in which the node-parts $f$ and $g$ are connected.

        \item Each annulus $A$ of level $\ell$ corresponds to a connected unique subgraph $\hat{A}$ of $\hat{G}_\ell$, s.t. a node-part $f$ is in $A$ iff $\hat{p}(f)$ is a subgraph of $\hat{A}$. 
        
        \item Assuming incisions that define level $\ell$ annuli are distributively stored (defined next), then $\hat{G}_\ell$ is constructed in $O(1)$ communication rounds on $G$.

        \item A $\CONGEST$ round on $\hat{G}_\ell$ can be simulated within $O(1)$ $\CONGEST$ rounds in $G$.

    \end{enumerate}
     
\end{theorem}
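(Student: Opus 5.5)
We prove the theorem by an explicit local construction of $\hat{G}_\ell$ that generalizes $\hat{G}_0$ of~\cite{GP17}, and then verify the five properties in turn. By Claim~\ref{claim: edge_in_four_p(f)} and Claim~\ref{claim:pf}, each primal vertex $v$ participates in a bounded number of face-parts per incident edge. For each corner of $v$ (each pair of consecutive edges in $v$'s clockwise order, i.e., each face $f$ at $v$), we create copies $v_{f'}$, one for every level-$\ell$ node-part $f'$ of $f$ whose face-part $p(f')$ passes through that corner. The vertex $v$ can decide this locally: it knows the clockwise order of its edges, and it knows which incident dual edges were incised and on which side. This is exactly what we mean by the incisions being \emph{distributively stored}.

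The edges come in three types. $\ER$ edges join $u_{f'}$ and $v_{f'}$ whenever $(u,v)\in p(f')$; we break each face cycle at the incised duals so that $\hat{p}(f')$ becomes a path, and an incised edge receives one copy per side. $\EC$ edges, one for each copy of a dual edge $e^*=(f',g')$ existing at level $\ell$, connect $\hat{p}(f')$ and $\hat{p}(g')$ at the copies of an endpoint of $e$. Finally, instead of a star, each $v$ gets a depth-two tree: the center $v$, then one intermediate vertex per original face-corner, then the copies $v_{f'}$ at that corner.

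With the construction fixed, the properties follow in order. Properties 1 and 2 are immediate from the definitions together with Claim~\ref{claim: nodes_are_nopdeparts}, which says node-parts partition $f$'s edges except for duplicated endpoints, each duplicate going to a distinct node-part. For property 3 we take $\hat{A}$ to be the union of the paths $\hat{p}(f)$ over $f\in A$ together with the $\EC$ edges among them. This subgraph is connected because $A$ is connected and every $\hat{p}(f)$ is connected (Claim~\ref{claim: disjoint_annuli}), and distinct annuli give disjoint $\hat{A}$ by disjointness of the $\hat{p}$'s. Residual node-parts (Corollary~\ref{cor: residual_nodeparts}) also receive paths but lie outside every $\hat{A}$. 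The diameter is $O(D)$ because every primal edge $(u,v)$ still has an $\ER$ copy, reached from the centers $u$ and $v$ by tree edges, so any primal path of length $L$ maps to a path of length $O(L)$. Properties 4 and 5 hold because every vertex of $\hat{G}_\ell$ is a copy of some $v\in G$ and every edge is either internal to $v$'s tree or a copy of a primal edge. Each primal edge carries $O(1)$ copies (at most four by Claim~\ref{claim: edge_in_four_p(f)}, plus their $\EC$ counterparts), so one round on $\hat{G}_\ell$ costs $O(1)$ rounds on $G$, and construction only requires exchanging local incision flags with neighbors.

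The main obstacle is planarity. We would argue it by giving an explicit embedding. Start from the planar embedding of $G$ and shrink each vertex $v$ to a small disk. Inside the disk, place the center, the corner vertices in clockwise order, and, within each corner sector, the copies $v_{f'}$ ordered along the corner. $\ER$ edges are drawn parallel to their primal edge, inside the face they belong to. $\EC$ edges cross the primal edge in the same way the dual edge $e^*$ does. What has to be checked is that, within a corner, the copies of the node-parts of $f$ appear in the same cyclic order on both endpoints of each primal edge, so that the parallel $\ER$ and $\EC$ strands do not cross. This should follow from node-parts being consecutive edge intervals that are nested across levels (Claim~\ref{fact:nodeparts of nodeparts}). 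The endpoints of $P$ (Remark~\ref{remark: incision_P}) are a special case that needs separate care.
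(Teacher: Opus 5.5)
Your construction of $\hat{G}_\ell$ is not planar. The proof therefore fails at the step you flagged, but for a different reason than the one you anticipate.

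Take an incised edge $e=(u,v)$ whose dual is an internal edge of an incision path. Let $f$ be the face on one side of $e$, and let $e'=(y,z)$ be the other incised edge of $f$. Then $f$ splits into $f^0$, whose face-part continues from $e$ through $u$, and $f^1$, which continues through $v$. Let $z$ be the endpoint of $e'$ where $\hat p(f^1)$ ends and $\hat p(f^0)$ passes through.

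In your depth-two tree, the $f$-corner of $v$ has a corner vertex $w_v$ with two children:
\begin{itemize}
\item the copy $a$ on $\hat p(f^1)$;
\item the pendant $a'$ of $\hat p(f^0)$, whose only $\ER$ edge ends at the copy $p\in\hat p(f^0)$ at $u$.
\end{itemize}
Symmetrically, the $f$-corner vertex $w_u$ of $u$ has children $p$ and the pendant $p'$ of $\hat p(f^1)$, and $p'$ is joined to $a$ by the $f^1$-copy of $e$.

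Now $a$ and $p$ are joined by three internally disjoint paths:
\begin{itemize}
\item $\pi_1=a\,w_v\,a'\,p$;
\item $\pi_2=a\,p'\,w_u\,p$;
\item $\pi_3$, which follows $\hat p(f^1)$ from $a$ to its end at $z$, passes through the corner vertex $w_z$ to $\hat p(f^0)$, and follows $\hat p(f^0)$ back to $p$.
\end{itemize}
Each $\pi_k$ contains a corner vertex ($w_v,w_u,w_z$) adjacent to a center ($v,u,z$). These centers are connected to one another through the copies belonging to the faces on the other side of $f$'s boundary, avoiding $\pi_1\cup\pi_2\cup\pi_3$.

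Every face of a plane theta graph is bounded by only two of its three paths, so this configuration cannot be embedded. Explicitly, let $r$ be a branch vertex of the connection among the three centers; then $\{a,p,r\}$ and $\{w_v,w_u,w_z\}$ span a subdivided $K_{3,3}$. So your graph is non-planar as soon as a single face is split by an incision. The check you propose, that copies appear in the same cyclic order at both ends of a primal edge, cannot repair this. The obstruction lies in how the pendant copies are hooked into the tree, not in the order of the parallel strands.

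The paper avoids this by having no separate corner vertex. The center $v$ is adjacent to the copy $v_i$ that lies on the path of the node-part containing both edges of the corner. The extra copies $v_i^L$ and $v_i^R$, present only when $e_i$ (resp.\ $e_{i+1}$) is incised, hang off $v_i$. In the configuration above, with $e$ the $i$th edge at $v$ and the $j$th at $u$, this makes $a'=v_i^L$ and $p'=u_{j-1}^R$ have degree two. The $4$-cycle $v_i\,v_i^L\,u_{j-1}\,u_{j-1}^R$ then bounds an empty face. Planarity follows by deriving $\hat{G}_\ell$ from the planar $\hat{G}_0$ through planarity-preserving steps:
\begin{enumerate}
\item add the second $\EC$ edge of each incised edge, which is a chord of the wheel around the center;
\item double the two $\ER$ copies of each incised edge;
\item subdivide each parallel copy by the new $L/R$ vertex;
\item subdivide once more for the endpoint-case edges.
\end{enumerate}

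A secondary point: properties 1 and 2 are not immediate. Your rule ``join $u_{f'}$ and $v_{f'}$ whenever $(u,v)\in p(f')$'' is not local, since no vertex knows node-part identities. It is also ambiguous when a vertex occurs several times on a face. The paper defines $\ER$ by corner indices: $(v_i,u_{j-1})$, $(v_{i-1},u_j)$ and their $L/R$ variants. It then has to prove two things:
\begin{itemize}
\item these local edges assemble into one connected $\hat p(f)$ per node-part, via an Euler-tour argument based on Claim~\ref{claim: shared_endpoint_dual_edges};
\item distinct $\hat p(f)$ are disjoint, because each ordered corner lies in a single node-part (Claim~\ref{claim: pair_edges_unique_facepart}).
\end{itemize}
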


\begin{definition}[Level $\ell$ distributively stored incisions]
\label{def: distributed_storage_incisions}
    We say that the set of incision paths which define level $\ell>0$ of $\mathcal{T}$ - i.e., paths $P_i$ computed in previous levels in addition to the path $P$ - is distributively stored, if
    each primal vertex $v\in G$ knows its incident edges whose duals appear on the incision paths.
    In addition, each of $s,t$ knows its incident edge whose dual is an edge added to an endpoint of $P$ (see \cref{def: incision_P}, and \cref{remark: incision_P} later in this section). 
\end{definition}

Compared to $\hat{G}_0$, our $\hat{G}_\ell$ aims to simulate minor-aggregation algorithms simultaneously on all level-$\ell$ annuli (not only $G^*$). 
I.e., we construct one graph $\hat{G}_\ell$ that captures the structure of all level $\ell$ annuli together.   
Thus, to represent node-parts and the edges incident to them, we need to deal with face-parts and (not only) faces of $G$.
This is more complicated than (the already somewhat complicated) $\hat{G}_0$: Not only do faces of $G$ overlap, but now also face-parts of the same face overlap. Therefore, one edge of $G$ participates in potentially four face-parts (see Claim \ref{claim: edge_in_four_p(f)}).   

The generalization of $\hat{G}_0$ to $\hat{G}_\ell$ follows a similar intuition to $\hat{G}_0$. 
Essentially, we want $\hat{G}_\ell[\ER]$ to contain a vertex-disjoint path $\hat{p}(f)$ if $p(f)$ is a face-part of $G$. 
To create these paths, a vertex $v\in G$ has a copy $v$ for each face-part that contains it, and an edge $e\in G$ has up to four $\ER$ copies in $\hat{G}_\ell$ (instead of just two in $\hat{G}_0$) - a duplicate per face-part that contains $e$ (Claim~\ref{claim: edge_in_four_p(f)}).
More concretely, in $\hat{G}_0$ we had a single copy $v_f$ for each face $f$ of $G$ that contains $v$. In $\hat{G}_\ell$ we have two copies $v_f^L,v_f^R$, one for each face-part of $f$ that contains $v$ (when the node $f$ in $G^*$ is split into two node-parts in level $\ell$).

Consequently, we add more edges to the star ($\ES$ edges) of $v$ in order to keep the diameter of $\hat{G}_\ell$ small while preserving its planarity after adding the new copies of $v$. Thus, we extend the star from $\hat{G}_0$ of $v$ to a depth-two tree of $v$ in $\hat{G}_\ell$. We do that by adding the edges of the form $(v_f,v_f^L)$ and $(v_f,v_f^R)$ to $\ES$ in $\hat{G}_\ell$ (in addition to the edges of the form $(v,v_f)$ as in $\hat{G}_0$).
Next, similarly to $\hat{G}_0$, there is a set $\EC$ of edges that mimics the topology of annuli of level $\ell$, in the sense that there is an $\EC$ edge connecting $\hat{p}(f),\hat{p}(g)$ iff there is an edge connecting $(f,g)$ in an annulus of level $\ell$ of $\mathcal{T}$. 
Having those $\EC$ edges and the paths (composed of $\ER$ edges) that correspond to node-parts of level-$\ell$ annuli, means that in $\hat{G}_\ell[\ER \cup \EC]$ we have a disjoint connected component $\hat{A}$ for every level-$\ell$ annulus $A$.

\medskip
\noindent
{\bf Definition of \boldmath$\hat{G}_\ell$.}
\label{par: def_hat_G}
We proceed to define $\hat{G}_\ell$ excluding an adjustment related to the endpoints of $P$ (see \cref{def: incision_P}), which is discussed towards the end of the current subsection (\cref{remark: incision_P}). 
In the following, we assume that the incisions of level $\ell$ are distributively stored as in \cref{def: distributed_storage_incisions}.
We say that an edge $e$ of $G$ \hl{maps}\footnote{We emphasize important sentences in the definition also by color.} to an incised edge of level $\ell$ iff its dual participates in $P$ or in some $P_i$ that bounds a level-$\ell$ annuli.

\medskip
\noindent
{\bf Vertices.}
We start by discussing the vertex set $\hat{V}$ of  $\hat{G}_\ell$. Intuitively, we want a vertex $v_f$ for each vertex $v\in G$ that lies on a face $f$ of $G$. However, $v$ has no way of detecting which faces does it lie on locally (actually, the whole purpose of $\hat{G}_\ell$ is to allow us to do that).
To deal with that, note, from the local view point of $v$, each two consecutive incident edges correspond to a face of $G$ that contains $v$. Hence, what was denoted as $v_f$ is now formally denoted by $v_i$, where, $v_i$ is associated to the $i$th pair of consecutive edges incident to $v$. We denote those edges by $e_i,e_{i+1}$ (\hl{indices indicate edges' ordering in $v$'s embedding and are taken modulo $\deg(v)$}). I.e.,  $v_i$ can be thought of intuitively as $v_f$ if the face $f$ contains the $i$th pair of edges incident to $v$. 
See for example See \cref{fig: local_hatG_no_incision}, where $v_1$ corresponds to the face containing edges $e_1,e_2$.

Since the number of pairs of consecutive edges incident to $v$ is exactly the degree of $v$. 
Then, $v$ locally creates the copies $D(v)= \{v_0,\ldots, v_{\deg(v)-1}\}$ in $\hat{G}_\ell$. This results in a copy of $v$ in $D(v)$ for each pair of two consecutive edges incident to $v$ (for each time  $v$ appears on a face).\footnote{Note that $v$ might have multiple pairs of consecutive edges that are contained in the same face, and that is why we say {\em for each time $v$ appears on a face}.} See \cref{fig: local_hatG_no_incision}.
As mentioned earlier, $v$ has two more types of copies, the copy $v$, and the copies $v_f^L,v_f^R$ (which now we denote by $v_i^L,v_i^R$ following the above reasoning). Let $V_S \coloneqq V(G)$.
The set \hl{$\hat{V}$ is defined to be the union of}
$$
V_S, \bigcup_{v\in V_S}D(v), \text{ and } \bigcup_{v\in V_S}(D(v)\times \{L,R\})
$$
More precisely, \hl{the vertices $V_S$ and $D(v)$ (for all $v\in G$) are always present in $\hat{G}_\ell$.}
However, only a subset of $D(V)\times\{L,R\}$ is  present in $\hat{G}_\ell$.
To explain this, we first give an intuition for the set $D(v)\times \{L,R\}$. 
Since it is possible that $f$ is split into face-parts as a result of incising the dual node $f$ into node-parts (\cref{def: nodepart}, Claim~\ref{claim:pf}), in which case for example, each edge in the pair of edges $e_i,e_{i+1}$ corresponding to $f$ might belong to a different face-part. I.e., if we want to map those face-parts to vertex-disjoint components of $\hat{G}_\ell$, then we must allow a scenario where each of $e_i,e_{i+1}$ is incident to distinct vertices in $\hat{G}_\ell$. Therefore, additional copies of $v$ other than $v_i$ that correspond to those edges are needed.
To allow such a situation (or others related to face-parts of $f$ and this pair of edges), two additional copies of $v$ are created. The copy $v_i^L$ is associated (only) with $e_i$ and the copy $v_i^R$ is associated (only) with $e_{i+1}$, as opposed to $v_i$ that is associated with both. These three vertices allow different scenarios for the $i$th pair of edges of $v$.
Hence, these $D(v)\times \{L,R\}$ vertices are the core of the generalization from $\hat{G}_0$ to $\hat{G}_\ell$, and together with the vertices $v_i$  they map each face or face-part $p(f)$ that $v$ participates in to a unique path or cycle $\hat{p}(f)$ in $\hat{G}_\ell$.

\begin{figure}[htb]
\centering
\begin{subfigure}[t]{0.3\textwidth}
  \centering
   \includegraphics[width=1\linewidth]{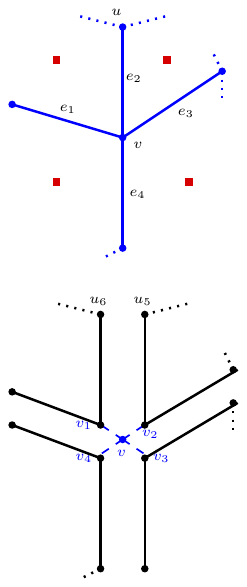}
   \caption{\label{fig: local_hatG_no_incision}}
\end{subfigure}
\hspace{0.2 cm}
\begin{subfigure}[t]{0.3\textwidth}
  \centering
   \includegraphics[width=1\linewidth]{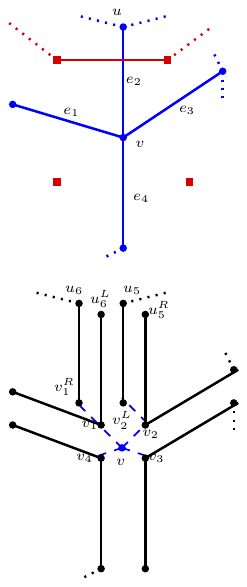}
      \caption{\label{fig: local_hatG_one_edge_incision}}
\end{subfigure}
\hspace{0.2 cm}
\begin{subfigure}[t]{0.3\textwidth}
  \centering
   \includegraphics[width=1\linewidth]{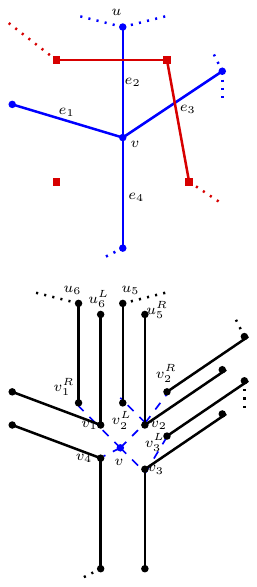}
    \caption{\label{fig: local_hatG_two_edge_incision}}
   \end{subfigure}
\caption{Top row: the primal graph $G$ in {\color{blue} blue} and the nodes of the dual $G^*$ in {\color{red} red}. Only the incision path appears in $G^*$.
Three cases related to the vertex $v$ are demonstrated, left to right:
(a) no incident edge of $v$ is dual to an incision edge, (b) only the edge $e_2$ is dual to an incision edge, and (c) both $e_2,e_3$ are dual to incision edges.
Bottom row: the corresponding graph $\hat{G}_\ell$ (from $v$'s point of view) for each case. {\bf Black} edges are edges of $\ER$. {\color{blue} Blue} edges and vertices are edges of $\ES$ and vertices of $V_S$. $\EC$ edges are omitted.
\label{fig: local_incision}
}
\end{figure}

Now,  
\hl{we define which vertices of $D(v)\times\{L,R\}$ are present.}: 
    If no edge incident to $v$ is dual to an incised edge, then no vertex $v_i^R$ or $v_i^L$ is present. See \cref{fig: local_hatG_no_incision}.
    If an incident edge $e_{i}$ or $e_{i+1}$ is incised, then it participates in two face-parts of the face $f$ corresponding to $e_i,e_{i+1}$ (Claim~\ref{claim: edge_in_four_p(f)}). 
   Therefore, since $v_i$ is responsible for the pair $e_i,e_{i+1}$ (for $f$), then when $e_i$ is incised, we add the copy $v_{i}^L$.
   Hence, $e_i$ would be able to have two copies for $f$, one that is incident to $v_i$, and the other to $v_i^L$. Following the same reasoning, if $e_{i+1}$ is incised, then it participates in two face parts of $f$. Thus, we add the copy $v_{i}^R$, so that $e_{i+1}$ would be able to have two copies for $f$, one that is incident to $v_i$, and the other to $v_{i}^R$.
   (see \cref{fig: local_hatG_one_edge_incision} and \cref{fig: local_hatG_two_edge_incision}).
   Intuitively, from $v$'s local view point, $e_i$ is the left edge in the ordered pair $e_i,e_{i+1}$, hence the superscript $L$ in $v_i^L$. Analogously, $e_{i+1}$ is the right edge in the ordered pair $e_i,e_{i+1}$, hence the superscript $R$ in $v_i^R$.

\begin{figure}[htb]
\centering
    \begin{subfigure}[t]{0.3\textwidth}
        \centering
        \includegraphics[width=\linewidth]{Figures/primal_dual_P.pdf}
        \caption{$G$ ({\color{blue} blue}), its dual $G^*$ ({\color{red}red}), and the path $P= (p_0,p_1,p_2)$.}
        \label{fig: primal_dual_P2}
    \end{subfigure}
    \hspace{0.2cm}
        \begin{subfigure}[t]{0.3\textwidth}
        \centering
        \includegraphics[width=1\linewidth]{Figures/original_hat_G.pdf}
        \caption{The graph $\hat{G}_0$ corresponding to  $G^*$ (with no incisions).}
    \end{subfigure}
    \hspace{0.2cm}
    \begin{subfigure}[t]{0.3\textwidth}
    \centering
    \includegraphics[width=0.8\linewidth]{Figures/dual_incision_P.pdf}
        \caption{The dual graph $G^*$ after the incision along $P$.  \label{fig: dual_incision_P2}}       
    \end{subfigure}

    \vspace{0.5cm}

    \centering
    \begin{subfigure}[t]{0.45\textwidth}
      \centering
       \includegraphics[width=0.7\linewidth]{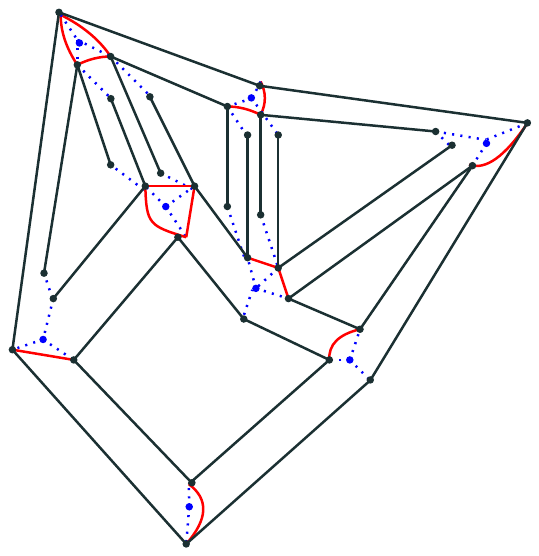}
       \caption{The graph $\hat{G}_1$ representing $G^*$ after the incision along $P$. $\ER$ edges are {\bf black}, $\ES$ edges are {\color{blue} blue}, and $\EC$ edges (mimicking $G^*$ after incising $P$) are {\color{red}red}.  
       E.g. $\EC$ edges connecting $
       \hat{p}(p_0^0),\hat{p}(p_1^0),\hat{p}(p_2^0)$ (yellow in next subfigure) correspond to the path $P=(p_0^0,p_1^0,p_2^0)$.}
    \end{subfigure}
    \hspace{0.5cm}
    \begin{subfigure}[t]{0.45\textwidth}
    \centering
    \includegraphics[width=.7\linewidth]{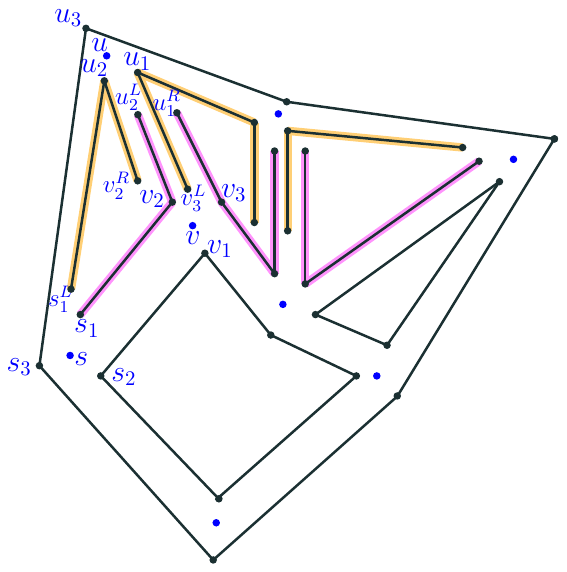}
    \caption{The graph $\hat{G}_1$ only with $\ER$ edges, and names to some vertices. 
    {\color{YellowOrange}Yellow} paths are $\hat{p}(p_0^0),\hat{p}(p_1^0),\hat{p}(p_2^0)$, corresponding to the nodes $p_i^0$ from subfigure (b). 
    Similarly, {\color{magenta}pink} paths are $\hat{p}(p_i^1)$s that correspond to $p_i^1$s. 
    The vertex $s$ is incident to an endpoint-case edge $e_0$ whose copies in $\hat{G}_1$ are $(s_1^L,u_2)$ and $(s_3,u_3)$.
    }
    \end{subfigure}
\caption{\label{fig: hat_G_one} }

\end{figure}

\medskip
\noindent
{\bf Edges.} 
Each edge $e$ of $G$ has up to four duplicates in $\ER$, a duplicate for each face-part $p(f)$ that contains the edge $e$ (Claim \ref{claim: edge_in_four_p(f)}).
The purpose of $\ER$ is to have different $p(f)$'s map to unique paths $\hat{p}(f)$ (see \cref{fig: hat_G_one}).
The purpose of $\ES$ is to keep the diameter of $\hat{G}_\ell$ a constant factor larger than that of $G$, while preserving its planarity. The purpose of $\EC$ is to simulate the dual edges in level-$\ell$ annuli. Formally:

\begin{description}
\item[\boldmath$\ES$:] 
This set always contains the star edges $(v,v_i)$ for all $v\in G$ and all $v_i\in D(v)$. In addition, each of $v_i^L$ and $v_i^R$ (if present) is connected to $v_i$ with an edge $(v_i,v_i^L)$ and $(v_i,v_i^R)$, respectively.

\item[\boldmath$\ER$:] 
Let $e=(u,v)$ be the $i$th edge ($e_i$) and the $j$th edge ($e_j$) in the local embedding of $v$ and $u$ respectively. We examine two cases, namely, if $e$ corresponds to an incised edge of $G^*$ and has two duplicates in level $\ell$ or not. Note, these are the only cases, as an edge of $G^*$ is incised at most once (\cref{lem: incise_edge_once}).

If $e$ does not map to an incised edge, then $e$ is duplicated twice like in the original definition of $\hat{G}_0$ by~\cite{GP17}. 
Intuitively, $e$ is contained in two faces $f,g$ of $G$, and we want to create the edge copies $(v_f,u_f), (v_g,u_g)$ to map those faces to disjoint components $\hat{p}(f), \hat{p}(g)$. However, as mentioned earlier, $u,v$ do not know (shared IDs of) the faces that contain them. Thus, the definition of those edges needs to depend only on local information. In particular, it depends on the edges' order in their endpoints' local embedding.
 
Locally, $f$ is the $i$th face of $v$, and is right to the edge $e_i$ from $v$'s point of view. While $f$ is the $j-1$th face for $u$, and from $u$'s view point $f$ is left to the edge $e_j$. Hence, the edge $(v_f,u_f)$ is formally $(v_{i},u_{j-1})$. This edge is considered to be an edge of $\hat{p}(f)$ (the path corresponding to $f$). Symmetrically, $(v_{i-1},u_{j})$ is the formalization of $(v_g,u_g)$, and participates in $\hat{p}(g)$. 
See \cref{fig: local_hatG_no_incision}. for an illustration from the local point of view of $v$, and \cref{fig: hat_G_one} for a global illustration.

If $e$ does map to an incised edge in level $\ell$, then it participates in possibly four face-parts (Claim~\ref{claim: edge_in_four_p(f)}). Thus, we create four copies of it, one per face-part that contains it. Intuitively, if $e$ is contained in faces $f,g$, then it is contained in two face-parts of $f$, and two face-parts of $g$. Moreover, it is the first edge of a face-part $f^0$ of $f$, the last edge of a face-part $f^1$ of $f$, the first   edge of a face-part $g^1$ of $g$, and finally, the last edge of a face-part $g^0$ of $g$.
Thus, informally, we would have four copies of $e$: $(v_f^L,u_f)$, $(v_f,u_f^R)$, $(v_g,u_g^L)$, $(v_g^R,u_g)$, which correspond receptively to the aforementioned cases.
Following the reasoning above that: $v_{i},u_{j-1}$ correspond to $v_f,u_f$ (respectively), and $v_{i-1},u_{j}$ corresponds to $v_g,u_g$ (respectively). Then formally we add the following edges to $\ER$: $(u_{j-1}^R,v_{i})$, $(u_{j-1},v_{i}^L)$, each corresponding to a distinct face-part of $f$. In addition, we add the edges  $(u_{j}^L,v_{i-1})$ $(u_{j},v_{i-1}^R)$ each corresponding to a distinct face-part of $g$.
Finally, note that since $e$ is incised then by definition of $\hat{V}$, we indeed have the vertex copies $v_{i-1}^R, v_{i}^L \in D(v)\times \{L,r\}$, $ u_{j-1}^R, u_{j}^L \in D(u)\times \{L,r\}$ that the edge $e$ is associated with. Of course, in addition to the (always present) vertex copies $v_{i-1},v_{i}\in D(v)$, $u_{j-1},u_{j} \in D(u)$. 

For a local example from the view point of $v$, see \cref{fig: local_hatG_no_incision} where no incident edge is incised and \cref{fig: local_hatG_one_edge_incision} where the edge $e_2$ is incised, and the face $f$ to its right is split to two face-parts. In this example, $e$ is $e_2$ for $v$ and $e_6$ for $u$, and one face-part of $f$ contains the copy  $(v_2^L,u_5)$ of $e$ as the first edge, while the second contains the copy $(v_2,u_5^R)$ of $e$ as the last edge.
For a more global illustration (of the entire graph $\hat{G}_\ell$), see \cref{fig: hat_G_one}, where $f=p_1$ and $g=p_0$.

\item[\boldmath$\EC$:]
Let $e=(u,v)$ be the $i$th edge and the $j$th edge in the local embedding of $v$ and $u$ respectively. 

If $e=(u,v)$ does not map to an incised edge, and its dual is $e^*=(f,g)$. Then, intuitively we want one edge between the two copies of $v$ (or $u$), one copy that corresponds to $f$, and the other to $g$.  That is, an arbitrary one of the edges that were intuitively denoted $(v_f,v_g),(u_f,u_g)$ are added to simulate $e^*$.
Formally, one of the edges $(u_{j-1},u_{j}),(v_{i}, v_{i-1})$ is added to $\EC$ ($u$ and $v$ decide arbitrarily, say the vertex with the smaller ID adds the edge between its copies). 
This is because if $(u,v)$ belongs to faces $f$ and $g$, then both $u_{j}$ and $v_{i-1}$ are in $\hat{p}(g)$, while both $u_{j-1},v_{i}$ are in $\hat{p}(f)$. Thus, it is enough to add one $\EC$ edge that connects $\hat{p}(g)$ and $\hat{p}(f)$. 

If $e=(u,v)$ maps to an incised edge, then it corresponds to two dual edges $e^*_0=(f^0,g^0)$ and $e^*_1=(f^1,g^1)$, where $f^k$ and $g^k$ (for $k\in \{0,1\}$) are node-parts of $f,g$ (respectively). Thus, we want to add the two edges that we intuitively denote $(u_{f^0},u_{g^0})$ and $(v_{f^1},v_{g^1})$. The first would simulate $e^*_0$, and the second would simulate $e^*_1$.
Formally, we add both edges $(u_{j},u_{j-1}),(v_{i-1}, v_{i})$.
This is because if $(u,v)$ belongs to faces $f$ and $g$ of $G$, then $u_{j-1},u_{j}$ are in $\hat{p}(f^0),\hat{p}(g^0)$ (respectively) s.t. the node-parts $f^0,g^0$ are incident in the annulus, and $v_{i},v_{i-1}$ are in $\hat{p}(f^1),\hat{p}(g^1)$ (respectively) s.t. the node-parts $f^1,g^1$ are incident in the annulus.
For  example, see the $\EC$ edge $(u_1,u_2)$ of $u$ representing $(p_0^0,p_1^0)$, and the $\EC$ edge $v_2,v_3$ of $v$ representing $(p_0^1,p_1^1)$ in \cref{fig: hat_G_one} (in that case $g= p_0$, $f= p_1$).

The correspondence between face-parts $f^1,g^1$ to $v$ and $f^0,g^0$ to $u$ in the definition is w.l.o.g. and merely a matter of notation, in order to draw the distinction that each of $u,v$ handles a different edge of $e^*_0,e^*_1$. That is, there is no known shared identifiers for $f^1,g^1,f^0,g^0$. This fact is important in order for $u,v$ to construct $\hat{G}_\ell$ locally.
At first glance this may seem not symmetric, but this is not the case and the definition is totally symmetric. Simply because $u,v$ add the (syntactically symmetric) edges $(u_{j},u_{j-1}),(v_{i-1}, v_{i})$ obliviously. 
In more detail, the semantic  symmetry is as follows: $v$ is responsible for the face-parts that end and start with its $i-1$th and $i$th edges (respectively), which correspond to face-parts $g^1$ and $f^1$ (respectively). 
Similarly, $u$ is responsible for the face-parts that end and start with its $j$th and $j-1$th edges (respectively), and those correspond to  face-parts $g^0$ and $f^0$ (respectively).
\end{description}

\noindent
{\bf Adjustments related to \boldmath$P$'s endpoints.}
\label{par: def_hatG_endpoint_case}
Recall (\cref{def: incision_P}) that to define the incision on $P$'s endpoints, we add to $P$ an extra edge $e_0$ incident to $p_0$. In $G$, $e_0$ is incident to the vertex $s$.
Analogously for $P$'s last endpoint $p_{|P|-1}$, an edge $e_{|P|-1}$ is added. See for example the endpoints $p_0,p_2$ of $P$ in \cref{fig: primal_dual_P2}. Incident endpoint edges are $e_0$ and $e_2$ (respectively) and their node-parts after the incision as demonstrated in \cref{fig: dual_incision_P2}.

\begin{remark}[Incision on endpoints of $P$]
\label{remark: incision_P}
In order for the incision over $P$ to be defined (see \cref{def: incision_P}), we need each of $P$'s endpoints, to choose an incident non-$P$ edge, called an {\em endpoint-case edge}, so that this edge together with the incident edge in $P$ define how to split the endpoint into node-parts. Recall, each of those edges get incident to only one new node-part.
We denote the chosen edge for the endpoint $p_0$ by $e_0$, and define $e_0$ to belong only to the node-part $p_0^0$ of $p_0$: Edges between $e_0,(p_0,p_1)$ in $p_0$'s embedding get incident to $p_0^0$, while edges between $(p_0,p_1),e_0$, except for $e_0$, get incident to $p_0^1$. 
See \cref{fig: dual_incision_P2}.
It is important (for Reif's correctness) that $e_0\in s^*$. Such edge $e_0$ always exist as  $P$ does not entirely contain edges of the face $s^*$ of $G^*$ (on which $p_0$) lies - this would contradict $P$ being simple. 
The definition is symmetric for $p_{|P|-1}$.
\end{remark}

In order for $\hat{G}_\ell$ to represent  level-$\ell$ annuli correctly, we make minor adjustments to the vertex and edge copies related to the endpoint-case edges in $\hat{G}_\ell$.

Note, in the definition of $\hat{G_\ell}$ we define $\ER$ edges that correspond to an edge $e$ by examining two cases, either that $e$ is not incised and participates in two faces, or it is incised and participates in four face-parts. However, for $e_0$ (and $e_{|P|-1}$), it is not incised, and participates  in exactly one face and one face-part.
Thus, to handle this, we simply combine both cases from the definition of $\hat{G}_\ell$. Namely. let $f\neq p_0$ be the other (dual endpoint) of $e_0$. Then $e_0$ is treated like any non incised edge of $f$ and is duplicated once for $f$. What needs to be handled differently is the copy of $e_0$ for $p_0$. Namely, we need to treat it as if it is the first edge of the node-part $p_0^0$ of $p_0$, without having another copy of it for  $p_0^1$.
Analogously, for $e_{|P|-1}$, it is treated as any non-incised edge in the face $f\neq p_{|P|-1}$ that contains it, but is treated  as if it is the last edge of the node-part $p_{|P|-1}^1$ of $p_{|P|-1}$, without having another copy of it for  $p_{|P|-1}^0$. In fact, in the definition of $\ER$ edges above, this terminology of {\em first edge on face-part} is used for intuition and then formalized.
We still describe the changes we do formally.

\begin{figure}[htb]
\centering
\begin{subfigure}[t]{0.45\textwidth}
    \centering
        \includegraphics[keepaspectratio, height=3.5cm]{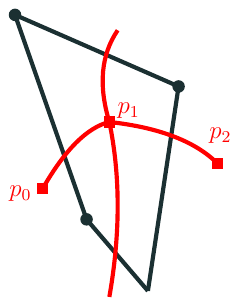}
    \hspace{0.2cm}
        \includegraphics[keepaspectratio, height=3.5cm]{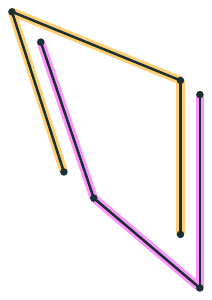}
    \caption{Left: the (non-endpoint) node $p_1$ ({\color{red} red}) and $p(p_1)$ in $G$ ({\bf black}). $p_1$'s incident edges on $P$ are $(p_0,p_1),(p_1,p_2)$. Right: the paths of $\hat{G}_\ell$ corresponding to node parts $p_1^0,p_1^1$. The {\color{YellowOrange}orange} path is $\hat{p}(p_1^0)$, and has a copy of the edges between $(p_0,p_1),(p_1,p_2)$. The {\color{magenta}pink} path is $\hat{p}(p_1^1)$, and has a copy of the edges between $(p_1,p_2),(p_0,p_1)$. Note, \hl{both} edges $(p_0,p_1),(p_1,p_2)$ appear in each of $\hat{p}(p_1^0)$ and $\hat{p}(p_1^1)$. 
      \label{fig: no_endpointcase_edge}}
  
\end{subfigure}
\hspace{0.4cm}
\begin{subfigure}[t]{0.45\textwidth}
    \centering
        \includegraphics[keepaspectratio, height=3.5cm]{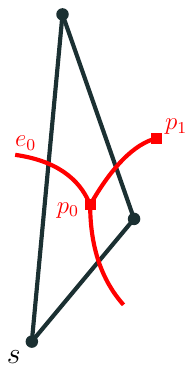}
    \hspace{0.2cm}
        \includegraphics[keepaspectratio, height=3.5cm]{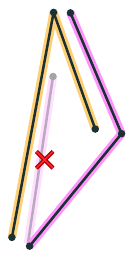}
    \caption{Left: the endpoint node $p_0$ ({\color{red} red}) and $p(p_0)$ in $G$ ({\bf black}). $p_0$ has only one incident edge $(p_0,p_1)$ on $P$. The edge $e_0$ is the added endpoint-case edge.
    Right: the paths of $\hat{G}_\ell$ corresponding to node parts $p_0^0,p_0^1$. The {\color{YellowOrange}orange} path is $\hat{p}(p_0^0)$, and has a copy of edges between $e_0,(p_0,p_1)$. The {\color{magenta}pink} path is $\hat{p}(p_0^1)$, and has a copy of edges between $(p_0,p_1),e_0$, \hl{excluding} $e_0$. Hence, \hl{only} the edge $(p_0,p_1)$ appears in both $\hat{p}(p_0^0)$ and $\hat{p}(p_0^1)$. 
     \label{fig: endpointcase_edge} }
  
\end{subfigure}
\caption{Two cases of a node, its corresponding face in $G$, and its corresponding face-parts after incision. 
(a) a non endpoint of $P$. 
(b) an endpoint of $P$. \label{fig: endpointcase_edge_cases}}
\end{figure}

Let $v\in \{s,t\}$, and $e=e_0$ if $v=s$. Otherwise, $e=e_{|P|-1}$ (when $v=t$).
In addition, let $e=(u,v)$ be the $i$th edge and the $j$th edge in the local embedding of $v$ and $u$ (respectively).
If $e$ is incised then it is handled just like any other incised edge. Otherwise, when $e$ is not incised we apply the following changes: 

\medskip
\noindent
\underline{\em Vertices.}
We first define the copy vertices of $u,v$ related to $e$ in $\hat{G}_\ell$. 
For $u$, there is no change on the definition of its copies. I.e., $u_{j-1},u_j$ are present and potentially other copies related to them, such as $u_{j-1}^L$, but not  $u_{j}^L$ and $u_{j-1}^R$. Because, those are added only if $e$ is incised, and it is not.
Similarly for $v$, we already have the copies $v_i$, $v_{i-1}$ and potentially others, but not the copies $v_{i-1}^R$ and $v_{i}^L$ for the same reasoning mentioned earlier.
we make no change besides adding the copy $v_i^L$ (see copies of $s$ in \cref{fig: hat_G_one}, subfigure (d)).
This vertex copy is responsible for the first edge of the face-part of the face right to $e$. I.e., for the copy of $e$ that corresponds to $p_0^0$ (respectively, $p_{|P|-1}^1$) if $v$ is $s$ (respectively, $t$). That is,  $v_i$ participates in $\hat{p}(p_k^{1-b})$, $v_i^L$ in $\hat{p}(p_k^{b})$, and $v_{i-1}$ in the other face  $f\neq p_k$ of $G$ in which $e$  participates, where $k=0,b=0$ iff $v=s$, and $k=|P|-1,b=1$ in case $v=t$.

\medskip
\noindent
\underline{\em Edges.} For the set $\ES$ of edges, we have no change in its definition. I.e., adding the vertex $u_{j}^L$ leads to adding the $\ES$ edge $(u_j,u_{j}^L)$.
See edges incident to copies of $s$ in \cref{fig: hat_G_one} (subfigures (c) and (d)).
In regard of $\ER$ edges, we add the edge $(v_{i}^L,u_{j-1})$, instead of $(v_{i}, u_{j-1})$ when $e$ is not an endpoint-case edge. Meaning $(v_{i},u_{j-1})$ is not present. I.e., this means that the copy of $e$ in $\hat{p}(p_0)$ (respectively, $\hat{p}(p_{|P|-1})$) now goes to $\hat{p}(p_0^0)$ (respectively, $\hat{p}(p_{|P|-1}^1)$).
Other $\ER$ edges related to copies of $u,v$ are added without any changes.
Finally, for $\EC$ edges, we add an arbitrary edge from $(u_{j-1},u_j)$ and $(v_{i-1},v_i^L)$, instead of $(u_{j-1},u_j)$ or $(v_{i-1},v_i)$.
Which connects $\hat{p}(p_0^0)$ ($\hat{p}(p_{|P|-1}^1)$) with a copy of $e$ to the path $\hat{p}(f)$ where $f$ is the other dual endpoint of $e$.
For simplicity, we always add the edge $(u_{j-1},u_j)$ and the edge $(v_{i-1},v_i^L)$ is never present. Again, see \cref{fig: hat_G_one} (for example the $\EC$ edge corresponding for $e_0$, is $(u_2,u_3)$).
Other $\EC$ edges remain without any change.

\medskip
\noindent This completes the description of $\hat{G}_\ell$.

\subsection{Properties of $\hat{G}_\ell$}\label{sec:Ghatproperties}
We now prove the properties of $\hat{G}_\ell$ in order to complete the proof of \cref{thm: hat_G}.

\subsubsection{Planarity and Diameter}

\begin{lemma}
\label{lemma: diameter_hat_G}
    The diameter of $\hat{G}_\ell$ is at most $4D$.
\end{lemma}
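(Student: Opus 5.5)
The approach is to map paths of $G$ into $\hat{G}_\ell$ through the star-center copies $V_S$. Every vertex $\hat{x}$ of $\hat{G}_\ell$ is a copy of some $v\in G$, and lies within distance $2$ of the center copy $v\in V_S$ using only $\ES$ edges. Indeed, $v_i\in D(v)$ is adjacent to $v$ via $(v,v_i)$. Each $v_i^L$ or $v_i^R$ that is present is adjacent to $v_i$, which in turn is adjacent to $v$. This holds even in the endpoint-case adjustment, where $v_i^L$ is still attached to $v_i$ by an $\ES$ edge. So it will suffice to show that for every edge $(u,v)\in E(G)$ the centers $u$ and $v$ are at distance at most $3$ in $\hat{G}_\ell$.

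Once that is in place, a shortest $u$-to-$v$ path in $G$ of length $k\le D$ lifts to a center-to-center path of length at most $3k$. A crude bound for arbitrary $\hat x,\hat y$ is then $2+3D+2$, which is not $4D$ when $D$ is small. I will therefore tighten the argument by absorbing the constant overheads into the per-edge cost: traverse each primal edge from a non-center copy to a non-center copy rather than center to center. Concretely, for an edge $e=(u,v)$ that is the $i$th edge of $v$ and the $j$th edge of $u$, I will exhibit an $\ER$ edge between some copy in $D(u)$ or $D(u)\times\{L,R\}$ and some copy of $v$ in every case. In the non-incised case this is $(v_i,u_{j-1})$. In the incised case, $(u_{j-1},v_i^L)$ and others are present. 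In the endpoint case, $(v_i^L,u_{j-1})$ or $(v_{i-1},u_j)$ is present. In all cases the endpoints of the chosen edge are within distance $2$ of their centers, and at least one endpoint is a plain $D(\cdot)$ copy at distance $1$.

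The bookkeeping goes as follows. Walk along the primal path $v^0,v^1,\dots,v^k$. At each intermediate vertex, move from the copy where we arrived to the copy where we leave, through the center: at most $2+2=4$ hops, but only $2$ when both copies lie in $D(v)$. Then cross the next edge via an $\ER$ edge. This gives at most $4$ per edge plus endpoint overheads, and I would organize the counting so that each primal edge is charged $4$ hops, the start and end overheads being absorbed into the first and last edges. I expect this to give $4D$, and if a small additive constant remains I will check whether the lemma's statement tolerates it. Concretely, I will charge "center of $v^{r}$ $\to$ copy $\to$ $\ER$ edge $\to$ copy $\to$ center of $v^{r+1}$", which costs at most $1+1+2=4$ if I choose the $\ER$ copy so that one endpoint is in $D(\cdot)$. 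That already yields $4D$ between centers. For arbitrary non-center endpoints I will start and end the walk at the given copies instead of at the centers, which replaces the first and last center hops and should cost no more.

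The main obstacle is the case analysis verifying that every $G$-edge has an $\ER$ copy with at least one endpoint in $D(\cdot)$, including the endpoint-case adjustment of \cref{remark: incision_P} where $(v_i,u_{j-1})$ is removed. There I will use $(v_{i-1},u_j)$, which is unchanged. The other delicate point is confirming that all $v_i^L$ and $v_i^R$ that are present really attach to $v_i$, so that the distance-$2$ bound holds uniformly.
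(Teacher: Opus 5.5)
Your central step is the paper's own argument. For each primal edge $(u,v)$ you choose an $\ER$ copy with one endpoint in $D(\cdot)$. That yields a center-to-center path in $\hat{G}_\ell[\ER\cup\ES]$ of at most $4$ hops: one $\ES$ hop down from each center, the $\ER$ hop, and at most one extra $\ES$ hop at the leaf end. Chaining these along a shortest primal path bounds the distance between any two star centers by $4D$, and $\EC$ edges can only help. The paper's proof is exactly this: it exhibits the short center-to-center paths and stops. The rest of your proposal has two problems.

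\textbf{First problem: the distance-$3$ claim.} The claim that adjacent centers are within distance $3$ is false whenever the edge is incised. Each of the four copies $(u_{j-1}^R,v_i)$, $(u_{j-1},v_i^L)$, $(u_j^L,v_{i-1})$, $(u_j,v_{i-1}^R)$ has an endpoint in $D(\cdot)\times\{L,R\}$. A center is adjacent only to its $D(\cdot)$ copies, and only $\ER$ edges join copies of different primal vertices. Hence the two centers are at distance $4$, not $3$. Your later charging does use $4$, which is the correct count, so the crude $2+3D+2$ estimate should simply be dropped.

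\textbf{Second problem: absorbing the endpoint overhead.} The step ``start and end the walk at the given copies instead of at the centers, which should cost no more'' is not justified, and it is false in general. A leaf $v_i^L$ has exactly two neighbours: $v_i$, and one copy of the other endpoint of $e_i$. Suppose the first edge of your primal path is neither $e_i$ nor $e_{i+1}$, and no $\EC$ edge happens to help. Then getting from $v_i^L$ to a departure copy $v_a$ costs $v_i^L\to v_i\to v\to v_a$, which is two hops more than starting from the center. The same happens symmetrically at the target.

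So your accounting gives $4D$ between centers but only $4D+4$ between arbitrary vertices of $\hat{G}_\ell$. Getting the literal $4D$ for all pairs would need a genuinely different argument that saves these four hops, and the proposal does not supply one. This additive constant is harmless for how the lemma is used, since \cref{thm: hat_G} only needs diameter $O(D)$. But you should state what you actually prove rather than claim the overhead is absorbed.
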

\begin{proof}
This follows from the fact that every edge $e=(u,v)$ of $G$ turns into a path of length 4 in $\hat{G}_\ell[\ER\cup \ES]$. If $e$ is not incised then the path is $v-v_i-u_j-u$. Otherwise, if $e$ is incised or is an endpoint-case edge, then the path is $v-v_i-v_i^L-u_j-u$ or $v-v_i-v_i^R-u_j-u$. 
Adding the edges $\EC$ can only make the diameter smaller. \qedhere
\end{proof}

\begin{lemma}
    $\hat{G}_\ell$ is planar.
\end{lemma}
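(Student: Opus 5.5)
I will prove planarity by building an explicit embedding of $\hat{G}_\ell$ from the given combinatorial embedding of $G$. The idea is to replace every vertex $v$ by a small disk-local gadget, and every edge $e$ by a bundle of parallel curves inside a thin strip around $e$. Concretely, fix a geometric embedding of $G$ consistent with the local clockwise orders. Around each vertex $v$ draw a small disk $B_v$, and around each edge $e=(u,v)$ a thin strip. For each $i$, the wedge of $B_v$ between the consecutive edges $e_i,e_{i+1}$ lies in a single face $f$ of $G$.

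First, I embed the vertex gadget inside $B_v$. I put the star center $v$ at the center of the disk and each copy $v_i$ inside the $i$th wedge, joined to the center by a straight ray. These are the $\ES$ star edges, and they do not cross because the wedges are disjoint. If $v_i^L$ is present, I place it inside the $i$th wedge close to the boundary ray $e_i$; if $v_i^R$ is present, close to $e_{i+1}$. Each is attached to $v_i$ by a short segment within the wedge. The result is a plane depth-two tree in which the clockwise order of the leaves around $B_v$ is $\dots, v_i^L, v_i, v_i^R, v_{i+1}^L, v_{i+1}, \dots$.

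Second, I route the $\ER$ and $\EC$ edges. Every $\ER$ copy of $e=(u,v)$ joins a copy of $v$ in a wedge adjacent to $e$ at $v$ with a copy of $u$ in a wedge adjacent to $e$ at $u$, on the same side of $e$. For example, $(v_i,u_{j-1})$ and $(v_i^L,u_{j-1}^R)$-type edges lie on the face $f$ side, and $(v_{i-1},u_j)$-type edges on the $g$ side. So I route each copy inside the half of $e$'s strip on its side. On a given side there are at most two copies: $(u_{j-1}^R,v_i)$ and $(u_{j-1},v_i^L)$. I must check that they are non-crossing. Along the strip, the one nearer to $e$ connects $v_i^L$ (near $e_i$ at $v$) to $u_{j-1}$. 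The farther one connects $v_i$ to $u_{j-1}^R$, where $u_{j-1}^R$ lies near edge $e_j$ at $u$, i.e. near $e$. That looks like a crossing, so this is the key step. I resolve it by choosing the placement of $v_i$ and $u_{j-1}$ along the wedge appropriately, or equivalently by observing that the four endpoints on the boundary of the side region appear in the cyclic order $v_i^L, v_i, u_{j-1}^R, u_{j-1}$ (reading $v$'s side outward from $e$, then $u$'s side inward). In this order the two chords are nested, not interleaved, so they can be drawn disjointly. The endpoint-case edges $e_0,e_{|P|-1}$ are the same situation with one copy replaced, so they need no extra argument.

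Third, I handle the $\EC$ edges. Each $\EC$ edge is $(v_{i-1},v_i)$, $(v_{i-1},v_i^L)$ or $(u_{j-1},u_j)$, joining two copies of the same vertex that lie in the two wedges on either side of $e$. I draw it as an arc crossing the ray of $e$ inside $B_v$ but outside the inner tree. It must not cross the $\ER$ edges leaving toward $e$'s strip. I arrange this by letting the $\ER$ edges leave the disk nearer the boundary while the $\EC$ arc stays at an intermediate radius, passing between the tree and the strip. The main obstacle is this combined local bookkeeping at $B_v$ around an incised edge: all of $v_{i-1}, v_{i-1}^R, v_i^L, v_i$, two $\ER$ curves per side, and the $\EC$ arc meet there. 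I would first try to write down the cyclic order of all attachment points on $\partial B_v$ near $e_i$ and verify non-interleaving case by case: $e$ not incised, incised, or endpoint-case. Since all other interactions are separated by disjoint disks and strips, this local check completes the embedding.
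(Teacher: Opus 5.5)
\textbf{There is a genuine gap at the key step.} Your direct-embedding route differs from the paper's and can be repaired, but the crossing argument fails as written. Take the placement you fix in step one: $v_i^L$ sits next to ray $e_i$ at $v$, and $u_{j-1}^R$ sits next to ray $e_j=e$ at $u$. Read the boundary of the $f$-side half-strip outward along $v$'s end and then inward along $u$'s end. The order is $v_i^L, v_i, u_{j-1}, u_{j-1}^R$, not $v_i^L, v_i, u_{j-1}^R, u_{j-1}$.

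In the correct order the chords $(v_i^L,u_{j-1})$ and $(v_i,u_{j-1}^R)$ interleave, so they must cross inside the half-strip. This is exactly the crossing you first noticed. The same happens on the $g$ side with $(u_j^L,v_{i-1})$ and $(u_j,v_{i-1}^R)$.

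This is forced, not an artifact of your drawing. The vertices $v_i, v_i^L, u_{j-1}, u_{j-1}^R$ form a $4$-cycle (two $\ES$ edges and two $\ER$ edges), which is drawn as a lens, so only one of $v_i^L, u_{j-1}^R$ can face $e$. Hence no uniform rule of the form ``$L$-copy next to $e_i$, $R$-copy next to $e_{i+1}$'' can work.

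To repair it, for each incised edge and each side, choose one end and place the $L/R$ copy there on the far side of its $D(\cdot)$-copy. This abandons the leaf order $\dots,v_i^L,v_i,v_i^R,\dots$ you claimed in step one. You then have to redo the check at each disk. The $\EC$ arc $(v_{i-1},v_i)$ must run between the star and the leaves $v_{i-1}^R, v_i^L$. If it passes outside the whole depth-two tree, the triangle $v,v_{i-1},v_i$ it closes encloses those leaves, and their $\ER$ edges to the strip must cross it. You explicitly leave this case-by-case check undone (``I would first try\dots''), and it is exactly where the proof lives.

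\textbf{A shorter route, which is essentially the paper's.} Every $v_i^L$ and $v_i^R$ has degree exactly two in $\hat{G}_\ell$: one $\ES$ edge to $v_i$ and one $\ER$ edge. In the endpoint case the definition never adds $(v_{i-1},v_i^L)$, so this still holds. Suppressing these vertices shows that $\hat{G}_\ell$ is a subdivision of the multigraph obtained from $\hat{G}_0$ in two moves. First, add $\EC$ edges between consecutive leaves of each star; a star with a cycle on its leaves can replace every vertex while keeping planarity. Second, double the two $\ER$ copies of each incised edge into parallel pairs. The paper carries this out as a four-step transformation of $\hat{G}_0$: add the missing $\EC$ edges, duplicate, subdivide each parallel copy once with the new $L/R$ vertex, and do one more subdivision for endpoint-case edges. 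Each step visibly preserves planarity, so no local rotation bookkeeping is needed.
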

\begin{proof}

For $\ell=0$, we have that $\hat{G}_0$ is planar by~\cite{GP17, planardistributedmaxflow24}. 
For $\ell>0$, we show that $\hat{G}_\ell$ is planar by showing how to transform $\hat{G}_0$ into $\hat{G}_\ell$ using four steps, each preserving planarity.
Intuitively, for each edge $e$ that maps to an incised edge: (1) we complete the set of $\EC$ to match its definition in $\hat{G}_\ell$ by adding both $\EC$ edges that correspond to $e$ (as opposed to only an arbitrarily chosen one in $\hat{G}_0$), (2) we  turn  each of the two $\ER$ edges corresponding to $e$ into two parallel edges (this creates four copies that participate in four face-parts), (3) we subdivide each such edge into one part that belongs to $\ER$ and another part that belongs to $\ES$, and (4) we handle the endpoint-case edges in a similar manner. Formally: 

\medskip
\noindent
{\bf The four steps transformation.}
Let $e=(u,v)\in G$ be the $i$th and $j$th edge in the local embedding of $v$ and $u$, respectively. 
Recall from \cref{sec: hat_G} that in $\hat{G}_0$ no edge
 is an incised edge, and there are four vertices related to $e$: $v_{i-1},v_i,u_{j-1},u_j$.  In $\hat{G}_\ell$ however, if $e$ is an incised edge, then there are the additional vertices $v_{i-1}^R,v_i^L,u_{j-1}^R,u_j^L$ related to $e$.
The transformation from $\hat{G}_0$ to $\hat{G}_\ell$ is then:

\medskip
\noindent \underline{1.  \em Adding missing $\EC$ edges.}
    In $\hat{G}_0$, one of the edges $(v_{i-1},v_i),(u_{j-1},u_j)$ is in $\EC$. This is in contrast to $\hat{G}_\ell$, where for an incised edge $e\in G$ both edges $(v_{i-1},v_i),(u_{j-1},u_j)$ are in $\EC$ (see \cref{fig: hatG_trans_EC}).
    This step, produces a graph $\hat{G}_0^1$, whose set $\EC$ of edges is identical to the the set $\EC$ of $\hat{G}_\ell$, that is, we add the missing $\EC$ edges to $\hat{G}_\ell$.
    (see definition of $\hat{G}_\ell$ in \cref{par: def_hat_G}).

\begin{figure}[htb]
\centering

\begin{subfigure}[t]{0.3\textwidth}
  \centering
\includegraphics[width=\linewidth]{Figures/primal_dual_P.pdf}
\end{subfigure}
\hspace{0.2 cm}
\begin{subfigure}[t]{0.3\textwidth}
  \centering
\includegraphics[width=\linewidth]{Figures/original_hat_G.pdf}
\end{subfigure}
\hspace{0.2 cm}
\begin{subfigure}[t]{0.3\textwidth}
  \centering
\includegraphics[width=\linewidth]{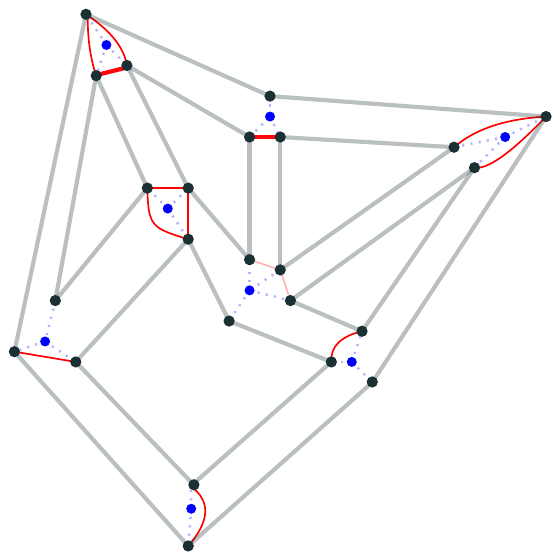}
  \end{subfigure}

\caption{Left: $G$ ({\color{blue} blue}), its dual $G^*$ ({\color{red}red}). The incision path is $P= (p_0,p_1,p_2)$, and its endpoint-case edges are $e_0,e_2$.
Middle: the graph $\hat{G}_0$. $\ER$ edges are {\bf black}, $\ES$ edges are {\color{blue} blue}, and $\EC$ edges are {\color{red} red}. 
Right: the graph $\hat{G}_0^1$  resulting from $\hat{G}_0$ after step (1) where two missing $\EC$ edges (bold {\color{red} red}) are added. \label{fig: hatG_trans_EC}}
\end{figure}

Adding those edges does not violate planarity (follows from~\cite{GP17, planardistributedmaxflow24}). In particular, 
even if all edges $(v_1,v_2), \ldots, (v_{\deg(v)},v_1)$ were added, then they create a cycle on the leaves of $v$'s star. 
One can verify easily that replacing the star of a vertex with a star on a cycle for all vertices $v$ preserves planarity. Hence, adding a subset of those edges also does not.

\medskip
\noindent   
\underline{2. \em Duplicating $\ER$ edges.}
    For every edge $e$ of $G$ there are two $\ER$ copies $(v_{i-1},u_j)$ and $(v_i,u_{j-1})$ in $\hat{G}_0^1$ (the resulting graph from the previous step).
    However, when $e$ is incised, then in $\hat{G}_\ell$ it has four $\ER$ copies. 
   To fix this, we first duplicate each of the edges $(v_{i-1},u_j),(v_i,u_{j-1})$. This results in two parallel edges $(v_{i-1},u_j)$, and another two parallel edges $(v_i,u_{j-1})$. Obviously the resulting graph, denoted $\hat{G}_0^2$ is planar. See \cref{fig: hatG_trans_ER_ES} (left).

\begin{figure}[htb]
\centering
\begin{subfigure}[t]{0.31\textwidth}
  \centering
   \includegraphics[width=\linewidth]{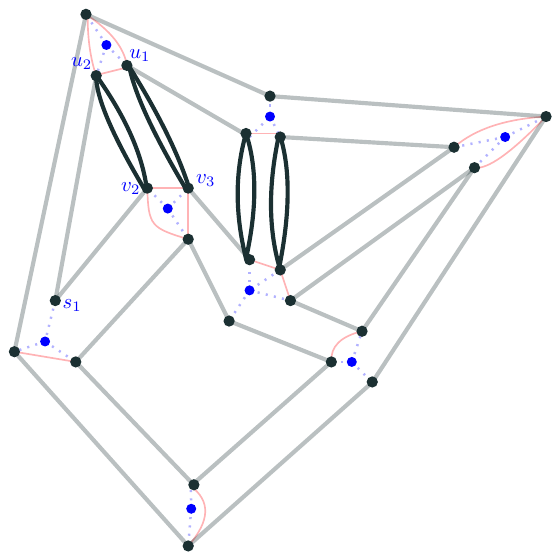}
\end{subfigure}
\hspace{0.1 cm}
\begin{subfigure}[t]{0.31\textwidth}
  \centering
   \includegraphics[width=\linewidth]{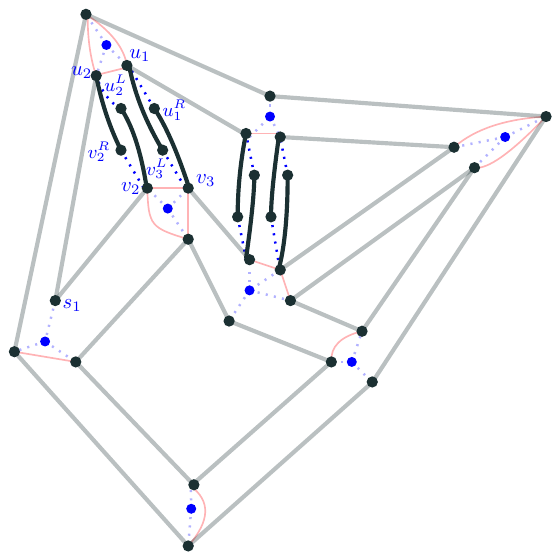}
  \end{subfigure}
  \hspace{0.1 cm}
  \begin{subfigure}[t]{0.31\textwidth}
  \centering
   \includegraphics[width=\linewidth]{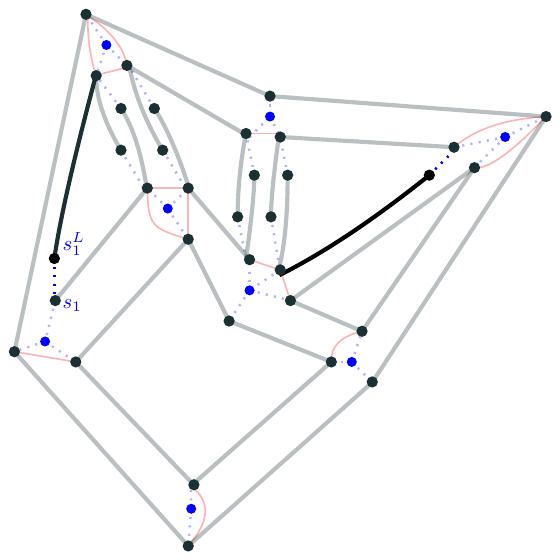}
\end{subfigure}
    \caption{Left: the graph $\hat{G}_0^2$ resulting from $\hat{G}_0^1$ after duplicating $\ER$ edges that map to incised edges in step (2). Middle: the graph $\hat{G}_0^3$ resulting from $\hat{G}_0^2$ after subdividing these parallel edges. In both sides, edges that existed in $\hat{G}_0^{i-1}$ are faded, while new edges in $\hat{G}_0^{i}$ (for $i=2,3$) are not. Right: the graph $\hat{G}_0^4=\hat{G}_\ell$ resulting from step (4) after subdividing  one $\ER$ copy of each endpoint-case edge. 
     Edges that existed in $\hat{G}_0^{3}$ are faded, while new edges are not.
\label{fig: hatG_trans_ER_ES} }
\end{figure}

\medskip
\noindent  
\underline{3. \em Subdividing $\ER$ edges into $\ER$ and $\ES$ edges.}
    We proceed with the changes on $\ER$ edges of $\hat{G}_0^2$ so they match the set of $\ER$ edges of $\hat{G}_\ell$. First, recall that for an incised edge $e$, we have the following four $\ER$ edges in $\hat{G}_\ell$: $(v_{i-1}^R,u_j)$, $(v_{i-1},u_j^L)$, $(v_i^L,u_{j-1})$, $(v_i, u_{j-1}^R)$. 
    We therefore need to add the vertices $v_{i-1}^R,v_i^L,u_{j-1}^R,u_j^L$ and connect them properly with the existing  vertices $v_{i-1},v_i,u_{j-1},u_j$ in $\hat{G}_0^2$. This also includes $\ES$ edges. The missing $\ES$ edges in this case connect a vertex $a\in \{v_{i-1},v_{i},u_{j},u_{j-1}\}$ with its copies that we add from $a\times\{L,R\}$ (e.g. the $\ES$ edge $(v_{i-1}, v_{i-1}^R)$ is missing). See \cref{fig: hatG_trans_ER_ES} (middle) for an illustration. 
    
    These missing vertices, and missing $\ER$ and $\ES$ edges are handled with the following change to $\hat{G}_0^2$:
    Subdivide each one of the parallel edges (from the previous step)  $(v_{i-1},u_j)$, and subdivide each one of the parallel edges  $(v_i,u_{j-1})$. 
    This results in a graph $\hat{G}_0^3$. 
   When the first edge  $(v_{i-1},u_j)$ is subdivided into two edges $(v_{i-1},w),(w,u_j)$, we set the vertex $w$ to be $v_{i-1}^R$. Hence, in $\hat{G}_0^3$, we have the edges $(v_{i-1},v_{i-1}^R),(v_{i-1}^R,u_j)$. The edge $(v_{i-1},v_{i-1}^R)$ is an $\ES$ edge and $(v_{i-1}^R,u_j)$ is an $\ER$ edge. 
    In a similar manner, when subdividing the second edge $(v_{i-1},u_j)$, we  set $w$ to be $u_j^L$ (resulting with the $\ER$ edge $(v_{i-1},u_j^L)$ and the $\ES$ edge $(u_j^L,u_j)$).
    We apply the same logic for the other two subdivided parallel edges $(u_{j-1},v_i)$, resulting with the edges $(u_{j-1},v_i^L)\in \ER$, $(v_i^L,v_i)\in \ES$ and $(v_i, u_{j-1}^R)\in \ER$, $(u_{j-1}^R,u_{j-1})\in \ES$.
    For an incised edge $e\in G$, the vertex and edge copies related to it in $\hat{G}_0^3$ now match the definition (\cref{par: def_hat_G}) of $\hat{G}_\ell$.

\medskip
\noindent  
 \underline{4. \em Endpoint-case edge $e$.}
    An endpoint-case edge $e$ has two $\ER$ copies in $\hat{G}_0$ and in $\hat{G}_0^3$:
     One  $\ER$ copy participates in the cycle $\hat{p}(p_k)$ corresponding to $p_k\in\{p_0,p_{|P|-1}\}$ is an endpoint of the path $P$, and the other $\ER$  copy in the cycle corresponding to $e$'s other endpoint.

     More concretely, let $v$ be $s$ if $p_k=p_0$ or $t$ if $p_k=p_{|P|-1}$. Then, in $\hat{G}_0^3$ we subdivide the copy $(u_{j-1},v_i)$ of $e$ into the two edges: $(v_i,v_i^L)\in \ES$ and $(u_{j-1},v_i^L)\in \ER$.
     Obviously, the graph remains planar after this.
     Finally, we do the following minor change: remove the $\EC$ edge $(v_i,v_{i-1})$ if exists and add the $\EC$ edge $(u_{j-1},u_j)$.
     This matches the definition of $\hat{G}_\ell$ for the an endpoint-case edge (\cref{par: def_hatG_endpoint_case}) and does not violate planarity (as we established in the first step where we added the edges of type $(u_{j-1},u_j)$).
     For an illustration see \cref{fig: hatG_trans_ER_ES} (right).

The above transformation preserves planarity, takes care of all edges (incised/not-incised/endpoint-case), and  matches the definition of $\hat{G}_\ell$ as defined in \cref{par: def_hat_G}.
Hence, $\hat{G}_0^4=\hat{G}_\ell$ and  so $\hat{G}_\ell$ is planar.  \qedhere
\end{proof}

\subsubsection{Vertex-disjoint Face-parts}
We prove the first item of \cref{thm: hat_G} regarding $\hat{G}_\ell$.
\begin{lemma}
\label{lemma: faceparts_hat_G}
    For each level-$\ell$ node-part $f$, $p(f)$ 
 maps to a unique (vertex-disjoint) connected component $\hat{p}(f)$ of $\hat{G}_\ell[\ER]$. Moreover, a consecutive subset of $f$'s edges maps to a unique subpath of $\hat{p}(f)$.
\end{lemma}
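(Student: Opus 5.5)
I will argue locally, vertex by vertex, and then glue the local pictures together along the faces of $G$. The key observation is that in $\hat{G}_\ell[\ER]$ every vertex has degree at most two. Consider first a copy $v_i\in D(v)$. By the definition of $\ER$, $v_i$ receives at most one $\ER$ edge from the copy of $e_i$ that belongs to the face right of $e_i$, and at most one from the copy of $e_{i+1}$ that belongs to the face left of $e_{i+1}$. When $e_i$ is incised, one of the two copies of $e_i$ associated with the face of the pair $(e_i,e_{i+1})$ is attached to $v_i$ and the other to $v_i^L$. Similarly, when $e_{i+1}$ is incised, its two copies go to $v_i$ and $v_i^R$. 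A copy $v_i^L$ or $v_i^R$ has exactly one $\ER$ edge. The endpoint-case adjustment (\cref{remark: incision_P}) only reroutes one edge from $v_i$ to $v_i^L$, so degrees stay at most two. Hence every connected component of $\hat{G}_\ell[\ER]$ is a path or a cycle.

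Next I would identify each component with a face-part. I fix a level-$\ell$ node-part $f'$ of $f\in G^*$, defined by consecutive edges $e_1,\dots,e_2$ of $f$ (\cref{def: nodepart}, Claim~\ref{claim: nodes_are_nopdeparts}). By Claim~\ref{claim:pf}, walking along the face $f$ in $G$ visits the primal edges dual to $e_1,\dots,e_2$ in order, passing through consecutive vertex-corners $v_i$. I will show that the $\ER$ copies of these edges that correspond to $f'$ form a walk in $\hat{G}_\ell$. The argument is that at each intermediate corner $v_i$ (strictly between $e_1$ and $e_2$), neither $e_i$ nor $e_{i+1}$ is a boundary edge of $f'$. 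Therefore both $\ER$ edges at $v_i$ are the ``$v_i$'' copies rather than the $L/R$ copies, and the walk continues through $v_i$. At the first edge $e_1$ of $f'$, the definition attaches the copy belonging to the face-part that starts with $e_1$ to the $L$ (or $R$) copy. So the walk terminates there with degree one, and symmetrically at $e_2$. If $f$ is never incised ($e_1=e_2$ wraps around), the walk closes into the cycle of $\hat{G}_0$ from~\cite{GP17}. This gives a path $\hat{p}(f')$ whose edges, in order, are copies of exactly the edges of $p(f')$; in particular, any consecutive run of $f'$'s edges maps to a subpath.

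For uniqueness and vertex-disjointness, I note that each $\ER$ copy of an edge is assigned to exactly one (face, face-part) slot. There are at most four slots by Claim~\ref{claim: edge_in_four_p(f)}, and the four copies in the definition are in bijection with them. So distinct node-parts receive disjoint edge sets. Because components are paths, disjoint edge sets that are each connected can only share a vertex if that vertex has degree two and lies on both paths, which is impossible since its two incident edges belong to the same path. Residual node-parts (\cref{cor: residual_nodeparts}) simply give additional components and do no harm.

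I expect the main obstacle to be the boundary bookkeeping. I need to check that the edge designated as the ``first'' edge of a face-part, in terms of left/right from the dual incision side, really coincides with the copy attached to $v_i^L$ versus $v_i$ in the index conventions ($v_i,u_{j-1}$ for $f$; $v_{i-1},u_j$ for $g$). This has to hold for incisions along $P$, along $P_i'$, and for the endpoint-case edges. I would handle it by a case check on one incised edge, following the figures. I would then argue that, since every edge is incised at most once (\cref{lem: incise_edge_once}), the only splits happen at incised edges.
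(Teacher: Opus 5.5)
Your proposal is correct and is essentially the paper's proof. Connectivity comes from walking the corners of $p(f)$ in the dual order of $f$, which is the paper's use of Claim~\ref{claim: shared_endpoint_dual_edges}. Disjointness comes from the local count that each $D(v)$-corner carries exactly two $\ER$ edges, both of the same face-part, while the $L/R$ dead ends (and the rerouted corner $v_i$ in the endpoint case) carry one; the paper phrases this through Claim~\ref{claim: pair_edges_unique_facepart} rather than your copy-to-slot bijection.

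One small imprecision: at the corners adjacent to $e_1$ and $e_2$, one of the two edges \emph{is} a boundary edge. Your conclusion still holds there, because the copy of that edge used by $f'$ has its $D(\cdot)$ end at the continuing corner and its $L/R$ end at the far, dead end.
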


    In $\hat{G}_0$ each face $f$ of $G$ maps to a cycle $\hat{p}(f)$ of $\ER$ edges~\cite{GP17}.
    The cycle $\hat{p}(f)$ corresponds to an Euler tour on $f$. 
    Therefore, $\hat{p}(f)$ is connected. 
     $\hat{G}_\ell$ can be seen as a collection of cycles and paths. The cycles (paths) correspond to
      Euler 
    tours on 
    faces (face-parts). 

\begin{figure}[htb]
\centering
\begin{subfigure}[t]{0.45\textwidth}
  \centering
   \includegraphics[width=.7\linewidth]{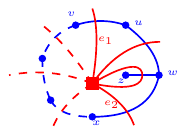}
\end{subfigure}
\hspace{0.5 cm}
\begin{subfigure}[t]{0.45\textwidth}
\centering
\includegraphics[width=.4\linewidth]{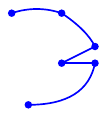}
\end{subfigure}
\caption{Left: A node-part $f$ ({\color{red} red} square) of $h\in G^*$ and its incident edges in solid {\color{red} red} (edges between $e_1$ and $e_2$, clockwise). Dashed {\color{red} red} edges are $h$'s other edges. The corresponding (primal) face-part $p(f)$ is solid {\color{blue} blue}. 
Right: the path $\hat{p}(f)$ in $\hat{G}_\ell$ corresponding to the Euler tour  $v,u,w,z,w,x$ on $p(f)$.
\label{fig: hatp(f')_connected}}
\end{figure} 

    The intuition behind the proof is as follows. By \cref{def: nodepart}, a node-part $f$, is defined by all (dual) edges between two edges $e_{1}, e_{2}$ incident to some node $h$ of $G^*$ ($f$ is a part of $h$). 
            We show that the Euler tour on $p(f)$ corresponds to $\hat{p}(f)$, by starting at the (primal) edge $e_1$ and traversing the face-part $p(f)$. 
If $p(f)$ is a simple path or a cycle then each of its edges is traversed once. Otherwise, we may traverse the same edge twice. This happens if the edge is a self loop in the dual (see e.g., the edge $(w,z)$ in \cref{fig: hatp(f')_connected}).    
    Using duality, we show that for each two consecutive edges in the local embedding of $f$, there is a shared primal vertex to the edges' counterparts in $G$. From here it follows that $p(f)$ is connected. Then, by relating the aforementioned vertices to their copies in $\hat{G}_\ell$ and relating the $f$ edges to their copies in $\hat{G}_\ell$, we get that $\hat{p}(f)$ is connected.
    To prove that for $g\neq f$, $\hat{p}(f)$ is vertex-disjoint from $\hat{p}(g)$, we show that the vertex and edge copies that form each of them are disjoint.

\begin{proof}[Proof of \cref{lemma: faceparts_hat_G}.]
First we eliminate the case of $p(f)$ being an entire face of $G$. 

    \begin{claim}
    \label{claim_0_hatp(f)_connected}
    If $p(f)$ is an entire face of $G$, then  $\hat{p}(f)$ is a cycle and a maximal connected component in $\hat{G}_\ell[\ER]$.
    \end{claim}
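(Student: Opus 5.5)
The plan is to reduce to the base graph $\hat{G}_0$ of~\cite{GP17}, where each face of $G$ is already known to map to an $\ER$-cycle. Then we check that none of the local modifications defining $\hat{G}_\ell$ touch the vertices or $\ER$ edges involved.

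Suppose $p(f)$ is an entire face of $G$. Then the node-part $f$ equals a node of $G^*$ that was never incised in any level up to $\ell$. By \cref{def: incision_P}, \cref{def: incision_Pi} and Claim~\ref{claim: nodes_are_nopdeparts}, no dual edge incident to $f$ lies on an incision path, and $f$ is not an endpoint $p_0,p_{|P|-1}$ of $P$. For the second point, an endpoint of $P$ is always split into two node-parts, so its face-parts are proper. Therefore, for every vertex $v$ on the face $f$ and every index $i$ such that the $i$th pair $e_i,e_{i+1}$ of $v$ bounds $f$, neither $e_i$ nor $e_{i+1}$ maps to an incised edge. The only exception is the endpoint-case edge, but that copy belongs to the other face and uses $v_{i-1}$, not $v_i$.

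First, I will walk the boundary of $f$ as in the Euler tour of~\cite{GP17}. For each consecutive pair of boundary edges $(u,v),(v,w)$ that forms the $i$th pair at $v$, the definition of $\ER$ for non-incised edges gives exactly the edges $(u_{j-1},v_i)$ and $(v_i,w_{k})$ with the matching indices. So $v_i$ has precisely these two $\ER$ neighbours and no $L/R$ copies are attached to it. Concatenating these edges along the full facial walk closes up, which yields the cycle $\hat{p}(f)$, exactly as in $\hat{G}_0$.

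Second, for maximality, I will show that every $\ER$ edge incident to a vertex $v_i$ of this cycle is one of the two cycle edges. In the $\ER$ definition, $v_i$ can receive an edge only from the copy of $e_i$ toward the face right of $e_i$, or from the copy of $e_{i+1}$ toward the face left of it. In the incised case, the alternative copies attach instead to $v_i^L$ and $v_{i}^R$, which are absent here, and the edge $(v_i,u_{j-1}^R)$ exists only if $e_i$ is incised. Hence the $\ER$-degree of every cycle vertex is $2$ (counting multiplicity for self-loop dual edges traversed twice), and the cycle is a connected component of $\hat{G}_\ell[\ER]$.

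The main obstacle is the index bookkeeping. I need to confirm that no $\ER$ edge created for an \emph{incised} edge $e_{i+1}$ at $v$ lands on $v_i$. Such an edge does land on $v_i$ when $e_{i+1}$ is incised, via $(v_i,u^R_{\cdot})$. So the crux is the preceding argument that an unincised face has no incised boundary edges, including when a boundary edge is a dual self-loop. This part I would check carefully against Claim~\ref{claim: edge_in_four_p(f)} and the endpoint-case rules.
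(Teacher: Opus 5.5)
Your proposal is correct and follows essentially the paper's argument. The paper observes that $\hat{G}_\ell$ coincides with the base graph $\hat{G}_0$ locally at every node not on an incision path, reading ``$p(f)$ is an entire face'' as ``$f$ was never split'' just as you do; your facial walk and the $\ER$-degree count at each $v_i$ are an explicit check of that coincidence. One harmless slip in your last paragraph: an incised $e_{i+1}$ reaches $v_i$ through an edge of the form $(u^L_{\cdot},v_i)$, whereas the $u^R$-type edge $(u^R_{j-1},v_i)$ comes from an incised $e_i$, as you correctly state earlier.
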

        The  claim follows from~\cite{planardistributedmaxflow24, GP17}, since in that case our definition of $\hat{G}_\ell$ matches the definition of $\hat{G}_0$ from~\cite{planardistributedmaxflow24, GP17}. This is because our definition differs from $\hat{G}_0$ only for edges that are (incident to) nodes on an incision path, and if $f$ is not such a node, then $p(f)$ is an entire face of $G$.

        Thus, from now on, it suffices to discuss face-parts $p(f)$ (not an entire face of $G$).
    In addition, we split the proof into two parts, one that discusses connectivity, and the other discusses disjointness of $\hat{p}(f)$.
    
    \medskip
    \noindent
    {\bf Connectivity.}
    In here, we formally describe the tour on $p(f)$ that corresponds to $\hat{p}(f)$  and relate it to $\hat{G}_\ell$ by pointing out the copies of $p(f)$ vertices and edges that are vertices of $\hat{p}(f)$ in $\hat{G}_\ell$ proving:
    \begin{lemma}\label{lem: hatpf_conn}
    For each level-$\ell$ node-part $f$, $p(f)$ maps to a connected component $\hat{p}(f)$ of $\hat{G}_\ell[\ER]$. In particular, a consecutive subset of $f$'s edges maps to a unique subpath of $\hat{p}(f)$.
    \end{lemma}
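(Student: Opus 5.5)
We will prove the lemma by following the face-part $p(f)$ in the order given by the dual rotation of $f$ and checking, one step at a time, that the walk is realized by $\ER$ edges of $\hat{G}_\ell$.

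\textbf{Setup.} By \cref{def: nodepart}, $f$ is a node-part of some $h\in G^*$, defined by the dual edges $e_1=e^{(1)},e^{(2)},\ldots,e^{(m)}=e_2$ that appear consecutively in $h$'s clockwise order. The case $e_1=e_2$, where $p(f)$ is the whole face, is already covered by Claim~\ref{claim_0_hatp(f)_connected}.

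\textbf{Step 1 (a shared primal vertex).} First we establish the duality fact that for every two consecutive dual edges $e^{(k)},e^{(k+1)}$ in the rotation at $h$, the primal edges share a vertex $v$. Moreover, they are consecutive in $v$'s local embedding, say as $e_a,e_{a+1}$, and the corner between them is the face $h$. This is the standard correspondence between darts around a face and consecutive edge pairs at vertices (Chapter 4 of \cite{KleinM_book}). It gives the walk $e^{(1)},v^{(1)},e^{(2)},v^{(2)},\ldots,e^{(m)}$ in $G$. This walk is the Euler tour of $p(f)$; it may traverse an edge twice when that edge's dual is a self-loop. In particular it shows that $p(f)$ is connected, consistent with Claim~\ref{claim:pf}.

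\textbf{Step 2 (lifting the walk to $\hat{G}_\ell$).} For each corner $v^{(k)}$ between the internal edges $e^{(k)},e^{(k+1)}$, we take the copy $v^{(k)}_a\in D(v^{(k)})$. We then check against the $\ER$ definition that each primal edge $e^{(k)}=(u,v)$ contributes exactly one $\ER$ edge joining the copy at its left corner to the copy at its right corner.

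\begin{itemize}
\item If $e^{(k)}$ is not incised, this is the edge $(v_i,u_{j-1})$ (or $(v_{i-1},u_j)$), with the indexing chosen so that both endpoints lie on face $h$.
\item If $e^{(k)}$ is incised, we must verify that an internal edge of $f$ uses the unsuperscripted copies. We must also verify that the boundary edges $e_1$ and $e_2$ use the $L$/$R$ superscripted copies, so that $e_1$ becomes the first edge and $e_2$ the last edge of the path. Here the four copies $(v_i^L,u_{j-1})$, $(v_i,u_{j-1}^R)$, $(u_j^L,v_{i-1})$, $(u_j,v_{i-1}^R)$ are matched to the four face-parts of Claim~\ref{claim: edge_in_four_p(f)}.
\item An endpoint-case edge of $P$ (\cref{remark: incision_P}) is handled by the substitution of $v_i^L$ for $v_i$.
\end{itemize}

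Concatenating the lifted edges gives a walk in $\hat{G}_\ell[\ER]$ through all copies. Because each primal edge occurrence lifts to a distinct $\ER$ copy, every consecutive subset $e^{(k)},\ldots,e^{(k')}$ maps to the corresponding subpath. This yields the second sentence of the lemma.

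\textbf{Main obstacle.} The difficulty is the case analysis of incised edges, where we must show the lifted walk is well defined. Three cases need care:

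\begin{itemize}
\item incised edges that are interior to the face-part of $f$ (possible when $f$ lies on an old incision path but not at a cut point);
\item the case where $e_1$ and $e_2$ are incised at the same vertex;
\item corners where both incident primal edges are incised, so that $v_a^L$ and $v_a^R$ both exist.
\end{itemize}

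I would handle these by induction on the level, using Claim~\ref{fact:nodeparts of nodeparts}: $f$ is a consecutive sub-range of a parent node-part $h'$ with path $\hat{p}(h')$. Incising at the two cut edges of $h'$ only reroutes their endpoints through the new $L/R$ copies, splitting $\hat{p}(h')$ into subpaths. Each of these subpaths stays connected, and one of them is $\hat{p}(f)$.
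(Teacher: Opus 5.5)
Your overall route is the paper's. You walk around $p(f)$ in the rotation order of $f$ and use Claim~\ref{claim: shared_endpoint_dual_edges} to get the corner vertex shared by consecutive dual edges. You then lift each primal edge to one of its $\ER$ copies, and let the superscripted copies (with the usual modification for the endpoint-case edge) be the two ends of $\hat{p}(f)$.

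The genuine problem is the second bullet of Step~2 together with your first ``main obstacle''. You plan to verify that an incised edge lying strictly inside $f$'s range lifts to an unsuperscripted copy, but no such copy exists. The four $\ER$ copies of an incised edge are $(v_i^L,u_{j-1})$, $(v_i,u_{j-1}^R)$, $(u_j^L,v_{i-1})$, $(u_j,v_{i-1}^R)$. Each has a superscripted endpoint, and every superscripted vertex has exactly one incident $\ER$ edge. So if an incised edge really were interior to $f$'s range, your lifted walk would necessarily break at it. This case cannot be verified; it has to be ruled out.

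The missing fact is that the case never occurs: every occurrence of an incised edge in a level-$\ell$ node-part's range is that range's first or last edge. Three facts give this:
\begin{itemize}
\item Each incision splits a node-part exactly at the newly incised path edges incident to it, and these become first/last edges of the resulting parts (Claim~\ref{claim: nodes_are_nopdeparts}).
\item Later node-parts are consecutive sub-ranges of earlier ones (Claim~\ref{fact:nodeparts of nodeparts}), so a boundary edge stays a boundary edge.
\item No edge is incised twice (Lemma~\ref{lem: incise_edge_once}).
\end{itemize}
Consequently $e^{(2)},\dots,e^{(m-1)}$ are never incised and lift through the ordinary copies $(v_i,u_{j-1})$ or $(v_{i-1},u_j)$. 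Only $e_1,e_2$ need the $L/R$ (or endpoint-case) analysis, which is exactly how the paper's traversal is organized.

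Your inductive fallback, cutting $\hat{p}(h')$ at the newly incised edges, is a legitimate alternative and implicitly establishes this fact. If you take that route, note two things. First, a parent node-part may be cut at only one new edge, not always two: this happens at an endpoint of $P_i'$, or where $P_i$ leaves $P_k$ or $P_j$. Second, in the step from $\hat{G}_0$ to $\hat{G}_1$ the endpoint-case edge is not incised, yet it also cuts the cycle $\hat{p}(p_0)$, because its copy is rerouted through $v_i^L$.
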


To prove the lemma, we first provide a claim and a definition that will be important to its proof.

    \begin{claim}
    \label{claim: shared_endpoint_dual_edges}
    Let $f$ be a dual node and $e,e'$ be two incident consecutive edges in the local ordering of $f$. Then, $e,e'$ have a shared endpoint $u$ in $p(f)$ such that $e'$ precedes $e$ in $u$'s local ordering of its incident edges.
    \end{claim}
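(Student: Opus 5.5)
We reduce the claim to the standard correspondence between the rotation system of $G$ and the rotation system of $G^*$. Both are read off the single combinatorial embedding of $G$ that every vertex knows (\cref{sec: preliminaries}). Faces of $G$ are the orbits of the face-tracing permutation on darts: from a dart entering a vertex $u$ along an edge, we leave along the next edge in $u$'s clockwise order (equivalently the previous one, depending on convention). The clockwise order of dual edges around the node $f$ is obtained by walking along the boundary of the face $f$ in $G$ and listing the crossed primal edges in the order they are met. With the paper's orientation convention (a face $f$ lies to the right of $e_i$ from $v$'s view, and $v_i$ handles the pair $e_i,e_{i+1}$), consecutive edges on the boundary walk of $f$ are exactly pairs that are consecutive in the rotation at their common vertex.

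\textbf{Step 1.} Fix the face $f$ and its facial walk $u^0,\epsilon^0,u^1,\epsilon^1,\ldots$, where each $\epsilon^k$ is a primal edge with endpoints $u^k,u^{k+1}$. By the definition of the dual embedding (Chapter 4 of~\cite{KleinM_book}), the cyclic order of the dual edges around $f$ is exactly $(\epsilon^0)^*,(\epsilon^1)^*,\ldots$ in this walk order. The direction of the walk is fixed by the clockwise convention used for $f$ in $G^*$.

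\textbf{Step 2.} Let $e,e'$ be consecutive around $f$. Then, by Step 1, they are consecutive edges $\epsilon^k,\epsilon^{k+1}$ of the facial walk, so they share the vertex $u=u^{k+1}$. By the face-tracing rule, $\epsilon^{k+1}$ is obtained from $\epsilon^k$ by one rotation step at $u$. Matching orientations, one gets that $e'$ immediately precedes $e$ in $u$'s local order, i.e.\ they form the pair $e_{i},e_{i+1}$ of $u$ whose copy $u_i$ corresponds to $f$. Since both edges are incident to $f$, they lie in $p(f)$ by \cref{claim:pf}, and therefore so does $u$.

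\textbf{Step 3 (degenerate cases).} A self-loop of $G^*$ arises from a bridge-like edge traversed twice by the facial walk. Such an edge appears twice in $f$'s order, and the argument above applies to each occurrence separately. The same holds for repeated vertices on the walk. The walk-based argument handles both uniformly, because it never uses simplicity of $f$.

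The main obstacle is orientation bookkeeping: verifying that "clockwise around $f$ in $G^*$" corresponds to "$e'$ precedes $e$", and not the reverse, at $u$. We would settle this by fixing the convention once, consistent with the $\ER$ definition where $(v_i,u_{j-1})$ lies on $\hat{p}(f)$ for $e=e_i$ at $v$ and $e=e_j$ at $u$. We would then check it on a single triangular face, after which the general case follows because the facial permutation is uniform across all vertices.
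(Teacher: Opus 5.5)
Your proposal is correct and takes the same route as the paper. The paper simply appeals to the standard primal--dual rotation-system correspondence (Chapter~4 of~\cite{KleinM_book}); your facial-walk argument spells that correspondence out and also covers the bridge/self-loop case. The orientation check you defer is a purely local corner computation that holds at every corner. If $f$ occupies the clockwise sector from $e_i$ to $e_{i+1}$ at $u$, then clockwise around $f$ the dual edge $e_{i+1}^*$ is immediately followed by $e_i^*$, which is exactly ``$e'$ precedes $e$'' with $e=e_{i+1}$ and $e'=e_i$.
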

    
This claim follows from basic arguments on duality~\cite{KleinM_book} and can be seen as special case of  Claim~\ref{claim:pf}.
We next define the first and last vertices of $p(f)$. These are the endpoints of $p(f)$ when $p(f)$ is a path. But $p(f)$ can be a tree (e.g. \cref{fig: hatp(f')_connected}), so we have a more precise definition:

    \begin{definition}[first/last vertex of $p(f)$]
    \label{def: first_last_vertex_pf}
    Let $f$ be a node-part of $h\in G^*$, defined by edges $e_1,e_2$. Then,  
    the shared primal endpoint of $e_1$ and its predecessor edge in $h$ is the {\em first} vertex of $p(f)$, and the shared endpoint of $e_2$ and its successor edge in $h$ is the {\em last} vertex of $p(f)$.
    \end{definition}

We next describe the tour on $p(f)$ that corresponds to $\hat{p}(f)$.
Which combined with Claim~\ref{claim_0_hatp(f)_connected} constitutes the proof of  \cref{lem: hatpf_conn}.

\begin{proof}[Proof of \cref{lem: hatpf_conn}.]
When $f$ is a not split into parts, then $p(f)$ is an entire face of $G$ and the claim holds by Claim~\ref{claim_0_hatp(f)_connected}. 
Now, we prove the lemma for face-parts $p(f)$, showing: Traversing all edges of $p(f)$ one at a time can be mapped to $\hat{p}(f)$, s.t., each traversal on an edge of $p(f)$ corresponds to a copy of the same edge in $\hat{p}(f)$, and two copies of consecutive edges share an endpoint (are connected). See \cref{fig: hatpf_conn}.

\medskip
\noindent
\underline{\em First edge (the edge $e_1$).}
Consider the first edge $e_1$ that is incident to $f$. In $G$, $e_1$ has two primal endpoints $u,v$. 
Assume that $v$ is the first vertex of $p(f)$.
Since $u$ is not the first vertex of $p(f)$, and by Claim~\ref{claim: shared_endpoint_dual_edges}, we get that $u$ is an endpoint in $G$ of the successor edge of $e_1$ in the local ordering of the node $f$. Denote that edge in $G$ by $(u,w)$.
By the same claim, $(u,w)$ is the predecessor of $e_1$ in $u$'s local embedding.
Thus, by definition of $\hat{G}_\ell$ (\cref{par: def_hat_G}) $u$ has one vertex copy that is incident to (a copy of) $e_1$ and to $(u,w)$ in $\hat{G}_\ell$.
This establishes the first two edges in the tour on $p(f)$ which correspond to the first two edges in $\hat{p}(f)$:  the edges $(v,u)$ and $(u,w)$ in $p(f)$.
Next, we specify the corresponding edges of $\hat{p}(f)$. Let $e_1$ be the $i$'th and $j$'th edge of $v$ and $u$ (resp.). Then, in $\hat{G}_\ell$, the copy $u_j$ of $u$ is incident to a copy of $e_1$ whose other endpoint is the copy $v_i^L$ of $v$. I.e., the edge $(u_j,v_i^L)\in \ER$ in $\hat{p}(f)$ corresponds to traversing $e_1$ in $p(f)$ from $v$ to $u$. 
In addition, if $w$ is not the last vertex in $p(f)$, then by definition of $\hat{G}_\ell$ there is a copy $w_k$ of $w$ that is incident to $u_j$ with an $\ER$ copy of the edge $(w,u)$. I.e., $(u_j,w_k)\in \ER$ is the copy for traversing $(u,w)\in p(f)$ in $\hat{p}(f)$.

\begin{figure}[htb]
\centering
    \begin{subfigure}[t]{0.3\textwidth}
        \centering
        \includegraphics[width=\linewidth]{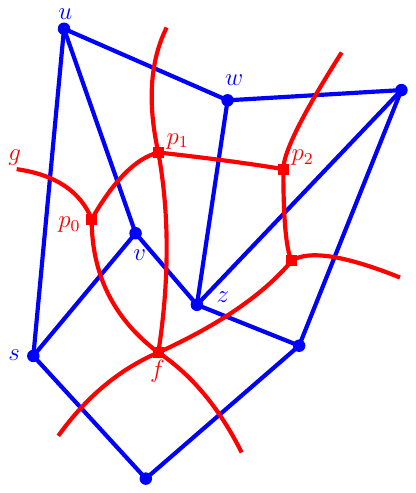}
        \caption{$G$ ({\color{blue} blue}), its dual $G^*$ ({\color{red}red}), and the path $P= (p_0,p_1,p_2)$. \label{fig: hatf_conn_primal}}
    \end{subfigure}
    \hspace{0.15cm}
    \begin{subfigure}[t]{0.3\textwidth}
    \centering
    \includegraphics[width=0.9\linewidth]{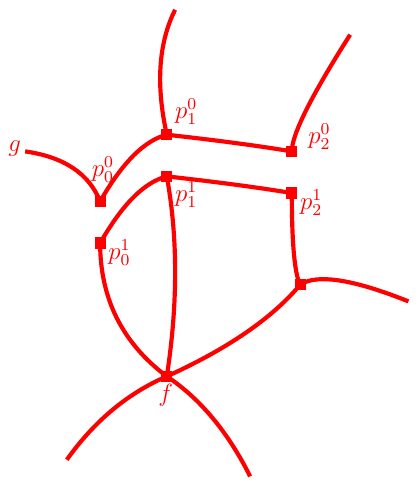}
        \caption{The dual graph $G^*$ after the incision along $P$.}
    \end{subfigure}
    \hspace{0.15cm}
        \begin{subfigure}[t]{0.3\textwidth}
        \centering
        \includegraphics[width=1.2\linewidth]{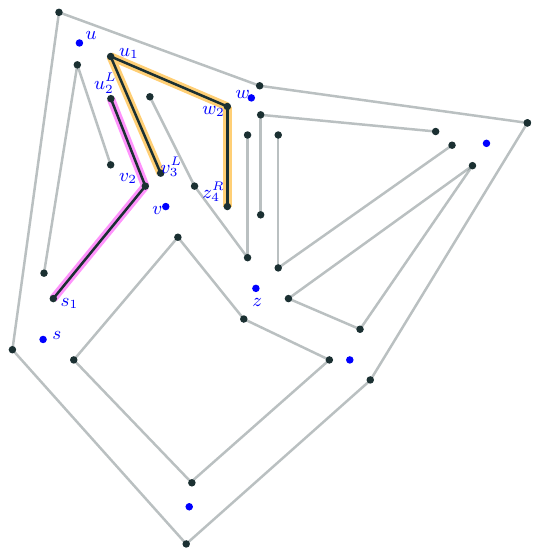}
        \caption{The graph $\hat{G}_1[\ER]$ corresponding to  $G^*$ after incising $P$. \label{fig: hatf_conn_hatG}}
       
        \end{subfigure}
    \caption{The node-part $p_1^0$ in image (b) is defined by edges $e_1=(p_0,p_1), e_2=(p_1,p_2)$ that are dual to the primal edges $(v,u)$ and $(w,z)$, respectively in image (a). The face-part $p(p_1^0)$ in $G$ is the path $(v,u,w,z)$ whose first and last vertices are $v$ and $z$. The corresponding path $\hat{p}(p_1^0)$ is the path $(v_3^L,u_1,w_2,z_4^R)$ in $\hat{G}_1$ (the yellow path in image (c)). 
    The node-part $p_0^1$ in image (b) is defined by the edges $e_1=(p_1,p_0),e_2=(p_0,g)$ that are dual to the primal edges $(u,v)$ and $(s,u)$ respectively in image (a).
     $p^1_0$ is a node part of the endpoint node $p_0\in P$, and the edge (dual to) $(s,u)$ is an endpoint edge. This is the reason that $(p_0,g)$ is not incident to $p_0^1$, and instead $p_0^1$'s last incident edge is $e_2'=(p_0,f)$ that is dual to $(v,s)$.  
    The corresponding face-part $p(p_0^1)$ is the path $(u,v,s)$ whose first and last vertices are $u$ and $s$. In $\hat{G}_1$ the path $\hat{p}(p_1^0)$ is $(u_2^L,v_2,s_1)$ (the pink path in image (c)).
      \label{fig: hatpf_conn} }
\end{figure}

\medskip
\noindent
\underline{\em Edges (strictly) in between $e_1$ and $e_2$.}
We continue in a similar manner as above until we reach the last vertex of $p(f)$, each time traversing the next edge on $p(f)$, where next is defined by the local embedding of $f$. More formally:
\begin{enumerate}[label=(\alph*)]
    \item \label{hatpf_connected_internal_edge_step1}
    The current edge to be traversed is $(w,z)$ such that neither of $w,z$ is the last vertex of $p(f)$. 
    In addition, in the previous step we traversed some edge $e:=(a,w)$ in $p(f)$ and arrived to $w$ (w.l.o.g.) where $e$ is the predecessor of $(w,z)$ in $f$'s local embedding.

\item 
Assume the edge $(w,z)$ is the $k$'th edge of $w$ and the $t$'th edge of $z$.
Since $(w,z)$ is the successor of $(a,w)$ in the embedding of $f$, then vertex $w_k$ is incident to a copy  $z_{t-1}$ of $z$ with an $\ER$ edge (definition of $\hat{G}_\ell$, \cref{par: def_hat_G}). This edge corresponds to the edge being traversed.
Moreover, we arrived to $w$ by traversing $(a,w)$. This is the $k+1$'th edge of $w$ (by Claim~\ref{claim: shared_endpoint_dual_edges} because it precedes $(z,w)$ in $f$'s embedding) and $x$'th edge of $a$. That is, $w_k$ is incident to the edge $(a_x,w_k)\in \hat{p}(f)$. In other words, the two consecutive edges in $p(f)$ corresponding to $(a,w)$ and $(w,z)$ correspond to connected edges in $\hat{p}(f)$ (edges that share an endpoint, $w_k$).

\item
Let the next edge in $f$'s embedding be $(z,y)\neq e_2$ (neither of $z,y$ is the last vertex of $p(f)$). The tour proceeds in a similar manner starting from \ref{hatpf_connected_internal_edge_step1} 
after setting $a:=w$, $w:=z$ and $z:=y$.   
\end{enumerate}

\medskip
\noindent
\underline{\em Last edge (the edge $e_2$).}
Eventually, the edge $e_2$ is traversed. 
The case of $e_2$ is symmetric to $e_1$. I.e., $e_2$ has two endpoints $u',v'$, where $v'$ is the last vertex of $p(f')$. By definition of $\hat{G}_\ell$ (\cref{par: def_hat_G}) $v_{i'}'^R$ is the last vertex of the path $\hat{p}(f)$. In addition,  $v_{i'}'^R$  is connected with an $\ER$ edge to $u'_{j'}$, and $u'_{j'}$ is connected to $w'_{k'}$ with an $\ER$ edge, where the dual of $(u',w')$  proceeds $e_2$ in the embedding of $f$.

\medskip
\noindent
\underline{\em The endpoint-case edge.}
If $e_2$ is an endpoint-case edge, then 
let $e'_2$ be the predecessor of $e_2$ in the local embedding of $f$. 
Let the edge of $G$ corresponding to $e'_2$ be $(u',v')$. 
Then, w.l.o.g. $v'$ is the last vertex of $p(f)$, and $u'$ is incident to a vertex $w'$ in $p(f)$ s.t. the edge $(u',w')$ is dual to the edge that precedes $e'_2$ in the embedding of $f$.
In this case, the last endpoint of the path $\hat{p}(f)$ is simply $v'_{i'}$, which is connected to $u'_{j'}$ with an $\ER$ edge, and $u'_{j'}$ in turn is connected to $w'_{k'}$ with an $\ER$ edge. See \cref{fig: hatpf_conn}.
\end{proof}

\medskip
\noindent
{\bf Uniqueness.}
Here we show the following lemma, which combined with \cref{lem: hatpf_conn} (on connectivity of $\hat{p}(f)$) completes the proof of \cref{lemma: faceparts_hat_G} on connectivity and disjointness of $\hat{p}(f)$s.
\begin{lemma}
\label{lem: hatpf_disjoint}
$\hat{p}(f)$ is vertex-disjoint from $\hat{p}(g)$ for any two level-$\ell$ node-parts $g\neq f$.
\end{lemma}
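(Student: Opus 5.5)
We prove \cref{lem: hatpf_disjoint} by assigning to every vertex of $\hat{G}_\ell$ lying on some $\hat{p}(\cdot)$ a single well-defined ``owner'' node-part. We then check, using only the local definition of $\hat{G}_\ell$, that every $\ER$ edge incident to such a vertex belongs to the path of that owner.

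\textbf{The owner map.} Every vertex of $\hat{p}(f)$ is a copy $v_i$, $v_i^L$ or $v_i^R$ of some primal vertex $v$; the star centers $V_S$ carry no $\ER$ edges and are never on any $\hat{p}(\cdot)$. A copy $v_i$ is attached to the $i$th pair $(e_i,e_{i+1})$ of consecutive edges around $v$. By Claim~\ref{claim: shared_endpoint_dual_edges}, this pair is a pair of consecutive edges in the local ordering of a unique dual node $h\in G^*$, namely the face to the right of $e_i$. By Claim~\ref{claim: nodes_are_nopdeparts}, the level-$\ell$ node-parts of $h$ partition the cyclic sequence of $h$'s edges into consecutive blocks that overlap only in a shared first/last edge, and each shared edge is duplicated. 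We assign ownership as follows:
\begin{itemize}
\item[(i)] $v_i$ is owned by the unique node-part of $h$ that contains both $e_i$ and $e_{i+1}$, i.e., in which they are consecutive. Such a node-part exists and is unique: if $e_i,e_{i+1}$ lay in different blocks, the block boundary would be an incised edge, and that edge belongs to both blocks.
\item[(ii)] $v_i^L$, which exists only when $e_i$ is incised (or is an endpoint-case edge), is owned by the node-part of $h$ whose \emph{first} edge is $e_i$.
\item[(iii)] $v_i^R$ is owned by the node-part of $h$ whose \emph{last} edge is $e_{i+1}$.
\end{itemize}
Uniqueness in (ii) and (iii) again follows from Claim~\ref{claim: nodes_are_nopdeparts}: an incised edge of $h$ is the first edge of exactly one node-part of $h$ and the last edge of exactly one other. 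For an endpoint-case edge, the rule in \cref{remark: incision_P} gives a single owner.

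\textbf{Local consistency.} We go through the case analysis of the $\ER$ definition: non-incised edges, incised edges with their four copies $(u_{j-1}^R,v_i)$, $(u_{j-1},v_i^L)$, $(u_j^L,v_{i-1})$, $(u_j,v_{i-1}^R)$, and endpoint-case edges. In each case we verify that both endpoints of the $\ER$ copy have the same owner, and that this owner is the node-part whose face-part contains that copy in the traversal of \cref{lem: hatpf_conn}. For example, for $(u_{j-1},v_i^L)$, the vertex $v_i^L$ is owned by the node-part $f^0$ starting at $e$. Since $u_{j-1}$ corresponds to the pair (edge preceding $e$ in $u$'s order, $e$), this pair is consecutive only inside $f^0$, so $u_{j-1}$ is owned by $f^0$ as well. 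The other three copies and the endpoint-case copy are handled symmetrically.

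\textbf{Conclusion.} By this consistency, every connected component of $\hat{G}_\ell[\ER]$ has all of its vertices with one common owner. Since $\hat{p}(f)$ is such a component (\cref{lem: hatpf_conn}), every vertex of $\hat{p}(f)$ has owner $f$. Hence $\hat{p}(f)$ and $\hat{p}(g)$ share no vertex when $f\neq g$. Node-parts of different nodes $h\neq h'$ are automatically separated, because the owner determines the underlying face. Faces that are never split fall under Claim~\ref{claim_0_hatp(f)_connected}.

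\textbf{Main obstacle.} The hard step is the consistency check at boundary copies, where the same primal pair $(e_i,e_{i+1})$ may involve an incised edge on one side and the endpoint-case modification on the other, as happens at $p_0$ and $p_{|P|-1}$. There we must check carefully that the edge $(v_i^L,u_{j-1})$ replacing $(v_i,u_{j-1})$ leaves $v_i$ owned by $p_k^{1-b}$ and $v_i^L$ owned by $p_k^b$, as stated in \cref{remark: incision_P}. A second delicate point is the degenerate situation where $h$ visits $v$ several times, or where $e$ is a dual self-loop. We resolve this by working throughout with the pair index $i$ rather than with the face identity, so ownership is always determined locally.
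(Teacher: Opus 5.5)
Your proof is the paper's argument in a more explicit form. The paper also rests on the fact that the ordered corner $(e_i,e_{i+1})$ attached to $v_i$ lies in at most one level-$\ell$ node-part (Claim~\ref{claim: pair_edges_unique_facepart}). It then handles $v_i^L,v_i^R$ by noting that each has a single $\ER$ edge, leading to a $D(u)$-vertex whose path is already fixed. Your owner map with its edge-by-edge consistency check is the same bookkeeping, defined up front instead of derived.

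One step fails as written: the existence claim in rule (i).
\begin{itemize}
\item Take the endpoint case $v=s$, $e_i=e_0$. The corner $(e_i,e_{i+1})$ lies on $h=p_0$, and in $p_0$'s order $e_{i+1}$ immediately precedes $e_0$.
\item By \cref{remark: incision_P}, $e_0$ is not duplicated. It is the first edge of $p_0^0$ only, while $e_{i+1}$ is the last edge of $p_0^1$.
\item So no level-$\ell$ node-part contains $e_{i+1},e_0$ as a consecutive pair. Rule (i) assigns no owner to this $v_i$, even though it lies on $\hat{p}(p_0^1)$; the same happens symmetrically at $t$.
\end{itemize}
Your justification (``the block boundary would be an incised edge, and that edge belongs to both blocks'') is exactly what breaks. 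This boundary is the non-incised endpoint-case edge, and it lies in only one block.

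Your last paragraph names the right target ($v_i$ owned by $p_k^{1-b}$), but it has to be part of the definition. For example, treat this $v_i$ like an $R$-copy and assign it the node-part whose last edge is $e_{i+1}$. This matches the paper, which views it as a $D(v)$-vertex that lost one of its two $\ER$ edges. Its consistency check is then trivial, since it has a single $\ER$ edge.

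A smaller point: in the conclusion, do not rely on $\hat{p}(f)$ being a maximal connected component of $\hat{G}_\ell[\ER]$. \cref{lem: hatpf_conn} only proves connectivity, and maximality is essentially what disjointness gives. You do not need it anyway. If both endpoints of every $\ER$ copy are owned by the node-part whose traversal uses that copy, then every vertex of $\hat{p}(f)$ is owned by $f$. Disjointness then follows because ownership is a function.
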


For that, assume the following claim holds (we provide a proof after proving the above lemma).

\begin{claim}\label{claim: pair_edges_unique_facepart}
Each pair of (ordered) consecutive edges incident to $v\in G$ is unique and belongs only to one face or face-part of an annulus of level $\ell$.
\end{claim}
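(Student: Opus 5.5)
The plan is to reduce Claim~\ref{claim: pair_edges_unique_facepart} to the primal--dual correspondence between corners of $v$ and corners of faces, and then track how incisions split these corners among node-parts.

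\emph{Step 1: pairs to dual edges.} A pair $(e_i,e_{i+1})$ of clockwise-consecutive edges at $v$ is a corner of exactly one face $f$ of $G$. By Claim~\ref{claim: shared_endpoint_dual_edges} (read in reverse), the duals $e_{i+1}^*,e_i^*$ are then consecutive in $f$'s local ordering. This map is injective: $v$ is the shared endpoint determined by the consecutive dual pair together with its orientation. Thus distinct ordered pairs at $v$ give distinct consecutive dual pairs around faces. This handles level $0$, which is also covered by \cite{GP17}.

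\emph{Step 2: induction over $\mathcal{T}$.} For $\ell\ge 1$, by Claim~\ref{claim: nodes_are_nopdeparts} the node-parts of $f$ in level $\ell$ are defined by consecutive subsets of $f$'s edges. These subsets are disjoint except possibly at a shared first or last edge. When an edge is shared, it is duplicated, and each copy is incident to a distinct node-part. Hence a consecutive dual pair $(e_{i+1}^*,e_i^*)$ is consecutive inside at most one node-part of $f$. If two node-parts both contained it, their intervals would overlap in two edges, contradicting Claim~\ref{claim: nodes_are_nopdeparts}. Claim~\ref{fact:nodeparts of nodeparts} gives the inductive consistency, since each level-$\ell$ node-part sits inside a unique parent node-part. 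The pair therefore lies in at most one face-part $p(f')$ among the level-$\ell$ node-parts. By Claim~\ref{claim: disjoint_annuli}, it lies in at most one annulus.

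\emph{Step 3: the subtle cases.} Two situations need separate checks. The first is an incised edge, where the pair is "cut" between two node-parts: the pair $e_i,e_{i+1}$ may have $e_i$ as the last edge of one part and $e_{i+1}$ as the first edge of another. I would argue this cannot happen. An incision at $f$ splits $f$'s edge interval exactly at an incised edge, which is shared by both parts, so consecutive pairs are never separated; the split point itself is that duplicated edge. The second situation is the endpoint-case edge of Remark~\ref{remark: incision_P}. There $p_0^1$ excludes $e_0$, so the pair (predecessor, $e_0$) lies only in $p_0^0$, and uniqueness holds trivially. Residual node-parts (Corollary~\ref{cor: residual_nodeparts}) only cause the pair to belong to no level-$\ell$ annulus. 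This is consistent with the "at most one" reading.

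The main obstacle is Step 3, specifically ruling out that a consecutive pair straddles a split. This requires verifying from Definition~\ref{def: incision_P} and Remark~\ref{def: incision_Pi} that every split point is an incised (duplicated) edge. Endpoints of $P_i'$, where the split overlaps in exactly one edge, must be handled explicitly.
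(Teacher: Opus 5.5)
Your Steps~1--2 are correct and are essentially the paper's proof. The ordered pair at $v$ is a corner of a single face $h$, and Claim~\ref{claim: shared_endpoint_dual_edges} turns it into a consecutive pair in $h$'s rotation, in reversed order. That reversal is also what disposes of the degree-two case, where the same two edges bound a second face. By Claim~\ref{claim: nodes_are_nopdeparts}, two node-parts of $h$ overlap only in a first or last edge, so at most one of them contains the pair.

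Step~3, however, is unnecessary and partly false. It is unnecessary because the claim, and its use in Lemma~\ref{lem: hatpf_disjoint}, only needs ``at most one''. It is false at the endpoint-case edge $e_0$ of Remark~\ref{remark: incision_P}, where the split of $p_0$ does \emph{not} occur at a duplicated edge:
\begin{itemize}
\item $p_0^0$ runs clockwise from $e_0$ to $(p_0,p_1)$;
\item $p_0^1$ runs from $(p_0,p_1)$ to the clockwise predecessor $b$ of $e_0$, and excludes $e_0$.
\end{itemize}
So the consecutive pair $(b,e_0)$ lies in \emph{neither} node-part, not ``only in $p_0^0$''. This is exactly what the construction of $\hat{G}_\ell$ encodes. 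The corner copy $s_i$ becomes an endpoint of $\hat{p}(p_0^1)$ and keeps only the copy of $b$. The copy of $e_0$ hangs off $s_i^L$ in $\hat{p}(p_0^0)$.

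Hence the ``main obstacle'' you identify, ruling out pairs that straddle a split, cannot be overcome in general. Drop Step~3, or restrict it to genuine incisions, where your observation that every split point is a duplicated edge is correct. Uniqueness does not depend on it.
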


\begin{proof}[Proof of \cref{lem: hatpf_disjoint}.]
Note that internal vertices (non-endpoints) of $\hat{p}(f)$ are of the form $v_i\in D(v)$ for a vertex $v\in p(f)$. Moreover, by definition of $\hat{G}_\ell$ (\cref{par: def_hat_G}), $v_i$ is incident to two $\ER$ edges only, which belong to $\hat{p}(f)$. In fact, $v_i$ is defined to only be incident to those edges as they are copies of the $i$'th pair of consecutive edges incident to $v$ (for each such pair there is a copy of $v$ in $D(v)$ by definition of $\hat{G}_\ell$). 
Thus, by definition of  $\hat{G}_\ell$,  $v_i$ cannot be shared by $\hat{p}(f)$ and some $\hat{p}(g)$ for $g\neq f$ with an $\ER$ edge (this would contradict the above Claim~\ref{claim: pair_edges_unique_facepart}).
If $p(f)$ has no incised edge ($p(f)$ is an entire face of $G$), then $\hat{p}(f)$ is a cycle formed of $D(v)$-type vertices. Then, the aforementioned reasoning is enough to derive the claim on uniqueness in this case. 

We proceed to the case of endpoints of $\hat{p}(f)$ when it is a path ($p(f)$ contains incised edges and is a face-part of $G$).
It remains to show that no endpoint of $\hat{p}(f)$ is shared by any  $\hat{p}(g)$ where $g\neq f$.
For the special case where $v_i\in D(v)$ is an endpoint of $\hat{p}(f)$, when $f$ is a node-part that is an endpoint of $P$ (see \cref{def: incision_P}, \cref{remark: incision_P}), we get that the reasoning given earlier still holds.
This is because it can be thought of as if $v_i$ had two incident edge copies in $\hat{G}_\ell$ and one of them (the one corresponding to the endpoint-case edge) got deleted (see \cref{par: def_hatG_endpoint_case} for the definition of $\hat{G}_\ell$ in that case). 

In the general case where endpoints of $\hat{p}(f)$ are of the form $v_i^L$ or $v_i^R$, then by definition of $\hat{G}_\ell$ (\cref{par: def_hat_G}), such a vertex has only one $\ER$ edge incident to it.
The primal corresponding edge $e=(u,v)$ to that edge has four copies in $\hat{G}_\ell$, each of them has its unique endpoints. 
In particular, each such $\ER$ edge copy is of the form $(v_i^L,u_j)$. That is a vertex of $D(u)\times\{L,R\}$ and a vertex of $D(v)$ (or symmetrically, a vertex of $D(u)\times\{L,R\}$ and a vertex of $D(v)$).  
Thus, the first and last two vertices of $\hat{p}(f)$  cannot be shared to any $\hat{p}(g)$ where $g\neq f$, because each vertex of $D(u)$ (as proved earlier) belongs only to one $\hat{p}(f)$, and its incident $D(v)\times\{L,R\}$ copy is not incident to any other vertex with an $\ER$ edge, thus, it can only belong to the same $\hat{p}(f)$.
\end{proof}

\begin{proof}[Proof of Claim~\ref{claim: pair_edges_unique_facepart}.]
An (ordered) pair of of consecutive edges incident to $v$ is contained in one face $h$ of $G$, and by Claim~\ref{claim: shared_endpoint_dual_edges} corresponds to a pair of consecutive edges incident to the node $h$ in $G^*$. Let $e,e'$ be such a pair of edges, in that order in $h$'s local ordering in $G^*$.
The pair $e,e'$ of edges can participate together only in one node-part of $h$ (or only the entire node $h$ itself if it is not split to parts), because a node-part is a consecutive set of edges of $f$ (\cref{def: nodepart}), and by definition two node parts can share only their first and last edges with another node-part of the same node. I.e., they cannot share a pair of (ordered) edges. See Claim~\ref{claim: nodes_are_nopdeparts}.
Note, if $e,e'$ are the only edges incident to $v$, then they can also participate in a face-part of the other face $g$ of $G$ that contains them, but in that case their clockwise ordering in (the node) $g\in G^*$ is $e',e$, which is a different (ordered) pair of edges from $e,e'$. \qedhere 
\end{proof}

Note, \cref{lem: hatpf_disjoint} together with \cref{lem: hatpf_conn} concludes the proof of \cref{lemma: faceparts_hat_G} that $\hat{p}(f),\hat{p}(g)$ are connected and vertex disjoint for nodes $f\neq g$ of level $\ell$. \qedhere
\end{proof}

\subsubsection{Correspondence to Annuli}
We prove the second and third items of \cref{thm: hat_G} regarding $\hat{G}_\ell$.

\begin{lemma}
\label{lemma: dual_edges_hat_G}
There is an injection from edges in level $\ell$ of $\mathcal{T}$ to the set of $\EC$ edges in $\hat{G}_\ell$.
For each edge $(f,g)$ of level $\ell$, there is a unique edge $\hat{e}\in \EC$ in $\hat{G}_\ell$ that connects 
   $\hat{p}(f)$ and $\hat{p}(g)$, and there is no $\EC$ edge that connects $\hat{p}(f)$ and $\hat{p}(g)$ if $(f,g)$ is not an edge of level $\ell$.
\end{lemma}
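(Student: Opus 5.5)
The plan is to go edge by edge over $G$: classify each primal edge $e=(u,v)$ as non-incised, incised, or endpoint-case, and in each case match the $\EC$ edges the definition of $\hat{G}_\ell$ creates for $e$ with the level-$\ell$ dual edges obtained from $e^*$. Since $\EC$ edges are created only per primal edge, and level-$\ell$ dual edges are exactly copies of edges $e^*$ of $G^*$, this gives the injection directly. Let $e$ be the $i$th edge of $v$ and the $j$th edge of $u$. By \cref{lem: incise_edge_once}, $e^*$ has either one copy or exactly two copies in level $\ell$.

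\emph{Non-incised $e$.} Here $e^*$ has a single copy $(f',g')$ in level $\ell$, where $f'$ and $g'$ are the unique level-$\ell$ node-parts of the faces $f,g$ that contain the ordered pairs $(e_{i},e_{i+1})$ and $(e_{i-1},e_i)$ at $v$. Uniqueness comes from Claim~\ref{claim: pair_edges_unique_facepart}. By \cref{lemma: faceparts_hat_G}, the vertex $v_i$ lies on $\hat{p}(f')$, and likewise $u_{j-1}$, $v_{i-1}$ and $u_j$ lie on the appropriate paths. So whichever of $(u_{j-1},u_j)$, $(v_{i-1},v_i)$ is added connects $\hat{p}(f')$ and $\hat{p}(g')$. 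Exactly one of them is added, which gives uniqueness for this dual edge.

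\emph{Incised $e$.} The two copies $e^*_0=(f^0,g^0)$ and $e^*_1=(f^1,g^1)$ live on opposite sides of the incision. I will use Claim~\ref{claim: nodes_are_nopdeparts} and the first/last-edge description (\cref{def: first_last_vertex_pf}, and the tour in the proof of \cref{lem: hatpf_conn}) to show three things. First, $u_{j-1}$ and $u_j$ are interior (or $D(u)$-type) vertices of the paths of the two node-parts containing the copy of $e$ whose $\ER$ copies are $(u_{j-1},v_i^L)$ and $(u_j,v_{i-1}^R)$. Second, those two node-parts are adjacent via one copy, say $e^*_0$. Third, symmetrically, $v_{i-1}$ and $v_i$ lie on the paths of the other pair, adjacent via $e^*_1$. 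The pairing of $\ER$ copies to node-parts is the point that needs care. I would fix it by recalling that each incised copy of $e$ is the first edge of one node-part and the last edge of another. I would then check which $\hat{G}_\ell$ endpoint ($L/R$ copy versus $D(\cdot)$ copy) the definition assigns to each role. \emph{Endpoint-case edges} are handled the same way with the single added vertex $v_i^L$. The only change is that the chosen $\EC$ edge is $(u_{j-1},u_j)$, and it joins $\hat{p}(p_0^0)$ (resp.\ $\hat{p}(p_{|P|-1}^1)$) to the path of the other face.

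\emph{Converse direction.} Every $\EC$ edge is created for some primal $e$ as above, so it joins exactly the pair of paths of some level-$\ell$ dual edge. Therefore no $\EC$ edge joins $\hat{p}(f)$ and $\hat{p}(g)$ unless $(f,g)$ is a level-$\ell$ edge. Injectivity follows because distinct dual edges come from distinct primal edges, or from distinct copies of the same incised edge, and these receive distinct $\EC$ edges. One subtlety remains: the map must be an injection into $\EC$, and \emph{residual} node-parts (Corollary~\ref{cor: residual_nodeparts}) may receive $\EC$ edges that correspond to no annulus edge. I would restrict the statement to node-parts that participate in level-$\ell$ annuli, or treat those edges as mapping to residual dual edges.

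I expect the main obstacle to be the incised case. There one must verify that the obliviously added pair $(u_{j-1},u_j)$, $(v_{i-1},v_i)$ really connects \emph{matching} node-parts, $f^0$ with $g^0$ and $f^1$ with $g^1$, and not a cross pair such as $f^0$ with $g^1$. This needs a careful left/right bookkeeping of the incision relative to the clockwise orders at $u$ and $v$.
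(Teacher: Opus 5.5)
Your proposal is correct and follows essentially the same route as the paper: a case analysis over non-incised, incised, and endpoint-case primal edges. In the incised case you place $u_{j-1},u_j$ (respectively $v_{i-1},v_i$) on the paths of the two node-parts joined by one copy of $e^*$, and you let leftover $\EC$ edges represent dual edges at residual node-parts. Your first/last-edge bookkeeping for the incised case spells out what the paper compresses into its w.l.o.g.\ assumption that $v$ is the last vertex of $p(f)$ and the first vertex of $p(g)$.
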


\begin{proof}
    We have established in \cref{lemma: faceparts_hat_G} that each face-part $f$ of $G$ is represented by a path $\hat{p}(f)$ in $\hat{G}_\ell$, such that each vertex $v$ that is contained in $p(f)$ has a copy in $\hat{G}_\ell$ in $\hat{p}(f)$.

    \medskip
    \noindent
    {\bf A level-$\ell$ edge has at least one $\EC$ edge.}
    Let $e^*$ be a level-$\ell$ edge that connects node-parts $f,g$, we show that there is an $\EC$ edge that maps to it by examining two cases. Note, the following covers are all cases, because an edge is incised at most once, by \cref{lem: incise_edge_once}.

    \medskip
    \noindent
    \underline{\em $e^*$ is not incised.} In this case, $e$ has only one dual edge in level $\ell$, connecting node-parts $f,g$. Let $e=(u,v)$ in $G$ be the dual of $e^*$ and the $i$'th edge in $v$'s local embedding, such that the face of $G$ that contains $p(g)$ (resp. $p(f)$) is left (resp. right) to $e$. Hence, the face-part $p(g)$ (resp. $p(f)$) corresponds to the $i-1$'th ($i$'th) pair of consecutive edges incident to $v$. 
    By definition of $\hat{G}_\ell$ (\cref{par: def_hat_G}), since $e^*$ is not incised, then $e$ has exactly two $\ER$ copies in $\hat{G}_\ell$. One is incident to $v_{i-1}$ and the other to $ v_{i}$. In addition, $v_{i-1}$ and $v_i$ correspond to the $i-1$'th and $i$'th pair of consecutive edges incident to $v$ respectively. I.e., $v_{i-1}$ and $v_i$ correspond to the face-parts $p(g)$ and $p(f)$ respectively,
     and $v_{i-1}, v_{i}$ belong to $\hat{p}(f), \hat{p}(g)$ respectively in $\hat{G}_\ell$. Moreover, by definition of $\hat{G}_\ell$, w.l.o.g. there is an edge $(v_{i-1},v_{i})\in \EC$ 
     that represents the dual edge $(f,g)$.
     Otherwise, if $(v_{i-1},v_{i})$ does not exist, then the edge $(u_{j-1},u_{j})$ exists where $e$ is the $j$'th edge in $u$'s local embedding (by definition of $\hat{G}_\ell$ one of those two edges exist when $e^*$ is not incised). In this case, the proof is symmetric (just replace $u$ with $v$, $i$ with $j$, and left with right).
     
    If $e^*$ is an endpoint-case edge of $p_0$ (\cref{def: incision_P}, \cref{remark: incision_P}) then $s$ is an endpoint of $e$ and the same reasoning of a non-incised edge applies, with the difference that the edge $(u_{j-1},u_{j})$ always exists and the edge $(v_{i-1},v_{i})$ never exists iff $v=s$. To verify, see the definition of $\hat{G}_\ell$ in this case (\cref{par: def_hatG_endpoint_case}).
    The other case of $e^*$ being the endpoint-edge of $p_{|P|-1}$ is symmetric by replacing $s$ with $t$.
    
    \medskip
    \noindent
    \underline{\em $e^*$ is incised.} In this case, $e$ has two dual edges in level $\ell$. 
    By Claim~\ref{claim: edge_in_four_p(f)}, in level $\ell$  there are two pairs of node-parts that are connected by a dual of $e$. We consider one such pair $f,g\in A$ of them (the proof for the other pair is identical). Assume w.l.o.g. that $e$ is the last edge in the sequence of  edges that defines the node-part $f$, and $e$ is the first edge in the sequence of edges that defines the node-part $g$. I.e., in $G$ we say that $e$ is the last edge on $p(f)$ and is the first edge on $p(g)$.
    In addition, assume w.l.o.g that $v$ is the last vertex on $p(f)$ and the first vertex of $p(g)$ (intuitively, an endpoint of $e$ that is also an endpoint of $p(f)$ and $p(g)$, formally, as in \cref{def: first_last_vertex_pf}).
    Then, $e$'s other endpoint $u$ is also a vertex in both $p(f)$ and $ p(g)$. Assuming the edge $e$ is the $j$'th edge of $u$, then $u_{j-1},u_{j}$ are copies of $u$ on $\hat{p}(f)$ and $\hat{p}(g)$ respectively. Assume w.lo.g. that the face of $G$ containing $p(g)$ (respectively $p(f)$) is the face that corresponds to the $j-1$'th  (respectively $j$'th) pair of consecutive incident edges to $u$.
    Then, from the definition of $\hat{G}_\ell$ (\cref{par: def_hat_G}) the edge $(u_{j-1},u_{j})\in \EC$ 
    represents the dual edge $(f,g)\in A$.
    Note, in this case the edge $(v_{i-1},v_{i})$ does not map to $e$, since the copies $v_{i-1},v_{i}$ are not in $\hat{p}(f),\hat{p}(g)$ but in $\hat{p}(f'),\hat{p}(g')$ where $f',g'$ are node-parts that overlap with $f,g$  with the edge $e^*$. I.e., the edge $(v_{i-1},v_{i})$ corresponds to the edge $(g',f')$ which is the second duplicate of $e^*$ in    level-$\ell$.
    In fact, the copies of $v$ that participate in $\hat{p}(f),\hat{p}(g)$ are $v_{i-1}^R,v_{i}^L$ (resp.) and by definition of $\hat{G}_\ell$ there is no edge of any type that connects them.

    \medskip
    \noindent
    {\bf A level-$\ell$ edge has at most one $\EC$ edge.}
    Note that by the definition of $\hat{G}_\ell$, for an edge $e\in G$ whose dual in $G^*$ is not incised, there is only one $\EC$ edge in $\hat{G}_\ell$.
    Otherwise, by definition of $\hat{G}_\ell$ there are exactly two $\EC$ edges in $\hat{G}_\ell$ added for $e$, and $e^*\in G^*$ is incised (exactly once, by \cref{lem: incise_edge_once}); Thus, $e$ has at most two duals  in level $\ell$.
    One of these edges might be incident to a node-part $h$ that is not contained in any annulus \cref{cor: residual_nodeparts}. 
    Thus, for each edge of level $\ell$ there is exactly one $\EC$ edge in $\hat{G}_\ell$, and each $\EC$ edge represents exactly one edge (whether it is in an annulus or not).
    I.e., there is an injection from edges of level $\ell$ of $\mathcal{T}$ to the set of $\EC$ edges. 
    \qedhere

\end{proof}

\begin{lemma}
\label{lemma: annuli_components_hatG}
Each annulus $A$ of level $\ell$ corresponds to a unique connected component $\hat{A}$ of $\hat{G}_\ell[\ER\cup \EC]$, s.t. node-part $f\in A$ iff $\hat{p}(f)$ is a subgraph of $\hat{A}$. Moreover $e=(f,g)\in A$ iff $\hat{e}\in \hat{A}$ connects $\hat{p}(f),\hat{p}(g)$.
\end{lemma}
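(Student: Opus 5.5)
The plan is to build $\hat{A}$ directly from the two previous lemmas. For an annulus $A$ of level $\ell$, let
\[
\hat{A} := \bigcup_{f\in A}\hat{p}(f)\;\cup\;\{\hat{e}\in\EC : \hat{e} \text{ represents an edge } (f,g)\in A\}.
\]
Here $\hat{p}(f)$ is the connected, vertex-disjoint $\ER$-component given by \cref{lemma: faceparts_hat_G}. For each edge $(f,g)\in A$, the edge $\hat{e}$ is the unique $\EC$ edge given by \cref{lemma: dual_edges_hat_G}. I would then show three things in order: (i) $\hat{A}$ is connected, (ii) $\hat{A}$ is a maximal connected component of $\hat{G}_\ell[\ER\cup\EC]$, and (iii) distinct annuli give distinct (indeed disjoint) components. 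The two ``iff'' statements then follow directly from the construction.

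\textbf{Connectivity.} By Claim~\ref{claim: disjoint_annuli}, $A$ is connected. Take two node-parts $f,g\in A$ and a path $f=h_0,h_1,\dots,h_r=g$ in $A$. I lift it to $\hat{G}_\ell$. Each $\hat{p}(h_k)$ is connected by \cref{lem: hatpf_conn}. Each dual edge $(h_k,h_{k+1})$ of $A$ maps to an $\EC$ edge joining $\hat{p}(h_k)$ and $\hat{p}(h_{k+1})$ by \cref{lemma: dual_edges_hat_G}. Concatenating these pieces gives a walk in $\hat{A}$, so $\hat{A}$ is connected.

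\textbf{Maximality.} I must show that no $\ER$ or $\EC$ edge leaves $\hat{A}$. Every $\ER$ edge lies inside some $\hat{p}(h)$, since the $\hat{p}(h)$'s are exactly the $\ER$-components by \cref{lemma: faceparts_hat_G}. So I only need that every $\ER$ edge belongs to $\hat{p}(h)$ for some level-$\ell$ node-part $h$, and to check the $\EC$ edges.

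For the $\EC$ edges, let $\hat{e}$ be an $\EC$ edge incident to $\hat{p}(f)$ with $f\in A$. By construction, $\hat{e}$ is the representative of some dual copy $(f,h)$ of a primal edge at level $\ell$. The step I expect to be the main obstacle is ruling out that this copy is a \emph{residual} edge (Corollary~\ref{cor: residual_nodeparts}), or an edge joining $f$ to a node-part of a different annulus. Since $(f,h)$ is incident to $f$, it is an edge of the level-$\ell$ dual graph at $f$. Annuli of the same level are disjoint and consist of node-parts together with all their incident edge copies (Claim~\ref{claim: disjoint_annuli}). Hence $h$ lies in $A$ as well, and $(f,h)\in A$.

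The delicate points are the following.
\begin{itemize}
\item I must check that a residual node-part $h$ (whose annulus was discarded) cannot share an $\EC$ edge with a surviving $f$. This holds because the discarded region is bounded by an incised path, and the copy of an incised edge incident to $f$ goes to $f$'s side only. This uses the construction of $\EC$ for incised edges, where $u$ and $v$ each handle a distinct copy.
\item The endpoint-case edges must be treated separately, using the adjustment in \cref{par: def_hatG_endpoint_case}.
\end{itemize}
Residual node-parts then form their own components, which are simply not labelled by any annulus.

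\textbf{Uniqueness and the iff's.} Distinct annuli $A\ne A'$ have disjoint node-part sets (Claim~\ref{claim: disjoint_annuli}). Since the $\hat{p}(f)$'s are vertex-disjoint (\cref{lem: hatpf_disjoint}), $\hat{A}$ and $\hat{A'}$ are disjoint maximal components, and in particular distinct.

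\begin{itemize}
\item If $f\in A$, then $\hat{p}(f)\subseteq\hat{A}$ by construction. Conversely, if $\hat{p}(f)\subseteq\hat{A}$, then $f\in A$, because the vertex of $\hat{A}$ it contains lies in a unique $\hat{p}(\cdot)$.
\item If $e=(f,g)\in A$, then $\hat{e}\in\hat{A}$ by construction. Conversely, an $\EC$ edge $\hat{e}\in\hat{A}$ joining $\hat{p}(f)$ and $\hat{p}(g)$ represents an edge $(f,g)$ by the injection of \cref{lemma: dual_edges_hat_G}. Both endpoints lie in $A$, so $(f,g)\in A$.
\end{itemize}
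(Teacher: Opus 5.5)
Your proposal is correct and follows essentially the same route as the paper. You build $\hat{A}$ by lifting paths of the connected annulus $A$ through the $\hat{p}(\cdot)$'s and their $\EC$ edges (\cref{lemma: faceparts_hat_G}, \cref{lemma: dual_edges_hat_G}), and you get maximality and uniqueness from the vertex-disjointness of the $\hat{p}(\cdot)$'s together with Claim~\ref{claim: disjoint_annuli}. Your explicit check that no $\EC$ edge reaches a residual node-part also fills in a case the paper's proof leaves implicit, since the paper only rules out $\EC$ edges into other annuli $B\neq A$.
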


\begin{proof}
By \cref{lemma: dual_edges_hat_G}, any two node-parts $f,g$ are adjacent in an annulus of level $\ell$ iff there is an $\EC$ edge connecting their corresponding paths $\hat{p}(f)$, $\hat{p}(g)$ (which exist and are disjoint for $f\neq g$ by \cref{lemma: faceparts_hat_G}). 

Consider any two node-parts $f,g$ in an annulus $A$. Since an annulus is a connected graph, then $f,g$ are connected by a path $A(f,g)=(f,h_1,h_2,\ldots,h_k,g)$ in $A$. In $\hat{G}_\ell[\ER\cup \EC]$, each of $\hat{p}(f)$, $\hat{p}(h_i)$, and $\hat{p}(g)$ is a path by \cref{lemma: faceparts_hat_G}. In addition, by \cref{lemma: dual_edges_hat_G} there exists an $\EC$ edge connecting the corresponding paths of each two consecutive nodes of $A(f,g)$.
Thus, $\hat{p}(f)$ and $\hat{p}(g)$ belong to the same connected component in $\hat{G}_\ell[\ER\cup \EC]$ as they are connected via the subgraph which is the union of the following: the path $\hat{p}(f)$, the $\EC$ edge connecting $\hat{p}(f)$ and $\hat{p}(h_1)$, the path $\hat{p}(h_1)$, the $\EC$ edge connecting $\hat{p}(h_1)$ and $\hat{p}(h_2)$, the path $\hat{p}(h_2)$ and so on until the $\EC$ edge connecting $\hat{p}(h_k)$ and $\hat{p}(g)$, and lastly the path $\hat{p}(g)$.
Denote the subgraph we just defined by $\hat{A}(f,g)$. Then, the union of all $\hat{A}(f,g)$ for node-parts $f,g\in A$ is defined to be $\hat{A}$.

Now we prove that $\hat{A}$ is in fact a (maximal) connected component of $\hat{G}_\ell[\ER\cup \EC]$, and in particular that it does not contain any (vertex in) $\hat{p}(h)$ for $h\notin A$.
Consider any node-part $f\in A$. $f$ is not connected to any node-part in an annulus $B\neq A$, since by definition $A$ and $B$ are different annuli, which by Claim~\ref{claim: disjoint_annuli} are disjoint.
Thus, by \cref{lemma: dual_edges_hat_G} $f$ has no $\EC$ edge that connects $\hat{p}(f)$ to a $\hat{p}(g)$ for $g\in B$. This with the fact that $\hat{p}(f)$ and $\hat{p}(g)$ are vertex disjoint means that $\hat{p}(f)$ and $\hat{p}(g)$ are disconnected in $\hat{G}_\ell[\ER\cup \EC]$. Thus, no vertex of $\hat{p}(g)$ for $g \notin A$ exists in $\hat{A}$.
\end{proof}

\subsubsection{Simulation of $\hat{G}_\ell$}
We prove the fourth and fifth items of \cref{thm: hat_G} regarding $\hat{G}_\ell$, concluding the proof of the theorem.

\begin{lemma}
\label{lemma: construct_hat_G}
    For any level $\ell$ of $\mathcal{T}$, if incisions to the level are distributively stored (as in \cref{def: distributed_storage_incisions}), then 
     $\hat{G}_\ell$ can be constructed in $O(1)$ rounds. After which, each of its vertices has a unique $\tilde{O}(1)$-bit ID, and every vertex in $G$ knows the information of all its copies in $\hat{G}_\ell$ and their adjacent edges.

\end{lemma}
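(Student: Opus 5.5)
The plan is to show that every ingredient of the definition of $\hat{G}_\ell$ in \cref{par: def_hat_G} is a function only of information that is local to a vertex $v\in G$ or to one incident edge. By \cref{def: distributed_storage_incisions}, this information consists of $v$'s clockwise embedding (known after running~\cite{ghaffariH16_embedding}), the set of $v$'s incident edges whose duals are incised in level $\ell$, and, for $v\in\{s,t\}$, its endpoint-case edge. Each vertex $v$ will be the host of all of its copies: $v$ itself, $D(v)=\{v_0,\dots,v_{\deg(v)-1}\}$, and the subset of $D(v)\times\{L,R\}$ that is present. The ID of each copy is the pair (ID of $v$, local label), where the label is an index $i<\deg(v)\le n$ together with a tag in $\{\bot,L,R\}$. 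These are $O(\log n)$-bit IDs, and they are unique because the IDs of $G$ are unique.

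The construction takes $O(1)$ rounds and proceeds as follows.

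\textbf{Step 1 (no communication).} Vertex $v$ creates $v$ and $D(v)$. It adds $v_i^L$ exactly when $e_i$ is incised, or when $v\in\{s,t\}$ and $e_i$ is the endpoint-case edge. It adds $v_{i-1}^R$ exactly when $e_i$ is incised. It then adds all $\ES$ edges $(v,v_i)$, $(v_i,v_i^L)$ and $(v_i,v_i^R)$. All of these edges are internal to $v$.

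\textbf{Step 2 (one round).} For every edge $e=(u,v)$, the endpoint $v$ sends $u$ three pieces of data: the index $i$ of $e$ in $v$'s embedding, whether $e$ is incised, and whether $e$ is $v$'s endpoint-case edge. Symmetrically, $u$ sends $j$ and its flags to $v$. Both endpoints agree on whether $e$ is incised, since each of them learns it from storage and they exchange it anyway.

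\textbf{Step 3 (no communication).} Each endpoint now writes down the $\ER$ copies of $e$ exactly as in the definition. In the non-incised case these are $(v_i,u_{j-1})$ and $(v_{i-1},u_j)$. In the incised case they are the four edges $(u_{j-1}^R,v_i)$, $(u_{j-1},v_i^L)$, $(u_j^L,v_{i-1})$ and $(u_j,v_{i-1}^R)$. In the endpoint-case they are the modified edges. Since both endpoints know both names of every copy, each of them knows its own copies' adjacent edges, including the IDs of their far endpoints.

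For $\EC$, the rule depends on the case. If $e$ is incised, $v$ adds $(v_{i-1},v_i)$ and $u$ adds $(u_{j-1},u_j)$. If $e$ is not incised, the endpoint with the smaller ID adds its own edge; that ID is known from the initial neighbor knowledge. If $e$ is an endpoint-case edge, the non-$\{s,t\}$ endpoint adds its edge, and this is decidable from the flag exchanged in Step 2. All $\EC$ edges join two copies of the same vertex, so no further communication is needed.

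\textbf{Step 4 (verification).} Finally, I would check that every edge of $\hat{G}_\ell$ is known to the hosts of both of its endpoints. $\ES$ and $\EC$ edges join copies hosted by the same vertex. Every $\ER$ edge derived from $e=(u,v)$ joins a copy of $u$ with a copy of $v$, and both $u$ and $v$ computed it identically in Step 3.

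The main obstacle is not the round count but the consistency between the two endpoints. Both must derive the same set of copies of $e$, with matching names and present/absent decisions. For the incised and endpoint-case edges in particular, the index shifts ($j-1$ versus $j$, $i-1$ versus $i$) must agree with the definition from both sides. I would handle this by listing, for the endpoint-case edge, which side knows which flag. Only the vertex $s$ or $t$ knows a priori that $e$ is its endpoint-case edge, and the single exchange in Step 2 transfers this bit to the other endpoint. After that exchange, both sides apply the same deterministic rule.
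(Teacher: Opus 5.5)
Your proposal is correct and follows the paper's proof: one round to exchange each edge's position in the other endpoint's embedding, incision information taken from the storage assumption, and locally assigned IDs consisting of $v$'s ID concatenated with a local index (at most $3\deg(v)+1$ copies per vertex). You are slightly more careful than the paper in also transmitting the endpoint-case flag, which only $s,t$ know a priori but which the other endpoint $u$ needs in order to name its $\ER$ neighbor $v_i^L$ and to add the $\EC$ edge $(u_{j-1},u_j)$.
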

\begin{proof}
    A vertex $v\in G$ needs to know for each incident edge $e=(u,v)$ the relative order of $e$ in $u$'s local embedding, which can be exchanged in one round.
    The information related to incisions on edges incident to $v$ is given by assumption.
    Finally, IDs of vertices in $\hat{G}_\ell$ can be assigned locally: Each vertex $v$ assigns its (at most) $3\deg(v)+1$ copies an ID consisting of $v$'s ID concatenated with a unique number from the range~$[0,3\deg(v)]$.
\end{proof}

\begin{lemma}
\label{lemma: congest_hat_G}
    Any $\CONGEST$ round on $\hat{G}_\ell$ can be simulated within four rounds in $G$.
\end{lemma}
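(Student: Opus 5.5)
The plan is to let each vertex $v\in G$ simulate all of its copies in $\hat{G}_\ell$: the star center $v$, the vertices of $D(v)$, and whichever vertices of $D(v)\times\{L,R\}$ are present. By \cref{lemma: construct_hat_G}, $v$ knows all of these copies and their incident edges. I would split the edges of $\hat{G}_\ell$ into two classes. \emph{Internal} edges have both endpoints simulated by the same vertex $v$. These are all $\ES$ edges and all $\EC$ edges, since by definition every $\EC$ edge has the form $(v_{i-1},v_i)$ or $(u_{j-1},u_j)$ and joins two copies of one primal vertex. \emph{External} edges join copies of different primal vertices, and these are exactly the $\ER$ edges. Messages over internal edges are delivered locally, in zero rounds of communication.

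It remains to bound the number of $\ER$ edges between copies of $u$ and copies of $v$ for each primal edge $e=(u,v)$. By the definition of $\ER$, every $\ER$ edge is a copy of a unique primal edge $e$, and its endpoints are copies of the two endpoints of $e$. If $e$ is not incised, it has two copies, $(v_i,u_{j-1})$ and $(v_{i-1},u_j)$. If $e$ is incised, it has four copies. If $e$ is an endpoint-case edge, it still has two copies, one of which is redirected to $v_i^L$. So in every case $e$ carries at most four $\ER$ edges, consistent with Claim~\ref{claim: edge_in_four_p(f)} and \cref{lem: incise_edge_once}.

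Given this bound, one $\CONGEST$ round on $\hat{G}_\ell$ is simulated in $G$ as follows. Each $\hat{G}_\ell$ vertex places one $O(\log n)$-bit message on each incident edge. Each $\ER$ message is routed over the corresponding primal edge $e$. At most four messages per direction cross $e$, so sending them one per round takes four rounds. Each message is tagged with the local indices that identify the target copy (for example $j-1$ and the $L/R$ flag); this tag is only $O(\log n)$ bits, so it can be appended to the message or pipelined within the same four rounds. Internal messages are handled instantly.

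The only step that needs real care is checking that the classification is exhaustive, in particular for the endpoint-case adjustments. There, the $\EC$ edge $(v_{i-1},v_i^L)$ is never present and $(u_{j-1},u_j)$ is used instead, and this replacement edge is still internal to $u$. Once that is confirmed, the bound of four rounds follows directly.
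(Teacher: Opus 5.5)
Your proposal is correct and follows essentially the same argument as the paper. Both note that $\ES$ and $\EC$ edges join copies of a single primal vertex and are simulated locally, and that each primal edge carries at most four $\ER$ copies, which gives four rounds. You also spell out the endpoint-case check and the message tagging explicitly.
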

\begin{proof}
    Note that the only edges that require simulation are edges of $\ER$, as other types of edges are edges between copies of the same vertex $v\in G$, meaning that $v$ simulates them locally. The claim follows because any edge in $G$ has at most four $\ER$ copies in $\hat{G}_\ell$. 
\end{proof}

\subsection{Distributed Knowledge}
\label{sec: IDs_to_face_parts}
In this subsection we show how to use \cref{thm: hat_G} (the graph $\hat{G}_\ell$) in order to prove the following lemma. 

\begin{lemma}\label{lem: distributed_knowledge}
Assuming the incisions of level $\ell$  are distributively stored as in \cref{def: distributed_storage_incisions},  the following can be computed in $\tilde{O}(D)$ rounds. Each vertex $v\in G$, for all level-$\ell$ node-parts $f$  where $v\in p(f)$, knows:

 \begin{enumerate}
    \item A unique $\tilde{O}(1)$-bit ID of $f$ and a unique leader vertex $u\in p(f)$.
     
    \item A unique $\tilde{O}(1)$-bit ID of the level-$\ell$ annulus $A$ containing $f$, and the ID of   $A$'s level-$\ell-1$ parent annulus.

    \item 
    All of $v$'s incident edges that are in $p(f)$. 

    \item
    The IDs of all node-parts $h$ from lower levels whose $p(h)$ contains $p(f)$ (see Claim \ref{fact:nodeparts of nodeparts}). 
    
\end{enumerate}
\end{lemma}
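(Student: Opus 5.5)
The plan is to build $\hat{G}_\ell$ via \cref{thm: hat_G}, which takes $O(1)$ rounds by \cref{lemma: construct_hat_G} since the incisions are distributively stored. After that, every item of the lemma becomes a part-wise aggregation on $\hat{G}_\ell$. The connected parts we aggregate over are either the paths $\hat{p}(f)$, which are vertex-disjoint connected components of $\hat{G}_\ell[\ER]$ by \cref{lemma: faceparts_hat_G}, or the components $\hat{A}$ of $\hat{G}_\ell[\ER\cup\EC]$, which are disjoint by \cref{lemma: annuli_components_hatG}. Since $\hat{G}_\ell$ is planar with diameter $O(D)$, low-congestion shortcuts for planar graphs~\cite{GhaffariH16_shortcuts,HaeuplerIZ21_shortcuts} solve part-wise aggregation in $\tilde{O}(D)$ rounds on $\hat{G}_\ell$. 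By \cref{lemma: congest_hat_G}, this costs only an $O(1)$ overhead in $G$.

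For item 1, we run a part-wise MIN over the parts $\hat{p}(f)$ using the local IDs of $\hat{G}_\ell$-vertices from \cref{lemma: construct_hat_G}. The minimum ID becomes the ID of $f$, and the $G$-vertex owning that copy becomes the leader $u\in p(f)$; $u$ lies in $p(f)$ because every vertex of $\hat{p}(f)$ is a copy of a vertex of $p(f)$.

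For item 3, no communication is needed beyond construction. A vertex $v$ knows which of its copies are incident to each $\ER$ edge, and every $\ER$ edge of $\hat{p}(f)$ is a copy of a primal edge of $p(f)$, by the tour in the proof of \cref{lem: hatpf_conn}. So $v$ reads off its incident edges in $p(f)$ as those having an $\ER$ copy at a copy of $v$ lying on $\hat{p}(f)$, which $v$ identifies through the ID learned in item 1.

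For item 2, we run a part-wise MIN over the components $\hat{A}$ of $\hat{G}_\ell[\ER\cup\EC]$. The resulting ID is unique per annulus and is known to every $v$ holding a copy in $\hat{p}(f)\subseteq\hat{A}$. The parent annulus ID and item 4 are handled together by induction on $\ell$. We store, for every level $\ell'<\ell$, the node-part and annulus IDs computed at that level; the ID length stays $\tilde O(1)$ because $\ell=O(\log n)$. To link level $\ell$ to level $\ell-1$, note that by Claim~\ref{fact:nodeparts of nodeparts}, $p(f)$ is a subgraph of $p(h)$ for exactly one level-$(\ell-1)$ node-part $h$ of the same face. Hence any primal edge that appears in $p(f)$ as a non-boundary edge, meaning an edge strictly between $f$'s first and last edges, has a unique level-$(\ell-1)$ face-part containing it, and that face-part is $h$.

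The main obstacle is making this correspondence rigorous and local. Concretely, I would take the leader $u$ of $f$ and an edge of $p(f)$ incident to $u$ together with its side (the face index $i$, giving copy $u_i$). At level $\ell-1$, $u$ stored which node-part ID its copy for that same pair of consecutive edges belonged to. Claim~\ref{claim: pair_edges_unique_facepart} says a consecutive edge pair at $u$ belongs to a unique face-part at each level, and by Claim~\ref{fact:nodeparts of nodeparts} that face-part at level $\ell-1$ is $h$. So $u$ locally knows the ID of $h$, and also the ID of $h$'s annulus (the parent annulus) and the full ancestor chain of $h$. Degenerate cases need separate care: when $p(f)$ has a single edge, there may be no internal consecutive pair, and the endpoint-case edges of \cref{remark: incision_P} behave differently. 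For these I would fall back to the pair at the first vertex and the $L/R$ bookkeeping.

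The leader then broadcasts these $O(\ell)=O(\log n)$ IDs along $\hat{p}(f)$ by part-wise aggregation. These are $O(\log n)$ sequential broadcasts, or a single pipelined one, for $\tilde{O}(D)$ rounds in total.
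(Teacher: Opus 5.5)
Your architecture matches the paper's. You build $\hat G_\ell$, identify the components of $\hat G_\ell[\ER]$ (node-parts) and of $\hat G_\ell[\ER\cup\EC]$ (annuli), take an extremal vertex ID as the ID and its owner as the leader, and read item 3 off the $\ER$ copies. You then get the parent annulus and item 4 by matching same-named copies of a vertex in $\hat G_{\ell-1}$ and $\hat G_\ell$. One small correction: part-wise aggregation presupposes that every vertex already knows its part. The paths $\hat p(f)$ and the components $\hat A$ are only given implicitly through the edge sets $\ER$ and $\ER\cup\EC$. What you actually need is a connected-components computation. The paper runs the Boruvka-style connectivity algorithm, which uses $\tilde O(1)$ PA instances, so the $\tilde O(D)$ bound is unaffected.

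The real problem is the item-4 lookup when the min-ID vertex of $\hat p(f)$ is an endpoint copy such as $u_i^L$. This happens generically, not only in degenerate cases. Such a copy does not stand for a consecutive pair of edges of $f$: the node-part $f$ starts with $e_i$ and does not contain $e_{i+1}$, so Claim~\ref{claim: pair_edges_unique_facepart} says nothing about it. Suppose $e_i$ was already incised at a level below $\ell$. Then in $\hat G_{\ell-1}$ the copy $u_i$ lies on the path of the other node-part of that face containing $e_i$ (the one ending with $e_i$), not on $\hat p(h)$. So ``the copy for the same pair'' returns the wrong parent. In that case the correct level-$(\ell-1)$ copy is $u_i^L$ itself, and $u_i$ is correct only when $e_i$ is incised exactly at level $\ell$. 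This is the case split the paper's local phase carries out: never incised, incised earlier, incised at level $\ell$, plus the endpoint-case variant. Within your scheme the cheapest repair is to pick the representative among the internal vertices of $\hat p(f)$. These are $D(u)$-copies with two $\ER$ edges, and their pair of edges is consecutive inside $f$. Your uniqueness argument via Claims~\ref{claim: pair_edges_unique_facepart} and~\ref{fact:nodeparts of nodeparts} then goes through. Only the endpoint-case paths with no internal vertex remain, which you already flagged.

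With that fix, your route differs from the paper's only in where the cross-level matching happens. The paper lets every vertex match each of its copies locally: no extra communication, but every $L/R$ case must be handled. You match once per node-part and pay one extra $\tilde O(D)$ broadcast along $\hat p(f)$. In fact only the ID of $h$ needs broadcasting, since every $v\in p(f)\subseteq p(h)$ already knows $h$'s ancestors and annulus by induction.
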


The proof is inductive, and shows how, given the information as stated in the lemma, we can extend it to the next level. To do so we rely on properties of $\hat{G}_\ell$ (e.g. to identify face-parts, we run a connectivity algorithm on a subgraph of it). Then, local computations regarding knowledge of vertices to the previous level are performed, deriving the correspondence to the current level. Before that, we first briefly define a couple of distributed primitives that we will use in the proof and in \cref{sec:Minor-aggregation}.
\label{sec: dual_incisions_prel}

\medskip
\noindent
{\bf Low-congestion shortcuts and part-wise aggregation.}
\label{par: shortcuts_aggregations}
The notion of {\em low-congestion shortcuts} was introduced by Ghaffari and Haeupler~\cite{GhaffariH16_shortcuts} first for planar graphs, and then extended for other graph families~\cite{GhaffariH21_shortcuts, HaeuplerIZ21_shortcuts} as a way to perform certain type of computation (known as {\em aggregation}) on multiple disjoint subgraphs of $G$ simultaneously, within a round complexity proportional to $D$ rather than the (possibly) large diameter of the input subgraphs.
Concretely, consider a partition $\{S_1,\ldots,S_N\}$ of $V$ where every $G[S_i]$ is connected:
\begin{definition}
    [Low-congestion shortcuts]
    \label{def: shortcuts}
    An $\alpha$-shortcut of $G$, is a set of subgraphs $\{H_1,\ldots,H_N\}$ of $G$ such that the diameter of each $G[S_i]\cup H_i$ is at most $O(\alpha)$ and each edge of $G$ participates in at most $O(\alpha)$ subgraphs $H_i$.
\end{definition} 
    
\begin{definition}
    [Aggregate operators]
      Given a set of $b$-bit strings $\{x_1,\ldots,x_m\}$ and an aggregation operator $\oplus$ (e.g. AND /OR /SUM etc.), their aggregate $\bigoplus_i x_i$ is defined as the result of repeatedly replacing any two strings $x_i, x_j$ with the ($b$-bit) string  $x_i \oplus x_j$, until a single string remains.
\end{definition} 

\begin{definition}
    [Part-wise aggregation]
    \label{def: PA}
    Assume each $v\in S_i$ has an $\tilde{O}(1)$-bit string $x_v$. The PA problem asks that each vertex of $S_i$ knows the aggregate function $\bigoplus_{v\in S_i} x_v$.
\end{definition}

\noindent
The following lemma captures the relation between the PA problem and shortcuts.

\begin{lemma}
    [Part-wise aggregation via low-congestion shortcuts~\cite{GhaffariH16_shortcuts, GhaffariH21_shortcuts, HaeuplerIZ21_shortcuts}]
    \label{lem: PA_shortcuts}
    If $G$ admits an $\alpha$-shortcut that can be constructed in $r$ rounds, then the PA problem can be solved on $G$ in $\tilde{O}(r+\alpha)$ rounds.
\end{lemma}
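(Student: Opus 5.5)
The plan is the standard one from \cite{GhaffariH16_shortcuts}: build the shortcut, then run one convergecast and one broadcast inside every augmented part $G[S_i]\cup H_i$, all parts in parallel. Congestion and dilation are both $O(\alpha)$, so a good schedule fits everything into $\tilde{O}(\alpha)$ rounds.

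\textbf{Step 1 (shortcut).} Spend $r$ rounds constructing the $\alpha$-shortcut $\{H_1,\ldots,H_N\}$. Afterwards, every edge knows the (at most $O(\alpha)$) part IDs $i$ with $e\in H_i$, and every vertex knows the ID of its own part $S_i$.

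\textbf{Step 2 (one tree per part).} For each $i$, build a spanning tree $T_i$ of $G[S_i]\cup H_i$ of depth $O(\alpha)$. Do this as a BFS from a leader of $S_i$, for example the minimum-ID vertex, found by a min-flooding inside the same subgraph. All $N$ BFS explorations run simultaneously, and a message for part $i$ is forwarded only along edges of $G[S_i]\cup H_i$. Each such exploration has dilation $O(\alpha)$, since that is the diameter bound, and each edge carries traffic for at most $O(\alpha)$ parts.

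\textbf{Step 3 (aggregate).} On each $T_i$, convergecast $\bigoplus_{v\in S_i}x_v$ to the root and broadcast it back. Vertices of $H_i\setminus S_i$ contribute the neutral element. Each message has $\tilde{O}(1)$ bits.

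\textbf{Main obstacle: scheduling.} Each of Steps 2 and 3 is a collection of $N$ flooding-type algorithms with dilation $d=O(\alpha)$ and per-edge congestion $c=O(\alpha)$. Running them naively could cost $c\cdot d=O(\alpha^2)$ rounds, so this is where care is needed. I would first try the random-delay scheduling of Leighton--Maggs--Rao / Ghaffari, which gives $O(c+d\log n)$ rounds w.h.p. For deterministic round complexity, I would instead use the deterministic shortcut framework of \cite{HaeuplerIZ21_shortcuts}. Alternatively, part IDs can be processed in $O(\log n)$ phases by bit scale, so that each edge forwards one message per part per layer and a pipelining argument gives $O(c+d)$ per phase. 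Combining this with Step 1 yields $\tilde{O}(r+\alpha)$ rounds in total.
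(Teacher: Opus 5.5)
The paper does not prove this lemma; it imports it from the cited works. Your randomized route is the standard argument behind it: construct the shortcut, build a spanning tree of each $G[S_i]\cup H_i$, convergecast and broadcast, and schedule all parts with random delays. As written, however, the congestion bound you feed into the scheduler is false for Step 2. In min-flooding, a vertex forwards a new value every time its current minimum improves, and this can happen $\Theta(\alpha)$ times. For example, take a path in $G[S_i]\cup H_i$ whose IDs decrease in one direction: the vertex next to the high-ID end learns a smaller ID in every round. So a single part can push $\Theta(\alpha)$ messages across one edge. The fact that each edge serves only $O(\alpha)$ parts bounds the number of parts, not the number of messages. The per-edge load is therefore $O(\alpha^2)$, and random delays then give only $O(\alpha^2)$ rounds. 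You need a leader election in which each vertex sends $O(1)$ (or polylogarithmically many) messages per edge per part. One option is to determine the minimum ID bit by bit with $O(\log n)$ OR-floods, in each of which a vertex forwards at most once. Another is to flood uniformly random ranks, for which the number of improvements at any vertex is $O(\log n)$ w.h.p. Only with such a bound is the load $\tilde{O}(\alpha)$ and the $\tilde{O}(r+\alpha)$ bound follows. You also cannot skip the root and flood the aggregate directly, since the operator need not be idempotent (e.g.\ SUM).

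Your deterministic alternatives are not arguments. The deterministic aggregation in \cite{HaeuplerIZ21_shortcuts} is for tree-restricted shortcuts. There every $H_i$ lies inside one fixed BFS tree and has a bounded number $b$ of blocks, and convergecasts that all run along that single tree can be pipelined, giving $O(b(D+c))$. A general $\alpha$-shortcut, with arbitrary subgraphs $H_i$ constrained only by diameter and congestion, has no such structure. Processing part IDs ``by bit scale'' does not address contention between different parts on the same edge in the same round, which is the actual scheduling difficulty. You give no argument that pipelining yields $O(c+d)$ there. The lemma as literally stated does not ask for determinism, so the repaired randomized argument suffices for it. But the paper's main theorem is deterministic, and the deterministic guarantee it needs comes from the tree-restricted (planar) shortcuts of the cited works, not from your sketch.
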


\noindent Since a planar graph $G$ of hop-diameter $D$ admits low-congestion shortcuts with $\alpha=\tilde{O}(D)$ that can be constructed in $\tilde{O}(D)$ rounds~\cite{GhaffariH16_shortcuts, GhaffariH21_shortcuts, HaeuplerIZ21_shortcuts}, we get:

\begin{corollary}
\label{cor: planar_PA}
The PA problem can be solved on a planar graph $G$ of hop-diameter $D$ in $\tilde{O}(D)$ rounds.  
\end{corollary}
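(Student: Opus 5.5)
The statement to prove is \cref{cor: planar_PA}: the PA problem can be solved on a planar graph $G$ of hop-diameter $D$ in $\tilde{O}(D)$ rounds. The plan is to instantiate \cref{lem: PA_shortcuts} with $\alpha=\tilde{O}(D)$ and $r=\tilde{O}(D)$. So the work reduces to two facts about shortcuts on planar graphs. First, for every partition $\{S_1,\ldots,S_N\}$ of $V$ into connected parts, a planar $G$ of hop-diameter $D$ admits an $\tilde{O}(D)$-shortcut in the sense of \cref{def: shortcuts}. Second, such a shortcut can be constructed distributively in $\tilde{O}(D)$ rounds.

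For existence, I would follow the Ghaffari--Haeupler construction for planar graphs. Fix a BFS tree $T$ of $G$ of depth $D$, and let each $H_i$ be a subset of $T$-edges. The diameter of $G[S_i]\cup H_i$ is then controlled by how many disjoint $T$-subtrees $S_i$ touches (the ``block number'') times the depth $D$. The congestion is the number of $H_i$ using a tree edge. The planar argument shows that both quantities can be kept at $O(D\log D)$ simultaneously. The reason is that parts are connected and vertex-disjoint, so in the planar embedding they cannot cross each other. Consequently, many parts that all need the same tree edge would form a nested or crossing structure that forces either a long path inside some part or an excessive number of disjoint parts. Balancing these two effects, by letting each part climb $T$ until congestion would exceed the threshold, gives $\alpha=\tilde{O}(D)$.

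For construction, I would invoke the cited distributed algorithms \cite{GhaffariH16_shortcuts,HaeuplerIZ21_shortcuts}. These build tree-restricted shortcuts of quality within polylogarithmic factors of the best existing one, in time $\tilde{O}(\text{quality})$, here $\tilde{O}(D)$. Plugging $\alpha,r=\tilde{O}(D)$ into \cref{lem: PA_shortcuts} gives $\tilde{O}(D)$ rounds. On the shortcut, the aggregation itself runs by a convergecast and broadcast on a BFS tree of each $G[S_i]\cup H_i$, with random delays or scheduling absorbing the congestion.

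The main obstacle is the existence argument: the planar charging that bounds congestion and block number together. Since the paper treats this as a black box from the cited works, I would cite it rather than reprove it, and the corollary then follows immediately.
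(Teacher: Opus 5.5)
Your proposal is correct and follows the same route as the paper: the corollary comes from plugging into \cref{lem: PA_shortcuts} the known fact that planar graphs of hop-diameter $D$ admit $\tilde{O}(D)$-quality shortcuts that can be constructed in $\tilde{O}(D)$ rounds \cite{GhaffariH16_shortcuts, GhaffariH21_shortcuts, HaeuplerIZ21_shortcuts}. Your sketch of the Ghaffari--Haeupler existence argument is extra; the paper, like your final step, simply cites that result as a black box.
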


\noindent
From the properties of $\hat{G}_\ell$ (\cref{lemma: congest_hat_G} and \cref{lemma: diameter_hat_G}), in addition to \cref{cor: planar_PA}, we get the following:
\begin{corollary}
\label{cor: PA_g_hat}
     For any level $\ell$ of $\mathcal{T}$, if the incisions are distributively stored as in \cref{def: distributed_storage_incisions}, then
    the PA problem can be solved on $\hat{G}_\ell$ in $\tilde{O}(D)$ rounds.
\end{corollary}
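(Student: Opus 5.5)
The plan is to combine \cref{cor: planar_PA} with the structural properties of $\hat{G}_\ell$ from \cref{thm: hat_G}, and then transfer the resulting algorithm back to $G$ by simulation.

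First, assuming the level-$\ell$ incisions are distributively stored (\cref{def: distributed_storage_incisions}), I invoke \cref{lemma: construct_hat_G} to build $\hat{G}_\ell$ in $O(1)$ rounds of $G$. After this step every vertex $v\in G$ knows all its copies in $\hat{G}_\ell$, their $\tilde{O}(1)$-bit IDs, and their incident edges. In this way the input to any PA instance on $\hat{G}_\ell$ is held locally by the vertex of $G$ that owns each copy.

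Second, I verify that $\hat{G}_\ell$ satisfies the hypotheses of \cref{cor: planar_PA}. It is planar by the planarity lemma, and its hop-diameter is at most $4D=O(D)$ by \cref{lemma: diameter_hat_G}. A PA instance on $\hat{G}_\ell$ is a partition of $V(\hat{G}_\ell)$ into connected parts, and \cref{cor: planar_PA} then gives, in $\tilde{O}(4D)=\tilde{O}(D)$ rounds \emph{of $\hat{G}_\ell$}, a solution, including the construction of $\tilde{O}(D)$-shortcuts on $\hat{G}_\ell$ itself.

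Third, I translate these rounds to $G$. By \cref{lemma: congest_hat_G}, each $\CONGEST$ round on $\hat{G}_\ell$ costs at most four rounds on $G$. Edges of $\ES$ and $\EC$ join copies of the same primal vertex and are handled internally, while each primal edge carries at most four $\ER$ copies. Hence the total cost is $O(1)+4\cdot\tilde{O}(D)=\tilde{O}(D)$ rounds on $G$.

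The main subtlety I expect is that the shortcut construction and the aggregation algorithms behind \cref{lem: PA_shortcuts} are black boxes stated for the communication network itself. I must therefore argue that they can be run verbatim on the virtual network $\hat{G}_\ell$. This is legitimate because every vertex of $\hat{G}_\ell$ is simulated by a unique vertex of $G$ that knows its ID and neighborhood, and per-round congestion on $G$ increases by only a constant factor. In particular, a primal vertex of high degree simulating many copies is not a problem, since local edges cost nothing and each real edge is multiplexed at most four ways. With this argued, the corollary follows.
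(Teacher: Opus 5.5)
Your proposal is correct and matches the paper's argument. The paper likewise applies \cref{cor: planar_PA} to the planar, $O(D)$-diameter graph $\hat{G}_\ell$ (\cref{lemma: diameter_hat_G}), then uses \cref{lemma: congest_hat_G}, with the $O(1)$-round construction of \cref{lemma: construct_hat_G}, to carry the $\tilde{O}(D)$-round algorithm back to $G$ at constant overhead.
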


Now, we are ready to prove \cref{lem: distributed_knowledge}.
\begin{proof}[Proof of \cref{lem: distributed_knowledge}.]
Intuitively, we use \cref{lemma: faceparts_hat_G} and \cref{lemma: annuli_components_hatG} that relate level-$\ell$ face-parts and annuli to $\hat{G}_\ell$, combined with the fact that the PA problem on $\hat{G}_\ell$ can be solved in $\tilde{O}(D)$ rounds (\cref{cor: PA_g_hat}). In particular, by \cref{lemma: faceparts_hat_G}, each face-part $p(f)$ of a level-$\ell$ node-part $f$ corresponds to a connected component $\hat{p}(f)$ in $\hat{G}_\ell[\ER]$. Thus, identifying $p(f)$s boils down to identifying the connected components of $\hat{G}_\ell[\ER]$. This can be done using any connectivity algorithm that terminates within $\tilde{O}(1)$ PAs~\cite{GhaffariH16_shortcuts,GP17}.
Similarly, using \cref{lemma: annuli_components_hatG} applied to $\hat{G}_\ell[\ER\cup \EC]$ we can identify the annuli of level $\ell$.
Finally, learning the correspondences between face-parts of different levels and between face-parts and annuli, is done locally by examining different cases related to the structure of $\hat{G}_\ell$ and~$\hat{G}_{\ell-1}$.

    In the case of $\ell=0$, the only annulus in the level is the entire graph $G^*$. In this case, it was shown in earlier work~\cite{GP17} how to compute the required information using $\hat{G}$.
    For a general level $\ell>0$ we prove by induction, assuming that the lemma holds for level $\ell-1$ and that the incisions of level $\ell$ are distributively stored. 
    First, we construct $\hat{G}_\ell$ in $O(1)$ rounds, where each vertex of $G$ knows all the information of its copies in $\hat{G}_\ell$ and their IDs (\cref{lemma: construct_hat_G}).

    We work in two phases, the first phase is a global phase that identifies the level-$\ell$ annuli, and $p(f)$s of level-$\ell$ node-parts $f$. The second phase is a local phase that computes the following correspondences: (1) face-parts of level $\ell$ and annuli of level $ \ell$, (2) face-parts of level $\ell$ and face-parts of level $\ell-1$ (see Claim~\ref{fact:nodeparts of nodeparts}), and (3) annuli of level $\ell$  and annuli of level $\ell-1$.

    \medskip
    \noindent
    {\bf Global phase.}
     We run a classic connectivity algorithm~\cite{GP17,GhaffariH16_shortcuts} on $\hat{G}_\ell$ that works in $\tilde{O}(1)$ instances of the PA problem. By \cref{cor: PA_g_hat}, this translates to $\tilde{O}(D)$ rounds on $G$.
    The input to the connectivity algorithm is $\hat{G}_\ell[\ER\cup \EC]$, i.e., $\hat{G}_\ell$ after removing the $V_S$ vertices and $\ES$ edges (they are still used for communication). The identified connected components correspond to the annuli of level $\ell$ (\cref{lemma: annuli_components_hatG}).
    We mention that this connectivity algorithm assigns each component it identifies an ID, which is the maximal ID of a vertex in the component. We use these IDs as the IDs of the level-$\ell$ annuli. They are unique since IDs of vertices of $\hat{G}_\ell$ are unique as shown in \cref{lemma: construct_hat_G}.
    Similarly, in $\tilde{O}(D)$ rounds we identify components of $\hat{G}_\ell[\ER]$. I.e.,  $\hat{G}_\ell$ after removing $V_S$ vertices and $\ES\cup \EC$ edges. Connected components of this graph correspond to 
    $p(f)$ for level-$\ell$ node-parts $f$ (\cref{lemma: faceparts_hat_G}).
    Since the ID of a path $\hat{p}(f)$ for a node-part $f$ is an ID of a vertex that lies on it (and this vertex ID is known to all other vertices of $\hat{p}(f)$) then we consider this vertex as the leader of $\hat{p}(f)$.

    \medskip
    \noindent
    {\bf Local phase.}
    In this phase, a vertex $v\in G$ aims to learn the following correspondences:
    \begin{enumerate}
        \item
        Between incident edges of $v$ and face-parts $p(f)$  where $v\in p(f)$ and $f$ is a level-$\ell$ node-part. In addition, between each such $p(f)$ and the specific level-$ \ell$ annulus that contains $f$.

        \item
        Between face-parts $p(f)$ of level $\ell-1$ where $v\in p(f)$ and the face-parts $p(h)$ of level $\ell$ where $v\in p(h)$, and $h$ is a part of $f$ (see Claim~\ref{fact:nodeparts of nodeparts}).
        \item 
        Between annuli $B$ of level $\ell-1$ for which there exists a node-part $f\in B$ where $v\in p(f)$ and child annuli $A$ of $B$ for which there exists a node-part $f'\in A$ s.t. $v\in p(f')$. 
    \end{enumerate}

    \smallskip
    \noindent
    \underline{1. \em Face-parts $p(f)$ and annuli of level $\ell$.} 
   A connected component in $\hat{G}$ of each vertex and edge of $\hat{p}(f)$ was identified in both calls to the connectivity algorithm in phase one. Thus, vertices of $\hat{G}_\ell$ know the ID of their annulus (from the first call) and the ID of $\hat{p}(f)$ (from the second call).
    Note, a vertex $v\in G$ knows the information about all copies of its vertices and edges in $\hat{G}_\ell$ (\cref{lemma: construct_hat_G}). Thus, for each incident edge $e$, $v$ knows to which annulus and which face-parts of level $\ell$ does $e$ belong.

    \smallskip
    \noindent
    \underline{2. \em Face-parts of level $\ell$ and face-parts of level $\ell-1$.}
    Recall, every face-part $p(f)$ in level-$\ell$ annuli is a subgraph of a unique level $\ell-1$ face-part (see Claim~\ref{fact:nodeparts of nodeparts}).  
Consider two node-parts $g,h$ in levels $\ell-1$ and $\ell$ (respectively) of the same node $f\in G^*$. Let $g$ (respectively $h$) be defined by the sequence of edges between $e_1$ and $e_2$ (respectively $e_1'$ and $e_2'$) in $f$'s embedding.
  Then by Claim~\ref{fact:nodeparts of nodeparts}, either the edges of $h$ are contained in the edges of $g$ (i.e., $e_1\leq e_1'\leq e_2'\leq e_2$ in $f$'s embedding), or the edges of $g$ and $h$ are disjoint, except perhaps in their last or first edge (i.e., $e_1\leq e_2 \leq e_1'\leq e_2'$). This happens in case $h$ is a node-part of another node-part $q'\neq g$ of $f$ (see Claim~\ref{claim: nodes_are_nopdeparts}). 
    In the former case, an edge $e$ of $h$ must satisfy one of the following cases: 
    
    \begin{enumerate}[label=(\alph*)]
    \item 
    If $e$ is not incised in any level ($\ell$ or lower), then $e$  participates in exactly two node-parts ($e$'s endpoints) in each level. I.e., in level $\ell-1$ we have $e=(g,\cdot)$ and in level $\ell$ we have $e=(h,\cdot)$, where $h$ is a part of $g$.
    Then, every endpoint $v$ of $e$ learns the correspondence between face-parts $p(g),p(h)$.
    Concretely, if $e$ is the $i$'th edge in $v$'s local embedding, then in $\hat{G}_\ell$ and $\hat{G}_{\ell-1}$, $e$ has two $\ER$ copies. The correspondence between $\hat{p}(g),\hat{p}(h)$ that contain these copies in the two successive levels is immediate:
    One copy of $e$ is incident to the vertex $v_i$ and the other is incident to the vertex $v_{i-1}$ (in each of $\hat{G}_\ell, \hat{G}_{\ell-1}$). Assuming $f$ is the face left to $e$ in $v$'s local embedding (formally, the face that contains the $(i-1)$'th and the $i$'th edge incident to $v$. Then, $v_{i-1}$ is the copy of $v$ that represents the parts $g,h$ of $f$ in $\hat{G}_{\ell-1},\hat{G}_\ell$ (respectively).
    Hence, $v$ knows that $v_i\in \hat{p}(g)$ in $\hat{G}_{\ell-1}$, and that $v_{i-1}\in \hat{p}(h)$. Finally, because $v$ knows the IDs of $\hat{p}(g),\hat{p}(h)$ it locally learns the correspondence between them, and equivalently between $g$ and $h$ and between $p(g)$ and $p(h)$.
   If $e$ is an endpoint-case edge, then the same proof applies except that $v_i$ is replaced by $v_i^L$.

    \item
    If $e$ was already incised in a lower level than $\ell$, we handle it similarly to the above. The only difference is that $e$ has four copies in each of $\hat{G}_\ell$ and $\hat{G}_{\ell-1}$, incident to the copies $v_{i-1}^R,v_{i-1},v_{i}^L,v_{i}$ of $v$. Then, for each of these copies, the correspondence between the components of $\hat{G}_{\ell-1}$ and $\hat{G}_\ell$ that contain it is locally learned by $v$.

   \item If $e$ is incised in level $\ell$ (e.g.  $e=e_1'$ when $e_1<e_1'$). 
  Consider first the case where $e$ is not an endpoint-case edge. Then, $e$ has two $\ER$ copies in $\hat{G}_{\ell-1}$, one incident to $v_{i}$ and the other incident to $v_{i-1}$. In $\hat{G}_{\ell}$, $e$ has four copies, one copy for each of $v_{i-1}^R,v_{i-1},v_{i}^L,v_{i}$. Thus, the paths in $\hat{G}_\ell$ (corresponding to face-parts $p(f)$) that contain the copies $v_{i-1}^R,v_{i-1}$ of $v$, both correspond to the path of $\hat{G}_{\ell-1}$ that contains the copy $v_{i-1}$ of $v$. Similarly for $v_{i}^L,v_{i}$ in $\hat{G}_\ell$ and $v_{i}$ in $\hat{G}_{\ell-1}$.
    
    In the case that $e$ is an endpoint-case edge,  the edge still has two $\ER$ copies in $\hat{G}_{\ell-1}$, however this time they are connected to $v_{i-1},v_{i}^L$. 
    In this case, the paths of $\hat{G}_\ell$ (that represent face-parts $p(f)$) that contain the copies $v_{i-1}^R,v_{i-1}$ of $v$, both correspond to the path of $\hat{G}_{\ell-1}$ that contains the copy $v_{i-1}$ of $v$, similar to the previous case. However, the paths of $\hat{G}_\ell$ that contain $v_{i}^L,v_{i}$ in $\hat{G}_\ell$ correspond to the path that contains $v_{i}^L$ in $\hat{G}_{\ell-1}$. I.e., the same proof as of the previous just replacing $v_i$ in $\hat{G}_{\ell-1}$ with $v_i^L$.
    \end{enumerate}
    
    \smallskip
    \noindent
    \underline{3. \em Annuli of level $\ell$  and annuli of level $\ell-1$.}
    Given the correspondence between face-parts of successive levels $\ell-1\geq0$ and $\ell$, and the correspondence between face-parts of a level $k\in\{\ell-1,\ell\}$ and the level $k$ annulus that contains them. Then, the correspondence between annuli of levels $\ell-1$ and $\ell$ is immediate: Let annulus $A$ be a child of annulus $B$. All edges of $A$ are such that they (or copies of them) are contained in $B$. Let the edge $e=(u,v)$ in $G$ be an edge (whose dual is) in $A$. Then, $v$ knows the ID of a face-part $p(g)$ that contains $e$ where $g\in B$, and the ID of a face-part $p(h)$ where $h\in A$, and $h$ is a part of $g$ (see Claim~\ref{fact:nodeparts of nodeparts}). Hence, $v$ knows the correspondence between $p(h),p(g)$ and the correspondence between each of them and the annulus that contains it. Then, it learns that $B$ is a child of $A$.
\end{proof}

    \noindent
    {\bf Node-parts that are not in annuli.}
    By \cref{cor: residual_nodeparts}, every node $f$ of $G^*$ has some node-parts that participate in level-$\ell$ annuli and other (complementing) node-parts that do not participate in any level-$\ell$ annulus.  
    Note however, the definition of $\hat{G}_\ell$ does not make the distinction between node-parts in annuli and node-parts that are not in any annuli. 
    Thus, if we consider all graphs that result from the incisions that lead to level $\ell$, and not only annuli, everything we proved also applies to them.  
    In particular, such complementing node-parts $h$ are still assigned IDs by \cref{lem: distributed_knowledge}. Moreover, their connected components $p(h)$ are identified by their corresponding $\hat{p}(h)$ in $\hat{G}_\ell$.
    In \cref{sec: implemetnation_details}, where we put everything together, we discuss this in more detail and show how a vertex $v$ learns about such node-parts $h$ where $v\in p(h)$. We show that $v$ learns that $h$ should be discarded from participating in annuli of level $\ell$ (other information on such a node-part $h$ is then treated as garbage).

\subsection{Simulation of Minor-aggregation} \label{sec:Minor-aggregation}
We define the minor-aggregation model as first defined by~\cite{GHSYZ22}. The goal of this model is to eliminate low-level implementation details of $\CONGEST$, allowing a clean interface for recursive edge-contraction based algorithms. The model compiles to a sequence of part-wise aggregation problems, which can then be implemented efficiently in $\CONGEST$ via the low-congestion shortcuts framework (see \cref{par: shortcuts_aggregations} for definitions).

\begin{definition}
[Minor-aggregation model \cite{GHSYZ22}]
\label{def: basic_model}
Given an undirected graph $G=(V,E)$, both vertices and edges are computational entities (i.e. have their own processor and their own private memory). Each vertex has a unique $\tilde{O}(1)$-bit ID. Communication occurs in synchronous rounds. All entities wake up at the same time, each vertex knows its ID, and each edge knows its endpoints' IDs.
Each round of communication consists of the following three steps (in that order):

\begin{enumerate}
    \item 
    {\em Contraction step.}
    Each edge $e$ chooses a value in $\{{0},{1}\}$. Contracting all edges that chose ${1}$ defines a minor $G'=(V',E')$ of $G$. 
    The nodes $V'$ of $G'$ are subsets of $V$ called  {\em super-nodes}.
    
    \item 
    {\em Consensus step.}
    Each node $v\in V$ chooses an $\tilde{O}(1)$-bit value $x_v$. For each super-node $s\in V'$, let $y_s = \bigoplus_{v\in s} x_v$, where $\oplus$ is a pre-defined aggregation operator. All $v\in s$ learn $y_s$.
    
    \item
    {\em Aggregation step.}
    Each edge $e=(a,b)\in E'$ learns $y_a$ and $y_b$, then chooses two $\tilde{O}(1)$-bit values, $z_{e,a}$ and $z_{e,b}$. For each super-node $s\in V'$, let $I(s)$ be the set of incident edges of $s$, then  $z_s=\bigotimes_{e\in I(s)} z_{e,s}$, where $\otimes$ is a pre-defined aggregation operator. All $v\in s$ learn the same $z_s$. 
    
\end{enumerate}
\end{definition}

We mention that~\cite{GZ22} have defined an extended version of the model that supports adding {\em virtual nodes} (nodes that do not exist in the graph and need to be simulated). We discuss this in \cref{sec: virtual_nodes_simulation}.
We need this for the SSSP algorithm of~\cite{RozhonGHZL22_shortestpaths} and to turn it into one that obeys the approximate triangle-inequality. See \cref{sec: SSSP} for more details.
Thus, we focus on simulating the standard model as defined above. It was shown by~\cite{GHSYZ22} that using low-congestion shortcuts, any minor-aggregation algorithm can be simulated in the standard $\CONGEST$ model. It is easy to check that the consensus and aggregation steps of the minor-aggregation model can indeed be phrased as PA problems, and it was shown in~\cite{GHSYZ22} that the contraction step can be broken to a series of $O(\log{n})$ PA problems. Hence:

\begin{lemma}[Minor-aggregation to $\CONGEST$ via PA \cite{GZ22, GHSYZ22}]
\label{lem: MA_via_PA}
If there exists an $N$-round $\CONGEST$ algorithm on $G$ for solving the PA problem on a graph $H$, then and $r$-round minor-aggregation algorithm on $H$ can be simulated in $\tilde{O}(r\cdot N)$ $\CONGEST$ rounds on $G$,  
\end{lemma}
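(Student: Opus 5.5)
The plan is to simulate the $r$-round minor-aggregation algorithm on $H$ round by round. Each round is compiled into $\tilde{O}(1)$ instances of the PA problem on $H$, and each instance is solved by the assumed $N$-round $\CONGEST$ algorithm on $G$. Since the three steps of a round (contraction, consensus, aggregation) run in order and each costs $\tilde{O}(1)$ PA calls, the total is $\tilde{O}(r\cdot N)$ rounds.

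\emph{Consensus step.} The super-nodes $V'$ of the current minor are connected, vertex-disjoint subsets of $V(H)$, since they are the connected components of the contracted edge set. So $y_s=\bigoplus_{v\in s}x_v$ is exactly a single PA instance with parts $S_i=V'$. The parts are known implicitly, because each vertex knows the ID of its super-node (see the contraction step below).

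\emph{Aggregation step.} Each edge $e=(a,b)\in E'$ is simulated by one of its endpoints in $H$. Since both endpoints already know $y$ of their own super-node, one exchange across $e$ gives $e$ both $y_a$ and $y_b$. This costs $O(1)$ rounds on $H$, which by assumption are simulable on $G$ (in our application via \cref{lemma: congest_hat_G}); formally it can also be phrased as a trivial PA over singleton parts. The edge then computes $z_{e,a}$ and $z_{e,b}$ and deposits each value at its endpoint lying in the corresponding super-node. A second PA with operator $\otimes$ over the same parts then gives every $v\in s$ the value $z_s$. Edges internal to a super-node deposit the neutral element.

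\emph{Contraction step (the main obstacle).} Each vertex must learn the ID of its super-node, i.e.\ we must identify the connected components of the subgraph $H_1$ of edges that chose $1$. I would use the standard Bor\r{u}vka-style star-merging connectivity algorithm of \cite{GhaffariH16_shortcuts,GHSYZ22}, run for $O(\log n)$ phases. In each phase the current components are the parts, and a PA (min) lets each component choose an incident $H_1$-edge leaving it, if any. Each component then flips a coin (or uses a deterministic coloring) and decides head or tail. Tail components whose chosen edge points to a head component merge into it, and the merged component learns the head's ID by one more PA. A constant fraction of the non-isolated components merge per phase, so $O(\log n)$ phases, each consisting of $O(1)$ PAs, suffice. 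The care needed here is that the parts at every phase remain connected and disjoint, which is exactly what PA requires; this holds because merges only join components along $H_1$-edges. Summing, one minor-aggregation round costs $O(\log n)$ PA calls, and the lemma follows.
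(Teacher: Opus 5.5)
Your proposal follows the paper's (cited) argument. Consensus and aggregation are each a PA over the super-node partition, and contraction is $O(\log n)$ Boruvka-style star-merging phases of PA calls; the paper sketches this here and spells it out (joiners/receivers, coin flips or Cole--Vishkin) in the proof of \cref{thm: MA_simulation}. One aside is wrong: the exchange of $y_a,y_b$ across an $H$-edge cannot be phrased as a PA over singleton parts, since that transmits nothing. It is therefore an extra one-round-of-$H$-communication assumption that both you and the paper use implicitly, and it is free in the paper's application because each $\EC$ edge joins two copies of the same primal vertex.
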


For planar graphs, when $H=G$ we have that $\tilde{O}(\tau\cdot N)=\tilde{O}(\tau\cdot D)$ (from Corollary~\ref{cor: planar_PA}), and we have the same bound when $H=G^*$ (from~\cite{planardistributedmaxflow24}). We aim to show a similar result for simulating the minor-aggregation model on all annuli of the same level simultaneously:

\begin{restatable}{theorem}{theoremMinorAggregationSimulation}
\label{thm: MA_simulation}
         Any $r$-round minor-aggregation algorithm can be simulated simultaneously within $\tilde{O}(rD)$ $\CONGEST$ rounds in $G$ on all annuli of level $\ell$ in $\mathcal{T}$, where each annulus may contain additional $\tilde{O}(1)$ arbitrarily connected virtual nodes. The output of a node-part $f$ in such level-$\ell$ annulus is known to all vertices that lie on the corresponding $p(f)$ in $G$, and the output of an edge $e^*$ is known to the endpoints of $e$ in $G$. Finally, the output of a virtual node in annulus $A$ is known to all $v\in p(f)$ for $f\in A$.
\end{restatable}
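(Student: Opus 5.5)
By \cref{lem: MA_via_PA}, it suffices to solve the PA problem on the union $H_\ell$ of all level-$\ell$ annuli (each viewed as a separate graph) in $\tilde{O}(D)$ rounds of $G$. The minor-aggregation rounds then compile to $\tilde{O}(r)$ PA instances, giving $\tilde{O}(rD)$ overall. So the plan is to reduce a PA instance on $H_\ell$ to a PA instance on $\hat{G}_\ell$ and then invoke \cref{cor: PA_g_hat}. We first assume the level-$\ell$ incisions are distributively stored and that \cref{lem: distributed_knowledge} has run. Then every vertex $v\in G$ knows, for each of its copies in $\hat{G}_\ell$, which node-part $f$ (with $v\in p(f)$) and which annulus the copy belongs to. It also knows which $\EC$ edges represent which dual edges, by \cref{lemma: dual_edges_hat_G}.

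\textbf{Step 1: node-parts and edges.} A node-part $f$ is simulated by its path $\hat{p}(f)$, and a dual edge $e^*$ by its unique $\EC$ edge $\hat{e}$, which is held by an endpoint of $e$ in $G$. Consider parts $S_1,\dots,S_N$ of $H_\ell$, i.e., connected, disjoint sets of node-parts inside annuli. Map each $S_i$ to $\hat{S}_i=\bigcup_{f\in S_i}\hat{p}(f)$ together with the $\EC$ edges between node-parts of $S_i$.
\begin{itemize}
\item By \cref{lemma: faceparts_hat_G} the paths $\hat{p}(f)$ are connected and vertex-disjoint.
\item By \cref{lemma: dual_edges_hat_G} and \cref{lemma: annuli_components_hatG}, connectivity of $S_i$ in its annulus translates to connectivity of $\hat{S}_i$ in $\hat{G}_\ell[\ER\cup\EC]$.
\end{itemize}
The sets $\hat{S}_i$ are therefore connected and disjoint. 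Complete them to a partition of $\hat{V}$ by making every remaining vertex (the $V_S$ vertices, copies belonging to residual node-parts, and so on) a singleton part with the neutral input. The input $x_f$ of node-part $f$ is placed at the leader of $p(f)$ and the neutral element is placed elsewhere. The PA result then reaches all vertices of $\hat{p}(f)$, so every $v\in p(f)$ learns the output of $f$, as the theorem requires. Edge outputs sit at the endpoint of $e$ simulating $\hat{e}$, and one extra round forwards them to the other endpoint.

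\textbf{Step 2: the contraction step.} Contracting dual edges corresponds to merging the $\hat{p}(f)$ across the chosen $\EC$ edges, which is exactly the reduction above with $S_i$ being the super-nodes. We follow the contraction simulation of~\cite{GHSYZ22}, which uses $O(\log n)$ PA instances on the underlying graph, but run it on $\hat{G}_\ell$ restricted to $\ER\cup\EC$ edges. Every PA instance on $\hat{G}_\ell$ costs $\tilde{O}(D)$ rounds by \cref{cor: PA_g_hat} and \cref{lemma: congest_hat_G}.

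\textbf{Step 3 (main obstacle): virtual nodes.} Each annulus may have $\tilde{O}(1)$ virtual nodes, arbitrarily connected, and these break planarity and low diameter if added to $\hat{G}_\ell$. I would handle them as~\cite{GZ22} do. Aggregations that involve a virtual node of annulus $A$ are computed as a PA over the whole component $\hat{A}$: $\hat{A}$ is connected and these components are disjoint across annuli, so all annuli are handled simultaneously. The virtual node's state is then replicated at all of $\hat{A}$ and hence at every $v\in p(f)$ for $f\in A$.
\begin{itemize}
\item Super-nodes containing virtual nodes become unions of the corresponding annulus components.
\item Since there are only $\tilde{O}(1)$ virtual nodes per annulus, an aggregation step needs only $\tilde{O}(1)$ additional PA instances over annulus components.
\end{itemize}
Verifying that merged super-nodes that include virtual nodes still form disjoint connected parts in $\hat{G}_\ell$ needs care; I expect this to be the delicate point, and would argue it by treating each virtual node as adjacent to its whole annulus component.
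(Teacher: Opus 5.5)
Your Steps 1--2 are essentially the paper's argument: \cref{lem: PA_level_T}, plus compiling contraction, consensus and aggregation into PA instances on $\hat{G}_\ell$ following \cite{GHSYZ22}. Viewing all level-$\ell$ annuli as one disjoint-union graph and applying \cref{lem: MA_via_PA} once is a legitimate packaging, slightly cleaner than the paper's explicit scheduling of PA phases, since no super-node or edge ever spans two annuli.

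The gap is in Step 3, at exactly the point you flag, and the argument you propose there would not work.
\begin{itemize}
\item A super-node containing a virtual node $x$ need not be connected in $A$. If $x$ is adjacent to two far-apart node-parts $f_1,f_2$ and both $(x,f_1)$ and $(x,f_2)$ are contracted, its real part maps to $\hat{p}(f_1)\cup\hat{p}(f_2)$, which is disconnected in $\hat{G}_\ell$.
\item Treating $x$ as adjacent to all of $\hat{A}$ does not fix this. To make the image connected you must route through vertices of $\hat{A}$ that belong to other super-nodes of $A$. The parts then stop being disjoint and cannot be served by a single PA instance. Relatedly, super-nodes containing virtual nodes are not whole annulus components.
\item You never say how the contraction step learns which real node-parts get merged \emph{through} virtual nodes.
\end{itemize}

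The missing idea is to keep virtual nodes out of $\hat{G}_\ell$ and out of the PA partition altogether. First apply the compiler of \cite{GZ22} (\cref{lem: simulating-virtual-nodes}) to each annulus separately. With $\beta=\tilde{O}(1)$ virtual nodes, it turns the $r$-round algorithm on $A_{virt}$ into an $O(\beta r)=\tilde{O}(r)$-round \emph{standard} minor-aggregation algorithm on $A$ itself, in which every real node learns the states of all virtual nodes. Your Steps 1--2 then apply verbatim. They also give the required knowledge of virtual-node outputs at every $v\in p(f)$ for $f\in A$. This is what the paper does.

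If you wanted to do it by hand instead, you would need to:
\begin{enumerate}
\item compute the components of contracted real--real edges;
\item let each component aggregate the $\beta$-bit set of virtual nodes it is contracted to;
\item OR these, together with the virtual--virtual contractions, over $\hat{A}$, so the virtual super-nodes can be determined locally;
\item serve each of the at most $\beta$ virtual super-nodes by a separate masked PA over $\hat{A}$, distinct from the PA on the remaining, genuinely connected, super-nodes.
\end{enumerate}
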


\subsubsection{Partwise-aggregation on Annuli}
To prove \cref{thm: MA_simulation}, we first prove that the PA problem can be solved in $\tilde{O}(D)$ rounds on all annuli of the same level simultaneously. Then, we use that to prove \cref{thm: MA_simulation} similarly to~\cite{GHSYZ22,planardistributedmaxflow24}. 

\begin{lemma}
\label{lem: PA_level_T}
   Let $A_1,\ldots,A_k$ be the level-$\ell$  annuli.
   For every $i$, let $\mathcal{P}_i = A_{i,1}, A_{i,2},\ldots$ be a given partition  of $A_i$ into node-disjoint connected subgraphs, such that: (1) level $\ell$ is distributively stored, (2) each vertex $v\in G$, for each node-part $f\in A_i$ s.t. $v\in p(f)$, knows the ID of the part $A_{i,j}$ that contains $f$, (3) for each node-part $f\in A_i$, there is a leader vertex $v\in p(f)$ that knows the $\tilde{O}(1)$-bit input string $x_f$.
    Then, for any predefined aggregate operator $\bigoplus$, in $\tilde{O}(D)$ rounds, each vertex $v\in G$ learns the function $\bigoplus_{g\in A_{i,j}} x_g$ for each part $A_{i,j}$  where there exists $f\in A_{i,j}$ s.t. $v\in p(f)$.
\end{lemma}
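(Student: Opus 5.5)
The plan is to reduce the part-wise aggregation on annuli to a single part-wise aggregation instance on $\hat{G}_\ell$, and then apply \cref{cor: PA_g_hat}. For each part $A_{i,j}$ we define a vertex set $\hat{S}_{i,j}\subseteq V(\hat{G}_\ell)$ as the union of $V(\hat{p}(f))$ over $f\in A_{i,j}$. By \cref{lemma: faceparts_hat_G} the paths $\hat{p}(f)$ are vertex-disjoint, so the sets $\hat{S}_{i,j}$ are disjoint. For connectivity we use \cref{lemma: dual_edges_hat_G} and argue as in \cref{lemma: annuli_components_hatG}. Since $A_{i,j}$ is a connected subgraph of $A_i$, any two node-parts in it are joined by a path of level-$\ell$ edges inside $A_{i,j}$. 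Each such edge has a unique $\EC$ edge joining the corresponding paths, and each $\hat{p}(f)$ is connected in $\ER$. Hence $\hat{G}_\ell[\hat{S}_{i,j}]$ is connected. Every vertex of $\hat{G}_\ell$ not covered this way (the $V_S$ vertices, copies lying on residual node-parts as in \cref{cor: residual_nodeparts}, and so on) is placed in its own singleton part, with the neutral element of $\bigoplus$ as input. This gives a partition of $V(\hat{G}_\ell)$ into connected parts, which is exactly the setting of \cref{def: PA}.

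Next we assign inputs locally. The vertex $v\in G$ simulates every copy on $\hat{p}(f)$ and knows the ID of the part containing $f$ by assumption (2). So each copy learns its part ID with no communication. The leader $v$ of $f$ gives the input $x_f$ to one fixed copy of itself on $\hat{p}(f)$, for instance the copy with the smallest local index. All other copies in $\hat{S}_{i,j}$ receive the neutral element. The aggregate over $\hat{S}_{i,j}$ then equals $\bigoplus_{g\in A_{i,j}} x_g$, because each $x_g$ is contributed exactly once.

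Finally we run the PA algorithm on $\hat{G}_\ell$. Level $\ell$ is distributively stored, so \cref{lemma: construct_hat_G} builds $\hat{G}_\ell$ in $O(1)$ rounds, and \cref{cor: PA_g_hat} (planarity, diameter $4D$ by \cref{lemma: diameter_hat_G}, and the $O(1)$-overhead simulation of \cref{lemma: congest_hat_G}) solves the PA problem in $\tilde{O}(D)$ rounds. Every copy in $\hat{S}_{i,j}$ learns the aggregate, and each $v\in p(f)$ has a copy on $\hat{p}(f)$ by item 1 of \cref{thm: hat_G}. So $v$ learns the value for every part containing such an $f$.

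The main obstacle is making sure the parts are genuinely connected using only $\ER\cup\EC$ edges within the part. In particular, the $\EC$ edge representing a level-$\ell$ dual edge $(f,g)$ must have both endpoints inside $\hat{p}(f)\cup\hat{p}(g)$, and must not land on copies belonging to a residual node-part. This holds for incised edges and for endpoint-case edges because of the case analysis in the proof of \cref{lemma: dual_edges_hat_G}, which places the $\EC$ edge on the $u_{j-1},u_j$ or $v_{i-1},v_i$ copies lying on the correct paths. I would verify this point carefully and otherwise rely on the lemma.
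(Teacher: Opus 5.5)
Your proposal is correct and follows essentially the same route as the paper. You lift each part $A_{i,j}$ to the union of the vertex-disjoint paths $\hat{p}(f)$ for $f\in A_{i,j}$, get connectivity from the $\EC$ edges of \cref{lemma: dual_edges_hat_G}, let leaders inject $x_f$ with the identity element elsewhere, and invoke \cref{cor: PA_g_hat}. Placing every uncovered vertex in a singleton part is slightly more careful than the paper, since this covers copies on residual node-parts from \cref{cor: residual_nodeparts} and not only $V_S$. The one ingredient you leave implicit is \cref{lem: distributed_knowledge}, which is what tells $v$ which of its copies lie on which $\hat{p}(f)$.
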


\begin{proof}

 We first show that each $A_{i,j}$ corresponds to a unique subgraph of $\hat{G}_\ell$.
    Recall that we simulate the annuli $A_i$ using $\hat{G}_\ell$. In particular, $A_i$ corresponds to a connected component $\hat{A}_i$ of $\hat{G}_\ell[\ER\cup \EC]$ (\cref{lemma: annuli_components_hatG}).
    Moreover, the node-parts of $A_i$ correspond to vertex-disjoint subgraphs (paths) of $\hat{A}_i$ (\cref{lemma: faceparts_hat_G} and \cref{lemma: annuli_components_hatG}). Thus, a partition $\mathcal{P}_i$ of $A_i$ induces a partition $\hat{\mathcal{P}}_i$ of $\hat{A}_i$.
    In particular, for each part $A_{i,j}$ there is a part $\hat{A}_{i,j}$ in $\hat{\mathcal{P}}_i$ that contains all vertices that lie on a path $\hat{p}(f)$ of $\hat{A}_{i,j}$ for a node $f\in A_{i,j}$.
    Note:
    \begin{claim}
        $\hat{A}_{i,j}$ is connected.
    \end{claim}
    The claim follows from \cref{lemma: annuli_components_hatG}, and concretely: (1) each $\hat{p}(f)$ is connected (\cref{lemma: faceparts_hat_G}), (2) there is an $\EC$ edge connecting $\hat{p}(f),\hat{p}(g)$ iff the edge $(f,g)\in A_{i,j}$ (\cref{lemma: dual_edges_hat_G}), and  (3) $A_{i,j}$ is connected by assumption. That is, $\hat{\mathcal{P}}_i$ is a partition of $A_i$ into node-disjoint connected subgraphs.   
 The next claim follows from: (1) all $\hat{p}(f),\hat{p}(g)$ are disjoint for any $f \neq g$ in $A_i$ (\cref{lemma: faceparts_hat_G}), and (2) $\mathcal{P}_i$ is a partition so a node $f\in A_i$ is in exactly one part ${A}_{i,j}\in \mathcal{P}_i$.
    \begin{claim}
        Each two parts $\hat{A}_{i,j}\neq \hat{A}_{i,j'}$ in $\hat{\mathcal{P}}_i$ are vertex-disjoint.
    \end{claim}

   Let $\hat{\mathcal{P}}$ 
   be the union of all partitions  $\hat{\mathcal{P}}_1,\ldots, \hat{\mathcal{P}}_k$ together with the star-center vertices $V_S=V$ (the only $\hat{G}_\ell$ vertices that do not appear in any $\hat{\mathcal{P}}_i$). Then $\hat{\mathcal{P}}$ can be seen as a partition $\hat{\mathcal{P}}$ of $\hat{G}_\ell$ itself. 
    Thus, we have: 
    \begin{claim}
    $\hat{\mathcal{P}}$ is a partition of $\hat{G}_\ell$ into vertex-disjoint connected subgraphs.
    Computing PA on $\hat{\mathcal{P}}$ where the input of a leader vertex of a node-part $f$ in  $\hat{p}(f)$ is $x_f$ and otherwise an identity element, results in the same output of computing PA on the union of $\mathcal{P}_1,\ldots,\mathcal{P}_k$ in level $\ell$ of $\mathcal{T}$.
    \end{claim}

    Thus, we compute part-wise aggregations on $\hat{\mathcal{P}}$ in $\hat{G}_\ell$ in order to simulate part-wise aggregations on $\mathcal{P}$.
We explain the input and output format to the part-wise aggregation task in more detail.    
    First, note that the $V_S$ star-centers and the edges $\ES$ do not participate as  input because they do not appear on any $\hat{p}(f)$ or $\hat{A}_{i,j}$ of $\hat{G}_\ell$ (\cref{lemma: faceparts_hat_G} and \cref{lemma: annuli_components_hatG}). They are used exclusively for communication. 
    Initially, all vertices in a part $\hat{A}_{i,j}$ know the ID of $\hat{A}_{i,j}$. This is because $v\in G$ knows, for each $f\in A_i$ where $v\in p(f)$, the ID of $A_{i,j}$ that $f$ belongs to (assumption), and because $v$ knows the correspondence between its copies in $\hat{G}_\ell$ and $\hat{p}(f)$ (\cref{lem: distributed_knowledge}). In addition, 
    the partition $\hat{\mathcal{P}}$ into  vertex-disjoint subgraphs is known to the vertices of $\hat{G}_\ell$ (\cref{lemma: construct_hat_G}).

    Recall that in the PA task, each node $f$ in a part of the partition has an $\tilde{O}(1)$-bit input,
    and the aggregate operator is computed over all inputs of nodes that are in the same part (\cref{def: PA}).
    When computing a PA task over nodes of $A_i$, the input of $f\in A_i$ is required to be known to the leader vertex of $\hat{p}(f)$ in $\hat{G}_\ell$ (see \cref{lem: distributed_knowledge}). Then, aggregations inside all $\hat{A}_{i,j}$ simulate aggregations inside $A_{i,j}$, where only the input of the leader of $\hat{p}(f)$ is considered as an input (all other vertices on $\hat{p}(f)$ participate with an identity element).
    If the PA task is over inside-part edges or over outgoing edges of parts $A_{i,j}$, then each vertex $v_i$ of $\hat{G}_\ell$ that is incident to an edge $\hat{e}=\{v_i,v_{i+1}\}\in \EC$ (recall, edges of $\EC$ map 1-1 to edges of $A_i$s by \cref{lemma: dual_edges_hat_G}), knows if $\hat{e}$'s input should or should not be considered in the aggregation. As $v_i$ knows the ID of the part $\hat{A}_{i,j}$ that contains it, and knows the ID of $\hat{A}_{i,j'}$ that contains $v_{i+1}$ locally (because both $v_i,v_{i+1}$ are simulated by $v\in G$).
    Note that this is the only case, i.e.,  for $i\neq i'$ there are no edges between any part $A_{i,j}$ and $A_{i',j'}$ because $A_{i}$ and $A_{i'}$ are distinct annuli (disconnected graphs) in level $\ell$. 
    Hence, aggregations on all $\hat{A}_{i,j}$ simulate aggregations on the parts $A_{i,j}$, where any $\hat{G}_\ell$ vertex knows whether it should participate in the aggregation and if so with which input (as explained above).

    Finally, by \cref{cor: PA_g_hat}, the PA task on $\hat{G}_\ell$ for the partition $\hat{\mathcal{P}}$ (i.e., computing aggregations on all $\hat{A}_{i,j}$ simultaneously) is done in $\tilde{O}(D)$ rounds. Upon termination, all vertices in $\hat{G}_\ell\setminus V$ know the output value of their part. This directly translates from $\hat{G}_\ell$ to $G$, as each vertex of $G$ knows all information of its copies in $\hat{G}$ (\cref{lemma: congest_hat_G}).
    In particular, $v\in G$ knows the output of a node any $f$ that is contained in an annulus $A_i$ of level $\ell$ s.t. $v\in p(f)$.      
    This concludes the proof of \cref{lem: PA_level_T}.
    \qedhere
    
\end{proof}

\subsubsection{Standard Minor-aggregation on Annuli} 
\label{sec: standard_model_simulation}
Next, we use the above lemma in order to simulate minor-aggregation algorithms on level-$\ell$  annuli, concluding \cref{thm: MA_simulation}.

\begin{proof}[Proof of \cref{thm: MA_simulation}.]
Let $A_1,A_2,\ldots,A_k$ be the level-$\ell$ annuli.
We explain how to simulate the minor aggregation model using PA on an arbitrary annulus $A_i$ of level $\ell$. The parallelization to all annuli in the same level follows from \cref{lem: PA_level_T}. In particular, each of the $r$ rounds of the minor-aggregation algorithm is simulated by a fixed number of rounds in $\tilde{O}(D)$, s.t. the PA problem being solved in any given moment is the same for all annuli $A_i$. I.e., we would have a problem if different annuli of level $\ell$ are in different stages of the minor-aggregation algorithm. The issue is then that some annuli may want to change the partition on which we solve PA (according to the algorithm they run), meaning that in $\hat{G}_\ell$ we would need to construct low-congestion shortcuts again despite the fact that perhaps other annuli did not finish their PA task yet.
This issue is easily solved by scheduling the update of partitions  ( i.e., reconstruction of low-congestion shortcuts in $\hat{G}_\ell$) to start at specific rounds that are given by the simulation algorithm. E.g. divide the time into intervals of $t=D \log^c n$ (for some constant $c$) rounds, such that the partitions are updated at the beginning of each interval, and the aggregation task is computed in the remaining time until the beginning of the next interval.

We proceed to the simulation for the annulus $A_i$.
Let $s=\{f_1,f_2,\ldots\}$ be a super-node of $A_i$ ($f_i\in V(A_i)$). The vertices of $\hat{G}_\ell$ that map to $s$ are the vertices that lie on the paths $\hat{p}(f_i)$ of $\hat{G}_\ell$ that represent $p(f_i)$.
Then, the computational power of a super-node of $A_i$ is simulated by all vertices of $\hat{G}_\ell$ that map to it.
The computational power of an edge in (a minor of) $A_i$ is simulated by the endpoints of its corresponding $\EC$ edge in $\hat{G}_\ell$ (\cref{thm: hat_G}, \cref{lemma: dual_edges_hat_G}). We use \cref{lem: PA_level_T} 
as a black-box for PA on $A_i$ in order to implement the simulation.

There are three (ordered) steps in a minor-aggregate round: contraction, consensus,  and aggregation. There is an additional implicit initial step at the start of each round for electing a leader of each super-node. We follow~\cite{GHSYZ22} and show how all steps compile down to PA in $A_i$. Obviously, we differ from~\cite{GHSYZ22} for simulating the model on dual (rather than primal) subgraphs (the annuli). For that purpose, we consider our communication graph to be $\hat{G}_\ell$ and not $G$ (\cref{lemma: congest_hat_G}).

\medskip 
\noindent
{\bf  1. Electing a leader (and contraction).}
In the contraction step, we need to contract every edge with value one. First, we merge {\em clusters} of nodes of $A_i$ over their outgoing edges of value one (clusters eventually define super-nodes of $A_i$). Second, we assign each cluster an ID that is known to all vertices of $\hat{G}_\ell$ that simulate it.
Initially, a cluster is a single node of $A_i$. Then, clusters are grown to match the set of super-nodes of (the minor of) $A_i$ in the current round of the minor-aggregation algorithm being simulated. 
    To implement this, we first detect
connected components of $\hat{G}_\ell[\ER]$ (paths and cycles of $\hat{G}_\ell$ that map to nodes $f\in A_i$), where the ID of a node $f$ of $A_i$ is the maximal ID of a vertex in $\hat{G}_\ell$ that lies on $\hat{p}(f)$.  
This is done in $\tilde{O}(D)$ rounds as explained in the proof of \cref{lem: distributed_knowledge}.
Now, each vertex of $\hat{G}_\ell$ knows its connected component, and each node of $A_i$ is defined as a cluster.
We can apply Lemma \ref{lem: PA_level_T} (PA in $A_i$) due to the fact that all vertices of $\hat{G}_\ell$ that correspond to a node $f\in A_i$ in a cluster, know the same ID for that cluster.
    
 The merging process follows Boruvka's classic scheme, that does $O(\log n)$ star-shaped merges. The classic implementation is similar to that of~\cite{GhaffariH16_shortcuts, GP17, GZ22}.
Namely, the PA task on $A_i$ is when each cluster defines a part in the partition of $V(A_i)$ and the aggregate operator is computed over the clusters' outgoing edges that chose a value of 1 in the contraction step.
This determines the merges as follows: the operator computed is over labels of neighboring clusters incident to those edges, where each cluster is labeled as a {\em joiner} or a {\em receiver}. Joiners suggest merges and receivers accept them (i.e., receivers are the star centers in the star-shaped merges). Randomly, the labels are decided by a fair coin flip that the cluster's leader tosses and broadcasts to the entire cluster. Deterministically, the classic algorithm for 3-coloring by Cole and Vishkin \cite{cole-vishkin} is used (E.g. see~\cite{GZ22} for more details).
In any case, at most $\tilde{O}(1)$ merges are repeatedly done until clusters match super-nodes of $G^*$. 
In each phase of merges, all nodes of the same cluster maintain the cluster's ID to be the maximum ID of a node $f\in A_i$ that is in the  cluster (then, clusters' IDs define the partition of $A_i$ into clusters, which is the input of the next phase of the $O(\log n)$ merging phases). 
The vertex of $\hat{G}_\ell$ that has that maximum ID is considered to be the leader of vertices in $\hat{G}_\ell$ that map to the super-nodes represented by the cluster. 
At the end of each merging phase \cref{lem: PA_level_T} (PA on $A_i$) is applied. Overall, it is applied repeatedly $\tilde{O}(1)$ times, and after $\tilde{O}(D)$ rounds the clusters are the super-nodes and all vertices of $\hat{G}_\ell$ that simulate the same super-node have a shared leader and ID.

\medskip 
\noindent 
{\bf 2. Consensus.}
After assigning IDs to clusters, each vertex of $\hat{G}_\ell$ knows the ID of the cluster that it maps to, which is used as the part ID in the part-wise aggregation tasks on $A_i$. 
Regarding input, the input is perhaps unique per node of $A_i$, and  the elected leader in $\hat{G}_\ell$ knows it for the node $f\in A_i$ that it leads.
Hence, Lemma \cref{lem: PA_level_T} is applicable and is used to implement the consensus step in $\tilde{O}(D)$ rounds on $G$. 
    
\medskip 
\noindent 
{\bf 3. Aggregation.}
After performing the consensus step, each endpoint $v_i$ in $\hat{G}_\ell$ of an edge in $\EC$
informs the other endpoint $v_{i+1}$ with the consensus value of the cluster that $v_i$ maps to. After that, each endpoint $v_i\in \hat{G}_\ell$ of an edge $e\in \EC$ that maps to the edge $e^*=(a_1,a_2)$ (where $a_1,a_2$ are distinct super-nodes of the minor of $A_i$) thinks of the edge $e$ as if it was assigned the value/weight of $z_{e,a_i}$ if $v_i$ is in the cluster (that maps to) $a_i$. Otherwise (when the endpoints $a_1,a_2$ of $e$ map to the same cluster), $v_i$ considers the identity element as input for $e$.
Again, the aggregation is done by \cref{lem: PA_level_T} in $\tilde{O}(D)$ rounds on $G$.
Finally, the two endpoints of each edge know the output of the aggregate operator. Thus, they can choose zero or one for that edge (the input to the next contraction step). 

Note, after the simulation finishes, each vertex $v$ of $G$ knows the output value of the super-node that contains $f\in A_i$ where $v\in p(f)$, and the output of each dual edge $e^*$ that corresponds to an incident edge to $v$. This information is known to $v$ from its copies in $\hat{G}_\ell$ that simulate $f$ and $e^*$ (\cref{lemma: construct_hat_G}). 
\qedhere
    
\end{proof}

\subsubsection{Minor-aggregation on Annuli with Virtual-nodes} 
\label{sec: virtual_nodes_simulation}
We aim to simulate the {\em extended} minor-aggregation model, which is the model as defined in \cref{def: basic_model} after adding to the annulus $\tilde{O}(1)$ arbitrary connected {\em virtual-nodes}. See~\cite{GZ22} for a formal definition and discussions.
We need to use this (in a black-box manner) in order to have an SSSP algorithm that runs in $\tilde{O}(D)$ rounds and obeys the approximate-triangle inequality.

The proof that we can implement this model is also a black-box corollary of implementing the standard model (\cref{def: basic_model}). I.e., is a corollary of the above \cref{sec: standard_model_simulation}.
In particular,~\cite{GZ22} show that there is a standard minor-aggregation algorithm (compiler) that simulates the extended model with a multiplicative overhead proportional to the amount of virtual nodes. Thus, we use that result in a black-box manner (see Section 4.1 and Theorem 14 of~\cite{GZ22} for details), which immediately allows us to simulate the extended model, concluding the proof of the stimulation theorem on annuli (\cref{thm: MA_simulation}).

\begin{lemma}[Theorem 14 of \cite{GZ22}]
\label{lem: simulating-virtual-nodes}
  Let $Alg$ be a $r$-round minor-aggregation algorithm, and $H_{virt}$ be a connected virtual graph obtained from a connected input graph $H$, that contains $\beta$ virtual nodes. There is a $O(\beta\cdot r)$-round (standard) minor-aggregation algorithm $Alg'$ that runs in $H$ and simulates $Alg$ on $H_{virt}$. Upon termination, each non-virtual node $f \in H$ learns all information learned by $f$ and all virtual nodes.
\end{lemma}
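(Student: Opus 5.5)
The statement is Theorem 14 of~\cite{GZ22}, which the paper uses as a black box. My plan is to reconstruct its proof by compiling one round of $Alg$ on $H_{virt}$ into $O(\beta)$ standard rounds on $H$.

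\emph{Setup.} Write $H_{virt}$ as $H$ plus virtual nodes $x_1,\ldots,x_\beta$. Each $x_k$ may be adjacent to arbitrarily many real nodes, and also to other virtual nodes. I would not give a virtual node its own processor. Instead, its state is replicated in every real node: every real node keeps an identical copy of the memory of all $\beta$ virtual nodes. The edges between virtual nodes are likewise simulated by all real nodes. An edge $(f,x_k)$ with $f$ real is simulated by $f$ itself.

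\emph{One round of $Alg$.} I would simulate its three steps as follows.
\begin{enumerate}
\item \emph{Contraction.} The edges among real nodes contract as in $H$. A super-node of $H_{virt}$ is then a union of real super-nodes, some virtual nodes, and the real super-nodes glued together through contracted virtual edges. To compute this, I would handle one $x_k$ at a time. I form the minor of $H$ given by the real contractions, and mark each real super-node $s$ with the bit ``some $f\in s$ has a contracted edge to $x_k$''. This bit is an OR in the consensus step. All such marked super-nodes merge with $x_k$. Next, the virtual-virtual contractions merge the $x_k$ into classes; all real nodes know these, since they hold the virtual state. Then every real super-node attached to some $x_k$ in a class receives the class ID, e.g.\ the minimum $k$ in it. Real super-nodes with no virtual attachment keep their own ID. This costs $O(\beta)$ rounds, plus $O(1)$ rounds for a global broadcast. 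The broadcast is itself a minor-aggregation primitive, because $H$ is connected: contract all edges and aggregate.
\item \emph{Consensus.} A super-node $S$ containing virtual nodes is not connected within $H$ alone. Its real pieces are joined only through virtual nodes, so $y_S$ cannot be computed in one part-wise step. I would compute the aggregate of each real piece, then do one global aggregation per virtual class. That is at most $\beta$ global aggregations, which may be pipelined into $O(\beta)$ rounds. Each one combines the piece values tagged with that class, together with the virtual nodes' own $x$ values, which all real nodes know.
\item \emph{Aggregation.} Edges among real nodes aggregate as usual, within their real pieces. A real-virtual edge $(f,x_k)$ contributes its value $z_{e,x_k}$ to the side of $x_k$. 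All these contributions to $x_k$ are combined with one global $\otimes$-aggregation over $H$, one per $k$. The result is then added to the class aggregate, and the contribution $z_{e,f}$ is added to $f$'s piece.
\end{enumerate}

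\emph{Totals.} Each simulated round costs $O(\beta)$ standard rounds, so $O(\beta r)$ overall. Since every real node holds all virtual state, the output guarantee stated in Lemma~\ref{lem: simulating-virtual-nodes} follows.

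\emph{Main obstacle.} The hard step is the consensus and contraction for super-nodes containing virtual nodes. They are disconnected in $H$, so they fall outside the part-wise aggregation interface. The key idea is that there are only $\beta$ virtual classes, and each one can be served by a single global aggregation over connected $H$. Those $O(\beta)$ global operations account for the multiplicative $\beta$ overhead.
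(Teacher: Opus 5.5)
The paper gives no proof of this lemma; it imports it as Theorem~14 of \cite{GZ22}. Your plan is the right kind of compiler: replicate the state of the $\beta$ virtual nodes at every real node, and serve each of the at most $\beta$ super-nodes containing a virtual node by a global aggregation over the connected graph $H$. However, your contraction step computes the wrong partition. You form the virtual classes only from contracted virtual--virtual edges. Two virtual nodes $x_a,x_b$ also lie in the same super-node of $H_{virt}$ whenever some real piece $s$ (a super-node of the real-only contraction) has contracted edges to both, and this chains transitively ($x_a$--$s$--$x_b$--$s'$--$x_c$). In that situation your rule ``every real super-node attached to some $x_k$ in a class receives the class ID'' is ill-defined for $s$. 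Moreover, pieces attached only to $x_a$ and pieces attached only to $x_b$ receive different IDs although they belong to one super-node. The consensus and aggregation values you then compute are therefore wrong. Your unexplained ``global broadcast'' cannot repair this. Which virtual nodes share a piece is information spread over arbitrarily many pieces, and it must be aggregated.

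The fix fits your budget.
\begin{itemize}
\item In one consensus step, each real piece $s$ computes its attachment vector $a_s\in\{0,1\}^\beta$: the bitwise OR over $f\in s$ of which edges $(f,x_k)$ chose $1$.
\item For each $k$, one global OR-aggregation of $a_s$ over all pieces with $a_s[k]=1$ tells every real node which virtual nodes share a piece with $x_k$.
\item Together with the replicated virtual--virtual contraction bits, every real node now knows an auxiliary graph on $x_1,\dots,x_\beta$. Its connected components are exactly the true classes. Every real node computes them locally, and each piece with $a_s\neq 0$ adopts the now-consistent class ID.
\end{itemize}
This costs $O(\beta)$ rounds with $\beta$-bit values, which are $\tilde{O}(1)$-bit in the regime $\beta=\tilde{O}(1)$ the paper uses.

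A smaller point in the same spirit concerns the aggregation step, once final super-node IDs are known. An uncontracted real--real edge between two pieces of the same class is a self-loop of the minor. So is an uncontracted edge $(f,x_k)$ with $f$ in $x_k$'s class. Such edges must contribute the identity rather than be aggregated ``as usual within real pieces.'' With these corrections, the rest of your accounting goes through: $O(\beta)$ standard rounds per simulated round, and all virtual outputs known to every real node.
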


We elaborate on how this translates to simulating extended minor-aggregation algorithms on all annuli of level $\ell$, where each annulus may contain up to $\tilde{O}(1)$ virtual nodes. At the beginning this may seem not possible, for in certain levels of recursion we might have up to $\Omega(n)$ annuli, and thus a total of $\Omega(n)$ virtual nodes across the level that need to be simulated. However, those virtual nodes are simulated the same way we simulate annuli. That is, each annulus corresponds to a distinct connected component of $\hat{G}_\ell$ (and thus $G$) that simulates it. The same component is also responsible for the virtual nodes of the annuli it simulates. 

\section{Approximate SSSP and Eulerian Orientation}
\label{sec: SSSP}
This section proves the following theorem on running SSSP simultaneously on all annuli of the same level.

\begin{theorem}
\label{th: SSSP}
    Let $\epsilon$ be a parameter known to all $v\in G$. Assume all level-$\ell$ incisions are distributively stored, and for each level-$\ell$ annulus $A_i$ the ID of a source node $f_i$ is known to all $v\in G$ where $v\in p(f_i)$. Then, in $\tilde{O}(D/\epsilon^2)$ rounds, a $(1+\epsilon)$-approximate SSSP tree $T_i$ rooted at $f_i$ is computed for all $i$ simultaneously. Moreover:
    \begin{enumerate}
        \item $T_i$ obeys the approximate triangle inequality (as defined in Appendix \ref{appendix: centralized_missing_proofs}).
        
        \item For each node-part $g\in A_i$, the length of the $f_i$-to-$g$ path in $T_i$ is known to every vertex $v\in p(g)$. 
        
        \item 
        Each vertex $v\in G$ knows for each incident edge $e$, in what trees $T_i$ (a copy of) $e$'s dual is contained (and the IDs of its endpoints in $T_i$). 
        
    \end{enumerate}
\end{theorem}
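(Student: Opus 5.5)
We will run the $(1+\epsilon)$-approximate SSSP algorithm of~\cite{RozhonGHZL22_shortestpaths}, augmented through the reduction of~\cite{RozhonHMGZ23_triangle_inequality} so that its output obeys the approximate triangle inequality, simultaneously on every level-$\ell$ annulus $A_i$ with source $f_i$. Its minor-aggregation components will be executed via \cref{thm: MA_simulation}. The algorithm of~\cite{RozhonGHZL22_shortestpaths} is a minor-aggregation algorithm with $\tilde{O}(1/\epsilon^2)$ rounds that uses $\tilde{O}(1)$ virtual nodes per instance; the reduction of~\cite{RozhonHMGZ23_triangle_inequality} adds only polylogarithmic overhead. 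The single exception is an oracle that computes an Eulerian orientation of an Eulerian subgraph $H_i$ of each annulus. If each oracle call can be answered on all annuli of the level simultaneously in $\tilde{O}(D)$ rounds, then \cref{thm: MA_simulation} gives the total bound $\tilde{O}(D/\epsilon^2)$. Item 1 then comes from the reduction. Item 2 holds because the output of a node-part $g$ is known to every $v\in p(g)$ by \cref{thm: MA_simulation}. Item 3 holds because the output of a dual edge, namely whether it is a tree edge of $T_i$, is known to the endpoints of its primal edge, and by \cref{lem: distributed_knowledge} those endpoints know the IDs of the incident node-parts.

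The main obstacle is the Eulerian orientation. We follow the pairing-then-orienting scheme of~\cite{Euler_par_AtallahV84}. In the \emph{pairing} step, every node-part $f$ of $H_i$ splits its incident $H_i$-edges into consecutive pairs with respect to its cyclic order. We will do this on $\hat{p}(f)$. By \cref{lemma: faceparts_hat_G}, the edges of $f$ in cyclic order appear consecutively along the path $\hat{p}(f)$, and by \cref{lemma: dual_edges_hat_G} each of them is represented by an incident $\EC$ edge. A prefix-sum (rank) computation along $\hat{p}(f)$, restricted to the $\EC$ edges belonging to $H_i$, then labels the $k$-th such edge. This is a standard aggregation that can be done with $O(\log n)$ part-wise aggregations through shortcuts on $\hat{G}_\ell$, via \cref{cor: PA_g_hat}. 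Edges with labels $2m-1$ and $2m$ are paired; degrees are even, so the pairing is perfect. Pairing \emph{consecutive} edges is important because planarity then guarantees that the resulting closed trails do not cross. This lets us realize them as vertex-disjoint subgraphs of a modified $\hat{G}_\ell$.

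In the \emph{orienting} step, we split each vertex of $\hat{p}(f)$ so that the segment of $\hat{p}(f)$ between two paired $\EC$ edges becomes a separate component. Concretely, we delete the $\ER$ edge of $\hat{p}(f)$ between positions $2m$ and $2m+1$ (in practice, we split the vertex there and keep both halves attached to the star center so that planarity and the $O(D)$ diameter survive). After this, each closed trail of the edge-disjoint cycle decomposition of $H_i$ is a connected cycle-like component of the modified $\hat{G}_\ell[\ER\cup\EC]$, and different trails are vertex-disjoint. The modified graph is still planar, has diameter $O(D)$, and can be simulated on $G$ with $O(1)$ overhead, by the same arguments as \cref{lemma: diameter_hat_G}, \cref{lemma: congest_hat_G} and the planarity transformation. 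So part-wise aggregation runs in $\tilde{O}(D)$ rounds on it. On each component we elect a leader and orient the cycle consistently. This is a minor-aggregation task, for instance by contraction and list-ranking style pointer doubling to give every edge its position and direction relative to the leader. It takes $\tilde{O}(1)$ part-wise aggregations.

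Each paired couple of edges at $f$ is then oriented one in and one out, so in-degree equals out-degree at every node-part, and the orientation is reported back to the endpoints of the corresponding primal edges. The delicate points we expect to check are that the vertex splitting preserves planarity, which should hold because we only cut paths at positions chosen in cyclic order, and that the endpoint-case copies from \cref{remark: incision_P} are handled consistently.
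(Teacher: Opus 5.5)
\textbf{Gap: the oracle's input can contain virtual nodes.} By \cref{def: eulerian_oracle}, the Eulerian graph $H$ handed to $\mathcal{O}_{Euler}$ is a subgraph of $A_{virt}$: the annulus plus up to $\tilde{O}(1)$ virtual nodes. These nodes really occur. The algorithm of \cite{RozhonGHZL22_shortestpaths} calls the oracle on its virtual graph, and the triangle-inequality reduction of \cite{RozhonHMGZ23_triangle_inequality} runs that algorithm on graphs with added virtual nodes. You treat $H_i$ as a subgraph of the annulus itself. Your pairing step runs on $\hat{p}(f)$ and uses the cyclic order of $f$'s edges, so it has nothing to act on at a virtual node. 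A virtual node has no face-part, no path in $\hat{G}_\ell$, and no embedding. Its possibly $\Omega(n)$ incident edges have no $E_C$ representatives. Trails through it cannot be made vertex-disjoint components of a planar $O(D)$-diameter graph, since that would mean splitting it into $\deg/2$ pieces, i.e., $\Theta(\deg)$ virtual vertices rather than $\tilde{O}(1)$. So oracle calls whose input touches virtual nodes are left unanswered, and the simulation of \cite{RozhonGHZL22_shortestpaths} on annuli is incomplete.

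\textbf{Missing step: eliminate the virtual nodes first.} Before your scheme, run a virtual-node elimination phase following Lemma 6.4 and Proposition 6.5 of \cite{RozhonGHZL22_shortestpaths}; the paper re-implements this on annuli.
\begin{enumerate}
\item Compute a rooted spanning tree $T$ on the real nodes with a minor-aggregation MST.
\item Let $S$ be the real neighbours of virtual nodes, compute subtree sums of the indicator of $S$ (\cref{lem: subtreesums}), and mark the parent edge of every node whose sum is odd.
\item Pair the marked edges at each node-part, using your $\hat{p}(f)$-ranking applied to the marked edges. This gives edge-disjoint tree paths with endpoints in $S$.
\item Orient these paths with your vertex-splitting construction. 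It must therefore also handle open paths and an unpaired edge at odd-degree nodes, not only closed trails. This fixes the orientation of the real-to-virtual edges.
\item Settle the virtual-to-virtual edges and $\tilde{O}(1)$ other edges by broadcasting $\tilde{O}(1)$ bits.
\end{enumerate}
Only the residual Eulerian graph $H'\subseteq A$ is then handled by your pairing-and-orienting argument, which does match \cref{lem: edge_pairing} and \cref{lem: orienting_paired_edges}.

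\textbf{Smaller omission.} Annuli can have parallel edges and self-loops, while \cite{RozhonGHZL22_shortestpaths} assumes a simple graph. You must first deactivate all but the lightest parallel edge, using \cref{lem: deactivating_parallel_edges} simulated through \cref{thm: MA_simulation}.
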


To prove \cref{th: SSSP}, we wish to use the approximate SSSP algorithm of~\cite{RozhonGHZL22_shortestpaths}. This raises three obstacles. First, an annulus can  be a multi-graph and to use the algorithm of~\cite{RozhonGHZL22_shortestpaths} we need to turn it into a simple graph. Second, we need to turn the algorithm of~\cite{RozhonGHZL22_shortestpaths} into one that obeys the approximate triangle inequality.   
Both these obstacles are easily resolved by simulating two minor-aggregation procedures from previous works~\cite{RozhonHMGZ23_triangle_inequality,planardistributedmaxflow24}, which is possible by using \cref{thm: MA_simulation}. See \cref{section: SSSP}  for more details.

The third obstacle is the most challenging: The original algorithm of~\cite{RozhonGHZL22_shortestpaths} is not (fully) a minor-aggregation algorithm.
Specifically, all its components do work in the minor-aggregation model except for one - an  {\em Eulerian orientation oracle}. The oracle was implemented in~\cite{RozhonGHZL22_shortestpaths} using $\tilde{O}(1)$ $\CONGEST$ rounds and $\tilde{O}(1)$ instances of the PA problem.
This is not good enough for our purposes as we do not know how to simulate $\CONGEST$ on annuli (in fact, it is not known whether $\CONGEST$ can be even simulated on $G^*$ itself without suffering the trivial $O(n)$ blow-up in round complexity). 
We discuss this in this section, where we define and refine the Eulerian orientation oracle to run in $\tilde{O}(D)$ rounds on all annuli of the same level simultaneously. After solving the problem in \cref{sec: oreintation_for_annuli}, we complete the proof of \cref{th: SSSP} in \cref{section: SSSP}.

\medskip
\noindent
{\bf The Eulerian orientation oracle \boldmath$\mathcal{O}_{Euler}$.}
In this problem one is given an {\em Eulerian graph} $H$ as an input (a graph is Eulerian iff all its degrees are even), and needs to orient its edges such that the in-degree equals the out-degree for all nodes of $H$. 
The problem that we need to solve in order to implement~\cite{RozhonGHZL22_shortestpaths} SSSP algorithm on annuli is the following.

\begin{definition}[$\mathcal{O}_{Euler}$, Definition 3.8 of \cite{RozhonGHZL22_shortestpaths}]
\label{def: eulerian_oracle}
Let $A$ be the input graph to the SSSP algorithm. The problem asks to find an Eulerian orientation to an Eulerian graph $H$, which is a subgraph of $A_{virt}$, where $A_{virt}$ is $A$ with up to $\tilde{O}(1)$ additional virtual nodes.
\end{definition}

\noindent {\bf A simple (centralized) solution.}
We describe a classic simple solution, the barriers of turning it into a distributed solution, and how to surpass them. See also \cref{fig: eulerian_orientation}.
\begin{lemma}[\cite{Euler_par_AtallahV84}]
\label{lem: centralized_euler_orientation}
    The following procedure solves the Eulerian Orientation problem:
\begin{enumerate}
    \item {\em Local step.}
    Decompose $H$ into edge-disjoint cycles by pairing the edges incident to each node $v\in H$.

    \item {\em Global step.} Identify and consistently direct each cycle of the decomposition.
\end{enumerate}
\end{lemma}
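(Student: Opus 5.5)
The plan is to verify the two steps separately: the local step produces a decomposition of the edge set of $H$ into edge-disjoint closed walks (cycles), and the global step turns any such decomposition into an Eulerian orientation. Throughout, $H$ may be a multigraph with self-loops, since annuli can be multigraphs. We treat a self-loop as contributing two edge-ends at its node, and each edge as having two \emph{ends}, one at each endpoint.

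\textbf{Local step.} Since $H$ is Eulerian, every node $v$ has an even number of incident edge-ends. Hence $v$ can partition them into pairs, using an arbitrary perfect matching on the ends at $v$. Consider the auxiliary structure whose elements are the edges of $H$. Each edge $e$ is linked, at each of its two ends, to the edge paired with that end. Every edge then has exactly two links, one per end; a self-loop may be paired with itself, which is fine. A finite structure in which every element has degree exactly two decomposes into disjoint cycles. Reading such a cycle as a sequence of edges $e_1,e_2,\dots,e_k$, consecutive edges $e_r,e_{r+1}$ share a node $v$ at which they are paired. Therefore the sequence is a closed walk in $H$, and the walks partition $E(H)$. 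This is the standard argument; the only care needed is bookkeeping on edge-ends rather than edges, so that self-loops and parallel edges are handled correctly.

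\textbf{Global step.} Fix a cyclic traversal direction on each closed walk $W=e_1,\dots,e_k$, and orient every $e_r$ along the traversal. Each time $W$ passes through a node $v$ via a pair $(e_r,e_{r+1})$ matched at $v$, one of the two paired ends becomes an incoming end and the other an outgoing end. Since every edge-end at $v$ belongs to exactly one pair, and each pair contributes exactly one in and one out, we get $\mathrm{indeg}(v)=\mathrm{outdeg}(v)$ for every $v$. A self-loop paired with itself contributes one in and one out, so it is covered by the same count. The walks are edge-disjoint, so the orientations chosen for different walks never conflict.

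The only point I expect to need care is phrasing the consistency claim in the global step per \emph{pair} rather than per edge. That is what makes the argument robust to walks that revisit a node many times, and to self-loops, which are exactly the non-simple situations that arise in annuli. No earlier result of the paper is needed; this is a purely combinatorial, centralized statement, and the distributed difficulties discussed afterwards do not enter its proof.
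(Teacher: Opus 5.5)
Your proof is correct and is essentially the paper's argument. The paper splits each node into $\deg(v)/2$ copies, one per pair, to get a 2-regular graph whose cycles are then oriented consistently. Your auxiliary structure, in which edges are linked through the pairs at their ends, is the line-graph view of that same 2-regular graph, and the key observation that each pair gets one incoming and one outgoing end is identical; your edge-end bookkeeping only makes the treatment of self-loops and parallel edges more explicit.
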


\begin{figure}[htb]
\centering
\begin{subfigure}[t]{0.3\textwidth}
  \centering
   \includegraphics[width=\linewidth]{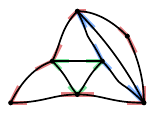}
\end{subfigure}
\hspace{1.7cm}
\begin{subfigure}[t]{0.29\textwidth}
\centering
\includegraphics[width=\linewidth]{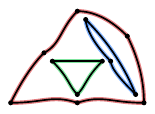}
\end{subfigure}
\caption{Left: an Eulerian graph where edges with the same color incident to a node are paired. Note that pairing of edges corresponds to edge-disjoint closed walks (cycles) of the graph. Right: The resulting two-regular graph from splitting each node $v$ into $\deg(v)/2$ nodes (a node per two paired edges). 
\label{fig: eulerian_orientation}
}
\end{figure} 

 To see how the local step gives a decomposition into edge-disjoint cycles, imagine splitting each node $v\in H$ into $\deg(v)/2$ nodes, each incident to two paired edges of $v$. Then, the resulting graph is 
 a union of node-disjoint cycles (see \cref{fig: eulerian_orientation} right). 
 Clearly, there is a 1-1 mapping between those cycles and cycles of $H$ defined by the edge pairing. Note, each two paired edges incident to a vertex $v\in H$ participate in exactly one cycle. Directing this cycle in a consistent manner, directs paired edges of each node that participates in the cycle in opposite directions.
 
 Note that when the degrees are not all even, then each node might have one unpaired edge, and the resulting  graph is of maximum degree two, which is a union of cycles and paths. We emphasize this because in certain cases of the distributed algorithm, we may also deal with paths. 
 
 \medskip
\noindent
{\bf Towards a solution on annuli.}
 Indeed, this approach of pairing edges is the main idea to solve the problem in various models of computation.
 E.g. the parallel algorithms of~\cite{Euler_par_AtallahV84, Euler_par_CaceresDSS93}, the distributed HYBRID algorithm of~\cite{Euler_hybrid_ChangHLS24}, and even the  $\CONGEST$ algorithm of~\cite{RozhonGHZL22_shortestpaths} that we cannot run on annuli.
 
 The main issue with this approach in the  $\CONGEST$ model is to run an algorithm on all cycles simultaneously in order to direct each of them. That is because the cycles are not node-disjoint and may be of large diameter.
 Of course, the cycles
 being edge disjoint intuitively should help but it is not clear how. The work of~\cite{RozhonGHZL22_shortestpaths} deals with those challenges and devises a fairly complicated algorithm that solves the problem for various graph families with different round complexity tradeoffs.  We face a similar challenge although we are in a different setting. Namely, we can run minor-aggregation algorithms on annuli but not on multiple edge-disjoint subgraphs of them.
 However, we show that the centralized approach can be adapted to direct a set of edge-disjoint paths (cycles) of each annulus.
 The solution uses tools developed in the previous section. In particular, we show that such a set of paths of annuli maps to a related set of edge-disjoint subgraphs in $\hat{G}_\ell$. Then, with a small change to $\hat{G}_\ell$, those subgraphs are made vertex-disjoint. Thus, we can work in all of them simultaneously using low-congestion shortcuts in order to 
 orient the paths of annuli.
    
We define a notion {\em edge paring} that utilizes planarity, so that it can be used to implement the centralized \cref{lem: centralized_euler_orientation} distributively in annuli.
    \begin{definition}[Annuli edge pairing] 
    \label{def: edge_pairing}
    Let $f$ be a node-part defined by edges $e,e'$ (see \cref{def: nodepart}).
    An edge pairing splits edges incident to $f$ into pairs of consecutive edges in $f$'s local embedding. I.e., into pairs $(e_i,e_{i+1})$ for $e\leq e_i< e_{i+1}\leq e'$. 
     We say that an edge paring is {\em distributively stored} if:
    For each node $f$, each incident edge $e$ knows whether it is the first or the second edge in $f$'s pair that contains it.
    If $f$'s degree is odd, then one edge of $f$ is allowed to be not paired.
    That edge knows itself.
    \end{definition}

\subsection{Eulerian Orientation on Annuli}
\label{sec: oreintation_for_annuli}
We now show how to solve the Eulerian Orientation problem, as required by the SSSP algorithm of~\cite{RozhonGHZL22_shortestpaths} (see \cref{def: eulerian_oracle}).

\begin{lemma}[$\mathcal{O}_{Euler}$ for annuli]
\label{lem: distributed_eulerian_orientation}
There is an $\tilde{O}(D)$-round algorithm that implements the Eulerian orientation oracle $\mathcal{O}_{Euler}$ (\cref{def: eulerian_oracle}) for all level-$\ell$ annuli simultaneously.
\end{lemma}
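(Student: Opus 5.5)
We follow the two-step scheme of \cref{lem: centralized_euler_orientation}, carrying out both steps on $\hat{G}_\ell$ rather than on the annuli themselves. Let $H_i\subseteq (A_i)_{virt}$ be the Eulerian input subgraph of annulus $A_i$. We first dispose of the virtual nodes. There are only $\tilde{O}(1)$ of them per annulus, and they are simulated by the whole component $\hat{A}_i$. So edges incident to virtual nodes can be paired by a minor-aggregation computation via \cref{thm: MA_simulation}, for instance by ranking a virtual node's edges with prefix sums and pairing ranks $2k-1,2k$. The real work is pairing at genuine node-parts $f$, which is the local step, and then orienting all resulting closed walks at once, which is the global step.

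For the local step we use the annuli edge pairing of \cref{def: edge_pairing}. By \cref{lemma: faceparts_hat_G}, the edges of $f$ appear along the path $\hat{p}(f)$ in the order of $f$'s local embedding, each through its $\EC$ endpoint on $\hat{p}(f)$. Each vertex of $\hat{p}(f)$ marks whether its $\EC$ edge belongs to $H_i$. A prefix-count along $\hat{p}(f)$ then gives each such edge its rank among $f$'s $H_i$-edges. We pair ranks $2k-1$ and $2k$, and if $\deg_{H_i}(f)$ is odd the last edge stays unpaired. Ranking along a path is a standard PA-based computation: a path can be handled by $O(\log n)$ rounds of pointer-jumping-style contractions, or directly by list ranking in the minor-aggregation model on the disjoint paths $\hat{p}(f)$. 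All of this runs in $\tilde{O}(D)$ rounds by \cref{cor: PA_g_hat}. Paired edges are consecutive in $f$'s embedding among the $H_i$-edges, which is what makes the next step planar-friendly.

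For the global step we modify $\hat{G}_\ell$ so that the cycles of the decomposition become vertex-disjoint connected subgraphs. For each pair $(e_a,e_b)$ at $f$, we keep the segment of $\hat{p}(f)$ between the $\EC$ endpoints of $e_a$ and $e_b$, and we cut $\hat{p}(f)$ between consecutive pairs. Cutting is done by splitting the $\hat{p}(f)$-vertex at the boundary, or by deleting the $\ER$ edge there. The segments are disjoint because pairs are consecutive and non-interleaving. Together with the $H_i$ $\EC$ edges, these segments form one connected subgraph per cycle (or path, when an odd-degree node is involved) of the decomposition, and different cycles are vertex-disjoint. The key check is that this split graph is still planar, has diameter $O(D)$, and is simulable in $O(1)$ rounds on $G$. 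Splitting a path vertex into two copies, both still attached by $\ES$ edges to the same star/depth-two tree, preserves planarity and the arguments of \cref{lemma: diameter_hat_G} and \cref{lemma: congest_hat_G}. Now the cycles are a partition into connected parts, so PA applies via \cref{cor: planar_PA}. On each part we elect a leader, the maximum ID. We then direct the part consistently by rooting it as a path or cycle and computing orientation from a spanning structure. Concretely, we cut the cycle at the leader's edge to get a path and run list ranking with $\tilde{O}(1)$ PAs; the parity and order of the $\EC$ edges along it determine their direction. Each edge of $G$ learns its orientation locally from its $\EC$ copy.

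The main obstacle is this splitting construction. We must verify that every cycle of $H_i$ induced by the pairing maps to exactly one connected component of the modified graph, including the cases of self-loops and of node-parts whose $\hat{p}(f)$ is a full cycle. We must also verify that planarity and the diameter bound survive. We would first try to argue this locally, vertex by vertex, in the same four-step style as the planarity proof of $\hat{G}_\ell$.
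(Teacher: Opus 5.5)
Your local step (ranking a node-part's $H$-edges along $\hat{p}(f)$ and pairing ranks $2k-1,2k$) matches the paper's \cref{lem: edge_pairing}. Your global step (splitting the shared endpoints of consecutive pair-segments so each induced walk becomes a connected, vertex-disjoint subgraph of a planar, $O(D)$-diameter graph, then orienting it) matches \cref{lem: orienting_paired_edges} via $\hat{G}_\ell'$.

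\textbf{The gap is the virtual nodes.} You only pair the edges at a virtual node $x$ and then pass ``all resulting closed walks'' to the global step. But an $H$-edge incident to $x$ is not a dual edge of the annulus. It has no $\ER$ copy on any $\hat{p}(a)$ and no $\EC$ copy in $\hat{G}_\ell$. This causes two failures.
\begin{itemize}
\item Your local step at a real node $a$ pairs consecutive $H$-edges along $\hat{p}(a)$, so it never sees $a$'s virtual edges and does not pair all of $a$'s $H$-edges.
\item More seriously, a walk $\dots a - x - b \dots$ is not a connected subgraph of your modified $\hat{G}_\ell$: it falls apart at every visit to a virtual node.
\end{itemize}
Adding a vertex for $x$ does not repair this. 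All walks through $x$ (possibly $\Omega(n)$ of them) would share that vertex, and its arbitrary adjacencies destroy planarity and the $O(1)$-round simulation on $G$. So your ``one connected, vertex-disjoint subgraph per walk'' claim fails exactly for walks through virtual nodes. Nothing in the proposal orients those walks consistently, so pairing alone does not ``dispose of'' the virtual nodes.

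\textbf{The missing idea} is the virtual-node elimination of Lemma 6.4 and Proposition 6.5 of~\cite{RozhonGHZL22_shortestpaths}, which the paper implements as a separate first phase:
\begin{enumerate}
\item Compute, by minor-aggregation, a spanning tree $T$ of the non-virtual part.
\item Let $S$ be the non-virtual neighbours of the virtual nodes, and compute subtree sums of the indicator of $S$.
\item Mark each tree edge from a node with odd subtree sum to its parent.
\item Pair the marked edges at every node with \cref{lem: edge_pairing}. This yields edge-disjoint tree paths whose endpoints are the nodes of $S$ (all but at most one).
\item Orient these paths with \cref{lem: orienting_paired_edges}; this induces the orientation of the edges at the virtual nodes.
\item Orient the $\tilde{O}(1)$ remaining virtual-to-virtual (and few other) edges after broadcasting $\tilde{O}(1)$ bits.
\end{enumerate}
This leaves a residual Eulerian graph $H'\subseteq A$ with no virtual nodes. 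Only to $H'$ does your pairing-and-splitting argument apply.
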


The solution of~\cite{RozhonGHZL22_shortestpaths} consists of two steps. First, virtual nodes of $H$ get eliminated, producing an Eulerian subgraph $H'$ of $A$ (Lemma 6.4 and Proposition 6.5 of~\cite{RozhonGHZL22_shortestpaths}). $H'$ is then handled using different approaches related to the graph's family (Section 6.2 of~\cite{RozhonGHZL22_shortestpaths}). To implement these steps in the minor-aggregation model, we follow a similar scheme to~\cite{RozhonGHZL22_shortestpaths}. Namely, we show that their procedure for handling virtual nodes can be implemented using our techniques for all level-$\ell$ annuli. 
Afterwards, we handle the remaining subgraphs differently, by implementing the centralized procedure (\cref{lem: centralized_euler_orientation}) for annuli.
Both procedures are not implementable in the minor-aggregation model straightforwardly. However, they can be implemented easily (not particularly in the minor-aggregation model) if the following two primitives are available: (1) pairing edges in order to define a set of edge-disjoint paths, and (2) directing each of those paths in a consistent manner.
Thus, we assume that those tasks can be solved near-optimally on annuli and provide a proof to \cref{lem: distributed_eulerian_orientation}.
After which, we prove that we can indeed solve these tasks  (which is the main contribution of this section). I.e., for now we assume the following two lemmas (proofs of which are provided in \cref{sec: edge_pairing} and \cref{sec: paths_orientation}).

\begin{lemma}[Pairing annuli edges]
\label{lem: edge_pairing}
    Assume a given subgraph $H$ for each annulus $A$ of level $\ell$. For all annuli of level $\ell$ (simultaneously), we compute and distributively store (as in \cref{def: edge_pairing}) an edge pairing of all nodes $f\in H$ within $\tilde{O}(D)$ rounds. 
\end{lemma}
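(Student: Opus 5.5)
The pairing will be computed on face-parts through $\hat{G}_\ell$. By \cref{lemma: faceparts_hat_G}, each level-$\ell$ node-part $f$ corresponds to a vertex-disjoint path (or cycle) $\hat{p}(f)$ in $\hat{G}_\ell[\ER]$. The edges incident to $f$, in $f$'s local embedding order, then map to the $\EC$ edges attached along $\hat{p}(f)$ in the order in which $\hat{p}(f)$ is traversed. This is the "consecutive subset maps to a subpath" part of \cref{lemma: faceparts_hat_G}, together with the injection of \cref{lemma: dual_edges_hat_G}.

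Pairing consecutive edges of $f$ that belong to $H$ therefore becomes the following task: number the $H$-edges along the path $\hat{p}(f)$ and pair the $(2k-1)$st edge with the $2k$th. Concretely, every copy-vertex $v_i$ on $\hat{p}(f)$ marks its incident $\EC$ edge with $1$ if the dual edge is in $H$ and with $0$ otherwise. Each vertex of $G$ knows this locally, since it knows the correspondence of its copies (\cref{lemma: construct_hat_G}) and is told which incident edges are in $H$. We then need a prefix sum of these indicators along each $\hat{p}(f)$, from the first vertex of $p(f)$ (\cref{def: first_last_vertex_pf}). Each endpoint learns its rank, and an edge of rank $r$ is declared first in its pair if $r$ is odd. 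If the total count is odd, the last edge stays unpaired.

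The main obstacle is computing prefix sums along all the paths $\hat{p}(f)$ simultaneously in $\tilde{O}(D)$ rounds. PA (\cref{cor: PA_g_hat}) only gives whole-part aggregates, and the paths may be long. I would handle this with the standard reduction of prefix sums on disjoint paths to $O(\log n)$ rounds of part-wise aggregation: use pointer-jumping-free Boruvka-style merging of subpaths. Each subpath fragment knows its total and its leftmost position. At every merge phase, where fragments are combined by the star merges of \cref{thm: MA_simulation}, the receiver broadcasts an offset to the joiners, and the offsets are accumulated inside each fragment by PA. After $O(\log n)$ phases every vertex knows its global prefix. Each phase is one PA on $\hat{G}_\ell$, so the total is $\tilde{O}(D)$.

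Cycles, where $p(f)$ is an entire face and $\hat{p}(f)$ is a cycle, are first cut at the leader vertex given by \cref{lem: distributed_knowledge}. This is an arbitrary start point, which is allowed because pairing only requires consecutiveness, with the wrap-around pair handled by the leader. One further subtlety: an $\EC$ edge sits at a single copy-vertex even though the corresponding dual edge lies between two consecutive vertices of $p(f)$. I would fix the convention that the rank is attributed at the copy $v_i$ where the $\EC$ edge attaches, as given by the definition of $\EC$. Since the order is consistent with $f$'s embedding (\cref{claim: shared_endpoint_dual_edges}), the resulting pairs are consecutive. Finally, since vertices of $G$ know all their copies, each incident edge $e$ learns for each endpoint node-part whether it is first or second in its pair, which yields the distributed storage of \cref{def: edge_pairing}.
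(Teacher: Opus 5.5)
Your proposal is correct and follows the paper's proof. Both arguments rank the $H$-edges of $f$ by a prefix (subtree) sum along $\hat p(f)$, rooted at an endpoint or cut at a leader when $\hat p(f)$ is a cycle, pair ranks $2k-1,2k$, and parallelize using the vertex-disjointness of the paths $\hat p(f)$. The differences are only in implementation: you put the $H$-indicator on the $\EC$ edge at its attachment vertex rather than on the $\ER$ copy of the edge, and you hand-roll the path prefix sum by Boruvka merges with a PA on $\hat G_\ell$ per phase (via \cref{cor: PA_g_hat} applied to subpath fragments, not the node-part-level merges of \cref{thm: MA_simulation}, which treat each $\hat p(f)$ as one node), whereas the paper black-boxes the subtree-sum minor-aggregation primitive run on the disjoint subgraphs $\hat p(f)$.
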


\begin{lemma}[Orienting edge-pairing induced paths]
\label{lem: orienting_paired_edges}
Assume a distributively stored (as in \cref{def: edge_pairing}) edge pairing of a subgraph $H$ for each annulus of level $\ell$. 
Then, within $\tilde{O}(D)$ rounds on $G$, one can consistently orient the  edge-pairing induced paths
in all annuli simultaneously. Upon termination, each edge of $H$ knows its direction.
\end{lemma}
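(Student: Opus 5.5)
The plan is to reduce orientation of the pairing-induced paths to a connectivity-plus-aggregation problem on a modified copy of $\hat{G}_\ell$, in which each path of $H$ becomes its own vertex-disjoint connected subgraph. Then \cref{cor: PA_g_hat}-style part-wise aggregation orients all of them at once.

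The modified graph $\hat{G}'_\ell$ is built as follows. By \cref{lemma: dual_edges_hat_G}, each dual edge of $H$ corresponds to a unique $\EC$ edge of $\hat{G}_\ell$, and each node-part $f$ corresponds to the path $\hat{p}(f)$ (\cref{lemma: faceparts_hat_G}). Pairs in the edge pairing are consecutive in $f$'s local embedding. Hence, by the second part of \cref{lemma: faceparts_hat_G}, the two $\EC$ edges of a pair $(e_i,e_{i+1})$ attach to $\hat{p}(f)$ at endpoints that bound a subpath $\hat{p}_{i}(f)$. Distinct pairs occupy internally disjoint consecutive segments of $\hat{p}(f)$, except that adjacent segments may share a boundary vertex.

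We split $\hat{p}(f)$ into the segments $\hat{p}_i(f)$, one per pair. This is done by deleting the $\ER$ edges of $\hat{p}(f)$ that lie between consecutive pairs, and by duplicating a shared boundary vertex $v_k$ into two copies. Both copies are simulated by the same primal vertex and are attached to the star center via $\ES$ edges, which keeps the graph planar (in the same spirit as step 3 of the planarity proof) and the diameter $O(D)$ (\cref{lemma: diameter_hat_G}). Every vertex knows locally, from the stored pairing (\cref{def: edge_pairing}) and from \cref{lemma: construct_hat_G}, which of its copies and incident edges fall in which segment, so the construction takes $O(1)$ rounds.

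In $\hat{G}'_\ell[\ER\cup \EC_H]$, where $\EC_H$ is the set of $\EC$ edges of $H$, every connected component is then a path or cycle: segments linked by the $\EC$ edges of the pairs. These components correspond one-to-one to the edge-pairing-induced paths of all annuli, and they are vertex-disjoint. An unpaired edge simply terminates a component.

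Orienting the components proceeds in three steps. First, we run the connectivity algorithm via $\tilde{O}(1)$ PA instances in $\tilde{O}(D)$ rounds, which gives each component a leader: its maximum-ID vertex. Second, we fix a consistent direction. A component is a path or cycle in a planar graph of diameter $O(D)$ but may be long. To direct it consistently we would use the standard PA-based technique for orienting paths and cycles: compute, by Boruvka-style star merges over the component, a rooted spanning tree, namely the component minus one edge if it is a cycle, and orient each edge away from the leader. Along a path this orientation is consistent automatically. Since each component has maximum degree two, the tree-rooting can be done by $O(\log n)$ rounds of pointer-jumping-like contractions, each implemented as a PA on the current clusters. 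Third, each dual edge reads its direction from the orientation of its $\EC$ copy, which is known to its primal endpoints. This gives opposite directions on paired edges at every node, which is the requirement of \cref{lem: centralized_euler_orientation}.

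I expect the main obstacle to be the segment-splitting step. One must verify that consecutive pairs really map to disjoint contiguous subpaths of $\hat{p}(f)$, including at the face-part endpoints $v_i^L,v_i^R$ and the endpoint-case edges of \cref{remark: incision_P}. One must also check that the vertex duplication preserves planarity and that every vertex can decide it locally. If a boundary vertex turns out to be shared in a way that local splitting cannot express, I would instead subdivide the relevant $\ER$ edge and cut at the new vertex.
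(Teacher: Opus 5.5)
The construction of $\hat{G}'_\ell$ is essentially the paper's: you expand the shared endpoint of consecutive segments $\hat{p}(q)_i,\hat{p}(q)_{i+1}$ so that the subgraphs $\hat{Q}$ become vertex-disjoint. The orientation step, however, has a genuine gap.

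\textbf{The root choice is wrong.} You root each component at its maximum-ID vertex and orient every edge away from it, asserting that along a path this is automatically consistent. That holds only when the root is an end of the path. Suppose the leader lies in the segment of an interior node $q_j$ of $Q$, which in general it will. Then both $\EC$ edges of $q_j$'s pair point away from $q_j$, so $q_j$ gets two out-edges from that pair and in-degree $=$ out-degree fails. For a cycle the same failure occurs unless the removed edge is chosen adjacent to the root. You also never say which edge is removed or how it is oriented afterwards.

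\textbf{Rooting at a degree-one vertex does not repair it.} Your components are not plain paths or cycles. For a non-incised primal edge, the $\EC$ edge is placed at an arbitrary endpoint of its $\ER$ copy (the smaller-ID side). So it may attach at the inner vertex of the first or last $\ER$ edge of a segment, leaving that $\ER$ edge pendant. Degree-one vertices therefore also occur in the middle of $\hat{Q}$, and the ends of $Q$ are not locally recognizable in your setting.

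\textbf{The missing idea is to work at segment granularity, as the paper does.} Contract the $\ER$ edges of each $\hat{Q}$ in one minor-aggregation contraction step. By Claim~\ref{claim: mathcal_Q}, this turns $\hat{Q}$ into a genuine path or cycle $\mathcal{Q}$ of $\EC$ edges with exactly the topology of $Q$.
\begin{itemize}
\item If $\mathcal{Q}$ is a path, its ends are the super-nodes with a single incident edge, and one of them is elected as root.
\item If $\mathcal{Q}$ is a cycle, the max-ID super-node drops one of its own incident edges, making it an end of the remaining path.
\end{itemize}
The tree primitive of \cref{lem: subtreesums} then orients each $\mathcal{Q}$ consistently. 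Since the $\hat{Q}$'s are connected, vertex-disjoint subgraphs of the planar, $O(D)$-diameter $\hat{G}'_\ell$, this runs for all of them simultaneously in $\tilde{O}(D)$ rounds.

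\textbf{A smaller point.} The $\ER$ edges lying between consecutive pairs should only be excluded from the parts, not deleted from $\hat{G}'_\ell$. When $H$ is sparse, most $\ER$ edges are of this kind. Deleting them destroys the length-four paths that \cref{lemma: diameter_hat_G} relies on, and can even disconnect the graph. So your appeal to that lemma for the diameter bound does not go through as stated.
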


Given the above two lemmas, we now prove \cref{lem: distributed_eulerian_orientation}.

\begin{proof}[Proof of \cref{lem: distributed_eulerian_orientation}.]
Recall (\cref{def: eulerian_oracle}) that for each level-$\ell$ annulus $A$ the problem asks to find an Eulerian orientation to an Eulerian graph $H$, which is a subgraph of $A_{virt}$, where $A_{virt}$ is $A$ with up to $\tilde{O}(1)$ additional (arbitrarily-connected) virtual nodes. 
$H$ is a graph produced by the SSSP algorithm of~\cite{RozhonGHZL22_shortestpaths}. Concretely, by the $\tilde{O}(1)$ minor-aggregation rounds simulated on $A$ in the other steps of~\cite{RozhonGHZL22_shortestpaths}'s algorithm, before the Eulerian orientation step, which we solve here.
We follow~\cite{RozhonGHZL22_shortestpaths} and work in two phases. The first eliminates virtual nodes from $H$ (and some other edges) leaving a residual Eulerian graph $H'$ that contains no virtual nodes ($H'$ is a subgraph of $A$). The second phase solves the problem on $H'$.

\medskip
\noindent
\underline{\em Phase two.}
We discuss the second phase first. It follows directly from the above  \cref{lem: edge_pairing} and \cref{lem: orienting_paired_edges}. I.e., 
after the first phase terminates, we get the subgraph $H'$ of $A$ (for all $A$ of level $\ell$).
We then need to solve the Eulerian orientation problem on all graphs $H'$ for all annuli $A$ of level $\ell$. 
Namely, by the end of the first phase, each edge of $A$ knows whether it is in $H'$ or not. Then,
we apply the centralized approach of \cref{lem: centralized_euler_orientation} on $H'$, implemented as follows: via \cref{lem: edge_pairing}, for all annuli of level $\ell$ simultaneously, we compute an edge-pairing of $H'$ within $\tilde{O}(D)$ rounds. 
This results in a decomposition of each annulus into a set of edge-disjoint cycles, each of which needs to be consistently oriented.
Then, by \cref{lem: orienting_paired_edges}, we simulate the global step of orienting the resulting cycles. 
This is also done within $\tilde{O}(D)$ rounds, for all annuli of level $\ell$ simultaneously (by \cref{lem: orienting_paired_edges}).
Note, the input format of \cref{lem: orienting_paired_edges} matches the output format of \cref{lem: edge_pairing} (both follow \cref{def: edge_pairing}).

Next, we focus on phase one. First, we discuss the idea and components as in~\cite{RozhonGHZL22_shortestpaths}.  Then, we prove that we can implement it for all annuli of the level simultaneously within $\tilde{O}(D)$ rounds.

\medskip
\noindent
\underline{\em General idea of phase one.}
We explain the intuitive idea that~\cite{RozhonGHZL22_shortestpaths} apply to handle the case of a single virtual node $f$. We want to find edge-disjoint cycles that contain all edges incident to $f$. To do so, they find a set of edge-disjoint paths of $H\setminus \{f\}$, where each of them can be closed to a cycle by adding  exactly one pair of $f$'s edges.
Orienting each of these paths implies an orientation on the pair of $f$'s edges that participate in the same cycle.
This takes care of all edges of $f$ and some other edges of $H$. Hence, we can eliminate $f$. Moreover, since the edges of $H$ that get directed are edges of cycles, removing those cycles keeps a residual Eulerian graph $H'$ that does not contain any virtual node.

There are two key steps in extending this approach to the case of $\tilde{O}(1)$ virtual nodes. The first is to find a certain set of edge-disjoint paths in $H$ that does not contain any virtual edge, and then orient them. Those paths extend the above idea of one virtual node to many. In particular,  orienting those paths induces an orientation to all edges of $H$ whose (exactly) one endpoint is a virtual node.
This is done in Lemma 6.4 of~\cite{RozhonGHZL22_shortestpaths}
(also a variant of Lemma 4.3.2 of~\cite{peleg-book}), which we cannot use as black-box on annuli (it is not a minor-aggregation procedure) but we can implement it using our tools, as we explain shortly in the part of the proof that discusses the implementation of phase one.
The second step is to handle edges between different virtual nodes. Those edges were not oriented in the previous step.
Both steps are done by Proposition 6.5 of~\cite{RozhonGHZL22_shortestpaths}, which in the first step applies the aforementioned Lemma 6.4 of~\cite{RozhonGHZL22_shortestpaths} multiple times. Then, in the second, broadcasts a total of $\tilde{O}(1)$-bits. Concretely, the broadcast information is roughly the virtual-node to virtual-node edges in addition to some other $\tilde{O}(1)$ edges. Then, orienting those edges can be solved locally as they are known to the entire graph.
Since broadcast is an aggregate operation, then broadcasting $\tilde{O}(1)$-bits can be implemented in the minor-aggregation model within $\tilde{O}(1)$ rounds. Note, we have no problem simulating this on all annuli of level $\ell$ simultaneously in $\tilde{O}(D)$ rounds (by \cref{thm: MA_simulation}). Thus, all we need in order to implement phase one on all annuli of the same level is an implementation of Lemma 6.4 of~\cite{RozhonGHZL22_shortestpaths} 
for finding a certain set of edge-disjoint paths and orienting them (discussed next).

\begin{claim}[Proposition 6.5 of~\cite{RozhonGHZL22_shortestpaths}]
\label{claim: virtual_edges}
Given an implementation of Lemma 6.4 of~\cite{RozhonGHZL22_shortestpaths} that runs within $\tilde{O}(D)$ rounds, in additional $\tilde{O}(D)$ rounds all edges incident to virtual nodes of $H$ are oriented (possibly with some other edges) leaving a Eulerian subgraph $H'$ of $A$.
\end{claim}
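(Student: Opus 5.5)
The plan is to treat Proposition~6.5 of~\cite{RozhonGHZL22_shortestpaths} as a black box whose only non-minor-aggregation component is Lemma~6.4. Everything else it does is a minor-aggregation routine, which we can run on all level-$\ell$ annuli at once via \cref{thm: MA_simulation}. First I would recall the structure of Proposition~6.5. It takes the $\tilde{O}(1)$ virtual nodes of $H$ and calls Lemma~6.4 $\tilde{O}(1)$ times, each call with respect to one virtual node or a group of them. Each call returns edge-disjoint paths of $H$ minus the virtual nodes; every path can be closed into a cycle with exactly one pair of edges incident to a virtual node, and it is oriented consistently. These oriented cycles are removed. After all calls, every remaining edge with exactly one virtual endpoint has been oriented, and what remains is Eulerian, since a union of cycles was removed from an Eulerian graph.

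Second, I would handle the residual edges between virtual nodes, together with the $\tilde{O}(1)$ auxiliary edges Proposition~6.5 singles out. These are broadcast to the whole annulus; broadcasting $\tilde{O}(1)$ bits is an aggregation, so by \cref{thm: MA_simulation} it costs $\tilde{O}(D)$ rounds for all annuli in parallel. Every vertex $v\in p(f)$ for $f\in A$ then knows these edges and orients them locally and identically. Since each call to Lemma~6.4 is assumed to take $\tilde{O}(D)$ rounds and there are $\tilde{O}(1)$ calls and broadcasts, the total is $\tilde{O}(D)$ rounds. The leftover graph $H'$ contains no virtual node, so it is a subgraph of $A$, and it is Eulerian because we removed cycles and the fully oriented virtual-to-virtual part.

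Third, I would check that degrees stay even throughout and that the removed edges form closed walks. This is the only correctness point beyond~\cite{RozhonGHZL22_shortestpaths}. It transfers verbatim, because the combinatorics of Proposition~6.5 are model-independent; only the implementation changes.

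The main obstacle is interface bookkeeping rather than mathematics. The outputs of each Lemma~6.4 call (which edges are oriented, and in which direction) must be stored as in \cref{thm: MA_simulation}: every endpoint of $e$ in $G$ knows the status of $e^*$, and every $v\in p(f)$ knows the state of virtual nodes. Only then can the next call, and the final removal that produces $H'$, be carried out consistently across all annuli. I would also schedule the calls synchronously across annuli in fixed $\tilde{O}(D)$-round windows, as in the proof of \cref{thm: MA_simulation}.
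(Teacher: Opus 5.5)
Your proposal is correct and follows the same route as the paper. Proposition~6.5 is used as a black box that repeatedly invokes Lemma~6.4 and then broadcasts $\tilde{O}(1)$ bits (the virtual-to-virtual edges plus a few other edges), which are oriented locally; the broadcast is an aggregation, so it is simulated on all level-$\ell$ annuli in $\tilde{O}(D)$ rounds via \cref{thm: MA_simulation}. One slight looseness: the Eulerian property of $H'$, in particular after removing the auxiliary edges, comes from the guarantee of Proposition~6.5 itself rather than from your ``we removed cycles'' remark, but since you treat the proposition as a black box this does not affect the argument.
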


\medskip
\noindent
\underline{\em Implementation of phase one.}
We are asked to find a set of edge-disjoint paths that obeys a certain structure.
Concretely, Lemma 6.4 of~\cite{RozhonGHZL22_shortestpaths} asks for the following:

 \begin{enumerate}[leftmargin=*]
     \item[] {\em Input:} A rooted spanning tree $T$ of $H$ minus the virtual nodes, and a set of marked nodes $S\subseteq V(T)$.\footnote{For more context, $S$ is the set of non-virtual neighbors of the virtual nodes of $H$. Our implementation is independent from this fact.}

     \item[]{\em Output:}
     A set of edge-disjoint paths that are subpaths of $T$, such that all nodes of $S$ (except for possibly one) are endpoints of exactly one path. In addition, each of those paths needs to be oriented.
     
 \end{enumerate}
To find those paths, first we find the tree $T$, then, in order to find such a set of paths, a simple $\tilde{O}(1)$ minor-aggregation round subtree sum procedure is performed on $T$. Second, a step of pairing edges completes computing the paths (using  \cref{lem: edge_pairing}). After which, the paths get oriented using \cref{lem: orienting_paired_edges}. We explain each step and its implementation.

 \begin{enumerate}[leftmargin=*]
\item  {\em Finding  $T$ and subtree sums.} In Proposition 6.5 of~\cite{RozhonGHZL22_shortestpaths} the spanning tree $T$ of $A$ is computed using $\tilde{O}(1)$ instances of the PA problem (\cref{def: PA}) and $\tilde{O}(1)$ $\CONGEST$ rounds. However, by Example 4.4 of~\cite{GHSYZ22}, $T$ is  computed by a $\tilde{O}(1)$ round minor-aggregation procedure that computes a (minimum) spanning tree.

After computing $T$, we need to apply the subtree sums procedure that defines the desired paths. In particular,  using this subtree sums computation,
nodes of $T$ mark some of their incident edges that need to be paired. Then,  pairing marked edges locally defines the required set of paths. 
For context, those edges indicate merging requests that define paths. 
I.e., the computation of~\cite{RozhonGHZL22_shortestpaths} is as follows.
For each node $g \in T$, let $x_g = 1$ if $g\in S$ and $x_g = 0$ otherwise. We want to compute the subtree sum $s_g$ for every $g\in T$. This defines paths as follows. 
If $s_g$ is odd, $g$ will be connected in a path with its parent (unless $g$ is the root). Each edge connecting such a node to its parent marks itself; indicating a request of $g$ to its parent $f$ to be paired with another child of $f$, or possibly the parent of $f$ if all other children of $f$ are paired to each other. I.e., $f$ simply needs to pair its marked edges. This subtree sum procedure can be easily implemented in $\tilde{O}(D)$ rounds on all annuli of level $\ell$ using a classic $\tilde{O}(1)$ rounds minor-aggregation procedure as a black-box (simulated by \cref{thm: MA_simulation}). E.g.  Lemma 16 of~\cite{GZ22}, or Lemma B.8 of~\cite{RozhonGHZL22_shortestpaths} (see also \cref{lem: subtreesums} later in the section).

\item {\em Pairing edges.}
As mentioned, a node $f$ needs to match its marked edges. However, in our case $f$ does not have a list of its children and did not learn the IDs of marked edges as this is not feasible in the minor-aggregation model (and requires a round complexity proportional to $f$'s degree).
However, we know how to deal with this.
I.e., this problem of pairing edges can be solved on all annuli $A$ of level $\ell$ simultaneously within $\tilde{O}(D)$ rounds. Simply, we apply \cref{lem: edge_pairing} on the graph $H$ after removing all its virtual nodes (i.e., on a subgraph of $A$).
The pairing we find is distributively stored as in \cref{def: edge_pairing} (roughly, each marked edge knows whether it is the first or the last in its pair, and knows an ID of the pair). This knowledge is enough to orient the formed paths by the edge-pairing as we explain next.

\item {\em Orienting the paths.} 
After pairing the edges and distributively store the edge pairing, we use 
\cref{lem: orienting_paired_edges} in order to consistently orient each of those paths, 
for all annuli of level $\ell$  simultaneously within $\tilde{O}(D)$ rounds.
 \end{enumerate}

 It follows from \cref{claim: virtual_edges} (Proposition 6.5 of~\cite{RozhonGHZL22_shortestpaths}) that, within additional $\tilde{O}(D)$ rounds, we take care of all virtual nodes of $H$ (and some other edges) leaving an Eulerian subgraph $H'$ of $A$ that needs to be handled. That graph is handled as explained at the beginning of the proof. This concludes \cref{lem: distributed_eulerian_orientation} for implementing the Eulerian orientation oracle $\mathcal{O}_{Euler}$ (\cref{def: eulerian_oracle}). \qedhere
\end{proof}

\subsubsection{Decomposing Annuli to Edge-disjoint Paths}
\label{sec: edge_pairing}
We now prove \cref{lem: edge_pairing}, stating that for all annuli $A$ of level $\ell$, simultaneously, in $\tilde{O}(D)$ rounds we can pair edges of any given subgraph $H$ of $A$ as in \cref{def: edge_pairing}. 

Since we can run only minor-aggregation algorithms on annuli,  we have a significant problem: A node cannot pair its incident edges because it does not know their IDs, and learning the IDs of neighbors in this model has round complexity that is proportional to the maximum degree of a node, which may be $\Omega(n)$.
However, this problem can be solved easily using $\hat{G}_\ell$:
Use aggregations on $\hat{p}(f)$ in order to enumerate edges of a node $f$. This defines a pairing of the edges incident to $f$. E.g. edges that are assigned numbers $2k-1,2k$ are paired, for $1 \leq k \leq \deg(f)/2$.

\begin{proof}[Proof of \cref{lem: edge_pairing}.]
    Let $A$ be an annulus of level $\ell$, and assume a given subgraph $H$ of $A$ s.t. each edge of $A$ knows whether it is in $H$.
    We show how to pair edges of $H$.  We focus on a single node $f\in A$, the parallelization to all nodes in $A$, and to all annuli of level $\ell$ is discussed at the end of the proof.

    Note that $f$ has two types of edges (in $H$ or not).
    In order to pair edges of $H$ as described, we assign numbers (IDs) to  edges of $H$ according to their order in $f$'s embedding, then each edge of an odd number is paired with the next edge (of even number). That is, the edges $2k-1,2k$ are together for $1 \leq k \leq \lfloor\deg_H(f)/2\rfloor$. If there is a residual edge then it is not paired.
    Hence, the problem is reduced to finding such an enumeration of edges. Such an enumeration of edges can be computed via $\hat{p}(f)$ in $\hat{G}_\ell$ as we explain next.

    We are able to compute this enumeration of $f$'s edges in $H$ by computing subtree sums on $\hat{p}(f)$: (1) Assign $\hat{p}(f)$ a root, which is an endpoint of $\hat{p}(f)$ in case it is a path, and otherwise (when it is a cycle) an arbitrary vertex. (2)
    If $p(f)$ is not an entire face of $G$, then $\hat{p}(f)$ is a path (tree). Otherwise, $p(f)$ is an entire face of $G$, $\hat{p}(f)$ is a cycle, and we want to compute a spanning path of it. To do so, the root eliminates one of its incident edges in $\hat{p}(f)$, defining a spanning path of $\hat{p}(f)$.
    Now we are ready to perform the subtree sum on a (spanning path of) $\hat{p}(f)$:
    Each vertex of $\hat{G}_\ell$ incident to an $\ER$ edge that corresponds to an $H$ edge, learns that the input to that edge is one, all other edges' input is zero.\footnote{We note that an edge of $H$ might have two corresponding $\ER$ edges in $\hat{p}(f)$. In that case, we want to assign only one them an input, and the other zero. Those edges are incident to two copies $u_j,u_k$ of the same node $u$, thus, $u$ and their other endpoints can locally coordinate and choose an arbitrary $\ER$ copy of the edge to assign it an input.}
    This subtree sums on edges is reduced easily to subtree sums on vertices (the input to a vertex is the input to the edge connecting it to its parent).  After computing the subtree sums, (3) The vertices of $\hat{p}(f)$ can locally learn the numbers of their incident edges (in case $\hat{p}(f)$ is a cycle, then the eliminated edge can be assigned a number by the root if it is in $H$).
    Thus, the pairing can be deduced locally and be distributively stored.
    I.e., the endpoints $u,v$ of an edge $e$ in $G$ know whether (the dual in $A$ of) $e$ is in $H$, and if so, $u,v$ know  for each $\ER$ copy of $e$ whether it is the first (assigned an odd number) in its pair or not (assigned an even number). I.e., for each endpoint of (the dual of) $e$ in the annulus $A$, $u,v$ know whether $e$ is the first in its pair. Since $u,v$ simulate the edge $e$ in the minor-aggregation simulation of $A$, then one can say that the edge in $A$ knows the information about itself that $u,v$ know. 
    
    Electing a leader and the subtree sums procedure can be performed via simple minor-aggregation procedures performed on $\hat{p}(f)$ as the input graph. E.g. electing a leader can be done in one round of computing the maximal ID of a vertex to be the leader, and subtree sums can be performed within $\tilde{O}(1)$ minor-aggregation rounds (for example, by using the algorithm from \cref{lem: subtreesums} in the next section).
    The reason we can simulate a round of the minor aggregation model on $\hat{p}(f)$ within near-optimal $\tilde{O}(D)$ rounds, is that planar graphs of diameter $O(D)$ allow such simulation (Theorem 17 of~\cite{GZ22};~\cite{RozhonGHZL22_shortestpaths, GHSYZ22}). By \cref{thm: hat_G}, $\hat{G}_\ell$ is such a graph where a $\CONGEST$ round on it is simulated by $O(1)$ rounds on $G$.
    Finally,  the parallelization to all nodes of level $\ell$ (all $\hat{p}(f)$ in $\hat{G}_\ell$) follows  for two reasons: (1) For all nodes $f\neq g$ of level $\ell$, the graphs $\hat{p}(f)$ and $\hat{p}(g)$ are connected, and are vertex-disjoint (\cref{thm: hat_G}). 
    (2) Running a minor-aggregation algorithm in multiple vertex-disjoint subgraphs ($\hat{p}(f)$, $\hat{p}(g)$) of a graph ($\hat{G}_\ell$) in parallel can be simulated within the same round complexity of running the algorithm on a single subgraph (Corollary 11 of~\cite{GZ22};~\cite{RozhonGHZL22_shortestpaths, GHSYZ22}).
\end{proof}

\subsubsection{Orienting Edge-disjoint Paths in Annuli}
\label{sec: paths_orientation}
    It is not hard to see that an edge pairing in an annulus corresponds to a collection of edge-disjoint paths (see the discussion following \cref{lem: centralized_euler_orientation}). 
    We aim to orient each of those paths consistently. I.e., to prove \cref{lem: orienting_paired_edges}.
    Trying to do that on annuli straightforwardly does not work, as each path might be very long and the round complexity would depend on the paths' diameter.  
    To tackle this problem, we turn to $\hat{G}_\ell$ and observe that the problem on annuli corresponds to a similar problem on $\hat{G}_\ell$. 
    In particular, we take a primal point of view on edge-pairings in annuli. Then, we derive several properties related to $\hat{G}_\ell$ and paths given by the edge-pairing. After which, we employ those structural properties to obtain a new graph $\hat{G}_\ell'$ whose structure is related to the paths induced by the edge-pairing.
    In particular, paths defined by the edge-pairing  correspond to vertex-disjoint subgraphs of this new graph. By performing computations on those subgraphs, we simulate an orientation procedure, producing the promised orientation of the paths.

\medskip
\noindent
{\bf Paths defined by edge pairings and $\hat{G}_\ell$.}
Note that when \cref{def: edge_pairing} is applied to a node $f$ in a subgraph $H$ of the annulus $A$ ($f\in A$), then the definition discusses $f$'s incident edges in $H$. See also \cref{fig: edge_pairing}.
\begin{figure}[htb]
\centering
\begin{subfigure}[t]{0.2\textwidth}
  \centering
   \includegraphics[width=\linewidth]{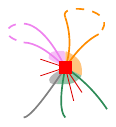}
\end{subfigure}
\hspace{2.7cm}
\begin{subfigure}[t]{0.2\textwidth}
\centering
\includegraphics[width=\linewidth]{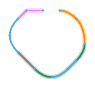}
\end{subfigure}
\caption{Left: A node $f$ and an edge pairing of it in a subgraph $H$ of $A$. Edges of $A\setminus H$ incident to $f$ are {\color{red}red}.
Paired edges of $H$ are demonstrated by colors: {\color{YellowOrange}orange}, {\color{VioletRed}pink}, {\color{OliveGreen}green}. The {\color{gray}gray} edge is the only non-paired edge, because $f$'s degree in $H$ is odd. 
Right: The corresponding face-part $p(f)$ and its partition into internally-disjoint subpaths $\hat{p}(f)_i$, which is demonstrated by shadow colors. Note the mapping between subpaths $\hat{p}(f)_i$ and paired edges of $f$ in $H$.
Note that a pair of edges in $H$ (for example the {\color{OliveGreen}green} pair) might map to a subpath of $\hat{p}(f)$ that contains edges whose dual is not in $H$. This keeps the paths $\hat{p}(f)_i$ connected. In addition, some edges of $\hat{p}(f)$ whose dual is not in $H$ do not participate in any path $\hat{p}(f)_i$; For example, the edges of $A\setminus H$ in between the {\color{gray}gray} and {\color{VioletRed}pink} edges (clockwise).
\label{fig: edge_pairing}
}
\end{figure} 
   Taking a primal point of view on an edge pairing, we note that each two paired edges of $f$ are on a unique subpath of $\hat{p}(f)$.
   This is because two consecutive edges of $f$ are also consecutive in $\hat{p}(f)$.
    Moreover, those subpaths overlap only with their endpoints.
    See \cref{fig: edge_pairing}. Therefore: 
    
    \begin{lemma}
    \label{lem: pairing_partition_pf}
    Let $H$ be a subgraph of an annulus $A$, and assume a given edge pairing of $H$. For a node $f\in H$, any two paired edges correspond to a subpath of $\hat{p}(f)$, denoted $\hat{p}(f)_i$ for the $i$'th pair. The paths $\hat{p}(f)_i,\hat{p}(f)_j$ are internally disjoint for any $i\neq j$.
    Since there are edges of $A$ not in $H$, $\hat{p}(f)$ minus the paths $\hat{p}(f)_i$ may contain some other paths corresponding to those edges. 
    \end{lemma}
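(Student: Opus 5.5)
We would derive \cref{lem: pairing_partition_pf} directly from the structure of $\hat{p}(f)$ given by \cref{lemma: faceparts_hat_G} and its proof, \cref{lem: hatpf_conn}. The central fact we rely on is the ``moreover'' part of \cref{lemma: faceparts_hat_G}: a consecutive subset of $f$'s edges (in $f$'s local embedding) maps to a unique subpath of $\hat{p}(f)$. In the proof of \cref{lem: hatpf_conn}, the traversal of $p(f)$ visits the edges of $f$ in their clockwise order, and each visit uses an $\ER$ copy of the corresponding primal edge. Two edges $e,e'$ that are consecutive around $f$ share a primal vertex $u$ (Claim~\ref{claim: shared_endpoint_dual_edges}), and their copies share the copy $u_k$ of $u$ in $\hat{p}(f)$. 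Hence the copies of the edges of $f$ appear along $\hat{p}(f)$ in exactly the clockwise order of $f$.

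The first step is to fix a pair $(e_i,e_{i+1})$ of the edge pairing of $f$ in $H$. By \cref{def: edge_pairing}, $e_i$ and $e_{i+1}$ are consecutive among $f$'s $H$-edges, though not necessarily among all of $f$'s edges in $A$. Let $S_i$ be the set of all edges of $f$ (in $A$) from $e_i$ to $e_{i+1}$ in $f$'s order. This is a consecutive subset, so by \cref{lemma: faceparts_hat_G} it maps to a subpath of $\hat{p}(f)$ that starts at the copy of $e_i$ and ends at the copy of $e_{i+1}$. We define $\hat{p}(f)_i$ to be this subpath. It may contain copies of edges not in $H$, namely those strictly between $e_i$ and $e_{i+1}$, which matches the remark in the statement.

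The second step is internal disjointness. For $i\neq j$, the sets $S_i$ and $S_j$ are disjoint intervals in $f$'s order. The pairs are formed consecutively as $2k-1,2k$, so the $H$-edges of different pairs do not interleave. Since the map from $f$'s edge order to positions along $\hat{p}(f)$ is order-preserving, $\hat{p}(f)_i$ and $\hat{p}(f)_j$ occupy disjoint stretches of $\hat{p}(f)$. The only thing they can share is a connecting vertex, and only in the degenerate case where the last edge of one interval and the first edge of the next are adjacent around $f$. Otherwise a gap of non-$H$ edges separates them, and these leftover stretches are the ``other paths'' mentioned in the statement.

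The main obstacle is the case where $\hat{p}(f)$ is a cycle ($p(f)$ is an entire face of $G$), together with the subtlety that a primal edge may be traversed twice, for example a dual self-loop. In the cycle case we would cut the cycle at the root as in the proof of \cref{lem: edge_pairing}, choosing the root so that its incident cut edge lies outside every $S_i$ (or treating the wrap-around pair separately). Then the cycle becomes a path and the order-preserving argument applies. For doubly traversed edges, each traversal corresponds to a distinct position in $f$'s order and to a distinct $\ER$ copy, so order preservation still holds position by position. We would make this precise by indexing by positions in $f$'s order rather than by primal edges.
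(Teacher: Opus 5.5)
Your proposal is correct and follows essentially the paper's argument. Each pair spans an interval of $f$'s edges in $A$, including the non-$H$ edges between the two paired edges, and by the ``moreover'' part of \cref{lemma: faceparts_hat_G} this interval maps to a subpath of $\hat{p}(f)$; disjoint intervals then give subpaths that meet at most in endpoints, and the leftover non-$H$ stretches are the other paths. Your extra treatment of the cycle case and of doubly traversed edges goes beyond what the paper spells out without changing the route. One small correction there: when the intervals $S_i$ cover the whole cycle, no cut edge lies outside every $S_i$, so you need your wrap-around alternative.
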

    
    \begin{proof}
        A pair of consecutive edges of $f$ in $H$ corresponds to a sequence of edges of $f$ in $A$.  This sequence contains the two paired edges in $H$ and all $A$ edges between them in $f$'s embedding. In addition, any two disjoint sequences of edges of $f$ corresponds to edge-disjoint subpaths of $\hat{p}(f)$, by \cref{lemma: faceparts_hat_G}.
        Edges of $A\setminus H$ that are not in between two paired edges (in the local ordering of $f$) correspond to their own subpaths of $\hat{p}(f)$.
    \end{proof}
    
    Using this lemma, we can make the paths $\hat{p}(f)_i$ vertex disjoint by simply splitting their endpoints in $\hat{G}_\ell$. This allows us to perform computations on components of the resulting graph in order to simulate orienting the paths of $H$ defined by the edge pairing.

    Concretely, let $Q=q_1,q_2,\ldots$ be some path in an annulus, defined by the given pairing of edges. I.e., for each node $q_j$, its incident edges $(q_{j-1},q_j),(q_j,q_{j+1})$ are paired and participate in $Q$ (one of those edges might be empty if $q_j$ is an endpoint of $Q$ when $Q$ is a path and not a cycle). In addition, by \cref{thm: hat_G} each edge of an annulus (in particular of $Q$) has a corresponding $\EC$ edge in $\hat{G}_\ell$.  Moreover, by \cref{lem: pairing_partition_pf} this pair of edges of $q_j$ corresponds to a subpath $\hat{p}(q_j)_{i_j}$ of $\hat{p}(q_j)$ ($\ER$ edges). 
    Then (see also \cref{fig: edge_pairing_paths}):
    
    \begin{claim}
    \label{claim: edge_pairing_paths_hatG}
    The path $Q$ corresponds to a subgraph $\hat{Q}$ of $\hat{G}_\ell[\ER\cup \EC]$, defined as follows: the subpath $\hat{p}(q_1)_{i_1}$ of $\hat{p}(q_1)$,     
 the $\EC$ edge mapping to edge $(q_1,q_2)$, 
    the subpath $\hat{p}(q_2)_{i_2}$ of $\hat{p}(q_2)$, the $\EC$ edge mapping  to $(q_2,q_3)$, and so on.
    \end{claim}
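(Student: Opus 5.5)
To prove Claim~\ref{claim: edge_pairing_paths_hatG}, I would show two things: that $\hat{Q}$ as described is a well-defined connected subgraph of $\hat{G}_\ell[\ER\cup\EC]$ (a walk/path mirroring $Q$), and that distinct pairing-induced paths $Q\neq Q'$ yield edge-disjoint subgraphs $\hat{Q},\hat{Q}'$, sharing at most endpoints of the subpaths $\hat{p}(f)_i$. The argument is local, piece by piece along $Q$, using \cref{lem: pairing_partition_pf}, \cref{lemma: faceparts_hat_G} and \cref{lemma: dual_edges_hat_G}.

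\textbf{Step 1 (each piece exists).} For each node $q_j$ of $Q$, the two edges $(q_{j-1},q_j),(q_j,q_{j+1})$ are paired at $q_j$. By \cref{lem: pairing_partition_pf} they determine a subpath $\hat{p}(q_j)_{i_j}$ of $\hat{p}(q_j)$, made of $\ER$ edges. This subpath begins at the $\hat{G}_\ell$-copy associated with the first paired edge and ends at the copy associated with the second. Here I use the second part of \cref{lemma: faceparts_hat_G}: a consecutive set of $q_j$'s edges maps to a unique subpath. By \cref{lemma: dual_edges_hat_G}, each annulus edge $(q_j,q_{j+1})$ has a unique $\EC$ edge $\hat{e}_j$ joining $\hat{p}(q_j)$ and $\hat{p}(q_{j+1})$.

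\textbf{Step 2 (gluing), the main obstacle.} I need $\hat{e}_j$ to attach exactly at an endpoint of $\hat{p}(q_j)_{i_j}$ and at an endpoint of $\hat{p}(q_{j+1})_{i_{j+1}}$; otherwise the pieces would not concatenate. I would check this from the definition of $\EC$. For a primal edge $e=(u,v)$ that is the $i$-th edge at $v$, the $\EC$ edge is $(v_{i-1},v_i)$ (or $(u_{j-1},u_j)$). Its endpoints $v_{i-1},v_i$ are exactly the copies of $v$ incident to the $\ER$ copies of $e$ lying in the two face-parts on either side of $e$. So in $\hat{p}(q_j)$, the endpoint of $\hat{e}_j$ is a vertex incident to the $\ER$ copy of $e$, and that copy is the extreme edge of $\hat{p}(q_j)_{i_j}$ corresponding to $(q_j,q_{j+1})$. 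Hence $\hat{e}_j$ touches $\hat{p}(q_j)_{i_j}$ at one of the two endpoints of that copy.

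This requires care in three cases, each handled like \cref{lemma: dual_edges_hat_G}:
\begin{itemize}[label={}]
\item incised edges, where the $\EC$ edge is $(u_{j-1},u_j)$ and the $L/R$ copies are the path ends;
\item endpoint-case edges;
\item the arbitrary choice between the $u$-side and $v$-side $\EC$ edge.
\end{itemize}
If the attachment lands at the interior vertex of that extreme $\ER$ edge rather than at the path's end, I would simply define the subpath $\hat{p}(q_j)_{i_j}$ to include that $\ER$ edge. The two edges of a pair are then the first and last $\ER$ copies, so connectivity still holds.

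\textbf{Step 3 (disjointness and cycles).} Distinct pairs at the same $f$ give internally disjoint subpaths by \cref{lem: pairing_partition_pf}. Different nodes give vertex-disjoint $\hat{p}(f)$ by \cref{lemma: faceparts_hat_G}. $\EC$ edges are in bijection with annulus edges, and each annulus edge lies on exactly one pairing-induced path. Therefore $\hat{Q}$ and $\hat{Q}'$ can meet only at shared endpoints of subpaths of a common $\hat{p}(f)$.

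When $Q$ is a cycle, the same gluing closes up at $q_1$. When $q_1$ is an endpoint of $Q$ (an unpaired edge), the piece $\hat{p}(q_1)_{i_1}$ degenerates to the copy incident to that edge. The resulting correspondence is what later justifies splitting the shared endpoints to obtain vertex-disjoint components in $\hat{G}_\ell'$.
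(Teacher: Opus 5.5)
Your proposal is correct and follows essentially the same route as the paper. As in the paper, \cref{lem: pairing_partition_pf} supplies the $\ER$ subpath $\hat{p}(q_j)_{i_j}$ for each pair, and the pieces glue because, by the definition of $\hat{G}_\ell$, the $\EC$ edge of $e$ is incident to an endpoint of an $\ER$ copy of $e$ lying on that subpath; your case split (incised, endpoint-case, choice of side) only spells out what the paper leaves implicit. Your Step~3 on disjointness of $\hat{Q},\hat{Q}'$ is not needed here, since it is the content of the paper's next claim, \cref{claim: edge_pairing_paths_edgedisjoint}.
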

    \begin{proof}
        
   We rephrase the claim:  For $q_j\in Q$, the $\EC$ edges in $\hat{G}_\ell$ corresponding to its (up to) two edges $(q_{j-1}, q_j),(q_j,q_{j+1})$ in $Q$ are in fact connected. In particular, the subpath $\hat{p}(q_j)_{i_j}$ (consisting of $\ER$ edges) of $\hat{p}(q_j)$ connects them. The vertices of this subpath are not incident to any $\EC$ edge that corresponds to (an edge of) $H$ except for $(q_{j-1}, q_j),(q_j,q_{j+1})$.
    Next, we prove that this is correct.
    The edges $(q_{j-1}, q_j)$, $(q_j,q_{j+1})$ are paired by $q_j$, and by \cref{lem: pairing_partition_pf} two paired edges correspond to a subpath $\hat{p}(q_j)_{i_j}$ of $\hat{p}(q_j)$ whose ($\ER$) edges correspond to $q_j$'s edges in $A$. The local ordering of those edges is in between the edges $(q_{j-1}, q_j),(q_j,q_{j+1})$. 
    Moreover, by definition of $\hat{G}_\ell$, an $\EC$ edge that corresponds to an edge $e$ is incident to a an endpoint of an $\ER$ copy of $e$.
    Hence, the only edges of $\hat{p}(q_j)_{i_j}$ that correspond to $H$ edges, are the edges that correspond to $(q_{j-1}, q_j),(q_j,q_{j+1})$. Finally, by definition of $\hat{G}_\ell$, the path $\hat{p}(q_j)_{i_j}$ consists only of $\ER$ edges.
    \end{proof}

\begin{figure}[htb]
\centering
\begin{subfigure}[t]{0.3\textwidth}
  \centering
   \includegraphics[width=\linewidth]{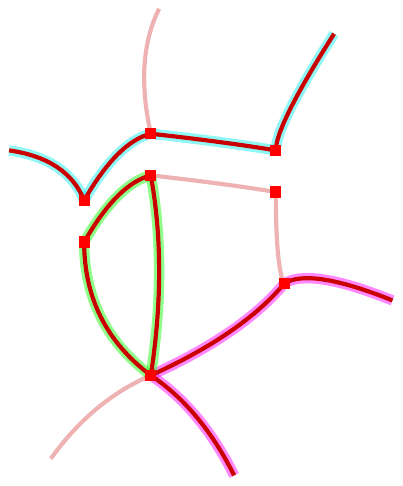}
\end{subfigure}
\hspace{1.7cm}
\begin{subfigure}[t]{0.33\textwidth}
\centering
\includegraphics[width=\linewidth]{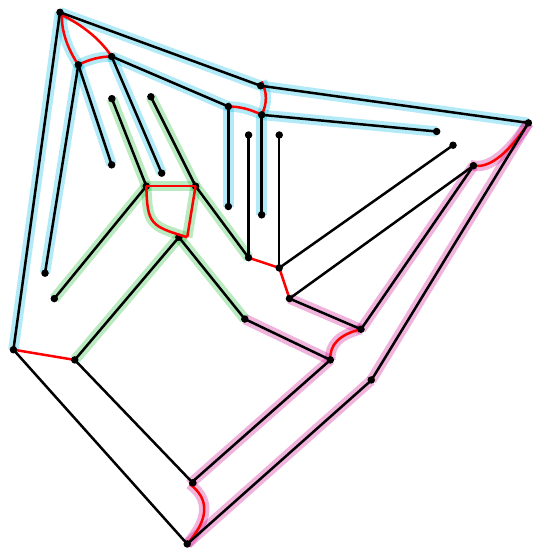}
\end{subfigure}
\caption{Left: an annulus and a subgraph whose edges are paired  into cycles $Q$ (highlighted). For simplicity, a node of the graph is omitted. Its incident edges are all edges missing an endpoint.
Right: the graph $\hat{G}_\ell[\ER\cup \EC]$. $\ER$ edges are {\bf black}, $\EC$ edges are {\color{red}red}, and the highlighted edges  constitute subgraphs $\hat{Q}$ that corresponds to cycles $Q$.
\label{fig: edge_pairing_paths}}
\end{figure}

    From the above claim, we can deduce:    \begin{claim}\label{claim: edge_pairing_paths_edgedisjoint}
   For any two paths $Q\neq Q'$ defined by the given edge pairing, $\hat{Q}$ and $\hat{Q'}$ are edge-disjoint. 
    \end{claim}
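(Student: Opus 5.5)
The plan is to use the explicit description of $\hat{Q}$ from Claim~\ref{claim: edge_pairing_paths_hatG}. Every edge of $\hat{Q}$ is either an $\EC$ edge, corresponding to an edge of $Q$, or an $\ER$ edge lying on one of the subpaths $\hat{p}(q_j)_{i_j}$. I will show separately that $\hat{Q}$ and $\hat{Q'}$ share no $\EC$ edge and share no $\ER$ edge.

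\emph{$\EC$ edges.} The paths $Q\neq Q'$ induced by an edge pairing are edge-disjoint in the annulus $A$. This is the standard property of the splitting picture after Lemma~\ref{lem: centralized_euler_orientation}: every edge of $H$ lies in exactly one induced path or cycle. By Lemma~\ref{lemma: dual_edges_hat_G}, the map from level-$\ell$ annulus edges to $\EC$ edges is an injection. Hence distinct edges of $Q$ and $Q'$ map to distinct $\EC$ edges, and no $\EC$ edge lies in both $\hat{Q}$ and $\hat{Q'}$.

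\emph{$\ER$ edges.} An $\ER$ edge of $\hat{Q}$ lies on $\hat{p}(q_j)_{i_j}$ for some node $q_j\in Q$. Likewise, an $\ER$ edge of $\hat{Q'}$ lies on $\hat{p}(q'_k)_{i'_k}$ for some $q'_k\in Q'$.
\begin{itemize}
\item If $q_j\neq q'_k$ as node-parts, Lemma~\ref{lemma: faceparts_hat_G} says $\hat{p}(q_j)$ and $\hat{p}(q'_k)$ are vertex-disjoint, so they are certainly edge-disjoint.
\item If $q_j=q'_k=f$, the two subpaths come from two different pairs of $f$'s edges. The pairs differ because each pair determines the unique induced path through it, and $Q\neq Q'$. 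Lemma~\ref{lem: pairing_partition_pf} says $\hat{p}(f)_i$ and $\hat{p}(f)_{i'}$ are internally disjoint for $i\neq i'$.
\end{itemize}

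The main obstacle is turning "internally disjoint" into "edge-disjoint." Two subpaths of the same path $\hat{p}(f)$ that overlap only at endpoints cannot share an edge. A shared edge would force them to share both of its endpoints, and at least one of those would be an interior vertex of one of the subpaths, which contradicts internal disjointness.

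To use this, I first confirm from the proof of Lemma~\ref{lem: pairing_partition_pf} that the two subpaths correspond to disjoint consecutive sequences of $f$'s edges in $A$. Lemma~\ref{lemma: faceparts_hat_G} then says disjoint sequences map to edge-disjoint subpaths of $\hat{p}(f)$, which gives the required edge-disjointness directly.

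One case needs separate care: an edge of $Q$ that appears twice in $\hat{p}(f)$, as in the footnote in the proof of Lemma~\ref{lem: edge_pairing}. Its two $\ER$ copies are distinct edges, and both fall inside the single sequence assigned to its own pair. So this case does not break the argument.

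Combining the $\EC$ and $\ER$ cases shows that $\hat{Q}$ and $\hat{Q'}$ are edge-disjoint.
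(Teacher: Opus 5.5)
Your proposal is correct and follows the paper's proof. The paper uses the same three steps: the injection of Lemma~\ref{lemma: dual_edges_hat_G} rules out a shared $\EC$ edge; Lemma~\ref{lemma: faceparts_hat_G} makes $\hat{p}(\cdot)$ vertex-disjoint for distinct node-parts; and at a shared node, distinct pairs give edge-disjoint subpaths of $\hat{p}(f)$ by Lemma~\ref{lem: pairing_partition_pf}. Your extra remarks (turning internal disjointness into edge-disjointness, and the two $\ER$ copies of a self-loop) only spell out steps the paper leaves implicit.
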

    \begin{proof}
    Since $Q,Q'$ are edge-disjoint, then $\hat{Q}$ and $\hat{Q'}$ do not share any $\EC$ edge (each $H$ edge has a corresponding unique $\EC$ edge, by \cref{lemma: dual_edges_hat_G}).
    Thus, the only way for $\hat{Q}$ and $\hat{Q'}$ to not be edge-disjoint is for $Q,Q'$ to intersect in a node in $H$. That is because, if all their nodes are different, their corresponding face-parts in $\hat{G}_\ell$ are vertex disjoint by \cref{lemma: faceparts_hat_G}. I.e., the subpaths of those face-parts with some $\EC$ edges constitute $\hat{Q},\hat{Q'}$, and since none of them are shared, then $\hat{Q},\hat{Q'}$ are vertex-disjoint. 
    Therefore, assume $Q,Q'$ intersect in nodes, and let $q$ be such an arbitrary node. Note, the pair of edges in $Q$ and the pair of edges in $Q'$ that are incident to $q$ are disjoint (by definition of edge pairing, \cref{def: edge_pairing}). Then, by \cref{lem: pairing_partition_pf}, the subpath of $\hat{p}(q)$ corresponding to the two edges of $q$ in $Q$ and the subpath of $\hat{p}(q)$ corresponding to the two edges of $q$ in $Q'$ are edge disjoint, and if they intersect it is only in their endpoints.
    \end{proof}

    Now  we to use the above claims in order to perform computations on the subgraphs $\hat{Q}$ of $\hat{G}_\ell$, which will simulate computations on the paths $Q$ defined by the given edge-pairing in (subgraphs $H$ of) annuli of level $\ell$.
    We are able to perform computations in parallel on the subgraphs $\hat{Q}$ despite them being (only) edge-disjoint in $\hat{G}_\ell$. 
    However, we show next that by a small change to $\hat{G}_\ell$, resulting in a graph denoted $\hat{G}_\ell'$, we make $\hat{Q},\hat{Q}'$ vertex-disjoint. Which is then used to consistently orient the paths $Q$.

    \medskip
    \noindent
    {\bf The graph  $\hat{G}_\ell'$.} 
    We focus on a single annulus $A$ and a single node $q\in H$, and explain the change we do on $\hat{G}_\ell$.
    We split $\hat{p}(q)$ so that all $\hat{p}(q)_i$ become vertex-disjoint.
    To do so, for all $i$, a shared endpoint to $\hat{p}(q)_i$ and $\hat{p}(q)_{i+1}$ in $\hat{G}_\ell$ gets expanded into two vertices, one that is defined to be in $\hat{p}(q)_i$ and the other to be in $\hat{p}(q)_{i+1}$.

\begin{figure}[htb]
\centering
\begin{subfigure}[t]{0.35\textwidth}
  \centering
   \includegraphics[width=\linewidth]{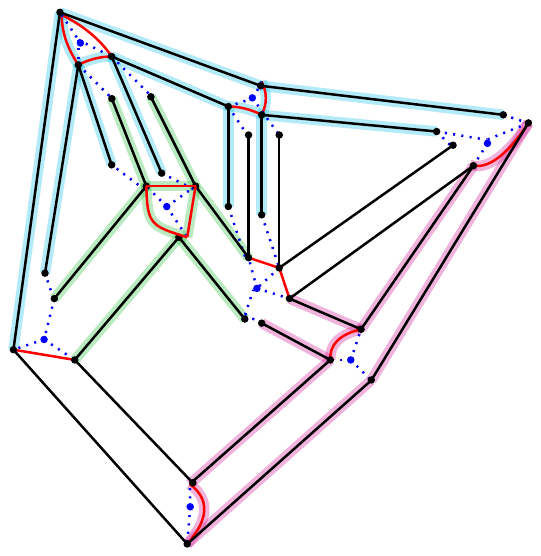}
   \caption{}
\end{subfigure}
\hspace{.4cm}
\begin{subfigure}[t]{0.35\textwidth}
\centering
\includegraphics[width=\linewidth]{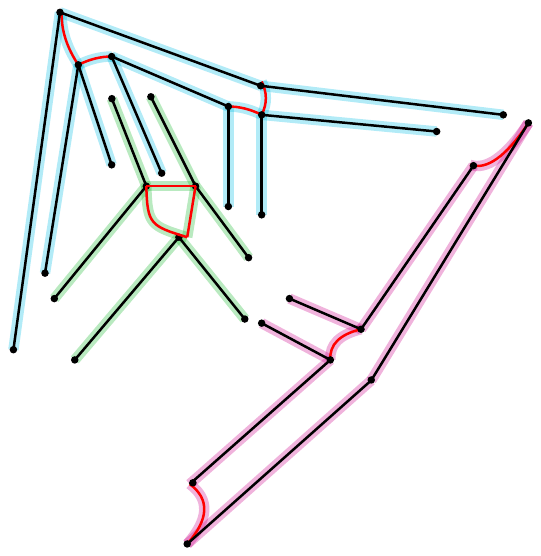}
   \caption{}
\end{subfigure}
\hspace{.4cm}
\begin{subfigure}[t]{0.2\textwidth}
\centering
\includegraphics[width=\linewidth]{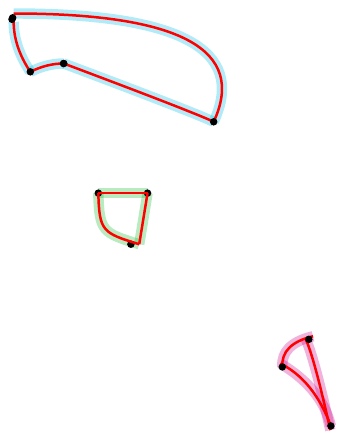}
   \caption{}
\end{subfigure}
\caption{(a) The graph $\hat{G}_\ell'$.
(b) The graph $\hat{G}_\ell'$ with only components $\hat{Q}$ that correspond to paths $Q$ obtained from the edge pairing. Note that they are vertex-disjoint.
(c) The cycles $\mathcal{Q}$
after contracting $\ER$ edges of subgraphs $\hat{Q}$. Note, each cycle $\mathcal{Q}$ corresponds to the $\EC$ edges of a certain component $\hat{Q}$, and to a cycle $Q$ (see \cref{fig: edge_pairing})
\label{fig: hatG_prime}
}
\end{figure}

    We describe the change on $\hat{G}_\ell$ formally (recall the definition of $\hat{G}_\ell$ in \cref{par: def_hat_G}). If $\hat{p}(q)_i, \hat{p}(q)_{i+1}$ do not share a vertex, then they already are vertex-disjoint and are taken care of. So in the rest of the proof we assume they share an endpoint (that is the only way they can intersect with vertices by \cref{lem: pairing_partition_pf}). Such a vertex connects the last edge of $\hat{p}(q)_i$ with the first edge of $\hat{p}(q)_{i+1}$. Those edges are $\ER$ edges ($\hat{p}(q)$ consists of $\ER$ edges by definition of $\hat{G}_\ell$).
    Concretely, the $\ER$ copy of $last(q)_i$ intersects the $\ER$ copy of $first(q)_{i+1}$ with a vertex;  where, $last(q)_i$ is the second edge in the $i$'th paired edges of $q$ in $H$. Analogously, $first(q)_{i+1}$ is the first edge of the $i+1$'th paired edges of $q$ in $H$.
    The vertex that is shared to $last(q)_i$ and $first(q)_{i+1}$ is a vertex of the type  $v_k\in D(v)$ for some $v\in G$, as those are the only vertices that are incident to two $\ER$ edges (by \cref{par: def_hat_G}). Then, expanding $v_k$ to two vertices, one for each of $\hat{p}(q)_i,\hat{p}(q)_{i+1}$,  makes them vertex disjoint. See \cref{fig: hatG_prime} for $\hat{G}_\ell'$, and \cref{fig: edge_pairing} for $\hat{G}_\ell$ and the paths $Q$.
    
    Concretely, aside from the $\ER$ edges corresponding to  $last(q)_i,first(q)_{i+1}$, $v_k$ is also incident to $\ES$ edges, and to up to two $\EC$ edges. One of the $\EC$ edges simulates $last(q)_i$ and the other simulates $first(q)_{i+1}$. Note, either $\EC$ edge may or may not be incident to $v_k$ (by \cref{par: def_hat_G}). In particular,  an $\EC$ edge is incident to a copy of an arbitrary endpoint of its corresponding primal edge.
    E.g. if $last(q)_i$ maps to the primal edge $(u,v)$, then the $\EC$ copy of $last(q)_i$ may be incident to some copy $u_j$ of $u$. 
    Formally, the simple change we perform to make $\hat{p}(q)_i, \hat{p}(q)_{i+1}$ vertex disjoint is on $v_k$: Expand the vertex $v_k$ into two vertices $v_k^0,v_k^1$, where $v_k^0$ is incident to the $\ER,\EC$ edges of $v_k$ that map to $last(q)_i$, and similarly, $v_k^1$ is incident to $\EC,\ER$ edges of $v_k$ that map to $first(q)_{i+1}$.
    $\ES$ edges that were incident to $v_k$ get incident to an arbitrary vertex of the two.
    Finally, we connect the newly created vertices with an $\ES$ edge $(v_k^0,v_k^1)$ to keep the graph connected with small diameter (contracting this edge reverses the change that we just did on $v_k$). 
    
    Note, changes related to $\EC,\ES$ edges incident to {\em all} copies $v_k\in D(v)$ of $v\in G$ can be simulated locally by $v$, because those types of edges connect only copies of $v$ itself in $\hat{G}_\ell$. The change on $\ER$ edges can be coordinated in one round of communication on $G$ with their other (primal) endpoints.
    That is because $v$ is a neighbor of the other endpoints in $G$ of (edges corresponding to) $\ER$ edges incident to $v$'s copies (by definition of $\hat{G}_\ell$, \cref{par: def_hat_G}).
    We do this change on $\hat{G}_\ell$ for all $\hat{p}(q)_i, \hat{p}(q)_{i+1}$, and for all nodes $q$ of level $\ell$ in one round of communication on $G$. Note, the only knowledge required is, for $v_k$ to know that it is a shared endpoint of $\hat{p}(q)_i, \hat{p}(q)_{i+1}$, which can be easily deduced by the numbers assigned to its incident $\ER$ edges that are given as input. 
    That is, $v_k$ checks if one of those $\ER$ edges is assigned an odd number and the other is assigned its successor (then those edges belong to different paths $\hat{p}(q)_i, \hat{p}(q)_{i+1}$).

    \medskip
    \noindent
    {\bf Properties of $\hat{G}_\ell'$.}
    \begin{claim}
        \label{claim: simulation_hatG_prime_ell}
            $\hat{G}_\ell'$ is a planar graph of diameter $O(D)$, and a $\CONGEST$ round can be simulated in it within $O(1)$ rounds on $G$.
    \end{claim}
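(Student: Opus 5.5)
We obtain $\hat{G}_\ell'$ from $\hat{G}_\ell$ by a sequence of vertex expansions. Each expansion splits some $v_k\in D(v)$ into $v_k^0,v_k^1$ joined by the new $\ES$ edge $(v_k^0,v_k^1)$. The plan is to prove the three properties (planarity, diameter, simulation) by treating each expansion as the inverse of an edge contraction. We then reuse the corresponding statements for $\hat{G}_\ell$: planarity from the planarity lemma of \cref{sec:Ghatproperties}, diameter from \cref{lemma: diameter_hat_G}, and simulation from \cref{lemma: congest_hat_G} together with \cref{lemma: construct_hat_G}.

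\textbf{Diameter.} Contracting all new edges $(v_k^0,v_k^1)$ gives back $\hat{G}_\ell$ exactly. Hence any path of length $d$ in $\hat{G}_\ell$ lifts to a path in $\hat{G}_\ell'$ of length at most $2d+1$, since each visited vertex is expanded at most once. This gives diameter at most $2\cdot 4D+1=O(D)$. Alternatively, one can redo the argument of \cref{lemma: diameter_hat_G} directly: every primal edge $(u,v)$ still becomes a path between the star centers $u$ and $v$ of constant length, now at most $6$. This works because every copy of $v$ stays within constant distance of the star center $v$ through $\ES$ edges, including the new edge $(v_k^0,v_k^1)$.

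\textbf{Planarity.} This is the step needing care, since not every vertex split preserves planarity. The approach is to check that each split is \emph{rotation-consistent}. In the embedding of $\hat{G}_\ell$ inherited from the four-step transformation, the edges around $v_k$ appear in the cyclic order: the $\ER$ edge of $last(q)_i$ with its possible $\EC$ edge on one side, then the $\ES$ edges (to the star center $v$, and to $v_k^L,v_k^R$ if present), then the $\ER$/$\EC$ edges of $first(q)_{i+1}$. This follows from the local picture around $v$: the $\EC$ edges $(v_{k-1},v_k)$ and $(v_k,v_{k+1})$ lie on the cycle around the star, on the sides facing the primal edges $e_k$ and $e_{k+1}$ respectively.

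The edges assigned to $v_k^0$ therefore form a contiguous interval of the rotation at $v_k$, and so do those assigned to $v_k^1$. Here we use the freedom in the construction that $\ES$ edges may be attached to either copy: we give each $\ES$ edge to whichever side keeps the intervals contiguous. Splitting a vertex along a partition of its rotation into two contiguous intervals, and joining the two halves by a new edge, is the inverse of contracting an edge in a plane graph, so it preserves planarity. Expansions at distinct vertices are independent, so applying them one after another keeps the graph planar. We expect this rotation bookkeeping at $v_k$, especially for vertices that carry $L/R$ copies or lie next to an endpoint-case edge, to be the main obstacle. The first thing to try is to read the rotation directly off the transformation figures and steps (1)–(4).

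\textbf{Simulation.} All new vertices are copies of a primal vertex $v$, and all new edges are $\ES$ edges between copies of $v$, so $v$ simulates them locally. The $\ER$ edges are only reassigned between endpoints and not duplicated, so every primal edge still has at most four $\ER$ copies. Therefore the argument of \cref{lemma: congest_hat_G} applies unchanged: one $\CONGEST$ round on $\hat{G}_\ell'$ costs $O(1)$ rounds on $G$. As noted in the text, the construction of $\hat{G}_\ell'$ itself takes one additional round.
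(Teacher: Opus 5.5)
Your proposal is correct and follows the paper's proof. Diameter is handled by undoing the expansions (the paper bounds it by twice that of $\hat{G}_\ell$), and simulation works because the $\ER$ edges are unchanged while $\ES$ and $\EC$ edges are simulated locally. For planarity the paper only asserts that the expansion obviously preserves it, so your rotation-consistent split (inverse of a contraction) supplies the missing justification. It goes through provided the $\ES$ edge to $v_k^L$ (resp.\ $v_k^R$) is given to the half that keeps the $\ER$ copy of $e_k$ (resp.\ $e_{k+1}$). This matters because in the four-step embedding that $\ES$ edge may sit between the $\ER$ and $\EC$ edges of $e_k$, rather than among the other $\ES$ edges as your stated cyclic order suggests.
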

    \begin{proof}     
    Note first that adding the new $\ES$ edges $(v_k^0,v_k^1)$ keeps $\hat{G}_\ell'$ connected of twice the diameter of $\hat{G}_\ell$ at most, which remains $O(D)$ (see \cref{lemma: diameter_hat_G}). In   addition, since the communication on $\ES,\EC$ edges can be simulated locally, then the only edges that need to simulated are $\ER$ edges whose number did not change, hence, like $\hat{G}_\ell$, an edge of $G$ simulates at most four $\ER$ edges (see \cref{lemma: congest_hat_G}), and $\CONGEST$ round on $\hat{G}_\ell'$ is simulated within $O(1)$ rounds on $G$. Obviously, expanding a vertex as we do does not violate planarity.
    \end{proof}

    \begin{remark}
    \label{remark: correspondence_hatg_prime}
     Note that there is a bijection between edges of $\hat{G}_\ell[\ER\cup\EC]$ and edges of  $\hat{G}_\ell'[\ER\cup\EC]$. Thus, we abuse notation and refer to a subgraph of $\hat{G}_\ell'[\ER\cup\EC]$ by its corresponding subgraph of $\hat{G}_\ell[\ER\cup\EC]$. E.g. we may refer to $\hat{Q}$, or $\hat{p}(q)_i$ in  $\hat{G}_\ell'[\ER\cup\EC]$. 
\end{remark}

    \begin{claim}
    \label{claim: edge_pairing_paths_vertexdisjoint}
    Let $Q$ be any path defined by the given edge pairing in any subgraph $H$ of any level $\ell$ annulus, then:
    \begin{enumerate}
    \item The paths $\hat{p}(q)_i, \hat{p}(q')_{k}$ (as defined by \cref{lem: pairing_partition_pf}) are vertex disjoint in $\hat{G}_\ell'[\ER\cup \EC]$ for any two nodes $q, q'\in H$, (except when $q=q'$ and $i=k$).
    In particular,  any two paths $Q'\neq Q$ defined by the given edge pairing,  the graphs $\hat{Q},\hat{Q'}$ in $\hat{G}_\ell'[\ER\cup \EC]$ are vertex-disjoint.
    
        \item 
    The subgraph $\hat{Q}$ of $\hat{G}_\ell'$ is connected, and has the same structure as in Claim~\ref{claim: edge_pairing_paths_edgedisjoint}.

        \end{enumerate}
    \end{claim}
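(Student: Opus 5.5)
The plan is to prove both items by working inside $\hat{G}_\ell'[\ER\cup\EC]$ and tracking the vertex splits made in its construction. The main ingredients are \cref{lemma: faceparts_hat_G}, \cref{lem: pairing_partition_pf}, \cref{lemma: dual_edges_hat_G} and Claim~\ref{claim: edge_pairing_paths_hat_G}.

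\textbf{Item 1.} First fix two distinct nodes $q\neq q'$. By \cref{lemma: faceparts_hat_G}, $\hat{p}(q)$ and $\hat{p}(q')$ are already vertex-disjoint in $\hat{G}_\ell$. The passage to $\hat{G}_\ell'$ only expands vertices: each vertex $v_k$ that lies on $\hat{p}(q)$ is replaced by $v_k^0,v_k^1$, and both new vertices stay attached to $\ER$ edges of $\hat{p}(q)$. So no new vertex is shared between the two, and their subpaths $\hat{p}(q)_i$ and $\hat{p}(q')_k$ remain disjoint.

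Next fix $q=q'$ with $i\neq k$. By \cref{lem: pairing_partition_pf} these paths are internally disjoint, so they can meet only at a common endpoint. A common endpoint can exist only when the two pairs are consecutive, say $k=i+1$. In that case the shared endpoint is a $D(v)$-type vertex $v_k$ incident to the $\ER$ copies of $last(q)_i$ and $first(q)_{i+1}$. The construction of $\hat{G}_\ell'$ splits exactly this vertex, sending the $\ER$ edge of $last(q)_i$ to $v_k^0$ and that of $first(q)_{i+1}$ to $v_k^1$. Hence the two paths become vertex-disjoint.

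I will also check that endpoints of the form $v_k^L$ or $v_k^R$ never need splitting. Each such vertex has only one incident $\ER$ edge, so it cannot be shared by two subpaths.

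The ``in particular'' part then follows. By Claim~\ref{claim: edge_pairing_paths_hat_G}, $\hat{Q}$ is the union of the subpaths $\hat{p}(q_j)_{i_j}$ together with $\EC$ edges. $\EC$ edges add no vertices beyond endpoints of these subpaths, provided each $\EC$ edge stays attached to the correct split copy (see item 2). Moreover, for $Q\neq Q'$ the pairs $(q,i)$ used by $Q$ and by $Q'$ are distinct, since edge pairings partition the edges of each node. So vertex-disjointness of $\hat{Q}$ and $\hat{Q'}$ reduces to the pairwise statement just proved.

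\textbf{Item 2.} This is the main obstacle: we must verify that the split does not disconnect $\hat{Q}$. Concretely, the $\EC$ edge representing $(q_j,q_{j+1})$ must still touch both $\hat{p}(q_j)_{i_j}$ and $\hat{p}(q_{j+1})_{i_{j+1}}$.

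By the definition of $\hat{G}_\ell$, this $\EC$ edge joins two copies of one primal endpoint $u$. These copies are the endpoints of the $\ER$ copies of $e=(q_j,q_{j+1})$ lying on $\hat{p}(q_j)$ and on $\hat{p}(q_{j+1})$. Since $e$ is the first or last edge of the relevant pairs, these are exactly the endpoints of $\hat{p}(q_j)_{i_j}$ and $\hat{p}(q_{j+1})_{i_{j+1}}$. If such an endpoint was split, the rule of the construction keeps the $\EC$ edge with the half that holds the $\ER$ copy of the same primal edge, that is, with the half belonging to the subpath containing $e$.

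Thus consecutive pieces of $\hat{Q}$ remain joined, and each piece $\hat{p}(q_j)_{i_j}$ is itself unchanged as a path. So $\hat{Q}$ is connected with the same piece-by-piece structure as in Claim~\ref{claim: edge_pairing_paths_hat_G}, now with vertex-disjointness from every other $\hat{Q'}$. I would handle the endpoint-case edges of $P$ separately, by the same reasoning with $v_i^L$ in place of $v_i$, as in \cref{par: def_hatG_endpoint_case}.
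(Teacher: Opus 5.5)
Your proposal is correct and follows essentially the same route as the paper. Like the paper, you use \cref{lemma: faceparts_hat_G} for $q\neq q'$, \cref{lem: pairing_partition_pf} together with the vertex split for $q=q'$, reduce the ``in particular'' part to distinctness of the pieces $\hat{p}(q)_i$, and for item~2 observe that the split keeps every $\ER$/$\EC$ edge of $\hat{Q}$ on the half belonging to the same piece.

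One small inaccuracy: the $\EC$ edge of $e$ sits at a copy of an arbitrarily chosen primal endpoint of $e$ (and, for incised $e$, always at a $D(\cdot)$-copy), so it may attach at the interior vertex of $e$'s $\ER$ copy rather than at an endpoint of $\hat{p}(q_j)_{i_j}$. This is harmless, because such an interior vertex has both of its $\ER$ edges in the same piece and is therefore never split.
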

    \begin{proof}
    The first property, follows 
    by the fact that $\hat{p}(q)_i, \hat{p}(q')_{k}$ are vertex-disjoint in $\hat{G}_\ell[\ER\cup \EC]$ by \cref{lemma: faceparts_hat_G} if $q\neq q'$.
    Otherwise, $q= q'$ and $i\neq k$, thus, by Claim~\ref{lem: pairing_partition_pf}, $\hat{p}(q)_i, \hat{p}(q)_{k}$ are edge disjoint.
    Then, the property follows by the change we do on $\hat{G}_\ell$. I.e., the fact that each vertex $v\in \hat{p}(q)_i\cap \hat{p}(q)_{k}$ in $\hat{G}_\ell$ was split into two vertices, one in $\hat{p}(q)_i$ and the other in $\hat{p}(q)_{k}$.
    The proof of  $\hat{Q}, \hat{Q'}$ for $Q\neq Q'$ being vertex-disjoint follows.
    As by Claim~\ref{claim: edge_pairing_paths_edgedisjoint}, $\hat{Q}, \hat{Q'}$ in $\hat{G}_\ell$ consist of distinct $\EC$ edges and distinct paths $\hat{p}(q)_i$. Then, since $\hat{Q}, \hat{Q'}$ are edge disjoint in $\hat{G}_\ell$, they cannot overlap in $\EC$ edges. I.e., for them to overlap in vertices they must overlap in $\hat{p}(q)_i\subseteq Q$ and $ \hat{p}(q')_{k}\subseteq Q'$, but we just proved that this is not possible for $\hat{p}(q)_i \neq  \hat{p}(q')_{k}$ in $\hat{G}_\ell'$. Note, $\hat{p}(q)_i \neq  \hat{p}(q')_{k}$ because each corresponds to a pair of edges in $Q,Q'$ and $Q,Q'$ are edge-disjoint (edge pairings produce edge-disjoint paths).

    The second property is correct in $\hat{G}_\ell[\ER\cup \EC]$ by Claim~\ref{claim: edge_pairing_paths_hatG} (where $\hat{Q}$ was defined). The change we perform on $\hat{G}_\ell$ to get $\hat{G}_\ell'$ only expands some vertices of $\hat{Q}$, but does not apply change the structure of $\hat{Q}$; in the sense that any edge in $\hat{G}_\ell$ has exactly one corresponding edge in $\hat{G}_\ell'$ (\cref{remark: correspondence_hatg_prime}), s.t., an edge in $\hat{G}_\ell$ belongs to $\hat{Q}$ iff its corresponding edge in $\hat{G}_\ell'$ also belongs to $\hat{Q}$. In particular, this is correct for $\EC,\ER$ edges by definition of $\hat{G}_\ell'$. I.e., when we expand a vertex in $\hat{G}_\ell$ that belongs to $\hat{Q}$, its copy in $\hat{G}_\ell'$ that belongs to$\hat{Q}$ gets incident to all edges that belonged to $\hat{Q}$ in $\hat{G}_\ell$.
   \end{proof}

   \medskip
    \noindent
    {\bf Orienting the paths.} 
    The above claim implies that orienting the  path $Q$ in $H$ can be easily simulated by computations on $\hat{Q}$ in $\hat{G}_\ell'$. That is, the first property (by definition of $\hat{Q}$ as in Claim~\ref{claim: edge_pairing_paths_edgedisjoint}) states that there is a 1-1 mapping between $\EC$ edges in $\hat{Q}$ and edges of $H$ in $Q$. Thus, orienting those edges consistently in $\hat{Q}$ is sufficient.
    In addition, the second property allows to compute such an orientation on all subgraphs $\hat{Q}$ simultaneously.
    Concretely, we prove a structural property related to $\hat{Q}$, which then would be used to compute the desired orientation.
    \begin{claim}
    \label{claim: mathcal_Q}
    Contracting $\ER$ edges of $\hat{Q}$ in $\hat{G}_\ell'$ results in a minor of it, denoted
            $\mathcal{Q}$. Such that,  $\mathcal{Q}$ contains all $\EC$ edges of $\hat{Q}$, and only them. Moreover, $\mathcal{Q}$ is a path (cycle) and has the exact same topology as $Q$ (see \cref{fig: hatG_prime}).
    \end{claim}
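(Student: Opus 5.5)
We argue directly from the structure of $\hat{Q}$ described in Claim~\ref{claim: edge_pairing_paths_hatG} and the vertex-disjointness in Claim~\ref{claim: edge_pairing_paths_vertexdisjoint}. Write $Q=q_1,q_2,\ldots,q_m$. By Claim~\ref{claim: edge_pairing_paths_hatG} (and item 2 of Claim~\ref{claim: edge_pairing_paths_vertexdisjoint}), $\hat{Q}$ in $\hat{G}_\ell'$ is the concatenation
$\hat{p}(q_1)_{i_1}, \hat{e}_{1,2}, \hat{p}(q_2)_{i_2}, \hat{e}_{2,3},\ldots$,
where each $\hat{p}(q_j)_{i_j}$ consists only of $\ER$ edges and each $\hat{e}_{j,j+1}$ is the unique $\EC$ edge (Lemma~\ref{lemma: dual_edges_hat_G}) corresponding to $(q_j,q_{j+1})$. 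The plan has three steps.

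\emph{Step 1: each $\hat{p}(q_j)_{i_j}$ contracts to a single vertex.} It is a connected path of $\ER$ edges by Lemma~\ref{lem: pairing_partition_pf} and Lemma~\ref{lemma: faceparts_hat_G}. Contracting all its $\ER$ edges therefore yields one super-vertex $\hat{q}_j$. By item 1 of Claim~\ref{claim: edge_pairing_paths_vertexdisjoint}, the subpaths $\hat{p}(q_j)_{i_j}$ for distinct $j$ are pairwise vertex-disjoint in $\hat{G}_\ell'$, except when $q_j=q_{j'}$ and the pair index coincides. That exception cannot occur inside one $Q$, since each pair of edges is used exactly once. Hence distinct positions $j$ of $Q$ give distinct super-vertices, with one caveat: if $Q$ revisits a node $q$ through a different pair, the two visits become two different super-vertices. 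This matches the split view of $Q$ following Lemma~\ref{lem: centralized_euler_orientation}, where each node is split into one copy per pair.

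\emph{Step 2: only the $\EC$ edges of $\hat{Q}$ survive, and they form a path or cycle.} The edges of $\hat{Q}$ are exactly the $\ER$ edges of the subpaths together with the $\EC$ edges $\hat{e}_{j,j+1}$. After contracting the $\ER$ edges, what remains is precisely the set of $\EC$ edges of $\hat{Q}$. Each $\hat{e}_{j,j+1}$ joins a vertex of $\hat{p}(q_j)_{i_j}$ to a vertex of $\hat{p}(q_{j+1})_{i_{j+1}}$, so in the minor it becomes the edge $(\hat{q}_j,\hat{q}_{j+1})$. We must also check that no $\EC$ edge becomes a self-loop or lands between the wrong super-vertices. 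This holds because an $\EC$ edge for $(q_j,q_{j+1})$ has its endpoints on $\hat{p}(q_j)$ and $\hat{p}(q_{j+1})$, specifically at the end-vertex of the respective subpath that carries the $\ER$ copy of that edge (by the definition of $\EC$ in \cref{par: def_hat_G}). Also, by the proof of Claim~\ref{claim: edge_pairing_paths_hatG}, the internal vertices of a subpath carry no other $\EC$ edges of $H$. Consequently $\mathcal{Q}$ has vertex sequence $\hat{q}_1,\hat{q}_2,\ldots$ and edges $(\hat{q}_j,\hat{q}_{j+1})$, together with the closing edge when $Q$ is a cycle. This is exactly the topology of $Q$.

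\emph{Step 3: handle the endpoints of an open path $Q$.} If $Q$ is a path, $q_1$ (or $q_m$) may be unpaired, or paired only on one side. In that case $\hat{p}(q_1)_{i_1}$ reduces to the $\ER$ copy (or its endpoint vertex) of its single $Q$-edge, and it still contracts to one vertex. The main obstacle is the self-loop case. When the two paired edges of $q_j$ are parallel dual edges, or when the primal edge is a dual self-loop, the $\EC$ endpoint might be attached to the ``wrong'' copy $u_{j-1}$ versus $v_i$. Our first approach is to use the fact from Lemma~\ref{lemma: dual_edges_hat_G} that each $\EC$ edge connects exactly the two paths $\hat{p}(f),\hat{p}(g)$ of its dual endpoints. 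We then argue, by the vertex-splitting in $\hat{G}_\ell'$, that the endpoint lies in the correct pair's subpath. The key point is that the split vertex $v_k^0/v_k^1$ receives the $\EC$ edge that maps to its own pair's edge.
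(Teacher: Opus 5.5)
Your argument is correct and is essentially the paper's proof. Like the paper, you read off the alternating structure of $\hat{Q}$ from Claim~\ref{claim: edge_pairing_paths_hatG} and item~2 of Claim~\ref{claim: edge_pairing_paths_vertexdisjoint}, get vertex-disjointness of the subpaths $\hat{p}(q_j)_{i_j}$ in $\hat{G}_\ell'$ from item~1, and contract each subpath to a single vertex, leaving exactly the $\EC$ edges in the order of $Q$. Your Step~3 ``obstacle'' is not a real one: annuli are made simple before the SSSP/Eulerian-orientation routine (\cref{lem: deactivating_parallel_edges}), and in any case your own Step~2 observation already settles it. That observation is that every $\EC$ edge, whichever primal endpoint hosts it, sits on an endpoint of the $\ER$ copy of the same dual edge, and the vertex split defining $\hat{G}_\ell'$ keeps it with that copy, which places it in the correct subpath.
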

    \begin{proof}
    the proof follows from  Claim~\ref{claim: edge_pairing_paths_vertexdisjoint} above.
    I.e., by the second item of the claim, the path $Q=q_1,q_2,\ldots$ corresponds to a connected subgraph $\hat{Q}$ of $\hat{G}_\ell'[\ER\cup \EC]$, where $\hat{Q}$ has a structure as in Claim~\ref{claim: edge_pairing_paths_edgedisjoint}. In particular $\hat{Q}$ is the union of: the ($\ER$) subpath $\hat{p}(q_1)_{i_1}$ of $\hat{p}(q_1)$,     
     the $\EC$ edge mapping to edge $(q_1,q_2)$, 
        the ($\ER$) subpath $\hat{p}(q_2)_{i_2}$ of $\hat{p}(q_2)$, the $\EC$ edge mapping  to $(q_2,q_3)$, and so on.
    Then, by the first item of Claim~\ref{claim: edge_pairing_paths_vertexdisjoint}, we have that $\hat{p}(q_j)_{i_j}$s are vertex disjoint. Note, these are exactly the components whose edges are contracted in $\hat{Q}$ to obtain $\mathcal{Q}$. Henceforth, $\mathcal{Q}$ is exactly the path consisting of: the $\EC$ edge mapping to edge $(q_1,q_2)$, the $\EC$ edge mapping  to $(q_2,q_3)$, and so on.
    \end{proof}

    Henceforth, to orient edges of paths $Q$ in $H$, all we have to show is that this contraction can be performed (distributively) to obtain the paths $\mathcal{Q}$. Then, to orient all paths $\mathcal{Q}$ consistently and show that this indeed can be translated to an orientation of $Q$ in $H$ (i.e., edges of $H$ learn their orientation).

\medskip
\noindent
{\em Orienting the edges.}    
    We focus on orienting the $\EC$ edges of  $\hat{Q}$ in $\hat{G}_\ell'$. Afterwards we explain why we can parallelize this to all paths and all annuli of level $\ell$.
    We will work in the minor-aggregation model in $\hat{G}_\ell'$ this time.
    
    First,  all $\ER$ edges inside subgraphs $\hat{Q}$ contract themselves 
    (which is allowed by the model), producing the paths $\mathcal{Q}$. 
    Note those $\ER$ edges know themselves, as their endpoints in $G$ that simulate them know this information about them because they simulate their vertex copies on paths $\hat{p}(q)_j$.
    We then orient all edges of $\mathcal{Q}$, using a minor-aggregation procedure that runs in $\tilde{O}(1)$ rounds and orients edges towards an elected root of $\mathcal{Q}$ (e.g. procedure of orienting a tree \cref{lem: subtreesums} from the next section). 
    
    We elaborate.
    If  $\mathcal{Q}$ is a path then it is also a tree and this can be achieved if we elect one of its endpoints as a root. Otherwise, $\mathcal{Q}$ is a cycle and the elected node to be a root can  omit one of its incident edges, defining a spanning path of $\hat{Q}$ rooted with an endpoint, on which the lemma can be used. 
    Finally, leader election for $\mathcal{Q}$ can be done easily: If $\mathcal{Q}$ is a path then one of its endpoints informs all nodes of $\mathcal{Q}$ with this information and the maximal ID endpoint is elected. If $\mathcal{Q}$ is a cycle, then a root is elected by computing the maximal ID of a node in $\mathcal{Q}$. Each of both tasks can be performed in one minor-aggregation round by contracting $\mathcal{Q}$ to a single node, which then performs a consensus step to detect whether $\mathcal{Q}$ is a path (has endpoints), and to compute the maximum ID of nodes (or endpoints) of $\mathcal{Q}$.

\medskip
\noindent
{\em Correspondence to $H$.}    
        Note, by Claim~\ref{claim: mathcal_Q}, there is a bijection between edges of $\mathcal{Q}$   to edges of $Q$ and both paths has the exact topology. Obviously, the consistent orientation computed for $\mathcal{Q}$ of implies an consistent orientation for $Q$.  
        Thus, it remains to explain how the computed orientation to $\mathcal{Q}$ is leaned by the edges of $Q$ in  $H$. 
        Concretely,  by Claim~\ref{claim: mathcal_Q} there is a bijection  between edges of $\mathcal{Q}$   to $\EC$ edges of $\hat{Q}$, which is known to endpoints of $\EC$ edges in $\hat{G}_\ell'$.
        In addition, recall (\cref{remark: correspondence_hatg_prime}), there is a bijection between
        the $\EC$ edges of $\hat{Q}$ in $\hat{G}_\ell'$ and $\EC$ edges of $\hat{Q}$ in $\hat{G}_\ell$.
        Henceforth, since an $H$ edge is simulated by its corresponding $\EC$ edge in $\hat{G}_\ell$, then the orientation of an $\EC$ edge $(v_i,v_{i+1})$ from $v_i$ to $v_{i+1}$ in $\hat{G}_\ell$ is considered (by $v_i,v_{i+1}$) as orienting the edge's counterpart in $H$ from $g$ to $f$ where $v_i\in \hat{p}(g)$ and $v_{i+1}\in \hat{p}(f)$. This is enough as these are the vertices that simulate the edge $(g,f)$ in $H$. I.e., edges in $H$ know their direction.
 
\medskip
\noindent
{\em Simulation.}    
    The reason we can simulate a round of the minor aggregation model on $\hat{Q}$ in $\hat{G}_\ell'$ within near-optimal $\tilde{O}(D)$ rounds, is that connected subgraphs (in this case, $\hat{Q}$) of planar graphs whose diameter is $O(D)$ allow such simulation (Theorem 17 and Corollary 11  of~\cite{GZ22};~\cite{RozhonGHZL22_shortestpaths, GHSYZ22}). By Claim~\ref{claim: simulation_hatG_prime_ell}, $\hat{G}_\ell'$ is such a planar graph of diameter $O(D)$, and a $\CONGEST$ round on it is simulated by $O(1)$ rounds on $G$. I.e., this simulation is possible.
    Finally,  the parallelization to all $\hat{Q}$ for all paths $Q$ of all graphs $H$ of level $\ell$, is possible by Claim~\ref{claim: edge_pairing_paths_vertexdisjoint}. I.e., $\hat{Q}$ and $\hat{Q'}$ are connected (second item of the claim) and vertex disjoint for all $Q\neq Q'$ (first item of the claim) in level $\ell$, and running a minor-aggregation algorithm in multiple connected vertex-disjoint subgraphs in parallel can be simulated within the same round complexity of running the algorithm on the input graph (Corollary 11 of~\cite{GZ22};~\cite{RozhonGHZL22_shortestpaths, GHSYZ22}).

This concludes the proof of \cref{lem: orienting_paired_edges}.

\subsection{SSSP on Annuli}
\label{section: SSSP}
We explain how to put everything together in order to prove \cref{th: SSSP}.
That is, we need to explain how to combine: (1) Deactivating parallel edges, (2) Simulateing the approximate SSSP algorithm of~\cite{RozhonGHZL22_shortestpaths} on annuli (this includes the Eulerian orientation problem), and (3) Turning the approximate SSSP algorithm of~\cite{RozhonGHZL22_shortestpaths} into one that obeys the triangle inequality (as in \cref{def: approx_triangle_inequality}). 

\medskip
\noindent
{\bf 1. Dealing with multi-graphs.}
The graph $G^*$, as well as level-$\ell$ annuli, may be multi-graphs (i.e., have self-loops and parallel edges) and the algorithm of~\cite{RozhonGHZL22_shortestpaths} assumes the graph is simple.
To make the graph simple, we delete self-loops and parallel edges while preserving distances. I.e., we delete all edges connecting nodes $f,g$ except for the one with minimal weight. 
A simple minor-aggregation procedure that solves the problem was shown in~\cite{planardistributedmaxflow24}:

 \begin{restatable}[Deleting parallel edges, Lemma 4.15 of~\cite{planardistributedmaxflow24}]{lemma}{deactivateParallelEdges}
    \label{lem: deactivating_parallel_edges}
      There is a minor-aggregation algorithm that runs in $\tilde{O}(1)$-rounds on planar multi-graphs and deletes parallel edges. Upon termination, each edge is assigned a label of either "active" or "not-active". Each active edge $(f,g)$ is the minimal weight edge with endpoints $f,g$.
\end{restatable}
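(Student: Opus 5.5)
The lemma is cited from~\cite{planardistributedmaxflow24}, so the plan is to rebuild it as a short minor-aggregation procedure and check that each step uses only contraction, consensus and aggregation. First, self-loops are handled locally. A self-loop is an edge whose endpoints lie in the same super-node, so it can see this in its aggregation step (both endpoint values $y_a,y_b$ are equal IDs). It then labels itself ``not-active''.

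The main task is to choose, for every pair of adjacent nodes $f,g$, a single minimum-weight edge among the parallel $f$-$g$ edges. Ties are broken by edge ID, so each edge carries the key $(w(e),\mathrm{ID}(e))$. The difficulty is that the aggregation step only combines values over \emph{all} edges incident to a super-node; it cannot group edges by their other endpoint. A node $f$ may have many distinct neighbours, so it cannot simply compute a minimum per neighbour.

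I would use planarity to overcome this. In a plane multigraph, deleting all but one edge from each parallel class leaves a simple planar graph. Such a graph has $O(n)$ edges and, more usefully, arboricity at most three. The plan is then as follows.

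First, compute a low out-degree orientation of the underlying simple graph with a Boruvka/peeling-style procedure in $O(\log n)$ minor-aggregation rounds. In each round, nodes of low degree ``in distinct neighbours'' orient their classes outward and are then removed. Counting distinct neighbours is itself the delicate point. I would try to bypass it by orienting every edge from lower to higher peeling layer, where layers are determined from multigraph degrees via a Nash-Williams style argument. Failing that, I would use the contraction step as a tool instead: contract everything except one parallel class in a controlled way.

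Second, once every node $f$ has $O(1)$ out-neighbour classes, it resolves them one at a time. In each of $O(1)$ rounds, $f$ picks one out-neighbour $g$ that is not yet resolved. Every edge from $f$ to $g$ learns $\mathrm{ID}(g)$ in the aggregation step, and the aggregate over exactly these edges gives the minimum key $(w(e),\mathrm{ID}(e))$ for the pair. This minimum is broadcast through the consensus step and the minimal edge marks itself ``active''. Every edge is an out-edge of one of its endpoints, so after $O(1)$ rounds of this kind all parallel classes are handled. The total cost is $\tilde{O}(1)$ rounds.

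The main obstacle I expect is the first step: obtaining the bounded out-degree orientation on the \emph{quotient} simple graph using only aggregations. The multigraph degree can be much larger than the number of distinct neighbours, so peeling by raw degree is not sufficient. If that fails, I would fall back on the proof in~\cite{planardistributedmaxflow24}, which I recall uses exactly this arboricity argument.
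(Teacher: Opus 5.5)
\paragraph{Gap in step one.} Your second step is sound once an orientation is available. A node can pick an unresolved out-neighbour by a max-aggregation over its unresolved out-edges, broadcast it in the next consensus step, and then take a min over exactly the edges to that neighbour. The real content of the lemma is step one, and you never carry it out. The concrete route you suggest does not work: peeling by multigraph degree gives no bound on the number of distinct out-classes.

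For a counterexample, take a hub $v$ joined by single edges to $u_1,\dots,u_n$, and join each pair $u_i u_{i+1}$ by $M\gg n$ parallel edges (a planar fan). Then $v$ has by far the smallest multigraph degree. A peeling by multigraph degree may remove it first and orient up to $n$ classes out of $v$, so your step two would need $n$ rounds.

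The ``controlled contraction'' alternative is not specified. Falling back on the cited source is not an argument either. Note that the paper itself gives no proof here; it only imports the lemma from~\cite{planardistributedmaxflow24}. So your reconstruction has to stand on its own, and as written its central step is missing.

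\paragraph{The missing idea.} The aggregate operator may be any associative, commutative operator on $\tilde{O}(1)$-bit values, not just a scalar min or sum.
\begin{itemize}
\item Let each non-loop edge $e=(a,b)$ with both endpoints alive send to $a$ the one-entry map $\{\mathrm{ID}(b)\mapsto (w(e),\mathrm{ID}(e))\}$, and symmetrically to $b$.
\item Let $\otimes$ merge maps by taking the minimum key per neighbour ID, then keep only the $12$ smallest IDs. This operator is associative and commutative, and its values have $\tilde{O}(1)$ bits.
\item If a node's resulting map has at most $11$ entries, then no truncation occurred. So the map lists exactly its distinct alive neighbours, together with the minimum key of each parallel class.
\item In the next round, such a node broadcasts the map in the consensus step. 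Each incident edge compares its own key with the entry for its other endpoint and labels itself ``active'' or ``not-active''. The node is then removed.
\item If both endpoints are peeled in the same phase, they compute the same class minimum, so the labels are consistent.
\end{itemize}
The quotient of the alive part is a simple planar graph on $n'$ nodes with fewer than $3n'$ edges. Hence fewer than $n'/2$ alive nodes have $12$ or more distinct alive neighbours. So $O(\log n)$ phases of $O(1)$ minor-aggregation rounds label every edge. Your self-loop handling is fine as stated.

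This distinct-neighbour test is exactly what you would need to build your orientation. Once you have it, the orientation and your step two become unnecessary.
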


We can therefore delete parallel edges and self-loops in all annuli of any given level $\ell$ simultaneously within $\tilde{O}(D)$ rounds, by simulating the algorithm from \cref{lem: deactivating_parallel_edges} using \cref{thm: MA_simulation}. 
From now on, we assume that annuli are simple graphs. 

\medskip
\noindent
{\bf 2. Simulating the (approximate) SSSP algorithm of~\cite{RozhonGHZL22_shortestpaths}.}
Next, we prove that our solution to the Eulerian orientation problem (\cref{def: eulerian_oracle}) can be used in~\cite{RozhonGHZL22_shortestpaths}'s algorithm in order to compute (approximate) SSSP on annuli - without the approximate triangle inequality guarantee for now.
In particular, one way to phrase the result of~\cite{RozhonGHZL22_shortestpaths} is:

\begin{lemma}[\cite{RozhonGHZL22_shortestpaths}]
\label{lem: sssp_blackbox}
There is a algorithm that works in $\tilde{O}(1/\epsilon^2)$ rounds and computes an $(1+\epsilon)$ approximate SSSP tree from a given source $s$ in any undirected graph $G_{virt}$ with weight from the range $[1,n^{O(1)}]$.  Where $G_{virt}$ is obtained from an undirected graph $G$ by adding $\tilde{O}(1)$ arbitrarily-connected virtual nodes.
Upon termination, each node knows its (approximate) distance from $s$, and each edge knows whether it is in the (approximate) SSSP tree.
Each round of the algorithm is either a minor-aggregation round on {\boldmath$G$}, or a call to the Eulerian orientation oracle $\mathcal{O}_{Euler}$ on {\boldmath$G_{virt}$} (as in \cref{def: eulerian_oracle}). 
Moreover, the algorithm is deterministic, if the minor-aggregation model and the oracle $\mathcal{O}_{Euler}$ can be simulated deterministically.
\end{lemma}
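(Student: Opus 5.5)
The plan is to treat this as a repackaging of the result of~\cite{RozhonGHZL22_shortestpaths} rather than prove anything new: check that every part of their SSSP algorithm is either a minor-aggregation round on $G$ or a call to $\mathcal{O}_{Euler}$, then read off the round and determinism bounds. First, recall its structure. The algorithm of~\cite{RozhonGHZL22_shortestpaths} obtains a $(1+\epsilon)$-approximate SSSP tree in two layers:
\begin{itemize}
\item a reduction from approximate SSSP to $\tilde{O}(1/\epsilon^2)$ calls of an approximate transshipment (or $\ell_1$-oblivious routing / boosting) solver, together with a tree-extraction step;
\item an implementation of each such call via low-diameter decompositions, clusterings, subtree sums, and Eulerian-orientation-based rounding.
\end{itemize}
I will go through their pseudo-code and classify every primitive. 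Aggregations over clusters, contractions, and subtree sums on trees are minor-aggregation rounds on $G$, possibly with $\tilde{O}(1)$ virtual nodes, e.g.\ a super-source attached to $s$. The rounding step that turns a fractional flow into an integral or half-integral one is a call to $\mathcal{O}_{Euler}$ on a subgraph of $G_{virt}$, which is exactly Definition 3.8 of~\cite{RozhonGHZL22_shortestpaths}, matching \cref{def: eulerian_oracle}.

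Second, I will count rounds. Their analysis already gives $\tilde{O}(1)$ minor-aggregation rounds or oracle calls per boosting iteration, and $\tilde{O}(1/\epsilon^2)$ iterations. Since the weights lie in $[1,n^{O(1)}]$, the $O(\log n)$ distance scales cost only a polylogarithmic factor, for a total of $\tilde{O}(1/\epsilon^2)$. Outputs are local: each node holds its potential (distance estimate), and each edge holds its tree bit, which the final tree-extraction step sets by aggregation over incident edges.

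Third, I will address determinism. Their deterministic version uses deterministic low-diameter decompositions and Cole--Vishkin style symmetry breaking inside minor-aggregation, so the only remaining sources of randomness are the simulation of the minor-aggregation model itself and the oracle. Hence the algorithm is deterministic whenever those two can be simulated deterministically.

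The main obstacle is the bookkeeping of virtual nodes. One has to ensure that every oracle call is applied to a subgraph of $G_{virt}$ with only $\tilde{O}(1)$ virtual nodes, and that no non-minor-aggregation $\CONGEST$ step (such as the spanning-tree computation in their Proposition 6.5) remains outside the oracle. I would handle this by pushing all such steps inside the oracle abstraction or replacing them with minor-aggregation equivalents, e.g.\ the spanning tree via Example 4.4 of~\cite{GHSYZ22}.
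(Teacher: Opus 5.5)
Your plan matches the paper. The paper gives no proof of this lemma; it imports it as a black-box rephrasing of the main result of \cite{RozhonGHZL22_shortestpaths}, whose Definition 3.8 is the oracle of \cref{def: eulerian_oracle}. The one thing to correct is your ``main obstacle'': the spanning-tree step in their Proposition 6.5 belongs to their implementation of $\mathcal{O}_{Euler}$, so for this lemma it is already hidden behind the oracle call. The paper replaces it with the minor-aggregation spanning tree of Example 4.4 of \cite{GHSYZ22} only later, when it implements the oracle on annuli in \cref{lem: distributed_eulerian_orientation}.
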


Assume all level-$\ell$ incisions are distributively stored, and for each level-$\ell$ annulus $A_i$ the ID of a source node $f_i$ is known to all $v\in G$ where $v\in p(f_i)$ (i.e., is known to all node of the annulus $A_i$).
Then it is almost straightforward to compute approximate SSSP trees in all annuli of the level simultaneously.

Concretely, we can split time into phases of $D\log^c n$ for some constant $c$. Such that an even phase $2i$ simulates one round of the algorithm if that round is a minor-aggregation round. This simulation terminates within $D\log^{c_{2i}} n$ rounds on all annuli of level $\ell$ simultaneously by \cref{thm: MA_simulation}. Where $c_{2i}$ is some constant, and $c>\max_i \{c_{2i}\}$.
Similarly, an  odd phase $2i+1$ simulates one round of the algorithm if that round is a call to the Eulerian orientation oracle (\cref{def: eulerian_oracle}). Note that the input of the oracle in this case is given by the algorithm and is computed by the previously simulated rounds, and its output is given in the format required by~\cite{RozhonGHZL22_shortestpaths}. This is simply because we follow~\cite{RozhonGHZL22_shortestpaths}'s definition of the oracle.
This simulation terminates within $D\log^{c_{2i+1}} n$ on all annuli of level $\ell$ simultaneously by \cref{lem: distributed_eulerian_orientation}. Where $c_{2i+1}$ is some constant, and $c$ is chosen to be larger than $\max_i \{c_{2i+1}\}$.
Hence, the total runtime of simulating~\cite{RozhonGHZL22_shortestpaths}'s algorithm on all annuli of a given level is $\tilde{O}(D)$.
Note that this gives \cref{th: SSSP} except for the triangle inequality condition that we discuss next.

\medskip
\noindent
{\bf 3. Approximate triangle inequality.}
As explained in~\cref{sec: Refis_extensions} and \cref{appendix: centralized_missing_proofs}, we need the approximate SSSP algorithm to obey the approximate triangle inequality. 
This property requires that any subpath $f,\ldots,g$ of a $(1+\epsilon)$-approximate shortest path returned by the SSSP algorithm to be also a $(1+\epsilon)$-approximate shortest $f$-to-$g$ path (see \cref{def: approx_triangle_inequality}).
Let $\mathcal{A}$ be any $(1+\epsilon)$-approximate SSSP algorithm that supports adding a total of $\tilde{O}(1)$ arbitrarily connected virtual nodes to the input graph.
Recently,~\cite{RozhonHMGZ23_triangle_inequality} have shown a reduction that after $\tilde{O}(1)$ executions of $\mathcal{A}$ (each time on a different graph that is obtained from the input-graph by adding virtual nodes and changing the edge weights) gives a $(1+\epsilon\log n)$ approximate algorithm that obeys the triangle inequality.
The reduction is implementable in the minor-aggregation model as shown in~\cite{planardistributedmaxflow24}. 

\begin{lemma}[Theorem 5.1 of~\cite{RozhonHMGZ23_triangle_inequality} and   Theorem 1.3 of~\cite{planardistributedmaxflow24}]
\label{lem: triangle_inequality}
    Let $\mathcal{A}$ be an $r\cdot f(\epsilon)$ round $(1+\epsilon)$-approximate SSSP algorithm that supports adding $\tilde{O}(1)$ arbitrarily connected virtual nodes to the input-graph. There is a $(1+\epsilon')$ approximate SSSP algorithm $\mathcal{A}'$ that obeys the approximate triangle inequality, and runs in $\tilde{O}(r \cdot f(\epsilon'))$ rounds, where $\epsilon'=\epsilon/\log n$.
\end{lemma}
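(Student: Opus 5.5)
The plan is to treat Lemma~\ref{lem: triangle_inequality} as a black-box composition of two results: the reduction of~\cite{RozhonHMGZ23_triangle_inequality} (their Theorem~5.1) and its minor-aggregation implementation from~\cite{planardistributedmaxflow24} (their Theorem~1.3). I would first restate the reduction abstractly. Given any $(1+\epsilon)$-approximate SSSP oracle $\mathcal{A}$ that tolerates $\tilde{O}(1)$ arbitrarily connected virtual nodes, the reduction makes $\tilde{O}(1)$ calls to $\mathcal{A}$. Each call is on a graph obtained from the input by adding $\tilde{O}(1)$ virtual nodes and reweighting edges, with weights that stay polynomially bounded so that $\mathcal{A}$'s guarantees apply. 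Between calls the reduction performs only local or aggregate bookkeeping. The output is a tree whose every subpath is a $(1+O(\epsilon\log n))$-approximate shortest path, which is exactly the approximate triangle inequality of Appendix~\ref{appendix: centralized_missing_proofs}. To obtain the guarantee $1+\epsilon'$ claimed in the statement, I would run $\mathcal{A}$ with parameter $\epsilon'/\Theta(\log n)$ (the roles of $\epsilon$ and $\epsilon'$ just need to be matched up). Since $f$ is at most polynomial in $1/\epsilon$ (here $f(\epsilon)=\epsilon^{-2}$), this costs only a polylogarithmic factor, which is absorbed in the $\tilde{O}$.

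Next I would check the round accounting. There are $\tilde{O}(1)$ invocations, each costing $r\cdot f(\epsilon'/\log n)=\tilde{O}(r f(\epsilon'))$. The glue steps used by~\cite{RozhonHMGZ23_triangle_inequality} are the following: computing reweightings from returned distances (local to edges, once the endpoint distances are known), attaching a virtual super-source, and extracting trees and marking edges. I would verify, as~\cite{planardistributedmaxflow24} does, that each of these is a constant number of minor-aggregation rounds, hence $\tilde{O}(1)$ in total. One point needs care: the virtual nodes of different invocations must not accumulate beyond $\tilde{O}(1)$ at any one time. Each call introduces its own $\tilde{O}(1)$ virtual nodes and discards them afterwards, so the bound holds per call and $\mathcal{A}$'s hypothesis is met.

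The main obstacle is that $\mathcal{A}$ here is not a pure minor-aggregation algorithm, because~\cite{RozhonGHZL22_shortestpaths} calls the oracle $\mathcal{O}_{Euler}$. The black-box statement in~\cite{planardistributedmaxflow24} may presuppose pure minor-aggregation. I would handle this by observing that the reduction uses $\mathcal{A}$ only as a subroutine on an augmented graph $G_{virt}$. Each oracle call inside $\mathcal{A}$ is therefore an $\mathcal{O}_{Euler}$ call on a subgraph of (annulus plus $\tilde{O}(1)$ virtual nodes), which is exactly the form allowed by Definition~\ref{def: eulerian_oracle}. Lemma~\ref{lem: distributed_eulerian_orientation} then answers all these calls in $\tilde{O}(D)$ rounds on every level-$\ell$ annulus simultaneously, and all remaining rounds are simulated by Theorem~\ref{thm: MA_simulation}. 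The reduction thus inherits the round bound $\tilde{O}(r\cdot f(\epsilon'))$ in the abstract model. Translating this to CONGEST with the same phase-scheduling as in Section~\ref{section: SSSP} gives $\tilde{O}(D/\epsilon'^2)$.
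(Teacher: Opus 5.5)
Your proposal is correct and follows the paper's approach. The paper gives no independent proof: it takes the lemma as a black-box combination of the reduction of \cite{RozhonHMGZ23_triangle_inequality} ($\tilde{O}(1)$ calls to $\mathcal{A}$ on reweighted graphs with added virtual nodes, losing a $\log n$ factor in the approximation) and its minor-aggregation implementation in \cite{planardistributedmaxflow24}. Its only addition is the remark that \cite{RozhonGHZL22_shortestpaths} and the annulus simulation both support $\tilde{O}(1)$ virtual nodes. Your rescaling (running $\mathcal{A}$ with parameter $\epsilon'/\Theta(\log n)$, which costs only a polylogarithmic factor since $f(\epsilon)=\epsilon^{-2}$) and your check that the $\mathcal{O}_{Euler}$ calls inside $\mathcal{A}$ still fit Definition~\ref{def: eulerian_oracle} make explicit what the paper leaves implicit.
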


Note that adding virtual nodes to the input of the SSSP algorithm of~\cite{RozhonGHZL22_shortestpaths} is supported (see \cref{lem: sssp_blackbox}), in addition, our  simulation of the minor-aggregation model supports adding $\tilde{O}(1)$ virtual nodes to each annulus  (see the end of \cref{sec:Minor-aggregation}).
Hence,
we use the above \cref{lem: triangle_inequality} to convert the approximate SSSP algorithm of~\cite{RozhonGHZL22_shortestpaths} into one that, obeys the approximate triangle inequality and can be run on all annuli of a given level simultaneously.
I.e., in $\tilde{O}(D/\epsilon^2)$ rounds, we compute $(1+\epsilon)$-approximate shortest paths tree for all annuli of any given level $\ell$ simultaneously, concluding \cref{th: SSSP}.

\section{Distributed Reif Implementation}
\label{sec: implemetnation_details}
In this section we wrap up the entire (distributed) implementation of Reif's algorithm as summarized by the following \cref{alg: distributed_reif}. 
 
\begin{algorithm}[htb]
\caption{Distributed Reif}
\KwIn{A $D$-diameter positively weighted undirected planar network $G$, two vertices $s,t\in G$, and a precision parameter $\epsilon$}
\KwResult{Within $\tilde{O}(D)$-rounds, each vertex $v\in G$ learns its incident edges in a $(1+\epsilon)$-approximate minimum $st$-cut $P_i$}
\label{alg: distributed_reif}
\SetKwFunction{FindSPAnnulus}{ShortestCycle} 
\SetKwProg{Proc}{Procedure}{}{}

Construct $G^*$, and detect nodes $f\in s^*, g\in t^*$ \label{line: construct_G*}

Compute an $f$-to-$g$ $(1+\epsilon)$-approximate shortest path $P$ \label{line: find_P}

Incise along $P$ (results in annulus $A$) \label{line: incise_P}

\FindSPAnnulus{$A$, $0$, $|P|-1$}

\Return the $st$-cut edges, bisection, and length of the shortest $P_i$ found by \FindSPAnnulus \label{line: find_cut}

\medskip 

\Proc{\FindSPAnnulus{$A$, $j$, $k$}:}{

{\bf if} $k\leq j+1$  {\bf then} compute $P_j,P_k$ \label{line: leaf_annulus}
\tcc*[f]{leaf annuli}

Compute a $(1+\epsilon)$-approximate shortest path $P_i$  for $i=\lfloor (j+k)/2\rfloor$\label{line: find_Pi}

\tcc*[f]{overlapping annuli}

\For{$(P_a,P_b)\in \{(P_j,P_i),(P_i,P_k)\}$}{\label{line: start_overlapping_annuli}
\If{$P_a\cap P_b\neq \emptyset$
}{
Compute a $(1+\epsilon)$-approximate SSSP tree $T$ from $x\in P_a\cap P_b$ 

Compute the $p_t^0$-to-$p_t^1$ paths $P_t$ in $T$ for all ${a\leq t\leq b}$)

}}\label{line: end_overlapping_annuli}

\tcc*[f]{define next level's annuli}

Find first and last\label{line: intersection_P_Pi_line1} intersections $p_{i_1},p_{i_2}\in P_i\cap P$

Find the maximal\label{line: intersection_P_Pi_line2} subpath $P_i'$ of $P_i$, which is internally-disjoint from (copies) of $P$

Incise $P_i'$ giving the annuli $A_l,A_r$ of the next level \label{line: start_next_level} and discard non-annuli graphs 

\tcc*[f]{recursive calls (if any)}

{\bf if} $A_l$ exists  {\bf then} \FindSPAnnulus{$A_l$, $j$, $i_1$}

{\bf if} $A_r$ exists  {\bf then} \FindSPAnnulus{$A_r$, $i_2$, $k$} \label{line: end_next_level}

}

\end{algorithm}

In the first three lines of \cref{alg: distributed_reif}, we need to construct $G^*$, detect nodes $f,g$, find  an approximate shortest $f$-to-$g$ path $P$, and incise along $P$. 
In Line~\ref{line: construct_G*} we construct $G^*$ (represented by $\hat{G}_0$)  in $\tilde{O}(D)$ rounds as in \cref{lem: distributed_knowledge}. In particular, nodes $h$ of $G^*$ are identified and assigned IDs that are known to all primal vertices on the face $p(h)$ in $G$, and for each edge $e\in G$ its endpoints know the IDs of the endpoints of $e^*\in G^*$.  
Since $s$ (resp. $t$) knows the faces it participates in, then $s$ (resp. $t$) chooses locally an arbitrary such face to be $f$ (resp. $g$), and  broadcasts the ID of $f$ (resp. $g$) to the entire graph $G$ in $O(D)$ rounds using the following textbook lemma:
\begin{lemma}[Basic aggregation~\cite{peleg-book}]
\label{lem: global_broadcast}
    For every $v\in G$ let $x_v$ be an arbitrary $O(\log n)$-bit input. Within $O(D)$ rounds, all $v\in G$ learn the result $\bigoplus_{v\in G} x_v$ of any predefined aggregate operator $\oplus$.
\end{lemma}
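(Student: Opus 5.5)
The plan is to compute the aggregate over a BFS tree: build a BFS tree of $G$ rooted at a designated vertex, run a convergecast up the tree, then broadcast the result down. First I will fix a root. Since the statement is used as a black box in the paper, the cleanest choice is to assume a leader is available; otherwise I would elect one in $O(D)$ rounds by flooding maximum IDs. Flooding the maximum ID converges after $D$ rounds, because every vertex is within $D$ hops of the maximum-ID vertex, and each round carries a single $O(\log n)$-bit message per edge. The only subtlety is termination detection. To handle it, I will first flood a $2$-approximation of $D$: compute eccentricities from an arbitrary start vertex, such as $s$, whose ID is known to all, so that every vertex knows a common bound $D' \le 2D$ and can stop after $D'$ rounds.

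Next, I will grow a BFS tree $T$ from the root $r$ in $O(D)$ rounds. In round $i$, the vertices at distance $i$ that have just joined announce themselves. Each vertex that has not yet joined picks one announcing neighbor as its parent, breaking ties by smallest ID. The tree $T$ has depth at most $D$, and every vertex learns its parent and its children in one additional round.

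Then I will run the convergecast in $O(D)$ rounds. A leaf sends $x_v$ to its parent. An internal vertex waits until it has heard from all of its children, then sends $x_v \oplus \bigoplus_{c} y_c$ to its parent, where $y_c$ is the value received from child $c$. Since $\oplus$ is an aggregate operator, every partial aggregate is again an $O(\log n)$-bit string, so each message fits in one round. Correctness follows by induction on subtree height, using associativity and commutativity; the paper's definition already permits combining any two strings in any order. A vertex at depth $d$ sends its value by round $D-d+1$, so $r$ holds $\bigoplus_{v\in G} x_v$ after at most $D+1$ rounds. Finally, $r$ broadcasts this value down $T$ in another $D$ rounds.

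The main obstacle is synchronizing termination without knowing $D$ in advance, which is why I fix a common bound $D'$ first. Beyond that, the argument is the standard textbook convergecast-and-broadcast on a BFS tree (see~\cite{peleg-book}), and the total round count is $O(D)$.
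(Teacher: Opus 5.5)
Your proposal is correct, and it is exactly the standard BFS-tree convergecast-and-broadcast argument behind the textbook result the paper cites from Peleg's book; the paper gives no proof of its own. One simplification: since the ID of $s$ is known to all vertices, you can root the BFS tree at $s$ directly, and the rule that each vertex waits until it has heard from all its children already detects termination without any bound on $D$ --- so the leader election and the $D'$ estimate are unnecessary (the $D'$ estimate is itself an aggregation rooted at $s$ and already relies on this wait-for-children mechanism).
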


Once all vertices of $G$ know the IDs of $f,g$, we have the required input format for \cref{th: SSSP}, and we can compute a $(1+\epsilon)$-approximate SSSP tree rooted at $f$ (Line~\ref{line: find_P}) in $\tilde{O}(D)$ rounds. 
After computing the tree, using  \cref{thm: MA_simulation} we can in $\tilde{O}(D)$ rounds
simulate the following classic $\tilde{O}(1)$ minor-aggregation rounds procedure
in order to mark the $f$-to-$g$ path in the tree as $P$. 

\begin{lemma}[Tree primitives]
\label{lem: subtreesums}
    There are $\tilde{O}(1)$ minor-aggregation rounds procedures $\mathcal{P}_1,\mathcal{P}_2$ s.t given any tree $T$ and nodes $f,g\in T$:
  $\mathcal{P}_1$
     computes subtree sums on $T$ w.r.t. $f$ as the root over arbitrary $\tilde{O}(1)$-bit strings as an input.
   $\mathcal{P}_2$ marks the edges and nodes of the $f$-to-$g$ path in $T$.
\end{lemma}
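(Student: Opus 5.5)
I would obtain both procedures from the standard minor-aggregation toolkit for trees. The main device is Euler-tour or pointer-jumping style contraction organized by a Boruvka-type merge schedule. $\mathcal{P}_1$ will be built first, and $\mathcal{P}_2$ will then follow from it.

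For $\mathcal{P}_1$, I would first root $T$ at $f$, so that every tree edge learns which endpoint is the parent. Following the known approach (Lemma 16 of~\cite{GZ22}, Lemma B.8 of~\cite{RozhonGHZL22_shortestpaths}), I would do this by a sequence of $O(\log n)$ star-merging phases over the edges of $T$.

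\begin{itemize}
\item Clusters are connected subtrees. In each phase, the joiner/receiver labels (random coins, or Cole--Vishkin deterministically) let joiner clusters merge into neighbouring receivers, and a constant fraction of clusters disappear.
\item Every cluster keeps, as an aggregate, whether it contains $f$. When a cluster $C$ merges into a receiver $R$ through a tree edge $e$, the orientation inside $C$ is fixed relative to $e$, and $C$ hangs off $R$ at the endpoint of $e$.
\item To avoid re-orienting all of $C$ explicitly, I would record per phase the connecting edge together with the cluster IDs. When the process is unwound, each node can determine its parent edge through $\tilde O(1)$ consensus/aggregation steps. Equivalently, the orientation is derived by testing, for each edge, on which side $f$ lies; this test uses the hierarchy of merges.
\end{itemize}

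Once the orientation is known, subtree sums are computed with the same hierarchy run bottom-up. For every cluster we maintain the aggregate of its inputs, and we also maintain the partial sums at the attachment vertices as clusters merge. Unwinding the hierarchy in reverse then gives every node $v$ the value $\bigoplus_{u\in T_v} x_u$. Each phase costs $O(1)$ minor-aggregation rounds and there are $O(\log n)$ phases, for $\tilde O(1)$ in total.

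For $\mathcal{P}_2$, I would run $\mathcal{P}_1$ rooted at $f$, with input $x_g=1$ and $x_u=0$ for every other node $u$, using SUM. A node lies on the $f$-to-$g$ path exactly when its subtree sum equals $1$. A tree edge from child $c$ to its parent is on the path exactly when $c$ is marked, and the edge can learn this in the aggregation step from the consensus value of $c$.

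The main obstacle is rooting the tree inside the minor-aggregation model, because edges only see aggregates of super-nodes. I would rely on the cited black-box results; if a self-contained argument were required, I would carefully spell out the invariant under which the hierarchical merges are unwound.
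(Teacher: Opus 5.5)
Your proposal is correct and follows the paper's route: the paper also takes $\mathcal{P}_1$ purely as a black box (Lemma 16 of \cite{GZ22}), so your merge-hierarchy sketch is not needed, and it obtains $\mathcal{P}_2$ by running subtree sums with indicator inputs. The only difference is cosmetic. The paper puts input $1$ on both $f$ and $g$ and marks an edge when one endpoint has sum exactly $1$ and the other has sum at least $1$; your choice of input $1$ on $g$ alone, with root $f$, gives the cleaner characterization that the path is exactly the set of nodes with subtree sum $1$.
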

\begin{proof}
    The procedure $\mathcal{P}_1$ is Lemma 16 of~~\cite{GZ22}, also appears in~\cite{GhaffariH16_shortcuts, GP17, deterministic_sep, OurSIROCCO, RozhonGHZL22_shortestpaths}.
    The procedure $\mathcal{P}_2$ exists in many works (e.g.~\cite{deterministic_sep, OurSIROCCO, GP17}), however, not exactly as we state it. For completeness, we restate the proof:
    Simply, $g,f$ are assigned an input value of one. All other nodes are assigned an input value of zero. Then, compute subtree sums with $\mathcal{P}_1$. Then, an edges has one endpoint of sum  exactly one, and the other endpoint of sum at least one, iff the edge is on the $f$-to-$g$ path. I.e., all internal nodes of the path has sum exactly one, except for the lowest common ancestor of $f,g$, has a sum of two.
\end{proof}

After using the above lemma, edges of $G$ whose duals are in $P$ are known to their endpoints in $G$, and all vertices $v\in G$ know for each $h\in G^*$ s.t $v\in p(h)$ whether $h\in P$. Then, with additional $O(D)$ rounds, all vertices of $G$ learn the (weighted) length of $P$: By \cref{th: SSSP}, vertices $v\in p(g)$ know the (approximate) $f$-to-$g$ distance. Thus, they broadcast it in $O(D)$ rounds to all vertices of $G$.

Next, we aim to enumerate $P$'s endpoints from zero to $|P|-1$. 
We use \cref{lem: subtreesums} to compute subtree sums on $P$ as the tree of input, where the root is $g$ and the input to all $P$'s nodes is one except for $f$ whose input is zero.
The procedure terminates in $\tilde{O}(D)$-rounds (by the minor-aggregation simulation, \cref{thm: MA_simulation}). 
After which, each vertex $v\in G$ that lies on a face $p(p_i)$ learns that this face corresponds to $p_i\in  P$.

In order to incise along $P$ (Line~\ref{line: incise_P}), we need to handle the endpoint-case edge (see \cref{def: incision_P}, \cref{remark: incision_P}). For $p_0$ (resp. $p_{|P|-1}$), we identify a non-$P$ edge incident to it and set it to be the endpoint-case edge denoted (resp. $e_{|P|-1}$). 
To find $e_0$, the vertex $s$ knows the ID of $p_0$ (since $s\in p(p_0)$) 
and its incident edges in $p(p_0)$. Moreover, $s$ knows its incident $P$-edges (if any). Thus, $s$ chooses an incident edge $e_0$ of $p(p_0)$ whose dual is not on $P$.
Note, such an edge always exists because $P$ cannot contain all edge of the face $s^*$ in $G^*$ (this would contradict $P$ being a simple path). 
In primal terms, the vertex $s$ is incident to at least one edge on $p(p_0)$ whose dual is not on $P$. Then, $e_0$ is set to be such an edge. A similar argument holds for $t$ and $e_{|P|-1}$.
Note that this is done locally in $s,t$. 

Overall, in $\tilde{O}(D)$ rounds we have computed $P$ and so the incision of  level $\ell=1$ of $\mathcal{T}$ is distributively stored . 
Thus, using \cref{lem: distributed_knowledge} $P$ is incised (represented with $\hat{G}_1$) and this level is learnt. In addition, by \cref{thm: MA_simulation} we can run minor-aggregation algorithms on it.

\subsection{The Procedure ShortestCycle}
We assume that level $\ell-1$ is constructed. That is: (1) paths that define annuli of level $\ell-1$ are distributively stored (\cref{def: distributed_storage_incisions}) and incised (meaning $\hat{G}_{\ell-1}$ was already constructed), (2) annuli of level $\ell-1$ are known to the vertices that simulate them in $\hat{G}_{\ell-1}$ (\cref{lem: distributed_knowledge}), (3) for every level $\ell-1$ annulus $A$ bounded by paths $P_j,P_k$, the edges of $P_j,P_k$ knows themselves. Moreover, let $j\leq j'\leq k'\leq k$ where we aim to compute the paths $P_t$ for $j'\leq t\leq k'$ in $A$ (and its subtree in $\mathcal{T}$), then both $j',k'$ are known to all $G$ vertices that {\em simulate} $A$. I.e., vertices that lie on $p(h)$ for some node-part $h\in A$. 
Then (Line \ref{line: find_Pi}) we compute the $p_i^0$-to-$p_i^1$ path $P_i$ for $i=\lfloor (j'+k')/2\rfloor$ inside the annulus $A$. Afterwards, we prepare for the next recursion by incising $P_i$ as in \cref{def: incision_Pi} (computing the child annuli of $A$ as in \cref{def: incision_decomposition}).
We describe this on a single annulus $A$. However, these steps can be applied on all annuli level $\ell-1$ simultaneously in the same round complexity (by \cref{thm: MA_simulation} and \cref{th: SSSP}).

\begin{remark}
In the following we may treat nodes and edges of annuli as computational entities, when that is done then it is either to compute a part-wise aggregation over the annuli (\cref{lem: PA_level_T}) or to run a minor-aggregation algorithm (\cref{sec:Minor-aggregation}, \cref{thm: MA_simulation}). Annuli however are not computational entities in $\CONGEST$, and we may only discuss $\CONGEST$ on vertices of $G$ and $\hat{G}_{\ell-1}$ (since $\CONGEST$ on $\hat{G}_{\ell-1}$ can be simulated without any asymptotic overhead by \cref{lemma: congest_hat_G}).
\end{remark}

\medskip
\noindent
{\bf Leaf annuli (Line~\ref{line: leaf_annulus}).}
If $k'\leq j'+1$, then we compute $P_{k'},P_{j'}$ and terminate the procedure for this annulus, as it is a leaf annulus  (see \cref{def: incision_decomposition}).
Both computations are done in the same manner of computing any $P_i$ (explained next). 
Finally, a special treatment is required for an annulus where  $P_i$ and $P_k$ or $P_j$ (or both) intersect (such an annulus might have only one child or also be a leaf annulus). 
This is discussed later in this section.

\medskip
\noindent
{\bf Computing \boldmath$P_i$ (Line~\ref{line: find_Pi}).}
Since all nodes of the annulus $A$ know $j'$ and $k'$, they locally compute $i:= \lfloor (j'+k')/2\rfloor$.
After which, the two nodes $p_i^0$ and $p_i^1$ know that the SSSP computation must originate in one of them. Note, the nodes do not know which of them is $p_i^0$ and which is $p_i^1$. However, they know that they are node-parts of $p_i\in P$, because the correspondence between node-parts of ancestor annuli is known to them by \cref{lem: distributed_knowledge}. This information is sufficient: the nodes $p_i^0,p_i^1$ aggregate their IDs on $A$ in one round of minor-aggregations to elect the one with the maximum ID to be the source of the SSSP computation. Then, using \cref{th: SSSP} we compute an approximate SSSP tree from $p_i^0$ (w.l.o.g). Then, using \cref{lem: subtreesums}: (1) we mark the tree $p_i^0$-to-$p_i^1$ path $P_i$  (its nodes and edges in $A$ know themselves), (2) the length of $P_i$ (sum of its edge weights) is known to all vertices $v\in G$ that simulate $A$.
Note, the SSSP algorithm terminates in $\tilde{O}(D)$ rounds on $G$. The subtree sums and aggregation procedures run in $\tilde{O}(1)$ minor aggregation rounds, so by \cref{thm: MA_simulation} they are simulated in $\tilde{O}(D)$ $\CONGEST$ rounds on $G$. 
To conclude, 
\begin{lemma}
\label{lem: leaf_paths}
Within $\tilde{O}(D)$ rounds, for each non-leaf annulus $A$ of level $\ell-1$ that aims to compute $P_t$ for $j'\leq t\leq k'$, the edges of $P_i$ know themselves, and its length is known to its nodes. If $A$ is a leaf annulus edges of both $P_{j'},P_{k'}$ know themselves, and their lengths are known to nodes that participate in them.
\end{lemma}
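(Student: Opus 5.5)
The lemma bundles together steps that the paragraph just before it has already laid out, so my plan is to check each step in turn against the earlier results and then add up the round counts. Throughout I assume the inductive hypotheses stated at the start of this subsection. Level $\ell-1$ is distributively stored and $\hat{G}_{\ell-1}$ is built. Every vertex simulating $A$ knows $j',k'$ and, by \cref{lem: distributed_knowledge}, knows the ancestor node-parts of each node-part it simulates. In particular it knows which level-$\ell-1$ node-parts of $A$ are node-parts of $p_i$.

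For a non-leaf annulus $A$, the first step is purely local: every simulating vertex computes $i=\lfloor (j'+k')/2\rfloor$. Next comes a single consensus round of minor-aggregation on $A$, in which the node-parts of $p_i$ contribute their IDs and all other nodes contribute the identity element. This elects the maximum-ID copy as the source. I will argue there are exactly two such copies in $A$, namely $p_i^0$ and $p_i^1$: by \cref{claim: nodes_are_nopdeparts} and \cref{cor: residual_nodeparts}, together with the fact that $A$ is sandwiched between $P_j$ and $P_k$, both duplicates of $p_i$ lie in $A$. Either one serves as the source, since the path is undirected. By \cref{thm: MA_simulation} this round costs $\tilde{O}(D)$ rounds, run simultaneously on all level-$(\ell-1)$ annuli.

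With the source ID known to the vertices on its face-part, \cref{th: SSSP} produces an approximate SSSP tree $T$ in every annulus simultaneously in $\tilde{O}(D/\epsilon^2)=\tilde{O}(D)$ rounds, taking $\epsilon$ polylogarithmic. Item 3 of that theorem tells each primal endpoint which dual edges lie in $T$. I then apply $\mathcal{P}_2$ of \cref{lem: subtreesums}, through \cref{thm: MA_simulation}, to mark the $p_i^0$-to-$p_i^1$ path, so its edges know themselves. For the length, item 2 of \cref{th: SSSP} says the vertices on the face-part of the endpoint already know its tree distance, which is exactly $|P_i|$. One more consensus round, after contracting all of $A$ or just the marked path, spreads this value to every node of $A$, and in particular to the nodes of $P_i$.

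For a leaf annulus I do the same thing twice, once for $t=j'$ and once for $t=k'$. Sequencing them costs only a constant factor, so the bound stays at $\tilde{O}(D)$. The step I expect to be the main obstacle is the justification that $p_i^0$ and $p_i^1$ can identify themselves. This relies on the ancestor-node-part bookkeeping of \cref{lem: distributed_knowledge}, and on the fact that when $P_i$ was excluded from incision along shared subpaths (\cref{def: incision_Pi}), the relevant duplicates still survive in $A$. Everything else is a direct composition of \cref{thm: MA_simulation}, \cref{th: SSSP} and \cref{lem: subtreesums}.
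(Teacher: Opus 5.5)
Your proposal is correct and follows the paper's argument essentially step by step: compute $i$ locally, run a one-round max-ID election among the node-parts of $p_i$ (identified through the ancestor bookkeeping of \cref{lem: distributed_knowledge}), run \cref{th: SSSP} from the elected copy, mark the tree path with \cref{lem: subtreesums}, and for leaf annuli repeat this for $P_{j'}$ and $P_{k'}$. The only difference is minor: the paper gets $|P_i|$ by aggregating edge weights along the marked path, whereas you read it off item 2 of \cref{th: SSSP} at the other endpoint and broadcast it. That is exactly how the paper obtains the length of $P$, so both routes work.
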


\noindent
{\bf Intersection of \boldmath$P_i$ and \boldmath$P_j,P_k$ (Lines~\ref{line: start_overlapping_annuli}-\ref{line: end_overlapping_annuli}).}
As explained in \cref{sec: Refis_extensions} (Claim~\ref{claim: Pi_Pk_Pj_intresection}), if $P_i$ intersects $P_j$ at some vertex $x$ (analogously, we handle the case that $P_i,P_k$ intersect) then we compute all the paths $P_m$ for $j\leq m \leq i$ in the current recursive call. All such paths $P_m$ are the concatenation of the $p_m^0$-to-$x$ and $x$-to-$p_m^1$ paths. 
Note that each node of the annulus knows whether it is on any of the paths $P_j,P_i$ and $P_k$. Thus, a node that is in $P_i\cap P_k$ or $P_i\cap P_j$ knows itself. Then, by aggregating IDs of those nodes in one minor-aggregation round ($\tilde{O}(D)$ round on $G$ by \cref{thm: MA_simulation}), we choose the node $x$ and an SSSP computation (within $\tilde{O}(D)$ rounds, by \cref{th: SSSP}) is performed from $x$. After that, all nodes $p_m^0,p_m^1$ for $j\leq m \leq i$ know their distance from $x$.
I.e., in $G$, then vertices on $p(p_m^0)$ and $p(p_m^1)$ know the $x$-to-$p_m^0$ and $x$-to-$p_m^1$ distances (respectively). Thus, copies of those vertices broadcast these two distances on $\hat{p}(p_m)$ in $\hat{G}_0$, to compute and learn length of $P_m$. 
This is done in $\tilde{O}(D)$ rounds by \cref{cor: PA_g_hat}.
Finally, the length of the minimal $P_m$ is computed by one minor-aggregation round on $A$ ($\tilde{O}(D)$ round on $G$, by \cref{thm: MA_simulation}). That path is then marked in $x$'s SSSP tree  in $\tilde{O}(D)$ rounds (by \cref{lem: subtreesums}, \cref{thm: MA_simulation}).

\begin{lemma}
\label{lem: overlapping_annuli_paths}
within $\tilde{O}(D)$ rounds, for each annulus $A$ defined by $P_j,P_k$: If $P_i$ intersects $P_j$ (resp. $P_k$), then all nodes on the minimal-weight $P_m$ for $j\leq m\leq i$ (resp. $i\leq m\leq k$), learn that and learn $P_m$'s weight. 
\end{lemma}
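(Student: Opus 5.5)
The plan is to assemble the lemma from primitives already established, treating all level-$(\ell-1)$ annuli in parallel. By \cref{thm: MA_simulation} and \cref{th: SSSP}, every step below runs simultaneously on all of them in $\tilde{O}(D)$ rounds. I would fix an annulus $A$ bounded by $P_j,P_k$, assume by \cref{lem: leaf_paths} that the edges and nodes of $P_i$ know themselves, and handle only the case where $P_i$ intersects $P_j$. The $P_k$ case is symmetric. The two cases can be run one after the other, or interleaved in fixed time slots, as in the proof of \cref{thm: MA_simulation}.

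\textbf{Detecting the intersection and choosing $x$.} Each node-part $h\in A$ knows whether it lies on $P_j$ and whether it lies on $P_i$; this is local knowledge held by the vertices of $p(h)$. One consensus step over the whole annulus, contracted to a single super-node, computes the maximum ID among the nodes of $P_i\cap P_j$. This tells every node of $A$ both whether the intersection is nonempty and which node $x$ was chosen. Claim~\ref{claim: Pi_Pk_Pj_intresection} justifies picking any such $x$: every $P_m$ with $j\le m\le i$ is the concatenation of a shortest $p_m^0$-to-$x$ path and a shortest $x$-to-$p_m^1$ path in $A$.

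\textbf{Computing the candidate lengths.} Run \cref{th: SSSP} from $x$ on $A$. Afterwards every vertex $v\in p(p_m^b)$ knows the approximate distance $d(x,p_m^b)$. To sum $d(x,p_m^0)+d(x,p_m^1)$, the two node-parts must exchange their values. Both are parts of the same node $p_m$, and by \cref{lem: distributed_knowledge} every vertex knows which node-parts belong to which ancestor node, so $p(p_m^0)$ and $p(p_m^1)$ are subgraphs of the level-$0$ face $p(p_m)$. A part-wise aggregation (SUM) over the cycles $\hat p(p_m)$ in $\hat{G}_0$, using \cref{cor: PA_g_hat}, therefore gives every vertex of $p(p_m)$ the value $|P_m|$. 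One potential problem is whether a leader of each node-part should contribute exactly once. I would fix this by letting only the leader of each node-part contribute, with the identity element elsewhere.

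\textbf{Selecting and marking the minimum.} One further minor-aggregation round on $A$ computes the minimum of $|P_m|$ over $j\le m\le i$. The index range is known, since $j,i$ are known to all nodes, and ties are broken by $m$. The two endpoints $p_{m^*}^0,p_{m^*}^1$ then learn that they attain the minimum. Finally, \cref{lem: subtreesums} is applied to the SSSP tree twice, once on the $x$-to-$p_{m^*}^0$ path and once on the $x$-to-$p_{m^*}^1$ path, so that every node of $P_{m^*}$ learns it belongs to the path and receives its weight through one more consensus step.

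\textbf{Main obstacle.} I expect the hardest step to be the cross-annulus coordination in computing the sums. The two node-parts $p_m^0$ and $p_m^1$ may lie in different annuli, or one may be a residual node-part, so the exchange cannot happen inside $A$. It has to go through $\hat{G}_0$, and this works only because the level-$0$ face-part of $p_m$ contains both of them and the aggregations on distinct faces are disjoint.
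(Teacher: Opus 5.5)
Your proposal is correct and follows the paper's proof essentially step for step. You pick $x\in P_i\cap P_j$ by one max-ID minor-aggregation round on $A$, run \cref{th: SSSP} from $x$, combine $d(x,p_m^0)$ and $d(x,p_m^1)$ by a part-wise aggregation on $\hat{p}(p_m)$ in $\hat{G}_0$ (\cref{cor: PA_g_hat}), take the minimum with one more minor-aggregation round on $A$, and mark the path in the SSSP tree via \cref{lem: subtreesums}.

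One small correction to your ``main obstacle'': the two node-parts $p_m^0,p_m^1$ used by $P_m$ both lie in $A$, since the SSSP from $x$ inside $A$ reaches both. The detour through $\hat{G}_0$ is needed because $\{p_m^0,p_m^1\}$ is not a connected part of $A$, so a part-wise aggregation inside $A$ cannot pair their values; it is not because the two node-parts sit in different annuli.
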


\noindent
{\bf Intersection of \boldmath$P_i$ and \boldmath$P$ (Lines~\ref{line: intersection_P_Pi_line1}-\ref{line: intersection_P_Pi_line2}).}
    We have two tasks.
First, we find the first node $p_{i_1}$ and the last node $p_{i_2}$ of $P$ whose node-parts participate in $P_i$. Then, in the recursion (by Claim~\ref{claim: Pi_P_intresection}) we can discard $P_t$ for all $i_1 < t< i_2$. Second, we need to find  the maximal subpath $P_i'$ of $P_i$ that is internally disjoint from $P$ because that is the path that needs to be incised in the next step of the algorithm (by \cref{def: incision_Pi} of incising $P_i$).

\medskip
\noindent
\underline{\em Finding $i_1,i_2$.}
We prove the following lemma.
\begin{lemma}
\label{lem: P_intersecting_paths}
Within $\tilde{O}(D)$ rounds, for all annuli $A$ of level $\ell-1$:
All nodes of $A$ that correspond to $p_t$ (and the vertices of $G$ simulating them) learn that the length of $P_t$ is not the minimal (i.e., is to an identity element) for $i_1< t < i_2$ where $i_1$ (resp. $i_2$) is the first (last) node of $P\cap P_i$ in $P$. Moreover, all nodes of $A$ know $i_1,i_2$.
\end{lemma}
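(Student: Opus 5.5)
The plan is to reduce everything to aggregations over the annulus $A$, using the labels of $P$'s nodes that were computed when $P$ was enumerated. At that point, every vertex $v\in G$ lying on a face-part $p(h)$ of a node-part $h$ of some $p_t\in P$ learned the index $t$. By \cref{lem: distributed_knowledge} (item 4), this correspondence between level-$(\ell-1)$ node-parts and their ancestor node $p_t$ of $G^*$ is available. By \cref{lem: leaf_paths}, every node of $A$ already knows whether it lies on $P_i$.

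\textbf{Step 1: local marking.} Each node $h\in A$ checks two things: whether $h\in P_i$, and whether $h$ is a node-part of some $p_t\in P$. If both hold, it sets its input to $x_h=t$. Otherwise it sets $x_h=+\infty$ for the MIN aggregation and $x_h=-\infty$ for the MAX aggregation. In $G$, this is simply the leader vertex of $p(h)$ choosing its input.

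\textbf{Step 2: global aggregation.} We compute one MIN aggregate and one MAX aggregate over the whole annulus, which is a single part. This takes one consensus step of a minor-aggregation round, in which all edges of $A$ contract. By \cref{thm: MA_simulation}, or directly by \cref{lem: PA_level_T} with the partition $\{A\}$, this runs simultaneously on all level-$(\ell-1)$ annuli in $\tilde{O}(D)$ rounds. The results are $i_1=\min$ and $i_2=\max$. Every node of $A$ learns them, and hence so does every vertex of $G$ simulating $A$. Two cases need a short remark. First, $P_i$ always contains $p_i^0$ and $p_i^1$, so the set is nonempty and $i_1\le i\le i_2$. Second, node-parts $p_t^0$ and $p_t^1$ both carry the label $t$, so the two copies of $P$ are handled uniformly, matching the two cases in Claim~\ref{claim: Pi_P_intresection}.

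\textbf{Step 3: discarding.} Each node $p_t^b$ of $A$ with $i_1<t<i_2$ now compares its own $t$ with the broadcast $i_1,i_2$ and locally sets the candidate length of $P_t$ to the identity element for the later MIN. This is justified by Claim~\ref{claim: Pi_P_intresection}. It is also consistent with the recursion ranges $[j,i_1]$ and $[i_2,k]$ in Algorithm~\ref{alg: distributed_reif}. No additional communication is needed.

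The main obstacle is making sure the labels are correct at level $\ell-1$, where node-parts of $p_t$ may have been split several times and some may be residual (\cref{cor: residual_nodeparts}). A node-part of $A$ must know its ancestor $p_t$ even when it is not the original $p_t^0$ or $p_t^1$. I would handle this by inducting through the chain of item 4 of \cref{lem: distributed_knowledge}. Each level-$(\ell-1)$ face-part's vertices know the IDs of all ancestor node-parts, including the level-1 node-part of $p_t$, and the index $t$ was attached to that node-part. Residual node-parts do not belong to $A$, so they never enter the aggregation.
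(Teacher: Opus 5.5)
Your proposal is correct and follows essentially the paper's argument. Both compute $i_1,i_2$ by a MIN/MAX aggregation over $A$ (via \cref{lem: PA_level_T}) of the indices $t$ of node-parts of $P$-nodes lying on $P_i$, with each node knowing its index through the ancestor correspondence of \cref{lem: distributed_knowledge}. The only difference is that the paper first separates the two copies $P'^0,P'^1$ by contracting the $P$-edges of $A$, aggregates on each copy, and then aggregates the results over $A$; it reuses that identification later to find $P_i'$. You correctly observe that a single aggregation over the whole annulus already yields $i_1$ and $i_2$.
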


    The indices ${i_1}$, ${i_2}$ 
    are found as follows. Note, the $p_j$-to-$p_k$ subpath $P'$ of $P$ has two copies in $A$ (as a result of incising $P$), s.t. an endpoint (say $p_i^0$) of $P_i$ lies on the first copy of $P'$ (denoted $P'^0$) and the second endpoint $p_i^1$ of $P_i$ lies on the second copy of $P'$ (denoted $P'^1$). Moreover, those copies of $P'$ are node and edge disjoint (as a result of incising $P$ at first). Thus, we identify nodes on $P'^0$ and $P'^1$. Then, we set $i_1$ as the smallest index s.t. $p_{i_1}^0\in P_i$ or $p_{i_1}^1\in P_i$, and $i_2$ is the largest index s.t. $p_{i_2}^0\in P_i$ or $p_{i_2}^1\in P_i$.    
    To do so, we compute a minimum and a maximum aggregate operator on each of the paths $P'^0,P'^1$, where the input of a node $p_t^0\in P'0$ (resp. $p_t^1\in P'1$) is $t$ if it participates in $P_i$ and an identity element otherwise. This gives the first and last intersection between $P'^0,P'0^1$ and $P_i$. Then, via aggregating the results on $A$, we find $i_1$ and $i_2$ (all nodes in $A$ learn those indices). Those nodes know themselves because they know to what nodes of $P$ do they correspond. 
    The aggregations are made using \cref{lem: PA_level_T} in $\tilde{O}(D)$ rounds.

  It remains to show how to identify the paths $P'^0$ and $P'^1$ (the copies of $P$ in $A$). We perform the following $O(1)$ minor-aggregation round procedure (by \cref{thm: MA_simulation}, this is $\tilde{O}(D)$ rounds on $G$). Simply, each edge of $A$ that maps to an edge of $P$ contracts itself (which is allowed by the model). This results in two distinct (super-)nodes, one corresponding to $P'^0$, and the other to $P'^1$. Since $P'^0,P'^1$ are node-disjoint, the resulting super-nodes are indeed distinct. 
    Each of those super nodes chooses an ID, say Id of the maximal node $p_m^0$ for $P'^0$ (analogously for $P'^1$).
    Simply $p_i^0$ informs all nodes on $P'^0$ that they are on $P'_0$ and similarly for $p_i^1$ and $P'^1$.
    Then, we rewind the contraction (return to annulus $A$), and we get that all vertices on $P'^0$ know a shared ID for $P'^0$ (similarly for $P'^1$), and that $P'^0,P'^1$ have different IDs (as nodes have distinct IDs, \cref{lem: distributed_knowledge}).

\medskip
\noindent
\underline{\em Finding $P_i'$.}
We prove the following lemma
\begin{lemma}
\label{lem: find_disjoint_incision_path}
For all annuli of level $\ell-1$ simultaneously, the maximal subpath $P_i'$ of $P_i$ that is internally disjoint from (copies of) $P$ can be found within $\tilde{O}(D)$ rounds.
\end{lemma}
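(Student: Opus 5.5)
We reduce the problem to marking, on the tree path $P_i$, a contiguous segment that lies strictly between the points where $P_i$ leaves and re-enters the copies of $P$. By \cref{lem: leaf_paths}, after computing $P_i$ every node and edge of $A$ knows whether it is on $P_i$. From the procedure preceding \cref{lem: P_intersecting_paths}, every node knows whether it lies on $P'^0$ or $P'^1$ (the two copies of $P$ inside $A$), and all nodes of $A$ know $i_1,i_2$. The subpath $P_i'$ is determined by the node-parts $p_{i_1}^{b}$ and $p_{i_2}^{b'}$ of $P_i$: it runs from the last node of $P_i$ on one copy of $P$ to the first node of $P_i$ on the other side. By definition of $P_i'$ in \cref{alg: extended_Reif}, these two nodes are its endpoints, and all its internal nodes avoid $P'^0\cup P'^1$.

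\textbf{Step 1: orient and index $P_i$.} Root $P_i$ at $p_i^0$ (its SSSP source). Using $\mathcal{P}_1$ of \cref{lem: subtreesums}, compute for every node $h\in P_i$ its hop-index along $P_i$ by subtree sums with input $1$ on each node. This takes $\tilde{O}(1)$ minor-aggregation rounds on all annuli at once, hence $\tilde{O}(D)$ $\CONGEST$ rounds by \cref{thm: MA_simulation}.

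\textbf{Step 2: locate the endpoints of $P_i'$.} Give each node $h\in P_i$ the input (index of $h$) if $h\in P'^0$ (resp.\ $h\in P'^1$), and the identity element otherwise. One part-wise aggregation over $A$ (\cref{lem: PA_level_T}) with $\max$ on $P'^0$-inputs yields the index $a$ of the last node of $P_i$ on the copy containing $p_i^0$. A second aggregation with $\min$ over $P'^1$-inputs, restricted to indices $>a$, yields the first index $b$ at which $P_i$ meets the other copy. I would double check against \cref{claim: Pi_P_intresection} that these are exactly the nodes $p^{\cdot}_{i_1},p^{\cdot}_{i_2}$; this is consistent with the fact that $P_i$ meets the copies of $P$ along a single common subpath, since approximate paths do not cross by \cref{lem: approx_reif}. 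Every node then checks locally whether its index lies in $[a,b]$, and every edge of $P_i$ marks itself as belonging to $P_i'$ iff both its endpoints do. The cases where $P_i$ also shares edges with $P_j$ or $P_k$ are handled the same way: edges already on $P_j,P_k$ (which know themselves) are excluded from the incision, as in \cref{def: incision_Pi}.

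\textbf{Step 3: store the result.} Each marked dual edge is simulated by the endpoints of its primal edge via its $\EC$ copy (\cref{lemma: dual_edges_hat_G}), so every $v\in G$ learns which incident edges map to $P_i'$. This is exactly the distributed storage required by \cref{def: distributed_storage_incisions} for level $\ell$.

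The main obstacle is Step 2. We must make sure that the aggregation correctly distinguishes node-parts of $P$ that $P_i$ merely touches from a genuine shared subpath, including the degenerate cases where the common subpath contains an edge of $P$ together with its copy (as in \cref{fig: Pi_intersects_P}), and where $P_i$ is entirely disjoint from $P$ except at its endpoints $p_i^0,p_i^1$. I would first handle the latter case separately, in which $a$ is the index of $p_i^0$ and $b$ the index of $p_i^1$, and then argue the general case via the non-crossing property.
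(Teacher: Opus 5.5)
Your proposal is correct and is essentially the paper's argument: both root $P_i$ at its $P'^0$ endpoint, locate the last node of $P_i$ on $P'^0$, and then locate the first node on $P'^1$ after it, using tree primitives on $P_i$ simulated via \cref{thm: MA_simulation}. The paper does this with two subtree-sum passes on indicator inputs plus prefix pruning, whereas you compute one index and then take $\max$/$\min$ aggregations; this is only a cosmetic difference.

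Two small points. First, subtree sums of all-ones inputs rooted at $p_i^0$ give the hop distance to $p_i^1$, not the index from $p_i^0$, so you must either swap $\max$/$\min$ or subtract from the hop-length. Second, the check against $i_1,i_2$ that you flag as the main obstacle is not needed for this lemma, since $P_i'$ is determined operationally by these two nodes.
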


For detecting $P_i'$, we simply detect the last node $p_x^0$ of $P^0$ in $P_i$, and the first $p_y^1$ node of $P'^1$ in $P_i$. Then, node on the $p_x^0$-to-$p_y^1$ subpath of $P_i$ learn that this is in fact the path $P_i'$.

First we root $P_i$ with its endpoint that belongs to $P'^0$, then $p_x^0$ is in fact the deepest node of $P'^0$ in (the rooted) $P_i$. To find this node we perform a subtree sum computation on $P_i$, where all nodes of $P'^0$ in $P_i$ have an input of one. Hence, $p_x^0$ is simply the node of $P'^0$ whose subtree sum is one.
Next, we omit the prefix of $P_i$ that contains higher nodes than $p_x^0$. This can be done locally. I.e., each edge whose both endpoints got a subtree sum of at least one, omit themselves.
Now, we are kept with a subpath $Q$ of $P_i$ whose one endpoint is $p_x^0$ (and is the only node in $Q\cap P'^0$). However, $Q$ may contain many nodes of $P'^1$, and it remains to find $p_y^1$.
We do that the same way we found $p_x^0$. That is, root $Q$ with its $P'^1$ endpoint, compute subtree sums, the deepest node of $P'^1$ in $Q$ is  $p_y^1$. After that, we omit the prefix of nodes higher than $p_y^1$, which results in $P_i'$.

\medskip
\noindent
{\bf Defining next level's annuli (Line~\ref{line: start_next_level}).}  
Before the recursive calls.  we need to incise on $P_i'$ and define the next level's annuli. We abuse notation and denote $P_i'$ by $P_i$. We incise edges of $P_i$ that were not incised previously (in case it intersects $P_k$ or $P_j$). After the incision, some resulting subgraphs may not be annuli (by \cref{cor: residual_nodeparts}, see also a discussion at the end of \cref{sec: IDs_to_face_parts}). In particular, each node may have many node-parts in a level, the complementing parts are not in any annulus). Thus, we need to detect those subgraphs (if any). Once a node-part leanrs it is in such a graph, it stops its algorithm. I.e., vertices of $\hat{G}_\ell$ that simulate such nodes stop simulating them, and are used exclusively for communication.

\begin{claim}
A subgraph resulting from incising $P_i$ is not an annulus of level $\ell$ iff it contains (at least) a non-$P_i$ edge in $P_k$ (resp. $P_j$) that is incident to some node $f\in P_i\cap P_k$ (resp. $P_i\cap P_j$). We refer to such edges as {\em bad}. \label{claim:bad edge} 
\end{claim}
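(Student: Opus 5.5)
The plan is to reduce the claim to the geometry of incising $P_i$ inside an annulus $A$ bounded by $P_k,P_j$. Incising $P_i$ splits $A$ into pieces, each lying on one side of $P_i$. The side between $P_k$ and $P_i$ is discarded when $P_i\cap P_k\neq\emptyset$ (and likewise for $P_j$), since by Claim~\ref{claim: Pi_Pk_Pj_intresection} that region is handled directly by an SSSP computation from $x\in P_i\cap P_k$. So the level-$\ell$ annuli are exactly the pieces strictly sandwiched between $P_i$ and a boundary path that $P_i$ does not touch. I will show that a resulting piece is a discarded (non-annulus) piece iff it contains a bad edge, treating the case of $P_k$; the $P_j$ case is symmetric.

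($\Leftarrow$) Let $f\in P_i\cap P_k$ and let $e$ be a non-$P_i$ edge of $P_k$ incident to $f$. By Remark~\ref{def: incision_Pi}, the shared part of $P_i$ and $P_k$ is not re-incised, so $f$ is split (Definition~\ref{def: nodepart}, Claim~\ref{claim: nodes_are_nopdeparts}) only at the edges where $P_i$ leaves $P_k$. Since $e\in P_k\setminus P_i$, it lies on the $P_k$-side of $P_i$, so the node-part of $f$ containing $e$ belongs to the region between $P_k$ and $P_i$. That region is discarded because $P_i\cap P_k\neq\emptyset$. Since $P_k$'s edge $e$ is there, the piece containing it cannot be the $P_i,P_j$ annulus: by Claim~\ref{claim: disjoint_annuli}, that annulus is separated from $P_k$ by the incised copy of $P_i$, except where $P_i$ and $P_k$ coincide, and those coinciding edges are $P_i$ edges, not $e$.

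($\Rightarrow$) Suppose a piece $B$ is not an annulus, so it is a discarded $P_k$-side region, which exists only when $P_i\cap P_k\neq\emptyset$. $B$ is bounded by a copy of part of $P_i$ and by part of $P_k$. Take a maximal subpath of $P_i$ that is internally disjoint from $P_k$ and borders $B$, with endpoints $f,f'\in P_k$. At $f$, the boundary of $B$ continues along $P_k$ through an edge of $P_k$ that is not on $P_i$; otherwise $P_i$ and $P_k$ would coincide there, contradicting maximality. That edge is bad, and it lies in $B$.

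The main obstacle is this ($\Rightarrow$) direction in degenerate situations: $B$ may touch $P_k$ only at a single node $x$, with no $P_k$-edge inside it, or the common part may run into the copies of $P$ or into the endpoint-case edges. For a lone touching node, I would argue that the region strictly between $P_k$ and $P_i$ near $x$ still contains a $P_k$ edge incident to $x$. This holds unless $P_k$ ends at $x$, i.e. $x$ is an endpoint $p_k^b$ on a copy of $P$. I would handle that case separately using the rule of Remark~\ref{def: incision_Pi} that the common subpath with $P$ is not incised, together with Claim~\ref{claim: Pi_P_intresection}. If that case breaks the equivalence, I would include the copy-of-$P$ edges at $p_k^b$ in the notion of ``$P_k$ edges''.
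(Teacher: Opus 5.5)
Your proposal takes essentially the same route as the paper: the $P_k$-side region is discarded exactly when $P_i\cap P_k\neq\emptyset$, and every resulting piece contains a node of $P_i\cap P_k$ where $P_i$ and $P_k$ part ways, which supplies a non-$P_i$ edge of $P_k$. The paper indexes the pieces by maximal subpaths of $P_k$ internally disjoint from $P_i$, whereas you use maximal subpaths of $P_i$ internally disjoint from $P_k$, but it is the same picture.

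The degenerate case you worry about does not break the equivalence, so there is no need to redefine ``$P_k$ edges''. In $A$, the node-part of $p_k^b$ spans from a copy-of-$P$ edge to the end edge of $P_k$. If $P_i$ leaves $p_k^b$ into the interior, the $P_k$-side part still contains that $P_k$ edge, which is bad. If $P_i$ leaves along $P_k$, the node where the two paths diverge supplies the bad edge.
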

\begin{proof}
We prove for $P_k$ (an analogous proof holds for $P_j$).
This case is actually implicit in \cref{sec: Refis_extensions}, where we define the graphs that are discarded from being annuli of the next level.
    We explain. 
    Let $A'$ be the subgraph of $A$ bounded by $P_i,P_k$. $A'$ is an annulus of the next level, iff $P_i\cap P_k$ is empty. Otherwise, after incising $P_i$, $A'$ can be (and might) broken into subgraphs, each associated with one maximal subpath of $P_k$ that is internally-disjoint from $P_i$ (such a subpath has at least one and possibly both it endpoints in $P_i$). 
    I.e., each subgraph that is not an annulus of the next level must have a node (an endpoint of its associated $P_k$ subpath) that has an incident edge of $P_k$ and another incident edge of $P_i$.
    In the other direction, if such a (bad) edge exists, then $P_i$ and $P_k$ intersect and  $A'$ results from the incision shall be discarded.
\end{proof}
Note that because we chose to not incise an edge twice, then edges in $P_i\cap P_k$ are not duplicated in $A'$ after the incision of $P_i$ (that is why $A'$ might consist of many disconnected graphs, we deal with all of them). The annulus defined by $P_i,P_j$ remains connected by $P_i$ (if $P_i,P_j$ do not intersect, otherwise, it gets discarded).

By the above Claim \ref{claim:bad edge}, to identify subgraphs that are not annuli in the next level, we can identify bad edges instead (and discard a subgraph if it contains a bad edge). 
We now show how to identify bad edges in $P_k$ (the case of $P_j$ is analogous).
First, nodes of $A$ mark themselves if they are in $P_i\cap P_k$. Then, edges of $P_k$ mark themselves if one of their endpoints is marked. 
We implement this in two minor-aggregation rounds. In the first, contraction and consensus steps are skipped. In the aggregation step, edges choose one of two values $\{P_i\}, \{P_k\}$ indicating if they are in $P_i,P_k$ or an identity element 
(if the edge is in neither or in both). Then, the aggregate operator that a node computes over its incident edges checks if there are at least two incident edges s.t. one is in $P_k$ and one is in $P_i$. If so, the node marks itself. After the aggregation terminates, edges learn whether one of their endpoints is marked in an additional minor-aggregation round and possibly marks itself as a bad edge accordingly (if it is in $P_k$ or $P_j$ and has a marked endpoint).

Finally, we incise $P_i$ and construct the resulting subgraphs (represented by $\hat{G}_\ell$) as in \cref{sec: distributed_reif}. Then, each edge in each resulting subgraph chooses a value of one if its corresponding edge in $A$ bad (and zero otherwise). This is known to the edges of the resulting subgraphs because the correspondence between ancestor annuli and $A$ is known to $A$'s edges and nodes by the vertices  that simulate them. I.e., endpoints $u,v$ of edge $e\in G$ simulate all of $e$'s duplicates in all levels, thus, any information about any duplicate of $e$ in any previous level is known to $u,v$, and thus, is known to the edge's duplicates in $A$. 
Then, an aggregation (OR operator) is computed on each subgraph, detecting whether there exists an edge that chose one as an input.
Finally, each (node in a) graph in which a bad edge was detected, terminates its algorithm.

\subsection{Finding the Minimum $st$-cut and Flow Value}
Recall (\cref{sec:Centralized Reif}) that the minimum $st$-cut consists of the edges of $G$ whose duals participate in the cycle  of $G^*$ that corresponds to the $P_m$ ($0\le m \le |P|-1$) of minimal length.
We first discuss the approximation guarantees regarding the value of the maximum $st$-flow and minimum $st$-cut. Then discuss finding the desired path $P_m$ that admits this. Finally, we discuss a minor problem and its solution, concerning $P_m$ is an approximate path.

    \medskip
    \noindent
 {\bf Approximation guarantee and flow value.}
 Since, the dependency on $\epsilon$ is quadratic in the SSSP algorithm that we use \cref{th: SSSP} to find $P_m$, then if $\epsilon$ is set to $\tilde{O}(1)^{-1}$, we get  a $1+o(1)$ approximation to the minimum $st$-cut in $\tilde{O}(D)$ rounds (shortly we explain how to find $P_m$).
    In addition, a $1-o(1)$ approximation to the maximum $st$-flow value 
    is obtained simply by dividing the value of the minimum $st$-cut by $(1+\epsilon)$. This gives the flow value within a multiplicative error in the range $[1/(1+\epsilon),1]$. Note however, another form of $1/(1+\epsilon)$ is $1/(1-(-\epsilon))$, which (when $|\epsilon|<1$) is $1-\epsilon+\epsilon^2-\epsilon^3\ldots$ by the geometric series expansion. Obviously, this is at least $1-\epsilon$. Hence the multiplicative error is in the range $[1-\epsilon,1]$ as promised.

    \medskip
    \noindent
 {\bf Finding \boldmath$P_m$ (cut edges, and bisection).}
 While in \cref{alg: Reif} and \cref{alg: extended_Reif} the procedure \FindSPAnnulus only returns this length, in \cref{alg: distributed_reif} we wish (Line \ref{line: find_cut}) to also compute the $st$-cut edges and the corresponding bisection. 
 To find the minimal length $P_{m}$, note first that leaf annuli divide the interval $P$ into intervals. Each  interval either maps to a specific leaf annulus (defined by paths $P_i,P_{i+1}$ and handled in \cref{lem: leaf_paths}) or to an interval $[j,k]$ for which the minimal $P_m$ for all $m\in [j,k]$ was computed in a non-recursive manner (and was handled in Lemmas~\ref{lem: overlapping_annuli_paths} and~\ref{lem: P_intersecting_paths}).
Henceforth, the nodes of every $P_i$ and its length are known to vertices of $G$ that simulate them.
To find the minimal length $P_m$, 
each vertex in $G$ aggregates the minimal length of a path $P_i$ that it knows along with the index $i$. The aggregation terminates in $O(D)$ rounds by \cref{lem: global_broadcast}. 
To mark the $st$-cut edges in $G$ and learn the corresponding bisection we do the following. 
First, each vertex of $G$ knows for each incident edge whether it (its dual) participates in $P_m$ or not. 
Thus, each vertex marks such incident edges, and the approximate minimum $st$-cut is distributively stored.
To compute the bisection, we simply omit the $st$-cut edges from $G$ and run a connectivity algorithm in $\tilde{O}(D)$ rounds~\cite{GP17, GhaffariH16_shortcuts} (the omitted edges still exist in the communication graph, but not in the input graph). Then, two connected components are detected. The one that contains $s$ is set to $S$ and the other to $V\setminus S$. Vertices can learn their side after $s$ and $t$ broadcast their component's ID in $O(D)$ rounds to all vertices of $G$ (using \cref{lem: global_broadcast}).

\medskip
\noindent
 {\bf Dealing with non-simple \boldmath$P_m$.}
Finally, there is a slight issue with this description. In particular $P_m$ might not correspond to a simple cycle in $G^*$, because it is an approximate path. I.e., it is simple in the annulus it was found in, but it may use copies of the same edge of $P$ in that annulus. To deal with this, we force $P_m$ to correspond to a simple cycle by removing such overlaps. That is, we compose a new cycle in $G^*$, whose weight is larger than that of $P_m$ only by a $(1+\epsilon)$ (by \cref{lem: approx_reif}, \cref{appendix: centralized_missing_proofs}) as follows: Let $P_m'$ be the maximal subpath of $P_m$ that is internally disjoint from (copies of) $P$ (as in \cref{lem: find_disjoint_incision_path}). $P_m'$ then corresponds to a simple path of $G^*$ whose endpoints $p_x,p_y\in P$. We take the union of $P_m'$ and the $p_x$-to-$p_y$ subpath of $P$ to be our cycle. The path $P_m'$ can be found by \cref{lem: find_disjoint_incision_path} within $\tilde{O}(D)$ rounds, while the  $p_x$-to-$p_y$ subpath of $P$ can be marked by \cref{lem: subtreesums} (in $\tilde{O}(D)$ rounds, by \cref{thm: MA_simulation}).

\bibliographystyle{plain}
\bibliography{main}
\appendix

\appendix
\crefalias{section}{appendix} 

\section{Deferred Proof from \cref{sec: Refis_extensions}}
\label{appendix: centralized_missing_proofs}

\subsection{Proof of \cref{lem: incise_edge_once}}
\InciseEdgeOnce*

\begin{proof}
We prove by induction. 
Obviously, the claim holds for level zero (the graph $G^*$) where no edge is incised. 
In addition, from the definition of an incision on $P$ (\cref{def: incision_P}), the claim is immediate for level one ($G^*$ after incising $P$). I.e., only edges of $G^*$ in $P$ are incised (once).

Let $e\in G^*$ s.t. it was incised at most once until level $\ell-1$, we show that the claim holds for $e$ in level $\ell$.
Let $A$ be defined by paths $P_k,P_j$, and assume $k+1<j$; otherwise $k+1=j$ and $A$ is a leaf annulus (and the claim holds by the hypothesis). I.e, we 
incise on $P_i$ ($i=\lfloor (j+k)/2\rfloor$) to define $A$'s child annuli in level $\ell$. We examine two cases related to $e$ and $A$.

Namely, if $e$ is in $A$ and was not incised before, then: It is either in $P_i$ and gets incised, or it is not and does not get incised. In either case $e$ is incised at most once until level $\ell$ (inclusive).
Otherwise, $e$ was incised (only) once before, and a duplicate of it is in $A$. If $e$'s duplicate is not in $P_i$, then nothing changes and the claim still holds. Otherwise, $e$'s duplicate is in $P_i$. Since $e$ was incised before, then its duplicate in $A$ must be in $P$, $P_i$ or $P_j$. In this case, (the duplicate of) $e$ is not incised, because common subpaths of $P_i$ and $P$, $P_k$ or $P_j$ do not get incised (\cref{def: incision_Pi}).
\end{proof}

\subsection{Proof of Lemma~\ref{lem: approx_reif}}
\lemApproximateReif*

The proof of \cref{lem: approx_reif} relies on the fact that any subpath $P_i[u,v]$ of an approximate path $P_i$ returned by the SSSP algorithm is also an approximation (with the same approximation ratio) to the $u$-to-$v$ shortest path, if the SSSP algorithm obeys the approximate triangle inequality: 
\begin{definition}[Approximate triangle inequality]
    \label{def: approx_triangle_inequality}
Let $s$ be any source vertex. Any $\alpha$-multiplicative approximate distances $d_s^\alpha(\cdot)$ from $s$ obey the approximate triangle inequality if for any two vertices $u,v$ we are guaranteed that: 
  $$ |d^{\alpha}_s(v)-d^{\alpha}_s(u)|\leq \alpha\cdot d(u,v)  $$  
\end{definition}

We begin the proof of \cref{lem: approx_reif} with the following claim. 
\begin{claim}
\label{claim: approx_reif}
    Let $P_i$ be a $(1+\epsilon)$-approximate shortest $p_i^0$-to-$p_i^1$ path that obeys the approximate triangle inequality.
    For $j \neq i$, there is a $(1+\delta)(1+\epsilon)$ approximate shortest $p_j^0$-to-$p_j^1$ path $P_j$ that does not cross $P_i$, where $\delta, \epsilon>0$ are arbitrary constants.
\end{claim}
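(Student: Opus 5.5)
We start from an arbitrary $(1+\delta)$-approximate shortest $p_j^0$-to-$p_j^1$ path $P'_j$ in the current annulus; any sufficiently good approximation will do, for instance an exact shortest path, which is trivially a $(1+\delta)$-approximation. Then we modify $P'_j$ so that it no longer crosses $P_i$, while increasing its length by at most a factor of $(1+\epsilon)$.

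\textbf{Uncrossing step.} The basic move is a single exchange. Suppose $P'_j$ crosses $P_i$. Let $u$ be the first vertex of $P'_j$ (from $p_j^0$) on $P_i$, and let $v$ be the last vertex of $P'_j$ on $P_i$. We replace the $u$-to-$v$ subpath of $P'_j$ by the subpath $P_i[u,v]$. The resulting walk $P_j$ meets $P_i$ only in the single shared subpath $P_i[u,v]$. By planarity, and because $p_j^0,p_j^1$ lie on a common side of $P_i$ (both are in the same child annulus, or both are among the boundary copies of $P$), the walk then leaves and re-enters $P_i$ on the same side. Hence it touches $P_i$ without crossing it. If the walk is not simple, we shortcut its repeated vertices; this only reduces its length and does not create new crossings.

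\textbf{Length bound.} Here we use the approximate triangle inequality (Definition~\ref{def: approx_triangle_inequality}). Write $\alpha=1+\epsilon$. The distance labels from $p_i^0$ along the tree path $P_i$ satisfy
$|P_i[u,v]| = |d^\alpha(v)-d^\alpha(u)| \le \alpha\, d(u,v)$.
Since $d(u,v)\le |P'_j[u,v]|$, we obtain
$|P_j|\le |P'_j[p_j^0,u]| + \alpha |P'_j[u,v]| + |P'_j[v,p_j^1]| \le \alpha |P'_j| \le (1+\epsilon)(1+\delta)\, d(p_j^0,p_j^1)$,
which is the claimed bound. This argument requires $u,v$ to be connected in the same graph in which $P_i$ was computed. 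To ensure this, we work in the annulus containing both paths, namely the parent annulus $A$ in which $P_i$ was found. Inside $A$, $P'_j$ is a path between $p_j^0$ and $p_j^1$, so the exchange is legitimate there.

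\textbf{Main obstacle.} The hard step is the topological claim that a single exchange, using the first and last contact vertices, really removes all crossings. We also need to know that the side condition holds in the incised (cut-open) annulus, where $P_i$ runs from one copy of $P$ to the other. We plan to argue this using the cycle $C$ formed by $P_i$ together with the artificial edge $(p_x^0,p_y^1)$, exactly as in the proof of Claim~\ref{claim: disjoint_annuli}. Since $C$ is separating and $p_j^0,p_j^1$ lie on the same side of $C$, any path that enters $P_i$ only once and leaves on the entering side does not cross $C$. The case where $j$ lies on the common subpath of $P_i$ and $P$ is excluded, or handled by Claim~\ref{claim: Pi_P_intresection}.
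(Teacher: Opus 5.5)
Your proof is correct, but it organizes the exchange differently from the paper.

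\textbf{The paper's route.} It lists the crossing vertices $v_0,\dots,v_{2\ell-1}$ of $P'_j$ with $P_i$, ordered along $P_i$. It then asserts that the pieces $P'_j[v_{2k-1},v_{2k}]$ and $P'_j[v_{2k},v_{2k+1}]$ alternate between the inside and outside of $P_i$. Only the outside excursions are replaced by $P_i[v_{2k},v_{2k+1}]$, each at a cost of a factor $(1+\epsilon)$ via the approximate triangle inequality.

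\textbf{Your route.} You replace, in one step, everything between the first and last contact vertices $u,v$. You pay the same factor via $|P_i[u,v]|\le(1+\epsilon)d(u,v)\le(1+\epsilon)|P'_j[u,v]|$.

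\textbf{What each buys.} Your version makes the topological step much lighter. The prefix $P'_j[p_j^0,u]$ and suffix $P'_j[v,p_j^1]$ touch $P_i$ only at $u$ and $v$. So the new path meets $P_i$ in a single subpath, and its two ends lie in the regions of the separating cycle that contain $p_j^0$ and $p_j^1$. You therefore never need the order of contact points along $P'_j$ to agree with their order along $P_i$. That agreement can fail for a curve meeting a chord in the plane, and the paper uses it implicitly. You also avoid handling overlapping $P_i$-segments in the reassembled walk. The paper's version keeps the inside excursions of $P'_j$, so it changes $P'_j$ less, but nothing downstream uses this.

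\textbf{One point to state cleanly.} The same-side condition should be justified directly: $P_i$ splits the boundary cycle of $A$ into two arcs. One arc carries the copies $p_t^0,p_t^1$ for indices below $i$, and the other carries those for indices above $i$. Hence $p_j^0$ and $p_j^1$ lie on the same side, which is exactly what the paper's ``w.l.o.g.\ both are in the interior of $P_i$'' assumes. Your parenthetical ``or both are among the boundary copies of $P$'' is not a reason for this.
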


\begin{proof}
Let $P_j'$ be a $(1+\delta)$-approximate shortest $p^0_j$-to-$p^1_j$ path. Assume w.l.o.g that $p^0_j$ and $p^1_j$ are in the interior of $P_i$. 
If $P_j'$ crosses $P_i$, then it must cross it an even number of times, say $2\ell$.
If $\ell=0$ then $P_j'$ does not cross $P_i$ and we are done.
Otherwise, there is an ordered (by distance to $p^0_i$) set of vertices $(v_0,v_1,\ldots,v_{2\ell-1})$ in which $P_j'$ crosses $P_i$, such that, the $P_j'[v_{2k-1},v_{2k}]$ subpaths are  enclosed by $P_i$, and the $P_j'[v_{2k},v_{2k+1}]$ subpaths are not enclosed by $P_i$. 
Meanwhile, since $P_i$ obeys the approximate triangle inequality, the $P_i[v_{2k},v_{2k+1}]$ subpaths are $(1+\epsilon)$ approximation to the shortest $v_{2k}$-to-$v_{2k+1}$ paths. I.e., denote by $w(Q)$ the length of a path $Q$, then, by the approximate triangle inequality:
\[
w(P_i[v_{2k},v_{2k+1}])=
d_{p_i}^{1+\epsilon}(v_{2k+1}) - d_{p_i}^{1+\epsilon}(v_{2k+1}) \leq (1+\epsilon) d(v_{2k},v_{2k+1}) 
\]
Thus, we obtain $P_j$ by replacing the $P_j'[v_{2k},v_{2k+1}]$ subpaths with $P_i[v_{2k},v_{2k+1}]$, as follows:

\begin{equation*}
\begin{split}
w(P_j) & = \sum_{k=0}^{\ell -1} w(P_i[v_{2k},v_{2k+1}]) + \sum_{k=1}^{\ell -1} w(P'_j[v_{2k-1},v_{2k}]) \\ 
& \leq \sum_{k=0}^{\ell -1} (1+\epsilon)w(P'_j[v_{2k},v_{2k+1}]) + \sum_{k=1}^{\ell -1} w(P'_j[v_{2k-1},v_{2k}]) \\
& \leq \sum_{k=0}^{2\ell -2} (1+\epsilon)w(P'_j[v_{k},v_{k+1}]) = (1+\epsilon)w(P'_j)\\
& \leq (1+\epsilon)(1+\delta)d(p^0_j,p^1_j). 
\end{split} 
\qedhere  
\end{equation*}

\end{proof}

The proof of \cref{lem: approx_reif} follows easily from the above claim, by induction on the recursion level $i$ and using $\delta=\epsilon$. In the base case, $i=1$. 
   For the same reasoning of Claim \ref{claim: approx_reif}, in the first recursion level, we can compute $P$ to be a $1+\epsilon$ approximate shortest path, and then ensure that $P_{|P|/2}$ is a $(1+\epsilon)^2$ approximation. In level $1< i\leq O(\log n)$, applying Claim \ref{claim: approx_reif} to compute all paths of level $i$ guarantees an approximation equivalent to the approximation guarantee of level $i-1$ with a multiplicative factor of $(1+\delta)$. By the inductive assumption, the level $i$  shortest paths constitute an   $(1+\epsilon)^{(i+1)}$ approximation.
    Since there are $O(\log n)$ recursion levels, all we need to show is that for each constant $c_2$, there exists a constant $c_1$, s.t., $(1+\epsilon)^{O(\log n)}=(1+c_1\epsilon'/\log n)^{(c_2 \log n)}$ is at most $1+\epsilon'$. Choosing $c_1<0.5/c_2$, we get that for all $0<\epsilon'<1$:
    \[
    (1+c_1\epsilon'/\log n)^{(c_2 \log n)} \leq e^{c_1 c_2 \epsilon'} < e^{0.5 \epsilon'} \leq 1+\epsilon'.
    \]
    The last inequality is true since: (1) $1+\epsilon'$ is linear in $\epsilon'$, while $e^{0.5\epsilon'}$ is convex. (2) For $\epsilon'=0$ both functions are equal, and for $\epsilon'=1$ we have that $e^{0.5\epsilon'}=\sqrt{e}<2=1+\epsilon'$. This concludes the proof of \cref{lem: approx_reif}.\qedhere  

\end{document}